\documentclass[10pt,a4paper]{article}

\usepackage{etex} 

\usepackage{amsmath,amsfonts,amssymb,graphicx,multirow,enumitem,array}
\usepackage[nobreak]{cite} 
\usepackage[inner=20mm, outer=20mm, top=20mm, bottom=20mm, head=10mm, foot=10mm]{geometry}
\usepackage{amsthm}
\usepackage{caption}

\usepackage[usenames]{color}
\usepackage[table]{xcolor}

\usepackage{arydshln} 

\usepackage{pstricks}
\usepackage{pst-plot}

\usepackage{tikz} 
\usetikzlibrary{arrows,positioning,shapes}

\usepackage[backref=false]{hyperref}
\hypersetup{
colorlinks=true,
citecolor=red,
linkcolor=darkblue,
urlcolor=darkblue
}

\usepackage[capitalise,noabbrev]{cleveref} 

                       
\numberwithin{figure}{section}
\numberwithin{table}{section}
\numberwithin{equation}{section}

\newcolumntype{C}{>{$}c<{$}} 

\allowdisplaybreaks

\newcommand{\nc}{\newcommand}
\nc{\bib}{\bibitem}
\nc{\ir}{\mathrm{i}}


\nc{\g}{\red} 

\setitemize{leftmargin=*}   
\setenumerate{leftmargin=*} 

\newcommand{\BKL}{\cellcolor{blue!20}} 
\newcommand{\IKL}{\cellcolor{blue!40}} 

\long\def\ignore#1{}
\definecolor{darkblue}{rgb}{0,0,.8}
\definecolor{red}{rgb}{1,0,0}
\definecolor{purple}{rgb}{1,0,1}
\definecolor{coloroflink}{rgb}{0.7,0,1}
\definecolor{darkpurple}{rgb}{1,.2,1}
\definecolor{pink}{rgb}{1,.7,.7}
\definecolor{lightblue}{rgb}{.61,.61,1}
\definecolor{midblue}{rgb}{.75,.75,1}
\definecolor{lightlightblue}{rgb}{.9,.9,1}
\definecolor{lightlightgray}{rgb}{.95,.95,.95}
\definecolor{lightestblue}{rgb}{.96,.96,1}
\definecolor{lightpurple}{rgb}{1,.65,1}
\definecolor{darkgreen}{rgb}{0.180392, 0.545098, 0.341176}

\newtheoremstyle{alexi}                                 %
  {5pt}                                                 
  {5pt}                                                 
  {\itshape}                                            
  {}                                                    
  {\scshape}                                            
  {.}                                                   
  {.5em}                                                
  {\thmname{#1} \thmnumber{#2}\thmnote{\normalfont#3}}  

\theoremstyle{alexi}
\newtheorem{Lemme}{Lemma}[section]

\newtheorem{Proposition}[Lemme]{Proposition}
\newtheorem{Lemma}[Lemme]{Lemma}

\newtheorem{cor}[Lemme]{Corollary}

\newtheorem{Definition}{Definition}

\newtheorem{conj}{Conjecture}
\newtheorem{defn}[Definition]{Definition}

\let\oldproofname=\proofname
\renewcommand{\proofname}{\scshape{\oldproofname}} 

 
\nc{\Db}{\mbox{$\boldsymbol{D}$}}
\nc{\Dbk}{\mbox{$\boldsymbol{D}^{\textrm{\tiny$(k)$}}\!$}}
\nc{\db}{\mbox{$\boldsymbol{d}$}}
\nc{\Dbb}{\mbox{\boldmath $\bar D$}}
\nc{\Ib}{\mbox{$\boldsymbol{I}$}}
\nc{\Eb}{\mbox{\mathbf{$E$}}}
\nc{\Xb}{\mbox{\mathbf{$X$}}}
\nc{\Kb}{\mbox{\mathbf{$K$}}}
\nc{\sr}{{\scshape \scriptsize (r)}}
\nc{\sq}{{\scshape \scriptsize (q)}}
\nc{\su}{{\scshape \scriptsize (u)}}
\nc{\chit}{\protect\raisebox{0.25ex}{$\chi$}}
\nc{\textoverline}[1]{$\overline{\mbox{#1}}$}
\nc{\sro}{\scshape \scriptsize (\textoverline{r})}
\nc{\sqo}{\scshape \scriptsize (\textoverline{q})}
\nc{\suo}{\scshape \scriptsize (\textoverline{u})}
\nc{\sru}{\scshape \scriptsize (\underline{r})}
\nc{\squ}{\scshape \scriptsize (\underline{q})}
\nc{\suu}{\scshape \scriptsize (\underline{u})}
\nc{\tl}{\mathsf{TL}}						
\nc{\gramprod}[2]{\langle #1 | #2 \rangle}		
\nc{\gramprodk}[2]{\langle #1 | #2 \rangle^{\textrm{\tiny$(k)$}}\!}				
\nc{\grammat}{\mathcal{G}}				
\nc{\Rad}{\mathsf{R}}					
\nc{\tlone}{\mathsf{TL}^{\!_{(1)}}}			
\nc{\btl}{\mathsf{B}}					
\nc{\atl}{\mathsf{A}}						
\nc{\stan}{\mathsf{V}}					
\nc{\stanu}{\mathsf{U}}					
\nc{\Itl}{\mathsf{I}}						
\nc{\Ptl}{\mathsf{P}}						
\nc{\Mtl}{\mathsf{M}}						
\nc{\links}{\mathcal{B}}					
\nc{\linksb}{\mathcal{C}}					
\nc{\ham}{\mathcal{H}}					
\nc{\kac}{\mathsf{K}}						
\nc{\hamk}{\mbox{$\mathcal{H}^{\textrm{\tiny$(k)$}}\!$}}		
\nc{\iom}{\mbox{$\mathcal I$}}				
\nc{\Ik}{I^{\textrm{\tiny$(k)$}}}	
\nc{\Ek}{E^{_{(k)}}}
\nc{\Ekj}[2]{E^{_{(#1)}}_{#2}}
\nc{\Ekb}{\bar E^{_{(k)}}}
\nc{\Yk}{Y^{_{(k)}}}
\nc{\Xk}{X^{_{(k)}}}
\nc{\Kk}{K^{\textrm{\tiny$(k)$}}}
\nc{\Lmn}{L_m^{\textrm{\tiny$(n)$}}}

\def\facegrid#1#2{
\psframe[fillstyle=solid,fillcolor=lightlightblue,linewidth=0pt]#1#2
\psgrid[gridlabels=0pt,subgriddiv=1]#1#2}

\def\loopa{
\psframe[linewidth=.25pt](0,0)(1,1)
\psarc[linewidth=1.5pt,linecolor=blue](1,0){.5}{90}{180}
\psarc[linewidth=1.5pt,linecolor=blue](0,1){.5}{-90}{0}
}
\def\loopb{
\psframe[linewidth=.25pt](0,0)(1,1)
\psarc[linewidth=1.5pt,linecolor=blue](0,0){.5}{0}{90}
\psarc[linewidth=1.5pt,linecolor=blue](1,1){.5}{180}{270}
}

\def\wobbly{
\rput{90}(0,0){
\psplot[linecolor=purple,linewidth=1.5pt]
    {0}{0.11}{x 0.0005 div sin 25 div x mul 0.1 div}
\psplot[linecolor=purple,linewidth=1.5pt]
    {0.1}{0.59}{x 0.0005 div sin 25 div }}       
}
\def\wobblysmall{
\rput{90}(0,0){
\psplot[linecolor=purple,linewidth=1.00pt]
    {0}{0.11}{x 0.0005 div sin 25 div x mul 0.1 div}
\psplot[linecolor=purple,linewidth=1.00pt]
    {0.1}{0.59}{x 0.0005 div sin 25 div }}       
}

\nc{\mince}{0.5pt}
\nc{\elegant}{1.5pt}



\newcommand{\alg}[1]{\mathfrak{#1}} 

\newcommand{\func}[2]{#1 \left( #2 \right)} 

\newcommand{\brac}[1]{\left( #1 \right)}
\newcommand{\tbrac}[1]{\bigl( #1 \bigr)}
\newcommand{\sqbrac}[1]{\left[ #1 \right]}

\newcommand{\set}[1]{\left\{ #1 \right\}}

\newcommand{\st}{\mspace{5mu} : \mspace{5mu}} 

\newcommand{\abs}[1]{\left\lvert #1 \right\rvert}

\newcommand{\ZZ}{\mathbb{Z}}

\newcommand{\QQ}{\mathbb{Q}}
\newcommand{\RR}{\mathbb{R}}

\newcommand{\CC}{\mathbb{C}}

\newcommand{\dd}{\mathrm{d}}   
\newcommand{\ii}{\ir} 
\newcommand{\ee}{\mathsf{e}}   

\newcommand{\wun}{\mathbf{1}}  



\newcommand{\comm}[2]{\bigl[ #1 , #2 \bigr]}

\newcommand{\ket}[1]{\bigl\lvert #1 \bigr\rangle}

\newcommand{\ra}{\rightarrow}
\newcommand{\Ra}{\Rightarrow}
\newcommand{\lra}{\longrightarrow}


\newcommand{\SLA}[2]{\alg{#1} \bigl( #2 \bigr)}                             

\newcommand{\MinMod}[2]{\mathsf{M} \bigl( #1 , #2 \bigr)}                   

\newcommand{\Ver}[1]{\mathcal{V}_{#1}}  
\newcommand{\Irr}[1]{\mathcal{L}_{#1}}  
\newcommand{\Kac}[1]{\mathcal{K}_{#1}}  
\newcommand{\FF}[1]{\mathcal{F}_{#1}}   


\newcommand{\chmap}{\mathrm{ch}}
\newcommand{\Gr}[1]{\bigl[ #1 \bigr]}                               
\newcommand{\tGr}[1]{\bigl[ #1 \bigr]}                              
\newcommand{\ch}[1]{\chmap \Gr{#1}}                                 
\newcommand{\fch}[2]{\ch{#1} \bigl( #2 \bigr)}                      


\newcommand{\modS}{\mathsf{S}}                        


\newcommand{\Smat}[2]{\modS \Gr{#1 \ra #2}}           
\newcommand{\tSmat}[2]{\modS \tGr{#1 \ra #2}}


\newcommand{\fuse}{\mathbin{\times}}                                            
\newcommand{\Grfuse}{\mathbin{\boxtimes}}                                       
\newcommand{\fuscoeff}[3]{\mathsf{N}_{#1 \, #2}^{\hphantom{#1 \, #2} #3}}       

\newcommand{\coprodsymb}{\Delta}
\newcommand{\altcoprodsymb}{\widetilde{\Delta}}
\newcommand{\coproduct}[1]{\coprodsymb \bigl( #1 \bigr)}                        



\newcommand{\cft}{conformal field theory}
\newcommand{\cfts}{conformal field theories}
\newcommand{\uea}{universal enveloping algebra}

\newcommand{\lcfts}{logarithmic conformal field theories}

\newcommand{\opes}{operator product expansions}
\newcommand{\hw}{highest-weight}
\newcommand{\hws}{\hw{} state}
\newcommand{\hwss}{\hw{} states}
\newcommand{\sv}{singular vector}
\newcommand{\svs}{singular vectors}
\newcommand{\ssv}{subsingular vector}
\newcommand{\ssvs}{subsingular vectors}
\newcommand{\hwm}{\hw{} module}
\newcommand{\hwms}{\hw{} modules}

\newcommand{\FFm}{Feigin-Fuchs module}
\newcommand{\FFms}{Feigin-Fuchs modules}
\newcommand{\NGK}{Nahm-Gaberdiel-Kausch}

\newcommand{\rhs}{right-hand side}
\newcommand{\TL}{Temperley-Lieb}
\newcommand{\WJ}{Wenzl-Jones}

\renewcommand{\Im}{\operatorname{Im}}

\DeclareMathOperator{\vspn}{span}
\DeclareMathOperator{\End}{End}

\renewcommand{\ge}{\geqslant}
\renewcommand{\le}{\leqslant}

\newcommand{\algtl}{\textup{a-}\tlone} 
\newcommand{\diatl}{\textup{d-}\tlone} 

\newcommand{\algbtl}{\textup{a-}\btl} 
\newcommand{\diabtl}{\textup{d-}\btl} 

\newcommand{\rs}{\protect{\begin{tikzpicture}[scale=0.3]
                           \path[use as bounding box] (-0.1,-0.1) rectangle (1.4,1.1);
                           \draw (-0.1,1.1) -- (1.5,-0.1);
                           \node at (0.1,0.2) {$r$};
                           \node at (1.3,0.9) {$s$};
                          \end{tikzpicture}}} 

\begin{document}

\begin{center}
\textbf{\huge Boundary algebras and Kac modules \\[.35cm] 
for logarithmic minimal models} \\[1cm]
{\Large Alexi Morin-Duchesne$^\ast$, J{\o}rgen Rasmussen$^\S$ and David Ridout$^\ddag$.}
\\[.5cm]
\emph{{}$^\ast$\parbox[t]{0.85\textwidth}{Institut de Recherche en Math\'ematique et Physique, \\
Universit\'e Catholique de Louvain, Louvain-la-Neuve, B-1348, Belgium.}}
\\[.2cm]
\emph{{}$^\S$\parbox[t]{0.85\textwidth}{School of Mathematics and Physics, \\
University of Queensland, St Lucia, Brisbane, Queensland 4072, Australia.}}
\\[.2cm]
\emph{${}^{\ddag}$\parbox[t]{0.85\textwidth}{Department of Theoretical Physics and Mathematical Sciences Institute, \\
Australian National University, Acton, ACT 2601, Australia.}}
\\[.5cm] 
\texttt{alexi.morin-duchesne\,@\,uclouvain.be}
\qquad
\texttt{j.rasmussen\,@\,uq.edu.au}
\qquad
\texttt{david.ridout\,@\,anu.edu.au}
\end{center}
\bigskip

\begin{abstract}
Virasoro Kac modules were originally introduced indirectly as representations whose characters arise in the continuum scaling limits of certain transfer matrices in logarithmic minimal models, described using Temperley-Lieb algebras. The lattice transfer operators include seams on the boundary that use Wenzl-Jones projectors. If the projectors are singular, the original prescription is to select a subspace of the Temperley-Lieb modules on which the action of the transfer operators is non-singular. However, this prescription
does not, in general, yield representations of the Temperley-Lieb algebras and the Virasoro Kac modules have remained largely unidentified. Here, we introduce the appropriate algebraic framework for the lattice analysis as a quotient of the one-boundary Temperley-Lieb algebra. The corresponding standard modules are introduced and examined using invariant bilinear forms and their Gram determinants. The structures of the Virasoro Kac modules are inferred from these results and are found to be given by finitely generated submodules of Feigin-Fuchs modules. Additional evidence for this identification is obtained by comparing the formalism of lattice fusion with the fusion rules of the Virasoro Kac modules.  These are obtained, at the character level, in complete generality by applying a Verlinde-like formula and, at the module level, in many explicit examples by applying the Nahm-Gaberdiel-Kausch fusion algorithm. 
\end{abstract}

\newpage
\tableofcontents
\newpage

%
\section{Introduction}
%

The minimal models introduced by Belavin, Polyakov and Zamolodchikov \cite{BPZ84} are central to conformal field theory \cite{DiFMS}. A minimal model is characterised by a pair of co-prime integers, $1<p<p'$, and is often denoted 
accordingly by 
$\mathcal{M}(p,p')$. The corresponding central charge $c$ and conformal weights $\Delta_{r,s}$
are given by
\begin{equation}
 c=1-6\frac{(p'-p)^2}{pp'},\qquad \Delta_{r,s}=\frac{(rp'-sp)^2-(p'-p)^2}{4pp'},
\label{cD}
\end{equation}
where $r=1,2,\ldots,p-1$ and $s=1,2,\ldots,p'-1$.
These weights satisfy $\Delta_{r,s}=\Delta_{p-r,p'-s}$ and there is an irreducible Virasoro
representation associated with each distinct conformal weight. 
Moreover, these representations are the only indecomposable representations in the
model and the minimal models are examples of \emph{rational} conformal field theories.

At criticality, the restricted solid-on-solid models solved by Andrews, Baxter and Forrester \cite{ABF84,FB85} offer 
lattice realisations of the minimal models. Corresponding to each of the irreducible Virasoro representations in a given minimal model, there is a Yang-Baxter integrable 
boundary condition \cite{Skly88,Cardy89,BP01} for the lattice realisation:
In the continuum scaling limit (or scaling limit, for short), the eigenvalue spectrum of the corresponding transfer matrix (or of the associated Hamiltonian) gives rise to the character of the irreducible representation. In this way, the Hamiltonian of the lattice model becomes the first conformal integral of motion $\iom_1=L_0-\frac c{24}$.

Logarithmic conformal field theory has its roots in work by Rozansky and Saleur \cite{RozQua92} 
and Gurarie \cite{Gur93}, but the first thorough analysis of such a theory appeared in a series
of papers by Gaberdiel and Kausch \cite{GabInd96,GabRat96,GabLoc99} on a theory
with central charge $c=-2$. Their theory
is not a minimal model, at least not from the perspective of the Virasoro algebra, but it may be regarded as minimal with respect to an extended symmetry algebra $\mathcal{W}_{1,2}$ related to that of symplectic fermions \cite{KauSym00}.
The central charge and conformal highest weights of the Virasoro representations are nevertheless of the 
form \eqref{cD}, but with $p=1$, $p'=2$ and no upper bounds on the Kac labels $r$ and $s$.
Subsequently, evidence mounted \cite{CCCPV2000,ASA02,Ras04,Ras07,EbeVir06} 
suggesting that every Virasoro minimal model can be augmented to
a logarithmic conformal field theory of the same central charge. This was made concrete almost ten years ago 
when such logarithmic models were realised algebraically as conformal field theories with $\mathcal{W}_{p,p'}$ symmetry \cite{FeiLog06} and conjectured to be the
scaling limits of a series of exactly solvable lattice models $\mathcal{LM}(p,p')$ \cite{PRZ06}.
In these models, the co-prime integers $p$ and $p'$ satisfy $1\leq p<p'$, thus covering the value $c=-2$ 
(the $\mathcal{W}_{1,p'}$ models were introduced as \cfts{} much earlier \cite{Kau91,Flo96}). 
We emphasise that the present work deals with the so-called Virasoro picture and thus ignores possible
extensions of the Virasoro algebra such as the $\mathcal{W}_{p,p'}$ algebras underlying the W-extended picture \cite{FeiLog06,PRR08,RasWLM09}.

As lattice theories, the logarithmic minimal models $\mathcal{LM}(p,p')$ describe non-intersecting, densely packed 
loops on a square lattice. Mathematically, this can be formalised in terms of the Temperley-Lieb algebra 
$\tl_n(\beta)$ \cite{TL}, where $\beta$ denotes the fugacity of the loops and $n$ is the width of the lattice.
The models admit infinitely many distinct Yang-Baxter integrable boundary conditions, among which the
so-called $(r,s)$-type, or Kac, 
boundary conditions play a prominent role. Matrix realisations of the corresponding transfer operators are well-defined, although their construction does not yield representations
of the full underlying Temperley-Lieb algebras. It was nevertheless argued, based on supporting numerical explorations \cite{PRZ06}, that they give rise to the Virasoro characters 
\begin{equation}
 \chit_{r,s}(q) = q^{-c/24} \frac{q^{\Delta_{r,s}}(1-q^{rs})}{\prod_{j=1}^\infty (1-q^j)}\qquad \text{(\(r,s\in\mathbb{Z}_+\))}
\label{chitrs}
\end{equation}
in the scaling limit. It is stressed that these characters do not, in general, 
correspond to irreducible representations. Because of their definition in terms of Kac labels, they were 
baptised (Virasoro) Kac characters \cite{RasFus07a,RasFus07}. 
To each logarithmic minimal model $\mathcal{LM}(p,p')$,
one can thus associate an infinitely extended Kac table of conformal weights $\Delta_{r,s}$ of the form \eqref{cD}, with $r,s\in\mathbb{Z}_+$. 

Unless a Kac character happens to be irreducible, the structure of the corresponding Virasoro module 
is not determined by the character alone. In fact, the Virasoro module structures associated with the Kac boundary conditions
have remained largely unknown and a primary goal of this paper is to remedy this situation.
By combining lattice analyses with applications of the Nahm-Gaberdiel-Kausch algorithm \cite{NahQua94,GabInd96},
an explicit conjecture for these modules, in the logarithmic minimal models $\mathcal{LM}(1,p')$, was presented 
in \cite{RasCla11} as certain finitely generated submodules of Feigin-Fuchs modules \cite{FeiRep90}.
A key objective here is to extend this conjecture to the general logarithmic minimal models
$\mathcal{LM}(p,p')$ and to substantiate it by providing new and independent evidence.

For getting insight into the conformal properties of logarithmic minimal models, a fundamental paradigm 
is that much information is already encoded at the lattice level. In particular, certain representation-theoretic properties
of the Temperley-Lieb algebra or the transfer matrices are thus expected to `survive' in the scaling limit.
It is therefore of great importance to understand this limit better mathematically, especially how the Virasoro
algebra arises. In terms of the generators of the Temperley-Lieb algebra, Koo and Saleur \cite{KooAss94} have
proposed explicit expressions that are believed to realise the Virasoro modes in the scaling limit.
Common structures between finite systems and their conformal counterparts have also been examined
through quantum groups, initially for the XXZ spin chain \cite{PS90}, but more recently for the $g\ell(1|1)$ 
super-spin chain \cite{GaiCon11,GaiBim13} 
and to realise W-algebraic structures from XXZ spin chains \cite{GST13}. Recent advances \cite{GaiAss12,GaiPer14} have extended this to spin chains with periodic boundary conditions, leading to non-trivial predictions for the structures of the corresponding bulk \cfts{}. We refer to the recent review paper \cite{GaiLog13} for more details.

Fusion has played a crucial role in unravelling Virasoro module structures in logarithmic minimal models \cite{GabInd96,FucNon04,EbeVir06,FeiMod06,FeiKaz06,RidPer07,SemNot07,RidLog07,RidPer08,GabFus09,WooFus10,TsuTen12}. 
Extending ideas originating with Cardy \cite{Cardy86,Cardy89}, fusion can also
be implemented on the lattice, allowing one to construct new (lattice) representations from pairs of Kac boundary conditions \cite{PRZ06}. In some cases, the corresponding Hamiltonian 
acts non-diagonalisably on these new representations, a property believed to persist in the scaling limit. The ensuing Virasoro representations thus exhibit non-trivial Jordan blocks in $L_0$. In contrast, the Hamiltonians associated with the individual Kac boundary conditions are believed to be diagonalisable for all system sizes \cite{PRZ06}. For the simplest boundary conditions, this statement was proved recently \cite{MDRRSA15}. Imposing associativity and distributivity on the fusion rules has led to concrete conjectures for a variety of fusion algebras associated with the logarithmic minimal models \cite{PR07,ReaAss07,RasFus07a,RasFus07,RasCla11,BGT12}. 
These studies also give insight into the structure of the Virasoro modules associated
with the Kac boundary conditions. With reference to the Kac labels in \eqref{chitrs}, it is 
believed, for example, that the
modules are indecomposable unless $r=kp$, $s=k'p'$ and $k,k'>1$, in which case they are 
completely reducible but not irreducible. Our findings confirm this.

Some of the difficulties encountered in mathematically describing the $(r,s)$-type boundary conditions 
can be traced back to the fact that the lattice construction does not, in general, 
yield representations of the associated Temperley-Lieb algebra. This lack of a representation-theoretic framework is readily appreciated, at least indirectly, since the
well known Temperley-Lieb representation theory \cite{JonInd83,MarPot91,GW93,W95,RidSta12} 
does not accommodate the conformal structures that one might expect to see in the scaling limit.
It is then natural to suspect that the \emph{boundary} (or \emph{blob}) Temperley-Lieb 
algebras \cite{MS93,MW00,NRdG05,Nic06} could resolve this issue.
When the strip has at most one non-trivial boundary condition, this is indeed what we find, although the appropriate 
algebraic set-up requires taking a particular \emph{quotient} of the boundary algebra. 
Due to the appearance of so-called seams in the construction of the boundary conditions, 
we call these quotient algebras \emph{boundary seam algebras}. Particular care must be given
to the cases where the loop fugacity $\beta=q+q^{-1}$ is expressed in terms of a root of unity $q$. 

By construction, the modules obtained from the lattice with Kac boundary conditions can then be interpreted as 
\emph{standard modules} 
over the boundary seam algebras. As such, we refer to them as 
\emph{lattice Kac modules}. 
Originally defined somewhat heuristically in \cite{PRZ06,PRannecy,PRV12,PTC14}, these modules finally have a clear mathematical meaning. Here, we examine their structures 
using an invariant bilinear form defined on each of these standard modules, generalising the well known similar 
form \cite{MarPot91,W95,GraCel96,RidSta12} on the standard modules over the Temperley-Lieb algebra.

One of our main results is an explicit general conjecture (\cref{TheConjecture}) for the Virasoro modules arising 
as the scaling limit of the lattice Kac modules. 
These Virasoro modules are called (\emph{Virasoro}) \emph{Kac modules} and are defined as particular finitely 
generated submodules of Feigin-Fuchs modules.
In fact, the specific identification of the Kac modules is subtle. 
Feigin-Fuchs modules of chain or braid type structures come in pairs where one is obtained from the other by reversing all embedding arrows. This pairing is inherited by their submodules and means that the
contragredient module of a Kac module is, in general, not a Kac module.
As discussed in \cite{GabFus09}, see also \cite{RasCla11} for $\mathcal{LM}(1,p')$, the conformal fusion rules are believed to be
invariant under the interchange of each Kac module with its contragredient counterpart. The analysis of the fusion 
rules therefore cannot distinguish which should arise in the scaling limit of the lattice models. 
In stark contrast, the lattice Kac modules, as modules over the boundary seam algebras,
have unambiguous (albeit currently unknown) module structures. These are believed to persist in the scaling limit, so we will use the invariant bilinear form defined on these modules to determine the limiting Virasoro structures, thereby singling out the Virasoro Kac modules over their contragredients. 
Our results cover sufficiently many cases to justify \cref{TheConjecture}. 
Even for $\mathcal{LM}(1,p')$, this constitutes considerable new insight, as compared with \cite{RasCla11}. 

Additional evidence for the Kac module structure comes from conformal field theory considerations.
Motivated by the natural interpretation of lattice Kac modules as certain (lattice) fusion products of simpler lattice
Kac modules, we analyse the corresponding fusion products of Virasoro Kac modules in two independent ways. 
Using a Verlinde-like formula introduced in \cite{CreRel11}, we confirm the expected result, 
at the level of characters, for all Kac modules in any given logarithmic minimal model $\mathcal{LM}(p,p')$. 
This Verlinde-like formula falls under the umbrella of the \emph{standard module formalism} developed in \cite{CreLog13,RidVer14} for modular properties of logarithmic \cfts{}.  Additional tests of this formula may be found in \cite{CreMod12,BabTak12,CreMod13,RidMod14,RidBos14}.
We then apply the Nahm-Gaberdiel-Kausch fusion algorithm and again find exact agreement, at the level of the modules, in all the cases that we consider. It is highly non-trivial that these results, obtained directly at the conformal level, confirm the lattice analysis and conjectures.

For the structure of this paper, the bulk is divided into three phases.
The first phase (\cref{sec:LatMod}) concerns the algebraic description of the lattice construction of 
Kac boundary conditions. After a brief review of the logarithmic minimal models, the 
Temperley-Lieb algebras and the boundary Temperley-Lieb algebras, 
we define the boundary seam algebras that play a central role in our analysis.
We then define the standard modules over these
algebras and the corresponding invariant bilinear forms, deriving a formula for the Gram determinants. 
Some of the details and proofs are relegated to the appendices. 
In particular, \cref{app:TLrep} summarises the representation theory of the \TL{} algebras and \cref{app:TLone} reviews the one-boundary \TL{} algebras, including
a proof of the equivalence between its definition via diagrams and that via generators and relations (we have not found such a proof in the literature).
Finally, \cref{app:Seams} contains technical proofs pertaining to presentations of
the boundary seam algebras themselves, while
\cref{app:btlstan} is devoted to proving representation-theoretic results for these algebras, in particular those involving standard modules and Gram determinants.
 
The second phase (\cref{sec:Kacmod}) is concerned with establishing the connection
between the lattice models and their conformal scaling limits. We first define lattice Kac modules 
and their (lattice) fusion, before defining (Virasoro) 
Kac modules and discussing how they are believed to arise as scaling limits of lattice Kac modules. 
After briefly reviewing how one can guess the character of a limiting Virasoro module from numerical lattice data, and the limitations inherent in this procedure, we turn to an
investigation of the structure of the lattice Kac modules as modules over the boundary seam algebras.  
This structure is (partially) uncovered by using the Gram determinant of the module's invariant bilinear form.
We present numerical experiments 
and comparisons of the results with those expected in the continuum 
in \cref{sec:LatticeData}.  This evidence all 
supports \cref{TheConjecture} which precisely identifies the scaling limits of lattice Kac modules with Virasoro Kac modules. Background information on the representation theory
of the Virasoro algebra is found in \cref{sec:Back}, whereas \cref{sec:Gramdata} tabulates the
data that we have analysed with the aid of computer programs.

In the third phase (\cref{sec:CFTanalysis}), we present (conformal)
fusion results that provide strong evidence for the correctness of \cref{TheConjecture} from a conformal field-theoretic 
perspective.  This begins by deriving the modular transformation properties of the characters of the \FFms{}.  We then 
employ a Verlinde-like formula to determine the character of a fusion product of any two Kac modules.  The information 
so-obtained is then combined with the structure theory of Virasoro modules \cite{RohRed96,RidSta09} and explicit 
fusion computations using the Nahm-Gaberdiel-Kausch algorithm to identify the fusion product of two Kac modules in 
many examples. We present two example computations in detail in order to illustrate the methods used and the 
subtleties encountered.

Finally, \cref{sec:Conc} contains some concluding remarks. Here, we summarise the results of the paper as well as outline 
questions that remain unanswered and future directions that this work suggests.

%
\section{Lattice models and diagrammatic algebras} \label{sec:LatMod}
%

In the first phase of this work, our goal is to provide a rigorous algebraic framework for the study of transfer operators with $(r,s)$-type boundary conditions in logarithmic minimal models. \cref{sec:LMM} reviews the definition of logarithmic minimal models \cite{PRZ06} as lattice loop models, in particular their description in terms of the Temperley-Lieb algebra (\cref{sec:TLa}) and transfer tangles with boundary seams (\cref{sec:bdyseams}). \cref{sec:boundaryTLs} then describes these transfer tangles in the context of the one-boundary Temperley-Lieb algebra (\cref{sec:TLone}). It turns 
out that the natural description is in terms of quotients of the one-boundary Temperley-Lieb algebras that
we call the \emph{boundary seam algebras} 
(\cref{sec:seamsandbtl}). \cref{sec:reps,sec:Gram} then define representations and invariant bilinear forms, 
respectively, for the boundary seam algebras. 

%
\subsection{Logarithmic minimal models}\label{sec:LMM}
%

Logarithmic minimal models \cite{PRZ06} are lattice models defined in terms of non-local observables. They have an underlying structure, the Temperley-Lieb algebra $\tl_n$ \cite{TL}, which precisely encodes this non-locality. The representation theory of this algebra was studied by Jones \cite{JonInd83}, Martin \cite{MarPot91}, Goodman and Wenzl \cite{GW93} and Westbury \cite{W95}. 
A recent article by Ridout and Saint-Aubin \cite{RidSta12} gives a review that aims to be accessible to non-experts. 
\cref{sec:TLa} below recalls the definition of the algebra $\tl_n$, the Wenzl-Jones projectors \cite{WenHec88}, the 
standard modules and the invariant bilinear form \cite{W95,GraCel96}. Further basic results of the representation 
theory of the Temperley-Lieb algebra are reviewed in \cref{app:TLrep}.

The logarithmic minimal models are described by evolution operators called \emph{transfer tangles}. 
These operators are not matrices, but are instead constructed as elements of the diagrammatic algebra $\tl_n$ \cite{KauInv90}. 
Of particular interest for conformal field theory are transfer tangles with integrable boundary conditions. 
\cref{sec:bdyseams} reviews the definition of transfer tangles on lattices whose geometries match that of a strip with non-trivial Kac boundary seams on one side.

\subsubsection{Temperley-Lieb algebras}\label{sec:TLa}

\paragraph{Diagrammatic and algebraic definitions.}
Let $n \in \ZZ_+$. The objects that $\tl_n$ describes are formal linear combinations of diagrams called 
\emph{connectivities}. A connectivity is a diagram drawn in a box where $2n$ nodes, equally divided between the top and bottom edges of the box, are connected pairwise by non-intersecting arcs. 
Two connectivities are considered equal if the connections of their nodes are identical. To illustrate,
\begin{equation}
a_1 = \
\begin{pspicture}[shift=-0.4](-0.0,0)(2.4,1)
\pspolygon[fillstyle=solid,fillcolor=lightlightblue](0,0)(0,1)(2.4,1)(2.4,0)(0,0)
\psbezier[linecolor=blue,linewidth=\elegant]{-}(0.6,1)(0.6,0.5)(1.8,0.5)(1.8,1)
\psarc[linecolor=blue,linewidth=\elegant]{-}(1.2,1){0.2}{180}{360}
\psline[linecolor=blue,linewidth=\elegant]{-}(0.2,0)(0.2,1)
\psline[linecolor=blue,linewidth=\elegant]{-}(2.2,0)(2.2,1)
\psarc[linecolor=blue,linewidth=\elegant]{-}(0.8,0){0.2}{0}{180}
\psarc[linecolor=blue,linewidth=\elegant]{-}(1.6,0){0.2}{0}{180}
\end{pspicture} \qquad {\rm and} \qquad
a_2 = \
\begin{pspicture}[shift=-0.4](-0.0,0)(2.4,1)
\pspolygon[fillstyle=solid,fillcolor=lightlightblue](0,0)(0,1)(2.4,1)(2.4,0)(0,0)
\psarc[linecolor=blue,linewidth=\elegant]{-}(0.4,1){0.2}{180}{360}
\psarc[linecolor=blue,linewidth=\elegant]{-}(1.2,1){0.2}{180}{360}
\psbezier[linecolor=blue,linewidth=\elegant]{-}(0.2,0)(0.2,0.5)(1.8,0.5)(1.8,1)
\psline[linecolor=blue,linewidth=\elegant]{-}(2.2,0)(2.2,1)
\psarc[linecolor=blue,linewidth=\elegant]{-}(1.2,0){0.2}{0}{180}
\psbezier[linecolor=blue,linewidth=\elegant]{-}(0.6,0)(0.6,0.5)(1.8,0.5)(1.8,0)
\end{pspicture} \label{eq:a12}
\end{equation}
are two distinct connectivities in $\tl_6$. We shall refer to linear combinations of connectivities as \emph{tangles}.

The algebraic structure of the Temperley-Lieb algebras depends upon a parameter $\beta$, often called the loop fugacity. For now, $\beta$ is taken to be a 
\emph{formal} parameter, in which case the linear combinations of connectivities have coefficients in some complex field of functions.\footnote{In this section, we may take the field to be the rational function field $\CC(\beta)$.  However, this will not suffice in \cref{sec:bdyseams} where we shall require more complicated functional dependences involving $\beta$ and additional formal variables.}
We shall eventually consider the specialisation to $\beta\in\mathbb{C}$, in which case the specialised Temperley-Lieb algebras will be denoted by $\tl_n(\beta)$. 

The product $a_1 a_2$  of two connectivities is defined using the following diagrammatic recipe. One draws $a_1$ below $a_2$, identifies the top edge of $a_1$ with the bottom edge of $a_2$, erases this identified edge, counts the number $b$ of (closed) loops, and finally replaces them by a multiplicative factor of $\beta^b$.
An example is useful:
\begin{equation}
a_1 a_2 =\ \begin{pspicture}[shift=-0.9](-0.0,0)(2.4,2)
\pspolygon[fillstyle=solid,fillcolor=lightlightblue](0,0)(0,2)(2.4,2)(2.4,0)(0,0)
\psline[linecolor=blue, linewidth=\elegant]{-}(0.2,0)(0.2,1)
\psline[linecolor=blue, linewidth=\elegant]{-}(2.2,0)(2.2,1)
\psarc[linecolor=blue,linewidth=\elegant]{-}(1.2,1){0.2}{180}{360}
\psbezier[linecolor=blue,linewidth=\elegant]{-}(0.6,1)(0.6,0.5)(1.8,0.5)(1.8,1)
\psarc[linecolor=blue,linewidth=\elegant]{-}(0.8,0){0.2}{0}{180}
\psarc[linecolor=blue,linewidth=\elegant]{-}(1.6,0){0.2}{0}{180}
\psline[linewidth=\mince]{-}(0,2)(2.4,2)(2.4,1)(0,1)(0,2)
\rput(0,1){
\psarc[linecolor=blue,linewidth=\elegant]{-}(0.4,1){0.2}{180}{360}
\psarc[linecolor=blue,linewidth=\elegant]{-}(1.2,1){0.2}{180}{360}
\psarc[linecolor=blue,linewidth=\elegant]{-}(1.2,0){0.2}{0}{180}
\psbezier[linecolor=blue,linewidth=\elegant]{-}(0.6,0)(0.6,0.5)(1.8,0.5)(1.8,0)
\psbezier[linecolor=blue,linewidth=\elegant]{-}(0.2,0)(0.2,0.5)(1.8,0.5)(1.8,1)
\psline[linecolor=blue,linewidth=\elegant]{-}(2.2,0)(2.2,1)
}
\end{pspicture}
\ = \beta^2\ 
\begin{pspicture}[shift=-0.4](-0.0,0)(2.4,1)
\pspolygon[fillstyle=solid,fillcolor=lightlightblue](0,0)(0,1)(2.4,1)(2.4,0)(0,0)
\psarc[linecolor=blue,linewidth=\elegant]{-}(0.4,1){0.2}{180}{360}
\psarc[linecolor=blue,linewidth=\elegant]{-}(1.2,1){0.2}{180}{360}
\psbezier[linecolor=blue,linewidth=\elegant]{-}(0.2,0)(0.2,0.5)(1.8,0.5)(1.8,1)
\psline[linecolor=blue,linewidth=\elegant]{-}(2.2,0)(2.2,1)
\psarc[linecolor=blue,linewidth=\elegant]{-}(0.8,0){0.2}{0}{180}
\psarc[linecolor=blue,linewidth=\elegant]{-}(1.6,0){0.2}{0}{180}
\end{pspicture}\ = \beta^2 a_3. \label{eq:a3}
\end{equation}
This product is linearly extended to all the tangles of $\tl_n$, turning the space of tangles into an associative algebra. This algebra is finite-dimensional with dimension 
\begin{equation}\label{eq:dimtl}
 \dim \tl_n=\frac 1{n+1} \binom{2n}{n}.
\end{equation}
The dimension does not change upon specialising to $\beta \in \CC$.

While the definition given above is purely diagrammatic, the Temperley-Lieb algebra also has a
presentation in terms of generators and relations:  an identity element $I$ 
and $n-1$ elements denoted by $e_j$, $j = 1, \dots, n-1$, satisfying
\begin{equation}
e_j^2=\beta e_j, \qquad e_j e_{j\pm1} e_j = e_j,  \qquad e_i e_j = e_j e_i
 \quad (|i-j|>1).
 \label{eq:defTL}
\end{equation}
The generators are identified with connectivities as follows:
\begin{equation}
I =
\ 
\begin{pspicture}[shift=-0.525](-0.0,-0.25)(2.4,0.8)
\pspolygon[fillstyle=solid,fillcolor=lightlightblue](0,0)(0,0.8)(2.4,0.8)(2.4,0)(0,0)
\psline[linecolor=blue,linewidth=\elegant]{-}(0.2,0)(0.2,0.8)
\psline[linecolor=blue,linewidth=\elegant]{-}(0.6,0)(0.6,0.8)
\psline[linecolor=blue,linewidth=\elegant]{-}(1.0,0)(1.0,0.8)
\rput(1.4,0.4){$...$}
\psline[linecolor=blue,linewidth=\elegant]{-}(1.8,0)(1.8,0.8)
\psline[linecolor=blue,linewidth=\elegant]{-}(2.2,0)(2.2,0.8)
\rput(0.2,-0.25){$_1$}
\rput(0.6,-0.25){$_2$}
\rput(1.0,-0.25){$_3$}
\rput(2.2,-0.25){$_n$}
\end{pspicture}
\ , \qquad
e_j = \ 
\begin{pspicture}[shift=-0.3](-0.0,0)(3.2,0.8)
\pspolygon[fillstyle=solid,fillcolor=lightlightblue](0,0)(0,0.8)(3.2,0.8)(3.2,0)(0,0)
\psline[linecolor=blue,linewidth=\elegant]{-}(0.2,0)(0.2,0.8)
\rput(0.6,0.4){$...$}
\psline[linecolor=blue,linewidth=\elegant]{-}(1.0,0)(1.0,0.8)
\psarc[linecolor=blue,linewidth=\elegant]{-}(1.6,0.8){0.2}{180}{360}
\psarc[linecolor=blue,linewidth=\elegant]{-}(1.6,0){0.2}{0}{180}
\psline[linecolor=blue,linewidth=\elegant]{-}(2.2,0)(2.2,0.8)
\rput(2.6,0.4){$...$}
\psline[linecolor=blue,linewidth=\elegant]{-}(3.0,0)(3.0,0.8)
\rput(0.2,-0.25){$_1$}
\rput(1.4,-0.25){$_{\phantom{+}j\phantom{+}}$}
\rput(3.0,-0.25){$_n$}
\end{pspicture}
\ .
\label{eq:iso}
\end{equation}
A proof of the equivalence of this algebraic description and the diagrammatic one presented above may be found in \cite{RidSta12}.  In particular, every
connectivity can then be written as a word in the generators. For the examples in \eqref{eq:a12} and \eqref{eq:a3}, we have $a_1 = e_2 e_4 e_3$, $a_2 = e_3 e_2 e_4 e_1 e_3$ and $a_3 = e_2 e_4 e_1 e_3$.

\paragraph{Projectors.} 
The \emph{Wenzl-Jones projectors} \cite{JonInd83,WenHec88,KauTem94} comprise an important family of elements of $\tl_n$. They are defined recursively by
\begin{equation}
P_1 = I, \qquad P_i = P_{i-1} - \frac {U_{i-2}(\tfrac \beta 2)}{U_{i-1}(\tfrac \beta 2)} P_{i-1} e_{i-1}P_{i-1} \quad(i=2, \dots, n),
\label{eq:WJs}
\end{equation}
where $U_k(x)$ denotes the $k$-th Chebyshev polynomial of the second kind. We will 
omit the argument of $U_k$ in what follows, understanding that it is always
$\tfrac \beta 2$. The first two non-trivial Wenzl-Jones projectors are
\begin{equation}
P_2 = I - \frac 1 \beta \, e_1, \qquad P_3 = I - \frac \beta{\beta^2 - 1} (e_1 + e_2) + \frac 1{\beta^2 - 1} (e_1e_2 + e_1e_2).
\end{equation}
It is clear from these examples that the \WJ{} projectors need not be defined when $\beta$ is specialised to a complex number.  This
illustrates the advantage of treating $\beta$ as a formal parameter. 
The following properties follow straightforwardly from \eqref{eq:WJs}:
\begin{subequations} \label{eq:WJprop}
\begin{gather}
(P_i)^2 = P_i, \qquad P_i e_j = e_j P_i = 0 \quad \text{($1\le j < i$),} \qquad P_i e_j = e_j P_i \quad \text{($j \ge i+1$),} \label{eq:WJprop1} \\
P_i P_j = P_{\max (i,j)}, \qquad e_i P_i e_i = \frac{U_i}{U_{i-1}} e_i P_{i-1}. \label{eq:WJprop2}
\end{gather}
\end{subequations}

As defined above, the projector $P_i$ only involves the $e_j$ with $1 \le j<i$, so it only acts non-trivially on the nodes $1$ through $i$. It is, however, possible to define Wenzl-Jones projectors
acting on other sets of consecutive nodes.
Indeed, denoting the projector $P_i$ diagrammatically by
$\begin{pspicture}[shift=-0.05](-0.03,0.00)(1.07,0.3)
\pspolygon[fillstyle=solid,fillcolor=pink](0,0)(1,0)(1,0.3)(0,0.3)(0,0)\rput(0.5,0.15){$_i$}
\end{pspicture}
$, we see that one may define, for example,
\begin{equation}
\begin{pspicture}[shift=-0.3](0,-0.0)(1.25,0.8)
\pspolygon[fillstyle=solid,fillcolor=lightlightblue](0,0)(0,0.8)(1.25,0.8)(1.25,0)
\psline[linewidth=1.5pt,linecolor=blue]{-}(0.2,0)(0.2,0.8)
\psline[linewidth=1.5pt,linecolor=blue]{-}(0.6,0)(0.6,0.8)
\psline[linewidth=1.5pt,linecolor=blue]{-}(1.0,0)(1.0,0.8)
\pspolygon[fillstyle=solid,fillcolor=pink](0.45,0.25)(0.45,0.55)(1.15,0.55)(1.15,0.25)\rput(0.8,0.4){\small $_2$}
\end{pspicture} \ = \
\begin{pspicture}[shift=-0.3](0,-0.0)(1.2,0.8)
\pspolygon[fillstyle=solid,fillcolor=lightlightblue](0,0)(0,0.8)(1.2,0.8)(1.2,0)
\psline[linewidth=1.5pt,linecolor=blue]{-}(0.2,0)(0.2,0.8)
\psline[linewidth=1.5pt,linecolor=blue]{-}(0.6,0)(0.6,0.8)
\psline[linewidth=1.5pt,linecolor=blue]{-}(1.0,0)(1.0,0.8)
\end{pspicture} \ - \frac 1 \beta \ 
\begin{pspicture}[shift=-0.3](0,-0.0)(1.2,0.8)
\pspolygon[fillstyle=solid,fillcolor=lightlightblue](0,0)(0,0.8)(1.2,0.8)(1.2,0)
\psline[linewidth=1.5pt,linecolor=blue]{-}(0.2,0)(0.2,0.8)
\psarc[linewidth=1.5pt,linecolor=blue]{-}(0.8,0){0.2}{0}{180}
\psarc[linewidth=1.5pt,linecolor=blue]{-}(0.8,0.8){0.2}{180}{0}
\end{pspicture}
\: .
\end{equation}
This then involves a Wenzl-Jones projector $P_2'$ acting non-trivially on nodes $2$ and $3$ (instead of $1$ and $2$).  The
projector $P_2'$ is obtained from $P_2$ simply by shifting the index of $e_1$ by one: $P_2' = I - \frac{1}{\beta} e_2$.  
Similar shifts define more general projectors in the obvious manner.

\paragraph{Standard modules.} 

A \emph{link state} is a diagram drawn above a horizontal line marked with $n$ nodes, in which the nodes are either connected pairwise by non-intersecting arcs, or are connected by a vertical line, called a \emph{defect}, to infinity.
The number $d$ of defects of a link state is clearly constrained to have the same parity as $n$. As with Temperley-Lieb diagrams, two link states are considered equal if their nodes are connected identically. 
We denote by $\links_n^d$ the set of link states with $n$ nodes and $d$ defects. For $n=5$ and $d=1$, for example, there are five distinct link states:
\begin{equation}
\mathcal B_5^1 = \Big\{ \,
\psset{unit=0.8}
\begin{pspicture}[shift=-0.08](-0.0,0)(2.2,0.5)
\psline[linewidth=\mince](0,0)(2.0,0)
\psline[linecolor=blue,linewidth=\elegant]{-}(0.2,0)(0.2,0.5)
\psarc[linecolor=blue,linewidth=\elegant]{-}(1.2,0){0.2}{0}{180}
\psbezier[linecolor=blue,linewidth=\elegant](0.6,0)(0.6,0.6)(1.8,0.6)(1.8,0)
\rput(2.15,0){,}
\end{pspicture} \ 
\begin{pspicture}[shift=-0.08](-0.0,0)(2.2,0.5)
\psline[linewidth=\mince](0,0)(2.0,0)
\psline[linecolor=blue,linewidth=\elegant]{-}(1.8,0)(1.8,0.5)
\psarc[linecolor=blue,linewidth=\elegant]{-}(0.8,0){0.2}{0}{180}
\psbezier[linecolor=blue,linewidth=\elegant](0.2,0)(0.2,0.6)(1.4,0.6)(1.4,0)
\rput(2.15,0){,}
\end{pspicture}\
\begin{pspicture}[shift=-0.08](-0.0,0)(2.2,0.5)
\psline[linewidth=\mince](0,0)(2.0,0)
\psline[linecolor=blue,linewidth=\elegant]{-}(1.8,0)(1.8,0.5)
\psarc[linecolor=blue,linewidth=\elegant]{-}(0.4,0){0.2}{0}{180}
\psarc[linecolor=blue,linewidth=\elegant]{-}(1.2,0){0.2}{0}{180}
\rput(2.15,0){,}
\end{pspicture} \
\begin{pspicture}[shift=-0.08](-0.0,0)(2.2,0.5)
\psline[linewidth=\mince](0,0)(2.0,0)
\psline[linecolor=blue,linewidth=\elegant]{-}(0.2,0)(0.2,0.5)
\psarc[linecolor=blue,linewidth=\elegant]{-}(0.8,0){0.2}{0}{180}
\psarc[linecolor=blue,linewidth=\elegant]{-}(1.6,0){0.2}{0}{180}
\rput(2.15,0){,}
\end{pspicture}\ 
\begin{pspicture}[shift=-0.08](-0.0,0)(2.0,0.5)
\psline[linewidth=\mince](0,0)(2.0,0)
\psline[linecolor=blue,linewidth=\elegant]{-}(1.0,0)(1.0,0.5)
\psarc[linecolor=blue,linewidth=\elegant]{-}(0.4,0){0.2}{0}{180}
\psarc[linecolor=blue,linewidth=\elegant]{-}(1.6,0){0.2}{0}{180}
\end{pspicture}\,
\Big\}.
\label{eq:B51}
\end{equation}
 
The \emph{standard representation $\rho_n^d$} of the algebra $\tl_n$ is constructed on the vector space 
spanned by the link states of $\links_n^d$:
\begin{equation}
\rho_n^d \colon \tl_n \rightarrow \End(\vspn \links_n^d) \qquad \text{(\(0\le d \le n\), \(d = n \bmod{2}\)).} 
\label{eq:rhod}
\end{equation} 
We define the standard action $aw$ for connectivities $a\in \tl_n$ and link states $w \in \links_n^d$, understanding that this is extended linearly to all tangles in $\tl_n$ and all linear combinations of the link states of $\links_n^d$.
This action is defined diagrammatically, with the recipe closely resembling the rule for 
multiplying connectivities. We draw $a$ below $w$, identify
the nodes of $w$ with those of the top edge of $a$, count the number $b$ of (closed) loops, erase them, read the new link state from the connections of the bottom $n$ nodes, and set $aw$ to be this new link state with a multiplicative factor of $\beta^b$. A final modification is performed: if the resulting link state has fewer than $d$ defects, then $aw$ is set to zero. This last rule ensures that $\vspn \links_n^d$ is indeed invariant under the Temperley-Lieb action. Here are two examples for $(n,d) = (5,1)$ and $(n,d) = (5,3)$:
\begin{equation}
\psset{unit=0.8}
\begin{pspicture}[shift=-0.4](-0.0,0)(2.0,1.6)
\pspolygon[fillstyle=solid,fillcolor=lightlightblue](0,0)(0,1)(2.0,1)(2.0,0)(0,0)
\psbezier[linecolor=blue,linewidth=\elegant]{-}(1.8,1)(1.8,0.5)(1.0,0.5)(1.0,0)
\psarc[linecolor=blue,linewidth=\elegant]{-}(0.4,1){0.2}{180}{0}
\psarc[linecolor=blue,linewidth=\elegant]{-}(1.2,1){0.2}{180}{0}
\psarc[linecolor=blue,linewidth=\elegant]{-}(0.4,0){0.2}{0}{180}
\psarc[linecolor=blue,linewidth=\elegant]{-}(1.6,0){0.2}{0}{180}
\psbezier[linecolor=blue,linewidth=\elegant]{-}(0.6,1)(0.6,1.6)(1.8,1.6)(1.8,1)
\psarc[linecolor=blue,linewidth=\elegant]{-}(1.2,1){0.2}{0}{180}
\psline[linecolor=blue,linewidth=\elegant]{-}(0.2,1)(0.2,1.6)
\end{pspicture} \ = \beta \
\begin{pspicture}[shift=-0.02](-0.0,0)(2.0,0.5)
\psline[linewidth=\mince](0,0)(2.0,0)
\psline[linecolor=blue,linewidth=\elegant]{-}(1.0,0)(1.0,0.5)
\psarc[linecolor=blue,linewidth=\elegant]{-}(0.4,0){0.2}{0}{180}
\psarc[linecolor=blue,linewidth=\elegant]{-}(1.6,0){0.2}{0}{180}
\end{pspicture}\ , \qquad
\begin{pspicture}[shift=-0.4](-0.0,0)(2.0,1.6)
\pspolygon[fillstyle=solid,fillcolor=lightlightblue](0,0)(0,1)(2.0,1)(2.0,0)(0,0)
\psbezier[linecolor=blue,linewidth=\elegant]{-}(1.8,0)(1.8,0.5)(1.0,0.5)(1.0,1)
\psbezier[linecolor=blue,linewidth=\elegant]{-}(1.4,0)(1.4,0.5)(0.6,0.5)(0.6,1)
\psarc[linecolor=blue,linewidth=\elegant]{-}(0.8,0){0.2}{0}{180}
\psarc[linecolor=blue,linewidth=\elegant]{-}(1.6,1){0.2}{180}{0}
\psline[linecolor=blue,linewidth=\elegant]{-}(0.2,0)(0.2,1.0)
\psarc[linecolor=blue,linewidth=\elegant]{-}(0.4,1){0.2}{0}{180}
\psline[linecolor=blue,linewidth=\elegant]{-}(1.0,1)(1.0,1.6)
\psline[linecolor=blue,linewidth=\elegant]{-}(1.4,1)(1.4,1.6)
\psline[linecolor=blue,linewidth=\elegant]{-}(1.8,1)(1.8,1.6)
\end{pspicture} \ = 0.
\label{eq:standardexample}
\end{equation}

This gives rise to $\lfloor \frac{n+2}2 \rfloor$ representations of the Temperley-Lieb 
algebra, one for each integer $d$ subject to the constraints in \eqref{eq:rhod}. 
We denote by $\stan_n^d$ the \emph{standard module}, that is the span of $\links_n^d$ endowed with the \TL{}
action just defined.  
As we shall see, the standard modules are irreducible over complex function fields.  
Moreover, their dimensions are given by 
\begin{equation}
\dim \stan_n^d = \begin{pmatrix}n\\ \tfrac {n-d}2\end{pmatrix} - \begin{pmatrix}n\\ \tfrac {n-d-2}2\end{pmatrix}.
\label{eq:dimV}
\end{equation}
The structures of the standard modules over the specialised Temperley-Lieb algebras $\tl_n(\beta)$ depend
on $\beta\in\mathbb{C}$ and can be more complicated. These structures are described in \cref{app:TLrep}, as are those of the irreducible and projective $\tl_n(\beta)$-modules. 

\paragraph{Invariant bilinear forms.} 

A standard tool for investigating the representation theory of cellular algebras \cite{GraCel96}, such as the \TL{} 
algebra, are the invariant bilinear forms $\gramprod{\cdot}{\cdot}$ defined on each standard module.  
These forms take values in the field over which the algebras are defined.

We define a bilinear form $\gramprod{\cdot}{\cdot}$ on $\stan_n^d$ by giving its value $\gramprod{w_1}{w_2}$ on the 
link states $w_1,w_2$ of the basis $\links_n^d$.  The definition is diagrammatic: 
One performs a vertical flip of $w_2$, identifying 
its horizontal segment with that belonging to $w_1$ so that their nodes coincide. The result is a collection of arcs living 
above and below a horizontal line that either form loops or connect defects pairwise. The defects of $w_1$ and $w_2$ 
become \emph{upward} and \emph{downward} pointing, respectively, in this procedure and $\gramprod{w_1}{w_2}$ is 
zero unless every upward defect is connected to a downward one. If this condition is met, then the form evaluates to 
$\gramprod{w_1}{w_2} = \beta^b$, where $b$ is the number of (closed) loops in the diagram. It is readily seen that this 
bilinear form is symmetric. Examples of evaluations of these forms are
\begin{equation}
\psset{unit=0.7}
\Big\langle \
\begin{pspicture}[shift=-0.12](-0.0,0)(3.6,0.5)
\psline[linewidth=\mince](0,0)(3.6,0)
\psline[linecolor=blue,linewidth=\elegant]{-}(0.2,0)(0.2,0.5)
\psarc[linecolor=blue,linewidth=\elegant]{-}(0.8,0){0.2}{0}{180}
\psarc[linecolor=blue,linewidth=\elegant]{-}(1.6,0){0.2}{0}{180}
\psarc[linecolor=blue,linewidth=\elegant]{-}(2.8,0){0.2}{0}{180}
\psbezier[linecolor=blue,linewidth=\elegant](2.2,0)(2.2,0.6)(3.4,0.6)(3.4,0)
\end{pspicture} \ \Big | \
\begin{pspicture}[shift=-0.12](-0.0,0)(3.6,0.5)
\psline[linewidth=\mince](0,0)(3.6,0)
\psline[linecolor=blue,linewidth=\elegant]{-}(3.4,0)(3.4,0.5)
\psarc[linecolor=blue,linewidth=\elegant]{-}(1.2,0){0.2}{0}{180}
\psbezier[linecolor=blue,linewidth=\elegant](0.6,0)(0.6,0.6)(1.8,0.6)(1.8,0)
\psbezier[linecolor=blue,linewidth=\elegant](0.2,0)(0.2,0.9)(2.2,0.9)(2.2,0)
\psarc[linecolor=blue,linewidth=\elegant]{-}(2.8,0){0.2}{0}{180}
\end{pspicture}\ \Big \rangle = \beta ^2, \qquad
\Big \langle
\begin{pspicture}[shift=-0.12](-0.0,0)(3.6,0.5)
\psline[linewidth=\mince](0,0)(3.6,0)
\psline[linecolor=blue,linewidth=\elegant]{-}(0.2,0)(0.2,0.5)
\psline[linecolor=blue,linewidth=\elegant]{-}(1.4,0)(1.4,0.5)
\psline[linecolor=blue,linewidth=\elegant]{-}(2.6,0)(2.6,0.5)
\psarc[linecolor=blue,linewidth=\elegant]{-}(0.8,0){0.2}{0}{180}
\psarc[linecolor=blue,linewidth=\elegant]{-}(2.0,0){0.2}{0}{180}
\psarc[linecolor=blue,linewidth=\elegant]{-}(3.2,0){0.2}{0}{180}
\end{pspicture} \ \Big | \
\begin{pspicture}[shift=-0.12](-0.0,0)(3.6,0.5)
\psline[linewidth=\mince](0,0)(3.6,0)
\psline[linecolor=blue,linewidth=\elegant]{-}(3.4,0)(3.4,0.5)
\psarc[linecolor=blue,linewidth=\elegant]{-}(0.8,0){0.2}{0}{180}
\psbezier[linecolor=blue,linewidth=\elegant](0.2,0)(0.2,0.6)(1.4,0.6)(1.4,0)
\psline[linecolor=blue,linewidth=\elegant]{-}(1.8,0)(1.8,0.5)
\psline[linecolor=blue,linewidth=\elegant]{-}(2.2,0)(2.2,0.5)
\psarc[linecolor=blue,linewidth=\elegant]{-}(2.8,0){0.2}{0}{180}
\end{pspicture}\ \Big \rangle = 0,
\end{equation}
which are easily read off from the diagrams
\begin{equation}
\psset{unit=0.7}
\begin{pspicture}[shift=-0.4](-0.0,-0.5)(3.6,0.5)
\psline[linewidth=\mince](0,0)(3.6,0)
\psline[linecolor=blue,linewidth=\elegant]{-}(0.2,0)(0.2,0.5)
\psarc[linecolor=blue,linewidth=\elegant]{-}(0.8,0){0.2}{0}{180}
\psarc[linecolor=blue,linewidth=\elegant]{-}(1.6,0){0.2}{0}{180}
\psarc[linecolor=blue,linewidth=\elegant]{-}(2.8,0){0.2}{0}{180}
\psbezier[linecolor=blue,linewidth=\elegant](2.2,0)(2.2,0.6)(3.4,0.6)(3.4,0)
\psline[linecolor=blue,linewidth=\elegant]{-}(3.4,0)(3.4,-0.5)
\psarc[linecolor=blue,linewidth=\elegant]{-}(1.2,0){-0.2}{0}{180}
\psbezier[linecolor=blue,linewidth=\elegant](0.6,0)(0.6,-0.6)(1.8,-0.6)(1.8,0)
\psbezier[linecolor=blue,linewidth=\elegant](0.2,0)(0.2,-0.9)(2.2,-0.9)(2.2,0)
\psarc[linecolor=blue,linewidth=\elegant]{-}(2.8,0){-0.2}{0}{180}
\end{pspicture} \qquad {\rm and} \qquad
\begin{pspicture}[shift=-0.4](-0.0,-0.5)(3.6,0.5)
\psline[linewidth=\mince](0,0)(3.6,0)
\psline[linecolor=blue,linewidth=\elegant]{-}(0.2,0)(0.2,0.5)
\psline[linecolor=blue,linewidth=\elegant]{-}(1.4,0)(1.4,0.5)
\psline[linecolor=blue,linewidth=\elegant]{-}(2.6,0)(2.6,0.5)
\psarc[linecolor=blue,linewidth=\elegant]{-}(0.8,0){0.2}{0}{180}
\psarc[linecolor=blue,linewidth=\elegant]{-}(2.0,0){0.2}{0}{180}
\psarc[linecolor=blue,linewidth=\elegant]{-}(3.2,0){0.2}{0}{180}
\psline[linecolor=blue,linewidth=\elegant]{-}(3.4,0)(3.4,-0.5)
\psarc[linecolor=blue,linewidth=\elegant]{-}(0.8,0){-0.2}{0}{180}
\psbezier[linecolor=blue,linewidth=\elegant](0.2,0)(0.2,-0.6)(1.4,-0.6)(1.4,0)
\psline[linecolor=blue,linewidth=\elegant]{-}(1.8,0)(1.8,-0.5)
\psline[linecolor=blue,linewidth=\elegant]{-}(2.2,0)(2.2,-0.5)
\psarc[linecolor=blue,linewidth=\elegant]{-}(2.8,0){-0.2}{0}{180}
\end{pspicture}\ .
\end{equation}
The \emph{Gram matrix} of the bilinear form $\gramprod{\cdot}{\cdot}$, in the link state basis $\links_n^d$, 
is denoted by $\grammat_n^d$. For $(n,d) = (5,1)$, in the ordered basis \eqref{eq:B51}, we have
\begin{equation}
\grammat_5^1 = \begin{pmatrix}
\beta^2 & 1 & \beta & \beta & 1\\
1 & \beta^2 & \beta & \beta & 1\\
\beta & \beta & \beta^2 & 1 & \beta\\
\beta & \beta & 1 & \beta^2 & \beta\\
1 & 1 & \beta & \beta & \beta^2\\
\end{pmatrix}.
\end{equation}

The bilinear form defined above is invariant under the action of $\tl_n$, meaning that
\begin{equation}
\gramprod{w_1}{a w_2} =\gramprod{a^\dagger w_1}{w_2} \qquad \text{(\(a \in \tl_n\), \(w_1,w_2 \in \stan_n^d\)),}
\end{equation}
where $a^\dagger \in \tl_n$ is the adjoint of $a$, obtained for a connectivity
$a$ by flipping it vertically, or equivalently by reversing the order of composition of products of the $e_i$ in the algebraic formalism. In particular, the generators and the \WJ{} projectors are self-adjoint: 
$I^{\dag} = I$, $e_j^\dagger = e_j$ and $P_j^{\dag} = P_j$. For this last result, we remark that this adjoint is 
extended linearly, so without complex conjugation, to general tangles.

This invariant bilinear form appears naturally in certain computations involving link states. Any connectivity $a$ can be obtained from a pair of link states $w_1,w_2 \in \links_n^d$, with the same number $d$ of defects, by drawing $w_2$ upside down above $w_1$ and connecting upward and downward pointing defects together. There is a unique way of doing this given that the resulting arcs may not cross. Then, if $w_3$ is also in $\links_n^d$, the action of $a$ on $w_3$ is given by
\begin{equation}
a\, w_3 = \gramprod{w_2}{w_3} \cdot w_1.
\end{equation}
The first example in \eqref{eq:standardexample}, for instance, corresponds to 
$w_1 = \psset{unit=0.6cm}
\begin{pspicture}[shift=-0.02](-0.0,0)(2.0,0.5)
\psline[linewidth=\mince](0,0)(2.0,0)
\psline[linecolor=blue,linewidth=\elegant]{-}(1.0,0)(1.0,0.5)
\psarc[linecolor=blue,linewidth=\elegant]{-}(0.4,0){0.2}{0}{180}
\psarc[linecolor=blue,linewidth=\elegant]{-}(1.6,0){0.2}{0}{180}
\end{pspicture}$\,, 
$w_2= \psset{unit=0.6cm}
\begin{pspicture}[shift=-0.02](-0.0,0)(2.0,0.5)
\psline[linewidth=\mince](0,0)(2.0,0)
\psline[linecolor=blue,linewidth=\elegant]{-}(1.8,0)(1.8,0.5)
\psarc[linecolor=blue,linewidth=\elegant]{-}(0.4,0){0.2}{0}{180}
\psarc[linecolor=blue,linewidth=\elegant]{-}(1.2,0){0.2}{0}{180}
\end{pspicture}$ and 
$w_3=
\psset{unit=0.6cm}
\begin{pspicture}[shift=-0.02](-0.0,0)(2.0,0.5)
\psline[linewidth=\mince](0,0)(2.0,0)
\psline[linecolor=blue,linewidth=\elegant]{-}(0.2,0)(0.2,0.5)
\psarc[linecolor=blue,linewidth=\elegant]{-}(1.2,0){0.2}{0}{180}
\psbezier[linecolor=blue,linewidth=\elegant](0.6,0)(0.6,0.6)(1.8,0.6)(1.8,0)
\end{pspicture}
$\,. 

The virtue of having an invariant form on the standard module $\stan_n^d$ is that its \emph{radical},
\begin{equation}
 \Rad_n^d = \{w \in \stan_n^d \st \gramprod{v}{w} = 0 \text{ for all } v \in \stan_n^d\},
\end{equation}
is a submodule of the standard module $\stan_n^d$. Moreover, the quotient $\stan_n^d/\Rad_n^d$ is either zero or irreducible \cite{GraCel96}. A non-trivial radical is equivalent to 
the existence of eigenvectors of the Gram matrix $\grammat_n^d$ with eigenvalues equal to zero, so $\det \grammat_n^d$ reveals at least part of the structure of $\stan_n^d$. This \emph{Gram determinant} 
has a closed-form expression given by \cite{W95,RidSta12}
\begin{equation}
\det \grammat_n^d = \prod_{j=1}^{\frac{n-d}2} \Big(\frac{U_{d+j}}{U_{j-1}}\Big)^{\dim \stan_n^{d+2j}}.
\label{eq:GramdetTL}
\end{equation}
This is always non-zero when we work over a complex function field, hence the standard modules of $\tl_n$ are always irreducible in this setting.  However, it is clear that specialising $\beta$ to a complex number may lead to $\det \grammat_n^d = 0$, hence a reducible standard module.
Despite the denominator in
the expression \eqref{eq:GramdetTL}, the determinant is, by construction, polynomial in $\beta$ and therefore well-defined for all $\beta \in \CC$.

\paragraph{Specialising $\boldsymbol \beta$.}
As mentioned above, until \cref{sec:Kacmod}, we choose to work with $\beta$ as a formal parameter, in order to exploit the existence of the \WJ{} projectors. 
However, the investigation of the logarithmic minimal models $\mathcal{LM}(p,p')$ requires that $\beta$ be specialised to values in $\mathbb C$ and, in particular, to 
\begin{equation}
\beta = q + q^{-1}, \quad \text{with} \quad q = \ee^{i \lambda} \quad \text{and} \quad \lambda = \frac{(p'-p)\pi}{p'} \qquad \text{(\(p,p' \in \ZZ_+\), \(1 \le p<p'\), \(\gcd (p,p') = 1\))}.
\label{eq:rootsof1}
\end{equation}
These values of $q$ (and $\beta$) will be identified as 
\emph{roots of unity}, while the other values in $\mathbb C$ will be termed 
\emph{generic}.  Note that $q=1$ and $-1$ are generic values according to this definition. The representation theory of $\tl_n(\beta)$, when $\beta$ is specialised to a generic value, is quite different to that resulting from specialising to a root of unity. In particular, the representation theory for generic values is completely reducible --- all (finite-dimensional) modules are direct sums of irreducible
modules. This is not true for $q$ a root of unity: Some Wenzl-Jones projectors have singularities and some Gram determinants are zero, meaning that some standard modules become reducible, although they remain indecomposable.  We refer to \cref{app:TLrep} for further discussion.

\subsubsection{Transfer tangles with boundary seams} \label{sec:bdyseams}

The logarithmic minimal models are defined in terms of transfer operators that are elements of the diagrammatic algebras. Here, we are interested in the geometry of a strip and study transfer operators with boundary conditions on the right that take the form of Kac boundary triangles, following the ideas and conventions of Pearce, Rasmussen and Zuber \cite{PRZ06}. The 
\emph{double-row transfer tangle} $\Dbk(u,\xi)$  
is a two-parameter tangle in $\tl_{n+k}$,\footnote{Here, we generalise the function field over which $\tl_{n+k}$ is defined so as to incorporate the formal parameters $u$ and $\xi$.  The notation $\tl_n$ will subsequently denote the Temperley-Lieb algebra over this function field.} 
defined (diagrammatically) by
\begin{equation} 
\psset{unit=0.8}
\Dbk (u,\xi) = \frac 1 \beta \ \ 
\begin{pspicture}[shift=-1.5](-0.5,-.6)(6,2)
\facegrid{(0,0)}{(5,2)}
\psarc[linewidth=0.025]{-}(0,0){0.16}{0}{90}
\psarc[linewidth=0.025]{-}(1,1){0.16}{90}{180}
\psarc[linewidth=0.025]{-}(1,0){0.16}{0}{90}
\psarc[linewidth=0.025]{-}(2,1){0.16}{90}{180}
\psarc[linewidth=0.025]{-}(4,0){0.16}{0}{90}
\psarc[linewidth=0.025]{-}(5,1){0.16}{90}{180}
\rput(2.5,0.5){$\ldots$}
\rput(2.5,1.5){$\ldots$}
\rput(3.5,0.5){$\ldots$}
\rput(3.5,1.5){$\ldots$}
\psarc[linewidth=1.5pt,linecolor=blue]{-}(0,1){0.5}{90}{-90}
\psarc[linewidth=1.5pt,linecolor=blue]{-}(5,1){0.5}{-90}{90}
\pspolygon[fillstyle=solid,fillcolor=lightlightblue,linewidth=1pt](5,1)(6,2)(6,0)
\rput(5.6,0.9){\small$u,\xi$}
\rput(5.6,1.225){\small$_{(k)}$}
\rput(0.5,.5){$u$}
\rput(0.5,1.5){$u$}
\rput(1.5,.5){$u$}
\rput(1.5,1.5){$u$}
\rput(4.5,.5){$u$}
\rput(4.5,1.5){$u$}
\rput(2.5,-0.5){$\underbrace{\qquad \hspace{2.4cm} \qquad}_n$}
\end{pspicture} \ \ .
\label{eq:Duk}
\end{equation}
Each square tile above is called a 
\emph{face operator} and is a linear combination of two diagrams,
\begin{equation} 
\psset{unit=.8cm}
\begin{pspicture}[shift=-.40](1,1)
\facegrid{(0,0)}{(1,1)}
\psarc[linewidth=0.025]{-}(0,0){0.16}{0}{90}
\rput(.5,.5){$u$}
\end{pspicture}
\ = 
s_1(-u)\;\begin{pspicture}[shift=-.40](1,1)
\facegrid{(0,0)}{(1,1)}
\rput[bl](0,0){\loopa}
\end{pspicture}
\;+s_0(u)\;
\begin{pspicture}[shift=-.40](1,1)
\facegrid{(0,0)}{(1,1)}
\rput[bl](0,0){\loopb}
\end{pspicture}\ , \qquad\quad s_k(u)=\frac{\sin (u+k\lambda)}{\sin\lambda}.
\label{1x1}
\end{equation}
The parameter $u$ is called the 
\emph{spectral parameter}, while $\lambda$ is the 
\emph{crossing parameter} that parametrises $\beta$ through the relation 
\begin{equation}
 \beta = 2 \cos \lambda = U_1 = s_2(0),
\end{equation}
see \eqref{eq:rootsof1}. We remark that the Chebyshev polynomials at $\frac{\beta}{2}$ can be written as $U_{k} = s_{k+1}(0)$. 

The triangle on the right in \eqref{eq:Duk} is called the \emph{Kac boundary triangle} of seam width $k$. It depends on a formal boundary parameter $\xi$ which will be specialised later to a complex number. The trivial case $k = 0$ does not depend on $u$ nor $\xi$ and is referred to as the 
\emph{vacuum boundary condition}:
\begin{equation}
\begin{pspicture}[shift=-1.2](-0.1,-.3)(1.1,2.3)
\rput(-5,0){
\psarc[linewidth=1.5pt,linecolor=blue]{-}(5,1){0.5}{-90}{90}
\pspolygon[fillstyle=solid,fillcolor=lightlightblue,linewidth=1pt](5,1)(6,2)(6,0)
\rput(5.6,0.9){$u,\xi$}
\rput(5.6,1.2){$_{(0)}$}
}
\end{pspicture}
\ = \, 
\begin{pspicture}[shift=-0.65](-0.1,0.25)(1.1,1.75)
\rput(-5,0){
\psarc[linewidth=1.5pt,linecolor=blue]{-}(5,1){0.5}{-90}{90}
}
\end{pspicture}
\mspace{-20mu}.
\end{equation}
We note that the left boundary of \eqref{eq:Duk} is of this form, reflected about a vertical axis. 
For $k>0$, the boundary triangle is defined in terms of face operators and \WJ{} projectors. 
It takes the form
\begin{equation}  
\begin{pspicture}[shift=-1.2](-0.1,-.3)(1.1,2.3)
\rput(-5,0){
\psarc[linewidth=1.5pt,linecolor=blue]{-}(5,1){0.5}{-90}{90}
\pspolygon[fillstyle=solid,fillcolor=lightlightblue,linewidth=1pt](5,1)(6,2)(6,0)
\rput(5.6,0.9){$u,\xi$}
\rput(5.6,1.2){$_{(k)}$}
}
\end{pspicture}
= \frac{1}{\eta^{(k)}(u,\xi)}\,
\begin{pspicture}[shift=-1.2](-0.4,-.3)(4.5,2.3)
\rput(-5,0){\facegrid{(5,0)}{(9,2)}
\pspolygon[fillstyle=solid,fillcolor=pink](5.1,0)(8.9,0)(8.9,-0.3)(5.1,-0.3)(5.1,0)
\rput(7,-0.15){$_{k}$}
\pspolygon[fillstyle=solid,fillcolor=pink](5.1,2)(8.9,2)(8.9,2.3)(5.1,2.3)(5.1,2)
\rput(7,2.15){$_{k}$}
\psarc[linewidth=0.025]{-}(5,0){0.16}{0}{90}
\psarc[linewidth=0.025]{-}(6,1){0.16}{90}{180}
\psarc[linewidth=0.025]{-}(7,0){0.16}{0}{90}
\psarc[linewidth=0.025]{-}(8,1){0.16}{90}{180}
\psarc[linewidth=0.025]{-}(8,0){0.16}{0}{90}
\psarc[linewidth=0.025]{-}(9,1){0.16}{90}{180}
\psline[linewidth=1.5pt,linecolor=blue](5,0.5)(4.8,0.5)
\psline[linewidth=1.5pt,linecolor=blue](5,1.5)(4.8,1.5)
\rput(5.5,.5){\small$u\!-\!\xi_k$}
\rput(5.5,1.5){\small$u\!+\!\xi_k$}
\rput(6.5,0.5){$\ldots$}
\rput(6.5,1.5){$\ldots$}
\rput(7.5,.5){\small$u\!-\!\xi_2$}
\rput(7.5,1.5){\small$u\!+\!\xi_2$}
\rput(8.5,.5){\small$u\!-\!\xi_1$}
\rput(8.5,1.5){\small$u\!+\!\xi_1$}
\psarc[linewidth=1.5pt,linecolor=blue]{-}(9,1){0.5}{-90}{90}}
\end{pspicture}
\ ,
\label{eq:boundaryop}
\end{equation}
where
\begin{equation} \qquad \xi_j = \xi + j \lambda, \qquad \eta^{(k)}(u,\xi) = \prod_{j=1}^k s_0(u- \xi_{j+1})s_0(u+ \xi_{j-1}).
\end{equation}

Yang-Baxter integrability is built in directly at the level of the diagrammatic algebra and stems from local relations 
satisfied by the face operators and the boundary triangles, see \cite{PRZ06}. Indeed, the transfer tangles $\Dbk(u,\xi)$ 
can be shown to be crossing-symmetric and form a commuting family:
\begin{equation} 
\Dbk(\lambda - u,\xi) = \Dbk(u,\xi), \qquad \comm{\Dbk(u,\xi)}{\Dbk(v,\xi)} = 0,
\label{cross}
\end{equation}
where $u$ and $v$ are formal parameters.
The double-row transfer tangle admits a formal power series expansion in $u$, 
\begin{equation}
\Dbk(u,\xi) = \Ik + \frac{2u}{\sin \lambda} \Big( (\beta^{-1}- n \cos \lambda) \Ik - \hamk \Big) + O(u^2),
\label{eq:Du0} 
\end{equation}
where
\begin{equation}
\Ik = \ 
\begin{pspicture}[shift=-0.4](0,-0.2)(3.25,0.8) 
\pspolygon[fillstyle=solid,fillcolor=lightlightblue](0,0)(0,0.8)(3.25,0.8)(3.25,0)
\psline[linewidth=1.5pt,linecolor=blue]{-}(0.2,0)(0.2,0.8)\rput(0.2,-0.2){\small$_1$}
\psline[linewidth=1.5pt,linecolor=blue]{-}(0.6,0)(0.6,0.8)\rput(0.6,-0.2){\small$_2$}
\rput(1.0,0.4){...}
\psline[linewidth=1.5pt,linecolor=blue]{-}(1.4,0)(1.4,0.8)\rput(1.4,-0.2){\small$_n$}
\psline[linewidth=1.5pt,linecolor=blue]{-}(1.8,0)(1.8,0.8)
\psline[linewidth=1.5pt,linecolor=blue]{-}(2.2,0)(2.2,0.8)
\rput(2.6,0.675){...}
\rput(2.6,0.125){...}
\psline[linewidth=1.5pt,linecolor=blue]{-}(3.0,0)(3.0,0.8)\rput(3.0,-0.2){\small$_{n+k}$}
\pspolygon[fillstyle=solid,fillcolor=pink](1.65,0.25)(1.65,0.55)(3.15,0.55)(3.15,0.25)\rput(2.4,0.4){$_k$}
\end{pspicture}
\label{eq:Ik}
\end{equation}
and
\begin{equation}
\qquad \hamk = -\Ik\sum_{j=1}^{n-1}e_j + \frac{s_k(0)}{s_0(\xi)s_{k+1}(\xi)} \Ik e_n \Ik.
\label{eq:Ham}
\end{equation}
The tangle $\Ik$ plays a prominent role in the description of the boundary seams in terms of the boundary Temperley-Lieb algebras in \cref{sec:boundaryTLs}. The tangle $\hamk$ is the Hamiltonian with a seam of width $k$ and is central to our investigation of Kac modules in \cref{sec:Kacmod}.  We note that the normalisation factors in \eqref{eq:Duk} and \eqref{eq:boundaryop} ensure that the zeroth order coefficient of $\Dbk(u,\xi)$ in \eqref{eq:Du0} is $\Ik$.

We emphasise again that, for now, we treat $\beta$ (and hence $\lambda$), $u$ and $\xi$ as formal parameters, but will specialise them to complex numbers in \cref{sec:Kacmod}.  If $\beta$ is specialised to $0$, in particular, the normalising factor $\frac{1}{\beta}$ of $\Dbk(u,\xi)$ must be removed. To ensure an expansion in $u$ of $\Dbk(u,\xi)$ where the zeroth term is a non-zero
multiple of $\Ik$, while preserving the crossing symmetry \eqref{cross}, it is replaced in \cite{PR07,PRT14} by $\frac{1}{\sin(2u)}$.

%
\subsection{Boundary Temperley-Lieb algebras}\label{sec:boundaryTLs}
%

The boundary seams and triangles defined in the previous section are constructed in terms of the original Temperley-Lieb algebra. From their role as boundary conditions for the transfer tangle, it is not surprising that they can be described using another variety of diagrammatic algebra, the 
\emph{one-boundary} Temperley-Lieb algebra. This section first reviews the definition of the one-boundary Temperley-Lieb algebra $\tlone_n$. We then show that the boundary seams are naturally described in terms of a quotient of $\tlone_n$, the \emph{boundary seam algebra} $\btl_{n,k}$. 

\subsubsection{One-boundary \TL{} algebras} \label{sec:TLone}

The one-boundary \TL{} algebra $\tlone_n$ is a diagrammatic algebra whose elements are linear combinations of 
(generalised) connectivity diagrams. Martin and Saleur \cite{MS93} introduced a two-parameter generalisation of the 
Temperley-Lieb algebras with an extra generator at the boundary, the blob algebra. 
Its representation theory was partially unravelled by Martin and Woodcock \cite{MW00} 
and Graham and Lehrer \cite{GraDia03}. Here, we follow the 
conventions of \cite{PRT14} and define $\tlone_n$ in terms of
three free parameters: $\beta$, the fugacity of loops in the bulk, and $\beta_1$ and $\beta_2$, the fugacities of loops 
rooted in the boundary. The distinction between $\beta_1$ and $\beta_2$ will be discussed below. 
The one-boundary \TL{} algebra defined over a complex function field will be denoted by $\tlone_n$ while the 
specialisation to $\beta,\beta_1,\beta_2 \in \mathbb C$ will be denoted by $\tlone_n(\beta,\beta_1,\beta_2)$.

As for $\tl_n$, the one-boundary \TL{} algebra $\tlone_n$ has two equivalent descriptions, a diagrammatic and an algebraic one.  We prove this equivalence in \cref{app:TLone} as it will be needed to settle the corresponding equivalence for the boundary seam algebras. 
Algebraically, $\tlone_n$ is generated by $n+1$ elements: an identity $I$ and elements $e_j$ with $j = 1, 2, \dots, n$. The generators $e_j$, with $j \neq n$, satisfy the relations \eqref{eq:defTL}, so $\tl_n$ is a subalgebra of $\tlone_n$. The extra generator $e_n$ satisfies the following additional relations:
\begin{equation}
e_i e_n = e_n e_i \quad (i < n-1), \qquad e_{n-1}e_ne_{n-1} = \beta_1 e_{n-1}, \qquad e_n^2 = \beta_2 \,e_n.
\label{eq:defTL1}
\end{equation}

For the diagrammatic definition, let us again draw a rectangular box with $n$ nodes on each of its top and bottom edges. 
A (generalised or boundary) connectivity is a collection of non-intersecting loop segments living inside the box, but these can now either connect nodes pairwise or link them to the right boundary, with each node occupied by exactly one loop segment. For instance,
\begin{equation}
\psset{unit=0.8}
b_1 = \
\begin{pspicture}[shift=-0.4](-0.0,0)(2.4,1)
\pspolygon[fillstyle=solid,fillcolor=lightlightblue](0,0)(2.4,0)(2.4,1)(0,1)
\psarc[linecolor=blue,linewidth=1.5pt]{-}(0.4,0){0.2}{0}{180}
\psarc[linecolor=blue,linewidth=1.5pt]{-}(1.6,0){0.2}{0}{180}
\psarc[linecolor=blue,linewidth=1.5pt]{-}(1.6,1){0.2}{180}{360}
\psbezier[linecolor=blue,linewidth=1.5pt]{-}(0.2,1)(0.2,0.5)(1,0.5)(1,0)
\psbezier[linecolor=blue,linewidth=1.5pt]{-}(0.6,1)(0.6,0.5)(2.2,0.5)(2.2,0)
\psbezier[linecolor=blue,linewidth=1.5pt]{-}(1,1)(1,0.5)(2.2,0.5)(2.2,1)
\end{pspicture}\ , \qquad 
b_2 = \
\begin{pspicture}[shift=-0.4](-0.0,0)(2.4,1)
\pspolygon[fillstyle=solid,fillcolor=lightlightblue](0,0)(2.4,0)(2.4,1)(0,1)
\psline[linecolor=blue,linewidth=1.5pt]{-}(0.2,0)(0.2,1)
\psarc[linecolor=blue,linewidth=1.5pt]{-}(0.8,1){0.2}{180}{0}
\psarc[linecolor=blue,linewidth=1.5pt]{-}(1.6,0){0.2}{0}{180}
\psarc[linecolor=blue,linewidth=1.5pt]{-}(2.4,0){0.2}{90}{180}
\psbezier[linecolor=blue,linewidth=1.5pt]{-}(2.4,0.7)(2.2,0.7)(1.0,0.4)(1.0,0)
\psarc[linecolor=blue,linewidth=1.5pt]{-}(2.0,1){0.2}{180}{0}
\psbezier[linecolor=blue,linewidth=1.5pt]{-}(0.6,0)(0.6,0.5)(1.4,0.5)(1.4,1)
\end{pspicture}
\end{equation}
are two connectivities in $\tlone_6$ that, respectively, have $0$ and $2$ nodes going to the boundary. The number of loop segments going to the boundary is always even and ranges between $0$ and $2n$.

The product $b_1 b_2$ is first defined for connectivities $b_1$ and $b_2$ and then linearly extended to all tangles in $\tlone_n$. It is found by first performing the vertical concatenation of the diagrams of $b_1$ and $b_2$: $b_2$ is drawn on top of $b_1$, with the nodes of the top edge of $b_1$ joined to those of the bottom edge of $b_2$. The intermediate edge is then removed, leaving only the larger rectangle, and the resulting connectivity consists of the arcs connecting the nodes of the bigger rectangle, pairwise or to the boundary. Each closed loop formed in the bulk is erased and replaced by a multiplicative factor of $\beta$, as with $\tl_n$. However, boundary loops can also be formed. These consist of loop segments that start and end at the boundary. They are also removed, and replaced by a multiplicative prefactor of $\beta_1$ or $\beta_2$, with the choice made as follows.

In the diagram obtained from the vertical concatenation of $b_1$ and $b_2$, the arcs ending at the right boundary are even in number. Alternatingly, we assign them an odd $(1)$ or even $(0)$ parity with odd at the bottom. 
The rule for assigning boundary fugacities is then
\begin{equation}
\begin{pspicture}[shift=-0.55](1.3,-0.65)(2.4,0.45)
\pspolygon[fillstyle=solid,fillcolor=lightlightblue,linewidth=0pt,linecolor=white](1.5,-0.35)(2,-0.35)(2,0.35)(1.5,0.35)
\psline{-}(2,-0.35)(2,0.35)
\rput(2.25,-0.2){\tiny $_{(1)}$}
\rput(2.25,0.2){\tiny $_{(0)}$}
\psarc[linecolor=blue,linewidth=1.5pt]{-}(2,0){0.2}{90}{270}
\end{pspicture} \rightarrow \beta_1, \qquad
\begin{pspicture}[shift=-0.55](1.3,-0.65)(2.4,0.45)
\pspolygon[fillstyle=solid,fillcolor=lightlightblue,linewidth=0pt,linecolor=white](1.5,-0.35)(2,-0.35)(2,0.35)(1.5,0.35)
\psline{-}(2,-0.35)(2,0.35)
\rput(2.25,-0.2){\tiny $_{(0)}$}
\rput(2.25,0.2){\tiny $_{(1)}$}
\psarc[linecolor=blue,linewidth=1.5pt]{-}(2,0){0.2}{90}{270}
\end{pspicture} \rightarrow \beta_2.
\label{eq:beta12rule}
\end{equation}
Here is an example of a product between two connectivities:
\begin{equation}
\psset{unit=0.8}
b_1 b_2 = \
\begin{pspicture}[shift=-0.9](-0.0,0)(2.4,2)
\pspolygon[fillstyle=solid,fillcolor=lightlightblue](0,0)(2.4,0)(2.4,1)(0,1)
\psarc[linecolor=blue,linewidth=1.5pt]{-}(0.4,0){0.2}{0}{180}
\psarc[linecolor=blue,linewidth=1.5pt]{-}(1.6,0){0.2}{0}{180}
\psarc[linecolor=blue,linewidth=1.5pt]{-}(1.6,1){0.2}{180}{360}
\psbezier[linecolor=blue,linewidth=1.5pt]{-}(0.2,1)(0.2,0.5)(1,0.5)(1,0)
\psbezier[linecolor=blue,linewidth=1.5pt]{-}(0.6,1)(0.6,0.5)(2.2,0.5)(2.2,0)
\psbezier[linecolor=blue,linewidth=1.5pt]{-}(1,1)(1,0.5)(2.2,0.5)(2.2,1)
\rput(0,1)
{
\pspolygon[fillstyle=solid,fillcolor=lightlightblue](0,0)(2.4,0)(2.4,1)(0,1)
\psline[linecolor=blue,linewidth=1.5pt]{-}(0.2,0)(0.2,1)
\psarc[linecolor=blue,linewidth=1.5pt]{-}(0.8,1){0.2}{180}{0}
\psarc[linecolor=blue,linewidth=1.5pt]{-}(1.6,0){0.2}{0}{180}
\psarc[linecolor=blue,linewidth=1.5pt]{-}(2.4,0){0.2}{90}{180}
\psbezier[linecolor=blue,linewidth=1.5pt]{-}(2.4,0.7)(2.2,0.7)(1.0,0.4)(1.0,0)
\psarc[linecolor=blue,linewidth=1.5pt]{-}(2.0,1){0.2}{180}{0}
\psbezier[linecolor=blue,linewidth=1.5pt]{-}(0.6,0)(0.6,0.5)(1.4,0.5)(1.4,1)
}
\end{pspicture} \ =  \beta  \beta_1 \
\begin{pspicture}[shift=-0.4](-0.0,0)(2.4,1)
\pspolygon[fillstyle=solid,fillcolor=lightlightblue](0,0)(2.4,0)(2.4,1)(0,1)
\psarc[linecolor=blue,linewidth=1.5pt]{-}(0.4,0){0.2}{0}{180}
\psarc[linecolor=blue,linewidth=1.5pt]{-}(1.6,0){0.2}{0}{180}
\psarc[linecolor=blue,linewidth=1.5pt]{-}(2.0,1){0.2}{180}{360}
\psarc[linecolor=blue,linewidth=1.5pt]{-}(0.8,1){0.2}{180}{360}
\psbezier[linecolor=blue,linewidth=1.5pt]{-}(0.2,1)(0.2,0.5)(1,0.5)(1,0)
\psbezier[linecolor=blue,linewidth=1.5pt]{-}(1.4,1)(1.4,0.5)(2.2,0.5)(2.2,0)
\end{pspicture}
\ = \beta \beta_1\, b_3,
\end{equation}
where $b_3$ is the resulting connectivity.

The identification of generators with diagrams for $I$ and the $e_j$, $j = 1, \dots, n-1$, 
is still given by \eqref{eq:iso}, whereas for $e_n$, the identification is
\begin{equation}
e_n =
\begin{pspicture}[shift=-0.35](-0.2,-0.45)(2.6,0.35)
\pspolygon[fillstyle=solid,fillcolor=lightlightblue](0,-0.35)(2.4,-0.35)(2.4,0.35)(0,0.35)
\rput(1.4,0.0){\small$...$}
\psline[linecolor=blue,linewidth=1.5pt]{-}(0.2,0.35)(0.2,-0.35)\rput(0.2,-0.55){$_1$}
\psline[linecolor=blue,linewidth=1.5pt]{-}(0.6,0.35)(0.6,-0.35)\rput(0.6,-0.55){$_2$}
\psline[linecolor=blue,linewidth=1.5pt]{-}(1.0,0.35)(1.0,-0.35)\rput(1.0,-0.55){$_3$}
\psline[linecolor=blue,linewidth=1.5pt]{-}(1.8,0.35)(1.8,-0.35)
\psarc[linecolor=blue,linewidth=1.5pt]{-}(2.4,-0.35){0.2}{90}{180}
\psarc[linecolor=blue,linewidth=1.5pt]{-}(2.4,0.35){0.2}{180}{270}\rput(2.2,-0.55){$_n$}
\end{pspicture} .
\label{eq:ison}
\end{equation}
It is easy to see that this identification is consistent with the relations \eqref{eq:defTL} and \eqref{eq:defTL1}. As shown in \cref{prop:Surjective}, any connectivity can be obtained from a finite product of the generators, for instance, $b_2 = e_4 e_6 e_3 e_2 e_5$.  The algebra $\tlone_n$ is finite-dimensional and its dimension is given by \eqref{eq:diatlcounting}:
\begin{equation} \dim \tlone_n = 
\binom{2n}{n}.
\label{eq:dimTL1}
\end{equation}

As was already noted in \cite{PRT14}, the transformation 
\begin{equation}
e_n' = \frac1\alpha e_n , \qquad e_j' = e_j \quad \text{(\(j = 1, \dots, n-1\))}
\end{equation}
shows that the algebra $\tlone_n$ with parameters $\beta$, $\beta_1$ and $\beta_2$ is isomorphic to that with parameters $\beta$, $\frac{\beta_1}\alpha$ and $\frac{\beta_2}\alpha$.
By choosing $\alpha = \beta_1$, $\tlone_n$ reduces to the two-parameter boundary 
Temperley-Lieb algebra used, for example, in \cite{NRdG05}, while choosing $\alpha = \beta_2$ yields the blob algebra used in \cite{MS93}. (In these references, the extra generator is $e_0$ instead of $e_n$ and lives on the left boundary instead of the right one.) Summarising, the isomorphism class of the algebra $\tlone_n$ only depends upon $\beta_1$ and $\beta_2$ through their ratio. These conclusions carry over to the specialised algebras. 
However, by keeping both as free parameters, we retain the physically important freedom of setting $\beta_1$ or $\beta_2$ to zero.

\subsubsection{Boundary seam algebras} \label{sec:seamsandbtl}

The definition \eqref{eq:boundaryop} of Kac boundary triangles in terms of face operators is useful to analyse certain 
properties of the transfer tangles $\Dbk(u,\xi)$, their commutativity and crossing relations for instance. 
Following \cite{PRZ06}, it is instead convenient, when analysing the underlying algebraic structure, 
to expand each boundary triangle as a linear combination of two diagrams:
\begin{equation}
\psset{unit=0.8}
\begin{pspicture}[shift=-1.2](-0.1,-.3)(1.1,2.3)
\rput(-5,0){
\psarc[linewidth=1.5pt,linecolor=blue]{-}(5,1){0.5}{-90}{90}
\pspolygon[fillstyle=solid,fillcolor=lightlightblue,linewidth=1pt](5,1)(6,2)(6,0)
\rput(5.6,0.9){\small $u,\xi$}
\rput(5.6,1.2){\small $_{(k)}$}
}
\end{pspicture} \; =
\psset{unit=0.8}
\begin{pspicture}[shift=-0.95](-1.0,-.3)(4.2,1.8)
\rput(-5,0){
\psarc[linewidth=1.5pt,linecolor=blue]{-}(4.25,0.75){0.5}{-90}{90}
\psline[linewidth=1.5pt,linecolor=blue]{-}(8.5,0)(8.5,1.5)
\psline[linewidth=1.5pt,linecolor=blue]{-}(7.5,0)(7.5,1.5)
\psline[linewidth=1.5pt,linecolor=blue]{-}(6.5,0)(6.5,1.5)
\psline[linewidth=1.5pt,linecolor=blue]{-}(5.5,0)(5.5,1.5)
\pspolygon[fillstyle=solid,fillcolor=pink](5.1,0.5)(8.9,0.5)(8.9,1.0)(5.1,1.0)(5.1,0.5)
\rput(7,0.75){$_{k}$}
}
\end{pspicture} - \frac{s_{k}(0)s_{0}(2u)}{s_0(\xi + u)s_{k+1}(\xi - u)} \
\begin{pspicture}[shift=-1.20](-0.2,-.6)(4.2,1.8)
\rput(-5,0){
\psarc[linewidth=1.5pt,linecolor=blue]{-}(5,1.5){0.5}{-90}{0}
\psarc[linewidth=1.5pt,linecolor=blue]{-}(5,0){0.5}{0}{90}
\psline[linewidth=1.5pt,linecolor=blue]{-}(8.5,-0.8)(8.5,2.3)
\psline[linewidth=1.5pt,linecolor=blue]{-}(7.5,-0.8)(7.5,2.3)
\psline[linewidth=1.5pt,linecolor=blue]{-}(6.5,-0.8)(6.5,2.3)
\psline[linewidth=1.5pt,linecolor=blue]{-}(5.5,-0.8)(5.5,-0.3)
\psline[linewidth=1.5pt,linecolor=blue]{-}(5.5,1.8)(5.5,2.3)
\pspolygon[fillstyle=solid,fillcolor=pink](5.1,0)(8.9,0)(8.9,-0.5)(5.1,-0.5)(5.1,0)
\rput(7,-0.25){$_{k}$}
\pspolygon[fillstyle=solid,fillcolor=pink](5.1,1.5)(8.9,1.5)(8.9,2.0)(5.1,2.0)(5.1,1.5)
\rput(7,1.75){$_{k}$}
}
\end{pspicture}
.\label{eq:seamdec}
\end{equation}
These diagrams are not tangles in general, but they give rise to elements of $\tl_{n+k}$ when glued to 
$\psset{unit=0.5}
\begin{pspicture}[shift=-0.15](-0.0,-0.35)(2.6,0.35)
\rput(1.4,0.0){\scriptsize$...$}
\psline[linecolor=blue,linewidth=1.5pt]{-}(0.2,0.35)(0.2,-0.35)
\psline[linecolor=blue,linewidth=1.5pt]{-}(0.6,0.35)(0.6,-0.35)
\psline[linecolor=blue,linewidth=1.5pt]{-}(1.0,0.35)(1.0,-0.35)
\psline[linecolor=blue,linewidth=1.5pt]{-}(1.8,0.35)(1.8,-0.35)
\psarc[linecolor=blue,linewidth=1.5pt]{-}(2.4,-0.35){0.2}{90}{180}
\psarc[linecolor=blue,linewidth=1.5pt]{-}(2.4,0.35){0.2}{180}{270}
\end{pspicture} 
$ 
from the right. We denote the result of gluing this (partial) diagram to the boundary triangle of seam width $k$ by
\begin{equation} \label{eq:GluedKacBoundarySeam}
 K^{(k)}(u,\xi)=
\psset{unit=0.8}
\begin{pspicture}[shift=-1.4](0.9,-.5)(7.1,2.3)
\pspolygon[fillstyle=solid,fillcolor=lightlightblue](1,-0.05)(6,-0.05)(6,2.05)(1,2.05)
\psline[linecolor=blue,linewidth=1.5pt]{-}(1.5,-0.05)(1.5,2.05)
\psline[linecolor=blue,linewidth=1.5pt]{-}(2.3,-0.05)(2.3,2.05)
\psline[linecolor=blue,linewidth=1.5pt]{-}(3.1,-0.05)(3.1,2.05)
\rput(3.9,0.4){...}\rput(3.9,1.6){...}
\psline[linecolor=blue,linewidth=1.5pt]{-}(4.7,-0.05)(4.7,2.05)
\psarc[linecolor=blue,linewidth=1.5pt]{-}(6,2.05){0.5}{180}{270}
\psarc[linecolor=blue,linewidth=1.5pt]{-}(6,1.05){0.5}{0}{90}
\psarc[linecolor=blue,linewidth=1.5pt]{-}(6,0.95){0.5}{-90}{0}
\psarc[linecolor=blue,linewidth=1.5pt]{-}(6,-0.05){0.5}{90}{180}
\rput(1,0){\pspolygon[fillstyle=solid,fillcolor=lightlightblue,linewidth=1pt](5,1)(6,2)(6,0)
\rput(5.6,0.9){\small $u,\xi$}
\rput(5.6,1.2){\small $_{(k)}$}}
\rput(3.5,-0.5){$\underbrace{\qquad \hspace{2.2cm} \qquad}_n$}
\end{pspicture}
\; \in \tl_{n+k}. 
\end{equation}
From \eqref{eq:seamdec}, it follows that $K^{(k)}(u,\xi)$ is a linear combination of $\Ik$, as defined in \eqref{eq:Ik}, and $\Ik e_n \Ik$.

This naturally leads us to define a subalgebra\footnote{$\btl_{n,k}$ is a subalgebra of $\tl_{n+k}$ in the sense that it is a subspace of $\tl_{n+k}$ that is closed under addition, multiplication and scalar multiplication.  It does not, however, contain the unit of $\tl_{n+k}$ unless $k=0$ or $1$.}
\begin{equation}
\btl_{n,k} = \big\langle \Ik, \Ekj{k}{j}; j = 1, \dots, n \big\rangle
\label{eq:defbtl}
\end{equation} 
of $\tl_{n+k}$, which we shall refer to as the \emph{boundary seam algebra}.  Aside from $n$ and $k$, it depends upon a single parameter $\beta$.  The boundary seam algebra $\btl_{n,k}$ is unital, with unit $\Ik$, and is generated by tangles $\Ekj{k}{j}$\!, defined by
\begin{subequations}\label{eq:Ek}
\begin{align}
\Ekj{k}{j} &= \Ik e_j = \
\begin{pspicture}[shift=-0.6](0,-0.3)(4.85,0.8)
\pspolygon[fillstyle=solid,fillcolor=lightlightblue](0,0)(0,0.8)(4.85,0.8)(4.85,0)
\psline[linewidth=1.5pt,linecolor=blue]{-}(0.2,0)(0.2,0.8)\rput(0.2,-0.2){\small$_1$}
\rput(0.6,0.4){...}
\psline[linewidth=1.5pt,linecolor=blue]{-}(1.0,0)(1.0,0.8)
\rput(1.4,-0.2){\small$_j$}
\rput(1.84,-0.2){\small$_{j+1}$}
\psarc[linewidth=1.5pt,linecolor=blue]{-}(1.6,0){0.2}{0}{180}
\psarc[linewidth=1.5pt,linecolor=blue]{-}(1.6,0.8){0.2}{180}{0}
\psline[linewidth=1.5pt,linecolor=blue]{-}(2.2,0)(2.2,0.8)
\rput(2.6,0.4){...}
\psline[linewidth=1.5pt,linecolor=blue]{-}(3.0,0)(3.0,0.8)\rput(3.0,-0.2){\small$_{\phantom{+}n\phantom{+}}$}
\psline[linewidth=1.5pt,linecolor=blue]{-}(3.4,0)(3.4,0.8)\rput(3.44,-0.2){\small$_{n+1}$}
\rput(3.8,0.675){...}
\rput(3.8,0.125){...}
\psline[linewidth=1.5pt,linecolor=blue]{-}(4.2,0)(4.2,0.8)
\psline[linewidth=1.5pt,linecolor=blue]{-}(4.6,0)(4.6,0.8)\rput(4.6,-0.2){\small $_{n+k}$}
\pspolygon[fillstyle=solid,fillcolor=pink](3.25,0.25)(3.25,0.55)(4.75,0.55)(4.75,0.25)\rput(4.0,0.4){$_k$}
\end{pspicture}
\qquad (j = 1, \dots, n-1),
\label{eq:Ejk}
\\[0.2cm]
\Ekj{k}{n} &= U_{k-1}\, \Ik e_n \Ik =  U_{k-1} \ \,
\begin{pspicture}[shift=-0.95](0,-0.3)(3.65,1.5)
\pspolygon[fillstyle=solid,fillcolor=lightlightblue](0,0)(0,1.5)(3.65,1.5)(3.65,0)
\psline[linewidth=1.5pt,linecolor=blue]{-}(0.2,0)(0.2,1.5)\rput(0.2,-0.2){\small$_1$}
\psline[linewidth=1.5pt,linecolor=blue]{-}(0.6,0)(0.6,1.5)\rput(0.6,-0.2){\small$_2$}
\rput(1.0,0.75){...}
\psline[linewidth=1.5pt,linecolor=blue]{-}(1.4,0)(1.4,1.5)
\rput(1.8,-0.2){\small$_{\phantom{+}n\phantom{+}}$}\rput(2.24,-0.2){\small$_{n+1}$}
\psarc[linewidth=1.5pt,linecolor=blue]{-}(2.0,0.45){0.2}{0}{180}
\psarc[linewidth=1.5pt,linecolor=blue]{-}(2.0,1.05){0.2}{180}{0}
\psline[linewidth=1.5pt,linecolor=blue]{-}(2.2,0)(2.2,0.2)\psline[linewidth=1.5pt,linecolor=blue]{-}(2.2,1.3)(2.2,1.5)
\psline[linewidth=1.5pt,linecolor=blue]{-}(1.8,0)(1.8,0.45)\psline[linewidth=1.5pt,linecolor=blue]{-}(1.8,1.05)(1.8,1.5)
\psline[linewidth=1.5pt,linecolor=blue]{-}(2.6,0)(2.6,1.5)
\rput(3.0,0.75){...}
\rput(3.0,1.425){...}
\rput(3.0,0.075){...}
\rput(3.0,0.75){...}
\psline[linewidth=1.5pt,linecolor=blue]{-}(3.4,0)(3.4,1.5)
\pspolygon[fillstyle=solid,fillcolor=pink](2.05,0.15)(2.05,0.45)(3.55,0.45)(3.55,0.15)\rput(2.8,0.3){$_k$}
\pspolygon[fillstyle=solid,fillcolor=pink](2.05,1.35)(2.05,1.05)(3.55,1.05)(3.55,1.35)\rput(2.8,1.2){$_k$}
\rput(3.4,-0.2){\small $_{n+k}$}
\end{pspicture}
\ . \label{eq:Enk}
\end{align}
\end{subequations}
For completeness, we remark that when $k=0$, the diagram in the definition of the generator $\Ekj{k}{n}$ does not make sense and $\Ekj{k}{n}$ should be formally regarded as being zero, consistent with $U_{-1} = 0$. The boundary seam algebra $\btl_{n,0}$ will therefore be identified with the \TL{} algebra $\tl_n$.

The Kac boundary triangle \eqref{eq:GluedKacBoundarySeam} can then be written algebraically as 
\begin{equation}  \label{eq:KInBTL}
\Kk\!(u,\xi) = \Ik - \frac{s_{0}(2u)}{s_0(\xi + u)s_{k+1}(\xi - u)}\, \Ekj{k}{n}  
\end{equation}
and is an element of $\btl_{n,k} \subseteq \tl_{n+k}$. The following result generalises this to
the transfer tangle $\Dbk(u,\xi)$. 
\begin{Proposition} \label{sec:dbkinbtl}
The transfer tangle and the Hamiltonian are elements of the boundary seam algebra: 
\begin{equation}
\Dbk(u,\xi), \hamk \in \btl_{n,k}.
\end{equation}
\end{Proposition}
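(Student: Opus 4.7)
The strategy is to express $\hamk$ and $\Dbk(u,\xi)$ explicitly in terms of the generators $\Ik$ and $\Ekj{k}{j}$ of $\btl_{n,k}$. The Hamiltonian is handled by direct inspection of its definition \eqref{eq:Ham}, while for the transfer tangle we rely on the natural factorisation of the diagram \eqref{eq:Duk} into a bulk portion sitting inside $\tl_n \subseteq \tl_{n+k}$ and the Kac boundary contribution, for which \eqref{eq:KInBTL} already places $\Kk\!(u,\xi)$ inside $\btl_{n,k}$.

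For the Hamiltonian, I would substitute $\Ik e_j = \Ekj{k}{j}$ for $1 \le j \le n-1$ and, for $k \ge 1$, $\Ik e_n \Ik = U_{k-1}^{-1}\, \Ekj{k}{n}$ (valid because $U_{k-1}$ is a non-zero polynomial in the formal parameter $\beta$) into \eqref{eq:Ham}. This directly yields
\begin{equation*}
\hamk = -\sum_{j=1}^{n-1} \Ekj{k}{j} + \frac{s_k(0)}{U_{k-1}\, s_0(\xi)\, s_{k+1}(\xi)}\, \Ekj{k}{n} \in \btl_{n,k}.
\end{equation*}
For $k=0$, the prefactor $s_k(0) = s_0(0) = 0$ kills the boundary term, so $\hamk = -\sum_{j=1}^{n-1} e_j$, which lies in $\tl_n = \btl_{n,0}$ by convention.

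For the transfer tangle, the key diagrammatic observation is that \eqref{eq:Duk} factorises as
\begin{equation*}
\Dbk(u,\xi) = \beta^{-1}\, \tilde D_0(u)\, \Kk\!(u,\xi),
\end{equation*}
where $\tilde D_0(u) \in \tl_n \subseteq \tl_{n+k}$ encodes the two rows of bulk face operators together with the left boundary loop, acting trivially on nodes $n+1, \ldots, n+k$. In particular, $\tilde D_0(u)$ is a linear combination of words in $I$ and $e_1, \ldots, e_{n-1}$, each of which commutes with $\Ik$. Using $\Ik^2 = \Ik$ and $\Kk\!(u,\xi) = \Ik\, \Kk\!(u,\xi)$ (immediate from \eqref{eq:KInBTL}), I may rewrite
\begin{equation*}
\Dbk(u,\xi) = \beta^{-1}\, \bigl(\Ik\, \tilde D_0(u)\bigr)\, \Kk\!(u,\xi).
\end{equation*}

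A straightforward induction on $m$, using the commutativity of the $e_j$ ($j \le n-1$) with $\Ik$, the idempotence $\Ik^2 = \Ik$, and the identity $\Ik e_j = \Ekj{k}{j}$, shows that $\Ik\, e_{j_1} \cdots e_{j_m} = \Ekj{k}{j_1} \Ekj{k}{j_2} \cdots \Ekj{k}{j_m}$ whenever $j_1, \ldots, j_m \le n-1$. Hence $\Ik\, \tilde D_0(u) \in \btl_{n,k}$, and combining this with $\Kk\!(u,\xi) \in \btl_{n,k}$ from \eqref{eq:KInBTL} gives $\Dbk(u,\xi) \in \btl_{n,k}$. The most delicate step will be justifying the factorisation $\Dbk(u,\xi) = \beta^{-1}\, \tilde D_0(u)\, \Kk\!(u,\xi)$ itself, which rests on a careful reading of the diagrammatic conventions defining \eqref{eq:Duk} and \eqref{eq:GluedKacBoundarySeam} and on checking that the various normalisation factors are consistent.
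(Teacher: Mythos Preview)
Your treatment of the Hamiltonian is correct and matches the paper's own remark that membership in $\btl_{n,k}$ is immediate from \eqref{eq:Ham}.

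The genuine gap is in your claimed factorisation $\Dbk(u,\xi) = \beta^{-1}\,\tilde D_0(u)\,\Kk(u,\xi)$ with $\tilde D_0(u)\in\tl_n$. This is false. In any such product the boundary piece $\Kk(u,\xi)=\Ik - c\,\Ekj{k}{n}$ contributes $\Ekj{k}{n}$ at most once and only on one side of the bulk word; but in the actual double-row diagram the boundary triangle sits \emph{between} the two rows of face operators, so after expanding the faces you get terms of the form $(\text{bottom-row word in }e_1,\dots,e_{n-1})\cdot \Kk(u,\xi)\cdot(\text{top-row word in }e_1,\dots,e_{n-1})$ with nontrivial contributions on both sides. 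Concretely, for $n=1$ one has $\tl_1=\CC\,I$, so $\tilde D_0(u)=f(u)\,I$; equating $\beta^{-1}f(u)\,\Kk(u,\xi)$ with the $n=1$ expression $\beta^{-1}\bigl[\alpha\,s_0(u)^2\,\Ik + s_2(-2u)\,\Kk(u,\xi)\bigr]$ forces $\alpha\,s_0(u)^2=0$, a contradiction.

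The paper's proof addresses exactly this obstruction. It uses a local identity on the leftmost column of faces (the ``inversion'' identity for a face pair joined by an arc) to peel off columns one at a time, writing $\beta\,\Dbk(u,\xi)$ as a linear combination of $\Ik$ and tangles $\db_m^{(k)}$. Each $\db_m^{(k)}$ has its leftmost strands unlinked (no left arc), and \emph{therefore} factorises as $\bigl[\prod_i x_i(u)\bigr]\,\Kk(u,\xi)\,\bigl[\prod_j x_j(u)\bigr]$ with $x_i(u)=s_1(-u)\,I+s_0(u)\,e_i$; the paper remarks explicitly that this factorisability is not available for $\Dbk(u,\xi)$ itself. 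From there one replaces each $e_i$ flanking $\Kk(u,\xi)$ by $\Ekj{k}{i}$ using $e_i\Kk e_j=\Ekj{k}{i}\Kk\Ekj{k}{j}$, landing in $\btl_{n,k}$. Your argument needs this two-sided decomposition (or an equivalent device) to go through.
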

\begin{proof}
Since $\Kk(u,\xi) \in \btl_{n,k}$ by \eqref{eq:KInBTL}, we have $\Kk(u,\xi)=\Ik\Kk(u,\xi)\Ik$, hence 
\begin{equation}
e_i\Kk(u,\xi)e_j=\Ekj{k}{i}\Kk(u,\xi)\Ekj{k}{j} \qquad \text{(\(i,j<n\)).}
\end{equation}
This replacement of $e_i$, with $i<n$, by $\Ekj{k}{i}$ obviously extends to linear combinations of products of these generators.

To apply this to the double-row transfer tangles, we use the diagrammatic identity 
\begin{equation}
\psset{unit=.8cm}
\begin{pspicture}[shift=-.9](-0.5,0)(1,2)
\facegrid{(0,0)}{(1,2)}
\psarc[linewidth=1.5pt,linecolor=blue](0,1){0.5}{90}{-90}
\psarc[linewidth=0.025]{-}(0,0){0.16}{0}{90}
\psarc[linewidth=0.025]{-}(1,1){0.16}{90}{180}
\rput(.5,.5){$u$}
\rput(.5,1.5){$u$}
\end{pspicture}
\ =  \big(s_0(u)\big)^2 \
\begin{pspicture}[shift=-.9](-0.5,0)(1,2)
\facegrid{(0,0)}{(1,2)}
\psarc[linewidth=1.5pt,linecolor=blue](0,1){0.5}{90}{-90}
\rput(0,0){\loopb}
\rput(0,1){\loopa}
\end{pspicture} \ + s_{2}(-2u) \ 
\begin{pspicture}[shift=-.9](0,0)(1,2)
\facegrid{(0,0)}{(1,2)}
\psarc[linewidth=1.5pt,linecolor=blue](1,0){0.5}{90}{180}
\psarc[linewidth=1.5pt,linecolor=blue](1,2){0.5}{180}{270}
\end{pspicture} 
\end{equation}
on \eqref{eq:Duk} $n$ times and find that $\Dbk(u, \xi)$ is expressible as
\begin{equation} \label{eq:AlgDmk}
\beta\, \Dbk(u,\xi)  = \alpha\, \big(s_0(u)\big)^{2n} \Ik + s_2(-2u) \sum_{m = 0}^{n-1} \big(s_0(u)\big)^{2m}\, \db_m^{_{(k)}}, 
\end{equation}
where $\displaystyle \alpha = s_2(0) - \frac{s_{k}(0)s_{0}(2u)}{s_0(\xi + u)s_{k+1}(\xi - u)}$ and
\begin{equation}
\db_m^{_{(k)}}  = \ 
\psset{unit=0.8cm}
\begin{pspicture}[shift=-1.5](-2.85,-.6)(6.1,2.3)
\pspolygon[fillstyle=solid,fillcolor=lightlightblue](1,0)(1,2)(-2.85,2)(-2.85,0)
\facegrid{(1,0)}{(5,2)}
\psarc[linewidth=0.025]{-}(1,0){0.16}{0}{90}
\psarc[linewidth=0.025]{-}(2,1){0.16}{90}{180}
\psarc[linewidth=0.025]{-}(2,0){0.16}{0}{90}
\psarc[linewidth=0.025]{-}(3,1){0.16}{90}{180}
\psarc[linewidth=0.025]{-}(4,0){0.16}{0}{90}
\psarc[linewidth=0.025]{-}(5,1){0.16}{90}{180}
\rput(3.5,0.5){$\ldots$}
\rput(3.5,1.5){$\ldots$}
\psarc[linewidth=1.5pt,linecolor=blue]{-}(1,0){0.5}{90}{180}
\psarc[linewidth=1.5pt,linecolor=blue]{-}(1,2){0.5}{180}{-90}
\psarc[linewidth=1.5pt,linecolor=blue]{-}(5,1){0.5}{-90}{90}
\psline[linewidth=1.5pt,linecolor=blue]{-}(-0.25,0)(-0.25,2)
\rput(-1,1){$...$}
\psline[linewidth=1.5pt,linecolor=blue]{-}(-1.75,0)(-1.75,2)
\psline[linewidth=1.5pt,linecolor=blue]{-}(-2.50,0)(-2.50,2)
\pspolygon[fillstyle=solid,fillcolor=lightlightblue,linewidth=1pt](5,1)(6,2)(6,0)
\rput(5.6,0.9){\small $u,\xi$}
\rput(5.6,1.225){\small$_{(k)}$}
\rput(1.5,.5){$u$}
\rput(1.5,1.5){$u$}
\rput(2.5,.5){$u$}
\rput(2.5,1.5){$u$}
\rput(4.5,.5){$u$}
\rput(4.5,1.5){$u$}
\rput(3,-0.5){$\underbrace{\qquad \hspace{1.6cm} \qquad}_{n-m-1}$}
\rput(-1.375,-0.5){$\underbrace{\quad \hspace{1.25cm} \quad}_{m}$}
\end{pspicture}\ = \ 
\psset{unit=0.7cm}
\begin{pspicture}[shift=-5.4](-2.65,-4.5)(6.1,6)
\psline[linewidth=1.5pt,linecolor=blue]{-}(-2.5,-4)(-2.5,6)
\psline[linewidth=1.5pt,linecolor=blue]{-}(-1.5,-4)(-1.5,6)
\rput(-0.5,-3){$...$}\rput(-0.5,5){...}
\psline[linewidth=1.5pt,linecolor=blue]{-}(0.5,-4)(0.5,6)
\psline[linewidth=1.5pt,linecolor=blue]{-}(1.5,-4)(1.5,6)
\psline[linewidth=1.5pt,linecolor=blue]{-}(2.5,-4)(2.5,6)
\psline[linewidth=1.5pt,linecolor=blue]{-}(3.5,-4)(3.5,6)
\psline[linewidth=1.5pt,linecolor=blue]{-}(4.5,-4)(4.5,6)
\psline[linewidth=1.5pt,linecolor=blue]{-}(5.5,-4)(5.5,6)
\multiput(0,0)(-1,1){4}{\pspolygon[fillstyle=solid,fillcolor=lightlightblue,linewidth=1pt](5,1)(6,2)(5,3)(4,2)\psarc[linewidth=0.025]{-}(5,1){0.21}{45}{135}}\rput(5,2){$u$}\rput(4,3){.}\rput(4.15,2.85){.}\rput(3.85,3.15){.}\rput(3,4){$u$}\rput(2,5){$u$}
\multiput(0,0)(-1,-1){4}{\pspolygon[fillstyle=solid,fillcolor=lightlightblue,linewidth=1pt](5,-1)(6,0)(5,1)(4,0)\psarc[linewidth=0.025]{-}(5,-1){0.21}{45}{135}}\rput(5,0){$u$}\rput(4,-1){$.$}\rput(3.85,-1.15){$.$}\rput(4.15,-0.85){$.$}\rput(3,-2){$u$}\rput(2,-3){$u$}
\pspolygon[fillstyle=solid,fillcolor=lightlightblue,linewidth=1pt](5,1)(6,2)(6,0)
\rput(5.6,0.9){\scriptsize $u,\xi$}
\rput(5.6,1.2){\scriptsize $_{(k)}$}
\rput(4,-4.5){$\underbrace{\qquad \hspace{0.8cm} \qquad}_{n-m-1}$}
\rput(-1,-4.5){$\underbrace{\quad \hspace{1.5cm} \quad}_{m}$}
\end{pspicture}
\ .
\end{equation}
This tangle can be expressed algebraically as
\begin{equation} \label{eq:Algdmk}
\db_m^{_{(k)}}= \Big[\hspace{-0.1cm}\prod_{i=m+1}^{n-1} \hspace{-0.15cm}x_i(u) \Big] \Kk(u,\xi) \Big[\hspace{-0.3cm}\prod_{\substack{j=n-1 \\ (\textrm{step}\, = -1)}}^{m+1} \hspace{-0.3cm}x_j(u) \Big],
\end{equation}
where $x_i(u) = 
\psset{unit=.25cm}
\begin{pspicture}[shift=-.50](-1,-1)(1,1)
\pspolygon[fillstyle=solid,fillcolor=lightlightblue](0,-1)(-1,0)(0,1)(1,0)
\psarc[linewidth=0.045]{-}(0,-1){0.40}{45}{135}
\rput(0,0){\scriptsize $u$}
\end{pspicture}
= s_1(-u)\, I + s_0 (u)\, e_i$ is the algebraic form of the face operator. We mention that
our convention for a product $\prod$ of non-commuting elements, here Temperley-Lieb tangles, is that the ordering is 
from left to right, for instance $\prod_{i=1}^3 a_i = a_1 a_2 a_3$.

Unlike for $\Dbk(u)$, this factorisability of $\db_m^{_{(k)}}$ is possible because the arcs leaving its leftmost face 
operators are not linked. Each $\db_m^{_{(k)}}$ is a linear combination of tangles of the form $a_1\, K^{(k)}(u,\xi)\, a_2$ 
and thus 
\begin{equation}
\db_m^{_{(k)}} = \Big[\prod_{i=n-1}^{m+1} \Xk_i\!(u) \Big] \Kk\!(u,\xi) \Big[\prod_{j=m+1}^{n-1} \Xk_j\!(u) \Big], \qquad \Xk_j\!(u) = s_1(-u) \,\Ik + s_0(u) \Ekj{k}{j}.
\end{equation}
This realises $\Dbk(u,\xi)$ as a linear combination of generators of $\btl_{n,k}$.  As the Hamiltonian is, up to adding a multiple of $\Ik \in \btl_{n,k}$, the first-order coefficient of $\Dbk(u,\xi)$ 
in its formal power series expansion \eqref{eq:Du0} in $u$, it follows that $\hamk \in \btl_{n,k}$.
This last statement was already clear from \eqref{eq:Ham}.
\end{proof}

To investigate the nature of the boundary seam algebras, we note that the generators of $\btl_{n,k}$ satisfy the 
relations
\begin{equation} \label{eq:newBTL}
\begin{aligned} 
\Ik A = A\, \Ik &= A & &\text{($A \in \{\Ik, \Ekj{k}{j}; \ j = 1, \dots, n\}$),} \\ 
\Ekj{k}{i} \Ekj{k}{j} &= \Ekj{k}{j} \Ekj{k}{i} & &\text{($|i-j|>1$),} \\
\Ekj{k}{i} \Ekj{k}{j} \Ekj{k}{i} &= \Ekj{k}{i} & &\text{($|i-j|=1$; \ $i,j \in \{1, \dots, n-1 \}$),} \\ 
\big(\Ekj{k}{j}\big)^2 &=\beta \Ekj{k}{j} & &\text{($j \in \{1, 2, \dots, n-1\}$),} \\
\Ekj{k}{n-1}\Ekj{k}{n}\Ekj{k}{n-1} &= U_{k-1} \, \Ekj{k}{n-1}, \\
\big(\Ekj{k}{n}\big)^2 &= U_{k} \, \Ekj{k}{n}.
\end{aligned}
\end{equation}
These are proven using the defining properties of the \TL{} algebras and the \WJ{} projectors. 
The last equation, for instance, is a consequence of the two properties in \eqref{eq:WJprop2}. 
Comparing with the relations \eqref{eq:defTL} and \eqref{eq:defTL1}, this shows that the boundary seam algebra $\btl_{n,k}$ is a quotient of the one-boundary \TL{} algebra with the parameters $\beta,\beta_1,\beta_2$ related by
\begin{equation}\label{eq:betas}
 \beta_1 = U_{k-1},\qquad \beta_2 = U_k.
\end{equation}
Indeed, we obtain a surjective homomorphism of unital associative algebras,
\begin{equation}\label{eq:surjectiveh}
\mathfrak h\colon \tlone_n \to \btl_{n,k},
\end{equation}
defined on the generators by 
\begin{equation}
 \mathfrak h(I) =\Ik,\qquad \mathfrak h(e_j) = \Ekj{k}{j}\quad \text{(\(j=1,2,\ldots,n\)).} 
\end{equation}
We emphasise that this homomorphism holds over fields of complex functions of $\beta$ after imposing \eqref{eq:betas}.

The relations \eqref{eq:newBTL}, however, do not form a complete set for $\btl_{n,k}$ 
when $n>k$.  This is easy to see for $k=1$. Then, the projector 
$\begin{pspicture}[shift=-0.05](-0.03,0.00)(0.57,0.3)
\pspolygon[fillstyle=solid,fillcolor=pink](0,0)(0.5,0)(0.5,0.3)(0,0.3)(0,0)\rput(0.25,0.15){$_1$}
\end{pspicture}
$
appearing in the definition of $\Ik$ and $\Ekj{k}{j}$ is the identity connectivity 
on one strand, so $\btl_{n,1} = \tl_{n+1}$. As a consequence, the generators of $\btl_{n,1}$ satisfy the extra relation
\begin{equation} \label{eq:BTLRelationk=1}
\Ekj{1}{n}\Ekj{1}{n-1}\Ekj{1}{n} = \Ekj{1}{n} \qquad \text{(\(n>1\)),} 
\end{equation}
which does not hold in $\tlone_{n}$. For general $k$, the diagrams in \eqref{eq:Ek} satisfy, if $n>k$, a single additional algebraically-independent relation: \cref{eq:ClosureRelation} below.
This relation is conveniently expressed in terms of tangles $\Yk_t$ which have the form
\begin{equation}
\psset{unit=0.5}
\Yk_t = 
U_{k-1} \ 
\begin{pspicture}[shift=-2.90](-6.2,-1.6)(6.6,3.6)
\pspolygon[fillstyle=solid,fillcolor=lightlightblue](-6.1,-0.8)(6.6,-0.8)(6.6,3.9)(-6.1,3.9)
\psarc[linewidth=1.5pt,linecolor=blue]{-}(1,3.1){0.5}{-180}{0}
\psarc[linewidth=1.5pt,linecolor=blue]{-}(1,0){0.5}{0}{180}
\psline[linewidth=1.5pt,linecolor=blue]{-}(-5.7,-0.8)(-5.7,3.9)
\rput(-4.9,3.3){...}\rput(-4.9,-0.4){...}
\psline[linewidth=1.5pt,linecolor=blue]{-}(-4.1,-0.8)(-4.1,3.9)
\psline[linewidth=1.5pt,linecolor=blue]{-}(-3.1,-0.8)(-3.1,3.9)
\psline[linewidth=1.5pt,linecolor=blue]{-}(-2.1,-0.8)(-2.1,0)
\psline[linewidth=1.5pt,linecolor=blue]{-}(-2.1,3.1)(-2.1,3.9)
\rput(-1.3,3.3){...}\rput(-1.3,-0.4){...}
\rput(3.25,2.8){...}\rput(3.25,0.3){...}
\psline[linewidth=1.5pt,linecolor=blue]{-}(-0.5,-0.8)(-0.5,0)
\psline[linewidth=1.5pt,linecolor=blue]{-}(-0.5,3.1)(-0.5,3.9)
\psline[linewidth=1.5pt,linecolor=blue]{-}(0.5,-0.8)(0.5,0)
\psline[linewidth=1.5pt,linecolor=blue]{-}(0.5,3.1)(0.5,3.9)
\psline[linewidth=1.5pt,linecolor=blue]{-}(1.5,-0.8)(1.5,-0.3)
\psline[linewidth=1.5pt,linecolor=blue]{-}(1.5,3.4)(1.5,3.9)
\psline[linewidth=1.5pt,linecolor=blue]{-}(2.5,3.4)(2.5,3.9)
\psline[linewidth=1.5pt,linecolor=blue]{-}(2.5,-0)(2.5,-0.8)
\psline[linewidth=1.5pt,linecolor=blue]{-}(4.1,3.4)(4.1,3.9)
\psline[linewidth=1.5pt,linecolor=blue]{-}(4.1,-0)(4.1,-0.8)
\psbezier[linewidth=1.5pt,linecolor=blue]{-}(2.5,3.1)(2.5,1.9)(-0.5,1.9)(-0.5,3.1)
\psbezier[linewidth=1.5pt,linecolor=blue]{-}(2.5,0)(2.5,1.2)(-0.5,1.2)(-0.5,0)
\psbezier[linewidth=1.5pt,linecolor=blue]{-}(4.1,3.1)(4.1,1.3)(-2.1,1.3)(-2.1,3.1)
\psbezier[linewidth=1.5pt,linecolor=blue]{-}(4.1,0)(4.1,1.7)(-2.1,1.7)(-2.1,0)
\psline[linewidth=1.5pt,linecolor=blue]{-}(5.0,-0.8)(5.0,3.9)
\psline[linewidth=1.5pt,linecolor=blue]{-}(6.0,-0.8)(6.0,3.9)
\pspolygon[fillstyle=solid,fillcolor=pink](1.1,0)(6.4,0)(6.4,-0.5)(1.1,-0.5)(1.1,0)
\rput(3.75,-0.25){\small$_{k}$}
\pspolygon[fillstyle=solid,fillcolor=pink](1.1,3.1)(6.4,3.1)(6.4,3.6)(1.1,3.6)(1.1,3.1)
\rput(3.75,3.35){\small$_{k}$}
\rput(-4.4,-1.5){$\underbrace{\quad \hspace{0.7cm}  \quad}_{n-t}$}
\rput(-0.8,-1.5){$\underbrace{\quad \hspace{0.7cm}  \quad}_t$}
\end{pspicture} \qquad \text{($t = 0, \dots, \min(k,n)$).}
\label{eq:Yt}
\end{equation}
For example, $\Yk_0 = U_{k-1}\, \Ik$ and $\Yk_1 = \Ekj{k}{n}$.

The $\Yk_{t+1}$ are built recursively by applying $\Ekj{k}{n} \Ekj{k}{n-1} \cdots \Ekj{k}{n-t} = \prod_{j=0}^t \Ekj{k}{n-j}$ to $\Yk_t$.  For example, $\Yk_1$ is constructed as follows:  $\Ekj{k}{n} \Yk_0 = \Ekj{k}{n} U_{k-1}\, \Ik = U_{k-1}\, \Ekj{k}{n} = U_{k-1}\, \Yk_1$.  Similarly, $\Yk_2$ is obtained from 
\begin{align}
\Ekj{k}{n}\Ekj{k}{n-1}\Yk_1 &= \Ekj{k}{n}\Ekj{k}{n-1}\Ekj{k}{n} = U_{k-1}^2\ 
\psset{unit=0.6}
\begin{pspicture}[shift=-0.75](-0.4,-0.0)(2.85,1.7)
\pspolygon[fillstyle=solid,fillcolor=lightlightblue](-0.4,0)(-0.4,1.7)(2.85,1.7)(2.85,0)
\psline[linewidth=1.0pt,linecolor=blue]{-}(-0.2,0)(-0.2,1.7)
\psline[linewidth=1.0pt,linecolor=blue]{-}(0.6,0)(0.6,1.7)
\rput(0.2,0.5){\small ...}\rput(0.2,1.1){\small ...}
\psline[linewidth=1.0pt,linecolor=blue]{-}(1.0,0)(1.0,0.2)
\psline[linewidth=1.0pt,linecolor=blue]{-}(1.4,0)(1.4,0.2)
\psline[linewidth=1.0pt,linecolor=blue]{-}(1.8,0)(1.8,0.15)
\psline[linewidth=1.0pt,linecolor=blue]{-}(2.2,0)(2.2,1.7)
\psline[linewidth=1.0pt,linecolor=blue]{-}(2.6,0)(2.6,1.7)
\psline[linewidth=1.0pt,linecolor=blue]{-}(1.0,1.5)(1.0,1.7)
\psline[linewidth=1.0pt,linecolor=blue]{-}(1.4,1.5)(1.4,1.7)
\psline[linewidth=1.0pt,linecolor=blue]{-}(1.8,1.55)(1.8,1.7)
\psarc[linewidth=1.0pt,linecolor=blue]{-}(1.6,0.2){0.2}{0}{180}
\psarc[linewidth=1.0pt,linecolor=blue]{-}(1.6,1.5){0.2}{180}{0}
\psbezier[linewidth=1.0pt,linecolor=blue]{-}(1.0,0.2)(1.0,0.5)(1.8,0.5)(1.8,0.8)
\psbezier[linewidth=1.0pt,linecolor=blue]{-}(1.0,1.5)(1.0,1.2)(1.8,1.2)(1.8,0.9)
\pspolygon[fillstyle=solid,fillcolor=pink](1.65,0.1)(1.65,0.2)(2.75,0.2)(2.75,0.1)
\pspolygon[fillstyle=solid,fillcolor=pink](1.65,0.8)(1.65,0.9)(2.75,0.9)(2.75,0.8)
\pspolygon[fillstyle=solid,fillcolor=pink](1.65,1.5)(1.65,1.6)(2.75,1.6)(2.75,1.5)
\end{pspicture} \notag \\[0.3cm]
& = U_{k-1}^2\ 
\psset{unit=0.6} 
\begin{pspicture}[shift=-0.75](-0.4,-0.0)(2.85,1.7)
\pspolygon[fillstyle=solid,fillcolor=lightlightblue](-0.4,0)(-0.4,1.7)(2.85,1.7)(2.85,0)
\psline[linewidth=1.0pt,linecolor=blue]{-}(-0.2,0)(-0.2,1.7)
\psline[linewidth=1.0pt,linecolor=blue]{-}(0.6,0)(0.6,1.7)
\rput(0.2,0.5){\small ...}\rput(0.2,1.1){\small ...}
\psline[linewidth=1.0pt,linecolor=blue]{-}(1.0,0)(1.0,0.2)
\psline[linewidth=1.0pt,linecolor=blue]{-}(1.4,0)(1.4,0.2)
\psline[linewidth=1.0pt,linecolor=blue]{-}(1.8,0)(1.8,0.15)
\psline[linewidth=1.0pt,linecolor=blue]{-}(2.2,0)(2.2,1.7)
\psline[linewidth=1.0pt,linecolor=blue]{-}(2.6,0)(2.6,1.7)
\psline[linewidth=1.0pt,linecolor=blue]{-}(1.0,1.5)(1.0,1.7)
\psline[linewidth=1.0pt,linecolor=blue]{-}(1.4,1.5)(1.4,1.7)
\psline[linewidth=1.0pt,linecolor=blue]{-}(1.8,1.55)(1.8,1.7)
\psline[linewidth=1.0pt,linecolor=blue]{-}(1.8,0.8)(1.8,0.9)
\psarc[linewidth=1.0pt,linecolor=blue]{-}(1.6,0.2){0.2}{0}{180}
\psarc[linewidth=1.0pt,linecolor=blue]{-}(1.6,1.5){0.2}{180}{0}
\psbezier[linewidth=1.0pt,linecolor=blue]{-}(1.0,0.2)(1.0,0.5)(1.8,0.5)(1.8,0.8)
\psbezier[linewidth=1.0pt,linecolor=blue]{-}(1.0,1.5)(1.0,1.2)(1.8,1.2)(1.8,0.9)
\pspolygon[fillstyle=solid,fillcolor=pink](1.65,0.1)(1.65,0.2)(2.75,0.2)(2.75,0.1)
\pspolygon[fillstyle=solid,fillcolor=pink](2.05,0.8)(2.05,0.9)(2.75,0.9)(2.75,0.8)
\pspolygon[fillstyle=solid,fillcolor=pink](1.65,1.5)(1.65,1.6)(2.75,1.6)(2.75,1.5)
\end{pspicture} \ - U_{k-2} U_{k-1} \ 
\begin{pspicture}[shift=-1.10](-0.4,-0.0)(2.85,2.2)
\pspolygon[fillstyle=solid,fillcolor=lightlightblue](-0.4,0)(-0.4,2.4)(2.85,2.4)(2.85,0)
\psline[linewidth=1.0pt,linecolor=blue]{-}(-0.2,0)(-0.2,2.4)
\psline[linewidth=1.0pt,linecolor=blue]{-}(0.6,0)(0.6,2.4)
\rput(0.2,0.7){\small ...}\rput(0.2,1.7){\small ...}
\psline[linewidth=1.0pt,linecolor=blue]{-}(1.0,0)(1.0,0.2)
\psline[linewidth=1.0pt,linecolor=blue]{-}(1.4,0)(1.4,0.2)
\psline[linewidth=1.0pt,linecolor=blue]{-}(1.8,0)(1.8,0.15)
\psline[linewidth=1.0pt,linecolor=blue]{-}(1.8,0.8)(1.8,0.9)
\psline[linewidth=1.0pt,linecolor=blue]{-}(1.8,1.5)(1.8,1.6)
\psline[linewidth=1.0pt,linecolor=blue]{-}(2.2,0)(2.2,0.8)
\psline[linewidth=1.0pt,linecolor=blue]{-}(2.2,1.6)(2.2,2.4)
\psline[linewidth=1.0pt,linecolor=blue]{-}(2.6,0)(2.6,2.4)
\psline[linewidth=1.0pt,linecolor=blue]{-}(1.0,2.2)(1.0,2.4)
\psline[linewidth=1.0pt,linecolor=blue]{-}(1.4,2.2)(1.4,2.4)
\psline[linewidth=1.0pt,linecolor=blue]{-}(1.8,2.25)(1.8,2.4)
\psarc[linewidth=1.0pt,linecolor=blue]{-}(1.6,0.2){0.2}{0}{180}
\psarc[linewidth=1.0pt,linecolor=blue]{-}(1.6,2.2){0.2}{180}{0}
\psarc[linewidth=1.0pt,linecolor=blue]{-}(2.0,0.9){0.2}{0}{180}
\psarc[linewidth=1.0pt,linecolor=blue]{-}(2.0,1.5){0.2}{180}{0}
\psbezier[linewidth=1.0pt,linecolor=blue]{-}(1.0,0.2)(1.0,0.5)(1.8,0.5)(1.8,0.8)
\psbezier[linewidth=1.0pt,linecolor=blue]{-}(1.0,2.2)(1.0,1.9)(1.8,1.9)(1.8,1.6)
\pspolygon[fillstyle=solid,fillcolor=pink](1.65,0.1)(1.65,0.2)(2.75,0.2)(2.75,0.1)
\pspolygon[fillstyle=solid,fillcolor=pink](2.05,0.8)(2.05,0.9)(2.75,0.9)(2.75,0.8)
\pspolygon[fillstyle=solid,fillcolor=pink](2.05,1.5)(2.05,1.6)(2.75,1.6)(2.75,1.5)
\pspolygon[fillstyle=solid,fillcolor=pink](1.65,2.2)(1.65,2.3)(2.75,2.3)(2.75,2.2)
\end{pspicture} \notag \\ 
&= U_{k-1}\, \Yk_1 - U_{k-2} \, \Yk_2, \label{eq:gettingY2} 
\end{align}
where we have used properties \eqref{eq:WJs} and \eqref{eq:WJprop} of the \WJ{} projectors.
In general, this construction results in the relation
\begin{equation}
\Big[\prod_{j=0}^{t} \Ekj{k}{n-j}\Big] \Yk_t = \sum_{i=0}^{t-1} (-1)^i U_{k-1-i} \Big[ \prod_{j=i+2}^t \Ekj{k}{n-j} \Big] \Yk_t + (-1)^t U_{k-1-t} \Yk_{t+1},
\label{eq:Yrec}
\end{equation}
valid for $t = 0, \dots, \min(n,k)-1$. The derivation of this relation uses an identity that was not presented in \cref{sec:TLa}:
\begin{equation}
\begin{pspicture}[shift=-0.05](-0.03,0.00)(1.07,0.3)
\pspolygon[fillstyle=solid,fillcolor=pink](0,0)(1,0)(1,0.3)(0,0.3)(0,0)\rput(0.5,0.15){$_k$} 
\end{pspicture} =  
\frac{1}{U_{k-1}} \Bigg( U_{k-1} \ 
\begin{pspicture}[shift=-0.25](-0.03,-0.3)(1.07,0.3)
\psline[linewidth=1.25pt,linecolor=blue]{-}(0,0)(0,0.3)
\psline[linewidth=1.25pt,linecolor=blue]{-}(0.2,0)(0.2,0.3)
\psline[linewidth=1.25pt,linecolor=blue]{-}(0.4,0)(0.4,0.3)
\rput(0.6,0.15){\scriptsize ...}
\psline[linewidth=1.25pt,linecolor=blue]{-}(0.8,0)(0.8,0.3)
\psline[linewidth=1.25pt,linecolor=blue]{-}(1.0,0)(1.0,0.3)
\rput(0,-0.3){\psline[linewidth=1.25pt,linecolor=blue]{-}(0,0)(0,0.3)
\pspolygon[fillstyle=solid,fillcolor=pink](0.10,0)(1.1,0)(1.1,0.3)(0.10,0.3)\rput(0.6,0.15){$_{k-1}$}}
\end{pspicture}\  
- U_{k-2} \ 
\begin{pspicture}[shift=-0.25](-0.03,-0.3)(1.07,0.3)
\psarc[linewidth=1.25pt,linecolor=blue]{-}(0.1,0){0.1}{0}{180}
\psarc[linewidth=1.25pt,linecolor=blue]{-}(0.1,0.3){0.1}{180}{0}
\psline[linewidth=1.25pt,linecolor=blue]{-}(0.4,0)(0.4,0.3)
\rput(0.6,0.15){\scriptsize ...}
\psline[linewidth=1.25pt,linecolor=blue]{-}(0.8,0)(0.8,0.3)
\psline[linewidth=1.25pt,linecolor=blue]{-}(1.0,0)(1.0,0.3)
\rput(0,-0.3){\psline[linewidth=1.25pt,linecolor=blue]{-}(0,0)(0,0.3)
\pspolygon[fillstyle=solid,fillcolor=pink](0.10,0)(1.1,0)(1.1,0.3)(0.10,0.3)\rput(0.6,0.15){$_{k-1}$}}
\end{pspicture}\ 
+ U_{k-3}\
\begin{pspicture}[shift=-0.25](-0.03,-0.3)(1.07,0.3)
\psarc[linewidth=1.25pt,linecolor=blue]{-}(0.1,0){0.1}{0}{180}
\psarc[linewidth=1.25pt,linecolor=blue]{-}(0.3,0.3){0.1}{180}{0}
\psbezier[linewidth=1.25pt,linecolor=blue]{-}(0.4,0)(0.4,0.15)(0,0.15)(0,0.3)
\psline[linewidth=1.25pt,linecolor=blue]{-}(0.6,0)(0.6,0.3)
\rput(0.8,0.15){\scriptsize ...}
\psline[linewidth=1.25pt,linecolor=blue]{-}(1.0,0)(1.0,0.3)
\rput(0,-0.3){\psline[linewidth=1.25pt,linecolor=blue]{-}(0,0)(0,0.3)
\pspolygon[fillstyle=solid,fillcolor=pink](0.10,0)(1.1,0)(1.1,0.3)(0.10,0.3)\rput(0.6,0.15){$_{k-1}$}}
\end{pspicture}
+ \,\dots\, +  (-1)^{k-1}U_{0}\
\begin{pspicture}[shift=-0.25](-0.03,-0.3)(1.07,0.3)
\psarc[linewidth=1.25pt,linecolor=blue]{-}(0.1,0){0.1}{0}{180}
\psarc[linewidth=1.25pt,linecolor=blue]{-}(0.9,0.3){0.1}{180}{0}
\psbezier[linewidth=1.25pt,linecolor=blue]{-}(0.4,0)(0.4,0.15)(0,0.15)(0,0.3)
\psbezier[linewidth=1.25pt,linecolor=blue]{-}(0.6,0)(0.6,0.15)(0.2,0.15)(0.2,0.3)
\psbezier[linewidth=1.25pt,linecolor=blue]{-}(0.8,0)(0.8,0.15)(0.4,0.15)(0.4,0.3)
\psbezier[linewidth=1.25pt,linecolor=blue]{-}(1.0,0)(1.0,0.15)(0.6,0.15)(0.6,0.3)
\rput(0,-0.3){\psline[linewidth=1.25pt,linecolor=blue]{-}(0,0)(0,0.3)
\pspolygon[fillstyle=solid,fillcolor=pink](0.10,0)(1.1,0)(1.1,0.3)(0.10,0.3)\rput(0.6,0.15){$_{k-1}$}}
\end{pspicture} \ 
 \Bigg).
\label{eq:otherWJrelation}
\end{equation}
We remark that \eqref{eq:Yrec} also makes it clear that the $\Yk_t$ are actually elements of $\btl_{n,k}$, a fact that may not have been obvious from the definition \eqref{eq:Yt}. \cref{sec:aandbiso} goes further and shows that, over a complex function field, every element $\Ik a \Ik$ with $a \in \mathcal \tl_{n+k}$ is in $\btl_{n,k}$ and that $\btl_{n,k}$ and $\atl_{n,k} = \Ik \tl_{n+k} \Ik$ are isomorphic algebras. 

When $t=k$ and $n > k$, the derivation of the relation \eqref{eq:Yrec} instead results in a non-trivial constraint involving $\Yk_k$:
\begin{equation} \label{eq:ClosureRelation}
\Big[\prod_{j=0}^{k} \Ekj{k}{n-j}\Big] \Yk_k = \sum_{i=0}^{k-1} (-1)^i U_{k-1-i} \Big[ \prod_{j=i+2}^k \Ekj{k}{n-j} \Big] \Yk_k  \qquad \text{(\(n>k\)).}
\end{equation}
We will refer to this relation as the \emph{closure relation}.  The case $k=0$ is $\Ekj{k}{n} = 0$ which reduces the boundary seam algebra to the \TL{} algebra $\tl_n$.  Similarly, the
case $k=1$ recovers the \TL{} relation \eqref{eq:BTLRelationk=1} noted above.  
For $n>k$, we prove in \cref{prop:BTLComplete} that the closure relation, together with the one-boundary \TL{} relations \eqref{eq:newBTL}, yields
a complete set of relations of the boundary seam algebra $\btl_{n,k}$.  For $n \le k$, 
as argued in \cref{app:SeamsBetaFormal}, the relations \eqref{eq:newBTL} are already complete.
 
As shown in \cref{sec:dimA}, 
the dimension of the boundary seam algebra $\btl_{n,k}$ is given by
\begin{equation} \label{eq:dimBTL}
\dim \btl_{n,k} = \binom{2n}{n} - \binom{2n}{n-k-1}.
\end{equation}
This is consistent with $\btl_{n,0} = \tl_n$ and $\btl_{n,1} = \tl_{n+1}$. For $n>k$, \eqref{eq:dimTL1} gives 
$\dim \btl_{n,k}< \dim \tlone_n$, in accord with the homomorphism $\mathfrak h$ of \eqref{eq:surjectiveh}
not being an isomorphism in this case.

These results apply to the boundary seam algebras in which $\beta$ is treated as a formal parameter. The specialisation to $\beta \in \mathbb C$, and in particular for values where $U_i = 0$ for some $i \ge 0$, is subtle and a full discussion is deferred until \cref{sec:LatticeKac}, with the details worked out in \cref{app:SeamsBetaC}.  For now, we only remark that the diagrammatic definitions of these algebras can no longer be used when the \WJ{} projectors are not defined. When specialising to roots of unity, we will therefore define the boundary seam algebras algebraically via the relations given in \eqref{eq:newBTL} and a closure relation \eqref{eq:ClosureRelation'} similar to that given in \eqref{eq:ClosureRelation}. These algebraic relations are well-defined for all $\beta \in \CC$. However, the resulting algebra $\btl_{n,k}(\beta)$ is no longer a subalgebra 
of the specialised algebra $\tl_{n+k}(\beta)$. 
Its dimension is discussed in \cref{app:SeamsBetaC}.

\subsection{Representations of boundary seam algebras}
\label{sec:REPS}

\subsubsection{Standard modules}
\label{sec:reps}

The prescriptions used in \cite{PRZ06,PRannecy,PRV12,PTC14} to investigate $\Dbk(u,\xi)$, for $k>0$, were constructed, in general, by replacing the application of projectors in the boundary triangles by a diagrammatic rule, applied by hand, that set arcs appearing in the boundary to zero. 
One goal of this subsection is to clarify this construction and place it in the context of the boundary seam algebras $\btl_{n,k}$. Indeed, we will show that the matrices obtained from these prescriptions do not define representations of $\tl_{n+k}$.  Instead, they yield representations of $\btl_{n,k}$. 
We start the description with an example for $(n,k,d)=(4,2,0)$, where it is recalled that $d$ denotes the number of defects in the parametrisation of a standard module. Technical details are reported in \cref{app:btlstan}.

\paragraph{An example.}

For $(n,k,d)=(4,2,0)$, $\btl_{4,2} \subset \tl_6$ and the standard module $\stan_6^0$ of $\tl_6$ is spanned
by five link states:
\begin{equation}
\links_6^0 = \Big\{\ 
\psset{unit=0.8}
\begin{pspicture}[shift=-0.15](-0.0,0)(2.4,0.5)
\psline{-}(0,0)(2.4,0)
\psarc[linecolor=blue,linewidth=1.5pt]{-}(0.4,0){0.2}{0}{180}
\psarc[linecolor=blue,linewidth=1.5pt]{-}(1.6,0){0.2}{0}{180}
\psbezier[linecolor=blue,linewidth=1.5pt]{-}(1.0,0)(1.0,0.6)(2.2,0.6)(2.2,0)
\rput(2.6,-0.05){,}
\end{pspicture}
\quad
\begin{pspicture}[shift=-0.15](-0.0,0)(2.4,0.5)
\psline{-}(0,0)(2.4,0)
\psarc[linecolor=blue,linewidth=1.5pt]{-}(0.8,0){0.2}{0}{180}
\psarc[linecolor=blue,linewidth=1.5pt]{-}(1.6,0){0.2}{0}{180}
\psbezier[linecolor=blue,linewidth=1.5pt]{-}(0.2,0)(0.2,0.8)(2.2,0.8)(2.2,0)
\rput(2.6,-0.05){,}
\end{pspicture}
\quad
\begin{pspicture}[shift=-0.15](-0.0,0)(2.4,0.5)
\psline{-}(0,0)(2.4,0)
\psarc[linecolor=blue,linewidth=1.5pt]{-}(1.2,0){0.2}{0}{180}
\psbezier[linecolor=blue,linewidth=1.5pt]{-}(0.6,0)(0.6,0.6)(1.8,0.6)(1.8,0)
\psbezier[linecolor=blue,linewidth=1.5pt]{-}(0.2,0)(0.2,0.9)(2.2,0.9)(2.2,0)
\rput(2.6,-0.05){,}
\end{pspicture}
\quad
\begin{pspicture}[shift=-0.15](-0.0,0)(2.4,0.5)
\psline{-}(0,0)(2.4,0)
\psarc[linecolor=blue,linewidth=1.5pt]{-}(0.4,0){0.2}{0}{180}
\psarc[linecolor=blue,linewidth=1.5pt]{-}(1.2,0){0.2}{0}{180}
\psarc[linecolor=blue,linewidth=1.5pt]{-}(2.0,0){0.2}{0}{180}
\rput(2.6,-0.05){,}
\end{pspicture}
\quad
\begin{pspicture}[shift=-0.15](-0.0,0)(2.4,0.5)
\psline{-}(0,0)(2.4,0)
\psarc[linecolor=blue,linewidth=1.5pt]{-}(0.8,0){0.2}{0}{180}
\psarc[linecolor=blue,linewidth=1.5pt]{-}(2.0,0){0.2}{0}{180}
\psbezier[linecolor=blue,linewidth=1.5pt]{-}(0.2,0)(0.2,0.6)(1.4,0.6)(1.4,0)
\rput(2.9,-0.00){.}
\end{pspicture}\ \Big\}\phantom{.}
\label{eq:V60basis}
\end{equation}
In the standard representation $\rho_6^0$, the generators $I^{_{(2)}}_{\phantom{j}}$ and $E_j^{_{(2)}}$ are represented by
\begin{equation}
\begin{gathered}
\rho_6^0(I^{\textrm{\tiny$(2)$}}) =
\begin{pmatrix}
 1 & 0 & 0 & 0 & 0 \\
 0 & 1 & 0 & 0 & 0 \\
 0 & 0 & 1 & 0 & 0 \\
 -\frac{1}{\beta } & 0 & -\frac{1}{\beta } & 0 & 0 \\
 0 & -\frac{1}{\beta } & 0 & 0 & 0 \\
\end{pmatrix}
, \\
\begin{aligned}
\rho_6^0(\Ekj{2}{1}) &=
\begin{pmatrix}
 \beta  & 1 & 0 & 0 & 0 \\
 0 & 0 & 0 & 0 & 0 \\
 0 & 0 & 0 & 0 & 0 \\
 -1 & -\frac{1}{\beta } & 0 & 0 & 0 \\
 0 & 0 & 0 & 0 & 0 \\
\end{pmatrix}
, & \rho_6^0(\Ekj{2}{2}) &=
\begin{pmatrix}
 0 & 0 & 0 & 0 & 0 \\
 1 & \beta  & 1 & 0 & 0 \\
 0 & 0 & 0 & 0 & 0 \\
 0 & 0 & 0 & 0 & 0 \\
 -\frac{1}{\beta } & -1 & -\frac{1}{\beta } & 0 & 0 \\
\end{pmatrix}
, \\
\rho_6^0(\Ekj{2}{3}) &=
\begin{pmatrix}
 0 & 0 & 0 & 0 & 0 \\
 0 & 0 & 0 & 0 & 0 \\
 0 & 1 & \beta  & 0 & 0 \\
 0 & -\frac{1}{\beta } & -1 & 0 & 0 \\
 0 & 0 & 0 & 0 & 0 \\
\end{pmatrix}
, & \rho_6^0(\Ekj{2}{4}) &=
\begin{pmatrix}
 \beta ^2-1 & 0 & -1 & 0 & 0 \\
 0 & \beta ^2-1 & \beta  & 0 & 0 \\
 0 & 0 & 0 & 0 & 0 \\
 \frac{1-\beta ^2}{\beta } & 0 & \frac{1}{\beta } & 0 & 0 \\
 0 & \frac{1-\beta ^2}{\beta } & -1 & 0 & 0 \\
\end{pmatrix}
.
\end{aligned}
\end{gathered}
\end{equation}
All five generators act trivially on the last two link states of \eqref{eq:V60basis}. This is because both states have an arc connecting nodes $5$ and $6$ which is annihilated by the \WJ{} projector $P_2$ that is present in every element of $\btl_{n,k}$. The closure relation \eqref{eq:ClosureRelation} for $k=2$ involves the tangle $Y^{_{(2)}}_2$, represented by
\begin{equation}
\rho_6^0(Y^{_{(2)}}_2) =  \rho_6^0(\beta \Ekj{2}{4} -  \Ekj{2}{4} \Ekj{2}{3} \Ekj{2}{4}) = 
\begin{pmatrix}
 \beta^3- \beta & \beta^2-1 & 0 & 0 & 0 \\
 0 & 0 & 0 & 0 & 0 \\
 0 & 0 & 0 & 0 & 0 \\
 1-\beta^2 & \tfrac{1-\beta^2}\beta & 0 & 0 & 0 \\
 0 & 0 & 0 & 0 & 0 \\
\end{pmatrix}
,
\end{equation}
and it is easy to check that, indeed, $\rho_6^0(\Ekj{2}{4} \Ekj{2}{3} \Ekj{2}{2} Y^{_{(2)}}_2) = \rho_6^0(\beta \Ekj{2}{2} Y^{_{(2)}}_2 - Y^{_{(2)}}_2)$.

Because $I^{(2)}$, as the identity element of $\btl_{4,2}$, is not represented by an identity matrix, $\rho_6^0$ is not a representation of this algebra in the usual sense.\footnote{A representation of a unital associative algebra is usually required to send the unit to the identity map. 
A more general notion of representation would drop this requirement, the unit then being sent instead to an idempotent.  In this generalised sense, $\rho_6^0$ restricts to a representation of $\btl_{4,2}$.} However, 
a representation of $\btl_{4,2}$ is obtained by taking the upper-left $3 \times 3$ minor of the $\rho_6^0$ matrices. We denote this representation by $\rho_{4,2}^0$:
\begin{equation}
\begin{gathered}
\rho_{4,2}^0 (I^{\textrm{\tiny$(2)$}}) = 
\begin{pmatrix}
 1 & 0 & 0 \\
 0 & 1 & 0 \\
 0 & 0 & 1 \\
\end{pmatrix}
, \qquad
\rho_{4,2}^0 (\Ekj{2}{1}) = 
\begin{pmatrix}
 \beta  & 1 & 0 \\
 0 & 0 & 0 \\
 0 & 0 & 0 \\
\end{pmatrix}
, \qquad
\rho_{4,2}^0 (\Ekj{2}{2}) = 
\begin{pmatrix}
 0 & 0 & 0 \\
 1 & \beta  & 1 \\
 0 & 0 & 0 \\
\end{pmatrix}
, \\
\rho_{4,2}^0 (\Ekj{2}{3}) = 
\begin{pmatrix}
 0 & 0 & 0 \\
 0 & 0 & 0 \\
 0 & 1 & \beta  \\
\end{pmatrix}
, \qquad
\rho_{4,2}^0 (\Ekj{2}{4}) = 
\begin{pmatrix}
 \beta ^2-1 & 0 & -1 \\
 0 & \beta ^2-1 & \beta  \\
 0 & 0 & 0 \\
\end{pmatrix}
.
\end{gathered} \label{eq:repex}
\end{equation}
This representation acts on the vector space quotient of $\stan_{6}^0$ by the linear span of the last two link states. We denote this quotient
space and its (representative) link states by 
\begin{equation}
\stan_{4,2}^0 = \vspn
\Big\{\ 
\psset{unit=0.8}
\begin{pspicture}[shift=-0.15](-0.0,0)(2.4,0.5)
\psline{-}(0,0)(2.4,0)
\psline[linecolor=purple,linewidth=2.0pt]{-}(1.6,0)(2.4,0)
\psarc[linecolor=blue,linewidth=1.5pt]{-}(0.4,0){0.2}{0}{180}
\psarc[linecolor=blue,linewidth=1.5pt]{-}(1.6,0){0.2}{0}{180}
\psbezier[linecolor=blue,linewidth=1.5pt]{-}(1.0,0)(1.0,0.6)(2.2,0.6)(2.2,0)
\rput(2.6,-0.05){,}
\end{pspicture}
\quad
\begin{pspicture}[shift=-0.15](-0.0,0)(2.4,0.5)
\psline{-}(0,0)(2.4,0)
\psline[linecolor=purple,linewidth=2.0pt]{-}(1.6,0)(2.4,0)
\psarc[linecolor=blue,linewidth=1.5pt]{-}(0.8,0){0.2}{0}{180}
\psarc[linecolor=blue,linewidth=1.5pt]{-}(1.6,0){0.2}{0}{180}
\psbezier[linecolor=blue,linewidth=1.5pt]{-}(0.2,0)(0.2,0.8)(2.2,0.8)(2.2,0)
\rput(2.6,-0.05){,}
\end{pspicture}
\quad
\begin{pspicture}[shift=-0.15](-0.0,0)(2.4,0.5)
\psline{-}(0,0)(2.4,0)
\psline[linecolor=purple,linewidth=2.0pt]{-}(1.6,0)(2.4,0)
\psarc[linecolor=blue,linewidth=1.5pt]{-}(1.2,0){0.2}{0}{180}
\psbezier[linecolor=blue,linewidth=1.5pt]{-}(0.6,0)(0.6,0.6)(1.8,0.6)(1.8,0)
\psbezier[linecolor=blue,linewidth=1.5pt]{-}(0.2,0)(0.2,0.9)(2.2,0.9)(2.2,0)
\end{pspicture}
\Big\},
\label{eq:V420}
\end{equation}
with the pink line segment indicating the positions of the $k=2$ boundary
nodes. We also remark that, as opposed to $\rho_6^0$, the representation $\rho_{4,2}^0$ is well-defined when we specialise to any $\beta \in \CC$. This includes the case $\beta = 0$, even though the \WJ{} projector $P_2$ appearing in the definition of $I^{\textrm{\tiny$(2)$}}$ and the $\Ekj{2}{j}$ is then not defined.

\paragraph{General case.}

The above example illustrates how to proceed in a representation-theoretic manner for any $n$, $k$ and $d$ satisfying $d=n+k \bmod{2}$.  
In general, the link states in $\stan_{n+k}^d$ that have an arc connecting a pair of boundary
nodes, these being the nodes labelled by $n+1, \dots, n+k$, 
are annihilated by every element of $\btl_{n,k}$. Thus, they span a trivial $\btl_{n,k}$-module which we denote by 
$\stanu_{n,k}^d$. Restricting $\stan_{n+k}^d$ to a $\btl_{n,k}$-module, we can form the quotient
\begin{equation} \label{eq:Vdecomp}
 \stan_{n,k}^d = \frac{\stan_{n+k}^d}{\stanu_{n,k}^d},
\end{equation}
which we will refer to as a \emph{standard module} for the boundary seam algebra $\btl_{n,k}$.  The module 
$\stan_{n,k}^d$
is spanned by the (equivalence classes of the) link states of $\stan_{n+k}^d$ which have no arcs connecting boundary
nodes together.  This set of link states will be denoted by $\links_{n,k}^d$. We mention that it is convenient for what follows to identify these link states, elements of $\stan_{n+k}^d$, with their images in $\stan_{n,k}^d$, trusting that this will not lead to any confusion.  The dimension of the standard $\btl_{n,k}$-modules $\stan_{n,k}^d$ is computed in \cref{sec:dimVnkd}, the result being that 
\begin{equation}
\dim \stan_{n,k}^d = \binom{n}{\frac{n+k-d}{2}} - \binom{n}{\frac{n-k-d-2}{2}}.
\label{eq:dimVnkd}
\end{equation}

It is now clear that the standard representations 
\begin{equation}
\rho_{n,k}^d \colon \btl_{n,k} \rightarrow \mathrm{End}(\stan_{n,k}^d) \qquad \text{(\(0 \le d \le n+k\), \(d = (n+k) \bmod{2}\))} 
\end{equation}
correspond to taking the upper-left matrix minor (of the appropriate size) when we order the basis of $\stan_{n+k}^d$
so that link states with linked boundary nodes appear last. Taking the upper-left minor is precisely the prescription used 
in \cite{PRZ06,PRannecy,PRV12,PTC14}.  We note that, for $k>1$, this does not yield a representation of $\tl_{n+k}$, 
because the $\btl_{n,k}$-module $\stanu_{n,k}^d$ is not invariant under the action of $\tl_{n+k}$.

Our study of $\Dbk(u,\xi)$ in \cref{sec:Kacmod} will require the specialisation of $\beta = q + q^{-1}$ to values in 
$\mathbb C$ and, in particular, to values for which $q$ is a root of unity. 
We recall that the specialised algebra $\btl_{n,k}(\beta)$ is defined using generators and relations, but that it might not have an equivalent diagrammatic description.
\cref{sec:nonsingular} shows that the specialisation of $\rho_{n,k}^d$
is defined for all $\beta \in \CC$: $\rho_{n,k}^{d}(a)$ is well-defined (finite) for all $a \in \btl_{n,k}$, even when some 
projectors in the seam are ill-defined (divergent). In particular, because $\Dbk(u,\xi) \in \btl_{n,k}$, see \cref{sec:dbkinbtl}, the transfer matrices $\rho_{n,k}^d\big(\Dbk(u,\xi)\big)$ are well-defined.

\subsubsection{Invariant bilinear forms}
\label{sec:Gram}

The goal of this subsection is to generalise the invariant bilinear form on the standard $\tl_{n+k}$-modules $\stan_{n+k}^d$ to the 
standard modules $\stan_{n,k}^d$ of $\btl_{n,k}$:
\begin{equation}
\gramprodk{\cdot}{\cdot} \colon \stan_{n,k}^d \times \stan_{n,k}^d \rightarrow \mathbb C.
\end{equation}
The definition uses the fact that $\stan_{n,k}^d$ is the quotient \eqref{eq:Vdecomp} of $\stan_{n+k}^d$ by the trivial $\btl_{n,k}$-submodule $\stanu_{n,k}^d$.
Thus, if $w \in \stan_{n,k}^d$, then there exists $w'\in \stan_{n+k}^d$ whose image in the quotient is $w$.
We recall that when $w'$ may be chosen to be a link state, it will be identified with $w$ for simplicity.

Recalling that $\gramprod{\cdot}{\cdot}$ denotes the invariant bilinear form
of $\tl_{n+k}$, it follows that the bilinear form
\begin{equation}
\gramprodk{w_1}{w_2}=\gramprod{w_1'}{\Ik w_2'}=\gramprod{\Ik w_1'}{\Ik w_2'}
\end{equation}
is well-defined because (the self-adjoint idempotent)
$\Ik$ annihilates $\stanu_{n,k}^d$. We may alternatively define this for link states $w_1$ and $w_2$, in which case our convention is to drop the primes:
\begin{equation}
\gramprodk{w_1}{w_2}=\gramprod{w_1}{\Ik w_2}. 
\end{equation}
This bilinear form is invariant:
\begin{equation}
\gramprodk{w_1}{a w_2} = \gramprodk{a^\dagger w_1}{w_2} \qquad \text{(\(a \in \btl_{n,k}\), \(w_1,w_2 \in \stan_{n,k}^d\)).}
\end{equation}
Moreover, let $\Ik a \Ik \in \btl_{n,k}$ (see \cref{sec:aandbiso}) be such that $a$ is obtained 
from two link states $w_1, w_2\in \links_{n,k}^d$ by flipping $w_2$ and gluing its defects to those of $w_1$. Then, for all $w_3 \in \stan_{n,k}^d$, 
\begin{equation} \label{eq:WallIdentity}
(\Ik a \Ik)\, w_3 = 
w_1 \, \gramprodk{w_2}{w_3}.
\end{equation}

It follows, as in \cite{GraCel96}, that the standard $\btl_{n,k}$-modules $\stan_{n,k}^d$ are irreducible because $\gramprodk{\cdot}{\cdot}$ is not identically zero over a complex function field.  If $\beta$ is specialised to a value in $\mathbb C$ where $\btl_{n,k}(\beta)$ is well-defined diagrammatically,\footnote{We say that a tangle is \emph{well-defined diagrammatically}, at a specialised $\beta \in \CC$, if its decomposition as a linear combination of Temperley-Lieb connectivities, for $\beta$ formal, involves coefficients that do not diverge at the specialised value.}
then the standard modules $\stan_{n,k}^d$ are cyclic and indecomposable, though not necessarily irreducible, 
provided that $\gramprodk{\cdot}{\cdot}$ is not identically zero.\footnote{When $\gramprodk{\cdot}{\cdot}$ is identically zero, one can renormalise it as in \cite{RidSta12} in order to study the corresponding $\stan_{n,k}^d$. We will not detail this case here.}  If $\btl_{n,k}(\beta)$ is not defined diagrammatically, as may happen for certain $k$ when $q$ is a root of unity (see \cref{app:SeamsBetaFormal}), then cyclicity and indecomposability does not follow from \eqref{eq:WallIdentity}. Indeed, the standard modules may then be decomposable. For instance, for $\beta = 0$ and $(n,k,d)=(4,3,3)$, the state
\begin{equation}
2\,
\psset{unit=0.8}
\begin{pspicture}[shift=-0.05](-0.0,0)(2.8,0.5)
\psline[linewidth=\mince](0,0)(2.8,0)
\psline[linecolor=purple,linewidth=2.0pt]{-}(1.6,0)(2.8,0)
\psline[linecolor=blue,linewidth=\elegant]{-}(0.2,0)(0.2,0.5)
\psline[linecolor=blue,linewidth=\elegant]{-}(0.6,0)(0.6,0.5)
\psline[linecolor=blue,linewidth=\elegant]{-}(2.6,0)(2.6,0.5)
\psbezier[linecolor=blue,linewidth=\elegant]{-}(1.0,0)(1.0,0.7)(2.2,0.7)(2.2,0)
\psarc[linecolor=blue,linewidth=\elegant]{-}(1.6,0){0.2}{0}{180}
\end{pspicture}
\, - \,
\begin{pspicture}[shift=-0.05](-0.0,0)(2.8,0.5)
\psline[linewidth=\mince](0,0)(2.8,0)
\psline[linecolor=purple,linewidth=2.0pt]{-}(1.6,0)(2.8,0)
\psline[linecolor=blue,linewidth=\elegant]{-}(0.2,0)(0.2,0.5)
\psline[linecolor=blue,linewidth=\elegant]{-}(2.2,0)(2.2,0.5)
\psline[linecolor=blue,linewidth=\elegant]{-}(2.6,0)(2.6,0.5)
\psarc[linecolor=blue,linewidth=\elegant]{-}(1.2,0){0.2}{0}{180}
\psbezier[linecolor=blue,linewidth=\elegant]{-}(0.6,0)(0.6,0.7)(1.8,0.7)(1.8,0)
\end{pspicture} \, - \,
\begin{pspicture}[shift=-0.05](-0.0,0)(2.8,0.5)
\psline[linewidth=\mince](0,0)(2.8,0)
\psline[linecolor=purple,linewidth=2.0pt]{-}(1.6,0)(2.8,0)
\psline[linecolor=blue,linewidth=\elegant]{-}(1.0,0)(1.0,0.5)
\psline[linecolor=blue,linewidth=\elegant]{-}(2.2,0)(2.2,0.5)
\psline[linecolor=blue,linewidth=\elegant]{-}(2.6,0)(2.6,0.5)
\psarc[linecolor=blue,linewidth=\elegant]{-}(0.4,0){0.2}{0}{180}
\psarc[linecolor=blue,linewidth=\elegant]{-}(1.6,0){0.2}{0}{180}
\end{pspicture} + 
\begin{pspicture}[shift=-0.05](-0.0,0)(2.8,0.5)
\psline[linewidth=\mince](0,0)(2.8,0)
\psline[linecolor=purple,linewidth=2.0pt]{-}(1.6,0)(2.8,0)
\psline[linecolor=blue,linewidth=\elegant]{-}(1.8,0)(1.8,0.5)
\psline[linecolor=blue,linewidth=\elegant]{-}(2.2,0)(2.2,0.5)
\psline[linecolor=blue,linewidth=\elegant]{-}(2.6,0)(2.6,0.5)
\psarc[linecolor=blue,linewidth=\elegant]{-}(0.8,0){0.2}{0}{180}
\psbezier[linecolor=blue,linewidth=\elegant]{-}(0.2,0)(0.2,0.7)(1.4,0.7)(1.4,0)
\end{pspicture}
\end{equation}
spans a one-dimensional direct summand of $\stan_{4,3}^3$.
The general structure of $\stan_{n,k}^d$ in these cases is beyond the scope of this paper and will not be discussed here.

The Gram matrix of the bilinear form $\gramprodk{\cdot}{\cdot}$ is denoted by $\grammat_{n,k}^d$. For $(n,k,d)=(4,2,2)$, in the ordered basis \eqref{eq:B422} for $\links_{4,2}^2$, it is given by 
\begin{equation}
\grammat_{4,2}^2 = 
\begin{pmatrix}
(U_1)^2 & U_1 & U_1 & 1 & U_1 & 0 \\[0.1cm]
 U_1 & (U_1)^2 & 1 & U_1 & 1 & 1 \\[0.1cm]
 U_1 & 1 & U_2 & \frac{U_2}{U_1} & 1 & 0 \\[0.1cm]
 1 & U_1 & \frac{U_2}{U_1} & U_2 & \frac{U_2}{U_1}
  & \frac{U_2}{U_1} \\[0.1cm]
 U_1 & 1 & 1 & \frac{U_2}{U_1} & U_2 & 0 \\[0.1cm]
 0 & 1 & 0 & \frac{U_2}{U_1} & 0 & U_2 \\
\end{pmatrix}, \qquad \det \grammat_{4,2}^2 = \frac{U_4 (U_3)^4}{(U_1)^4}.
\label{eq:Gmat422}
\end{equation}
This form will be crucial in what follows. Its determinant is calculated in \cref{sec:Gramdet} and is given by
\begin{equation}
\det \grammat_{n,k}^d = \prod_{i=1}^{\lfloor \frac k2 \rfloor} \Big(\frac{U_{i-1}}{U_{k-i}}\Big)^{\dim \stan^d_{n,k-2i}} 
\prod_{j=1}^{\frac{n+k-d}2} \Big(\frac{U_{d+j}}{U_{j-1}}\Big)^{\dim \stan^{d+2j}_{n,k}}.
\label{eq:Gnkd}
\end{equation}

%
\section{Kac modules} \label{sec:Kacmod}
%

In the second phase of this work, our goal is to investigate the Virasoro representations, in particular the so-called (\emph{Virasoro}) \emph{Kac modules}, 
that arise in the (continuum) scaling limit known as the \emph{logarithmic minimal models}. 
The latter name comes from the 2006 paper by Pearce, Rasmussen and Zuber \cite{PRZ06}, where the integrability of the underlying loop models was explored from the diagrammatic point of view and where initial conjectures for the limiting Virasoro characters and modules were made, many of them based on eigenvalues found for small system sizes using a computer. A flurry of papers have followed, 
describing various aspects of these models, such as their fusion rules \cite{RasFus07a,RasFus07}, 
Jordan block structures \cite{MDSA11,MDSA13}, and, for the particular model $\mathcal{LM}(1,2)$ that describes 
critical dense polymers \cite{PR07}, the cylinder \cite{PRV10} and (modular invariant) torus partition functions\cite{MDPR13}.

In \cref{sec:LatticeKac}, we use the tools developed in the previous section to propose a definition for what we refer to as \emph{lattice Kac modules}.  These are modules over the boundary seam algebra $\btl_{n,k}$ which are conjectured to correspond, in the scaling limit, to the Kac modules over the Virasoro algebra (see \cref{sec:Kac}). We argue that these are well-behaved for the root of unity cases \eqref{eq:rootsof1} corresponding to the logarithmic minimal models $\mathcal{LM}(p,p')$. \cref{sec:LatticeFusion} reviews the definition of the fusion of these modules on the lattice.

The scaling limits of the lattice Kac modules are investigated in \cref{sec:Kac} 
with the goal being to relate them to a family of modules over the Virasoro algebra, the 
Virasoro Kac modules, defined in \cref{sec:VKac}. We borrow basic results and terminology of Virasoro representation theory, reviewed in \cref{sec:Back}. The scaling limit and the roles played by the tangles $\Dbk(u,\xi)$ and $\hamk$ are described in \cref{sec:scalinglimit}. This culminates with one of our main results, a conjecture for the scaling limit of the lattice Kac modules as Virasoro modules.  This extends a similar conjecture \cite{RasCla11} for $\mathcal{LM}(1,p')$ to all the logarithmic minimal models $\mathcal{LM}(p,p')$.
\cref{sec:characters} then presents a modicum of evidence for the conjecture at the level of the spectra of $\hamk$ and the corresponding Virasoro characters (plenty of additional evidence may be found in \cite{PRZ06,PTC14}). The climax of this development is \cref{sec:Gramidentification}, where we show how to use Gram matrix methods to go beyond characters and Jordan block 
analyses of $L_0$ to obtain non-trivial information concerning the structure of the Kac modules.

%
\subsection{Lattice Kac modules and lattice fusion} 
\label{sec:latticeKacmod}
%

\subsubsection{Lattice Kac modules}\label{sec:LatticeKac} 

We have identified the algebra $\btl_{n,k}$
that describes loop models with boundary seams of width $k$.  By \cref{sec:dbkinbtl}, this
algebra includes the transfer tangles $\Dbk(u,\xi)$ and the Hamiltonian $\hamk$.
The Temperley-Lieb tangles appearing in the formal expansion of $\Dbk(u,\xi)$ in $u$ commute with 
one another and, in the scaling limit, 
are believed to converge to the conformal integrals of motion \cite{SasVir88,EguDef89}. 
These tangles also belong to $\btl_{n,k}$. It might therefore seem that we should restrict our attention to the abelian subalgebra of $\btl_{n,k}$ generated by these integrals of motion.  However, this neglects the fact that we want to relate this algebra, in the scaling limit, to the Virasoro algebra of \cft{}.  While the Hamiltonian is, up to shifts and rescalings, supposed to realise the Virasoro zero mode $L_0$ in this limit, the other integrals of motion do not realise the remaining Virasoro modes $L_m$.  This should be clear from the fact that Virasoro modes do not commute in general.

There are explicit expressions for tangles that are believed to realise the $L_m$ modes in the scaling limit. We refer to these as 
\emph{Virasoro mode approximations}. For the algebra $\tl_n$, one such family of tangles is \cite{KooAss94,GaiCon11}
\begin{equation}\label{eq:modes}
\Lmn = \frac{n}{\pi}\, \Big[ -\frac1{v_s} \sum_{j=1}^{n-1} (e_j-h_{bulk}) \cos\Big(\frac{\pi mj}{n}\Big) + \frac1{v_s^2} \sum_{j=1}^{n-2}\big[e_j,e_{j+1}\big]\sin\Big(\frac{\pi mj}{n}\Big)\Big] + \frac c{24} \delta_{m,0}.
\end{equation}
Here, $h_{bulk}$ is the Hamiltonian bulk free energy, $v_s = \frac{\pi \sin \lambda}{\lambda}$ is the speed of sound (see \eqref{eq:Hj}) and $n$ still refers to the number of nodes of $\tl_n$.

A formula similar to \eqref{eq:modes} was given by Gainutdinov \emph{et al}.\ \cite{GJSV13} for the blob algebra, with the $\Lmn$ again expressed in terms of the generators of the algebra.
We therefore find the following conjecture reasonable.
\begin{conj}
For each $m \in \ZZ$, there exist sequences $\bigl( \Lmn \bigr)_{n \in \ZZ_+}$, where each $\Lmn$ is an element of $\btl_{n,k}$, whose $n \to \infty$ scaling limits satisfy the commutation rules of the Virasoro algebra. \label{conj:highermodes}
\end{conj}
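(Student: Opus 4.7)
The plan is to construct the sequences $\bigl(L_m^{(n)}\bigr)_{n \in \ZZ_+}$ explicitly, by lifting the Koo--Saleur expressions \eqref{eq:modes} to the boundary seam algebra, and then to verify that the Virasoro commutation rules emerge asymptotically. As a first step, I would define a candidate
\begin{equation*}
\Lmn = \frac{n}{\pi}\Bigg[ -\frac{1}{v_s} \sum_{j=1}^{n-1}\bigl(\Ekj{k}{j}-h_{\mathrm{bulk}}\Ik\bigr)\cos\!\Big(\tfrac{\pi m j}{n}\Big) + \frac{1}{v_s^{2}}\sum_{j=1}^{n-2}\comm{\Ekj{k}{j}}{\Ekj{k}{j+1}}\sin\!\Big(\tfrac{\pi m j}{n}\Big)\Bigg] + A_m^{(k)} + \frac{c}{24}\delta_{m,0}\Ik,
\end{equation*}
where $A_m^{(k)}$ is a boundary correction built from $\Ekj{k}{n}$ and its nearest neighbours, with trigonometric coefficients chosen (by analogy with the formula of Gainutdinov--Jacobsen--Saleur--Vasseur \cite{GJSV13} for the blob algebra) so as to cancel surface contributions in the resulting commutators. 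Since \cref{sec:dbkinbtl} guarantees that the $\Ekj{k}{j}$ belong to $\btl_{n,k}$, each $\Lmn$ is automatically an element of $\btl_{n,k}$ as required.

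The second, substantive step is to compute $\comm{\Lmn}{L_{m'}^{(n)}}$ in $\btl_{n,k}$ and compare the result to $(m-m')L_{m+m'}^{(n)} + \tfrac{c}{12}(m^{3}-m)\delta_{m+m',0}\Ik$. In the bulk of the strip, this reduces, via the \TL{} relations \eqref{eq:defTL}, to the computation carried out by Koo and Saleur, whose leading asymptotics as $n \to \infty$ yield the Witt algebra together with the expected central extension from the $j \sim n/m$ regime. The task is then to check that the additional contributions, coming from commuting bulk generators with the boundary generators $\Ekj{k}{n-1}, \Ekj{k}{n}$ through the extra relations \eqref{eq:newBTL} and the closure relation \eqref{eq:ClosureRelation}, are absorbed by the correction $A_m^{(k)}$ up to terms that vanish in the scaling limit. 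I would first verify this at finite $n$ on the standard modules $\stan_{n,k}^{d}$ of \cref{sec:reps}, using the invariant bilinear form $\gramprodk{\cdot}{\cdot}$ of \cref{sec:Gram} to organise the matrix elements.

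The main obstacle is the meaning of the ``$n \to \infty$ scaling limit'' itself: the $\Lmn$ live in distinct algebras $\btl_{n,k}$, so the commutation rules can only be tested after choosing a convergent family of representations of these algebras that limits to a Virasoro module. A natural candidate is the family $\stan_{n,k}^{d}$ with the appropriate rescaling of energies by the sound speed $v_{s}$, together with a choice of energy eigenbasis whose matrix elements of $\Ekj{k}{j}$ admit controlled large-$n$ asymptotics. Even in the bulk case, rigorous control of the error terms in $\comm{\Lmn}{L_{m'}^{(n)}}-(m-m')L_{m+m'}^{(n)}$ is difficult; it has only been carried out under strong assumptions on the spectrum. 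I would therefore view a realistic route to the full conjecture as a two-stage programme: first establish convergence of matrix elements in a fixed Virasoro Kac module (as conjectured in \cref{sec:scalinglimit}), and then promote this to the statement at the level of the algebra via density, with the central charge emerging from the anomaly term of the commutator, consistent with the value $c$ in \eqref{cD}.
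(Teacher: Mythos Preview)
The statement you are attempting to prove is a \emph{Conjecture}, and the paper does not prove it. The authors motivate it by analogy with the Koo--Saleur formula \eqref{eq:modes} for $\tl_n$ and the similar formula of Gainutdinov \emph{et al.}\ \cite{GJSV13} for the blob algebra, and then simply declare the conjecture ``reasonable''. There is no proof in the paper to compare your proposal against; the conjecture is used as a working hypothesis to justify analysing lattice Kac modules as $\btl_{n,k}$-modules rather than merely as modules for the abelian algebra of integrals of motion.

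Your proposal is a sensible research outline, and you have correctly identified the main difficulty: the $\Lmn$ live in distinct algebras, so the scaling limit has no intrinsic meaning without first choosing a family of representations in which matrix elements can be compared. You also honestly acknowledge that rigorous control of the error terms in the Koo--Saleur commutators is not available even in the bulk case. This is exactly why the statement remains a conjecture. Your candidate expression for $\Lmn$ is plausible, but note that the boundary correction $A_m^{(k)}$ is left unspecified, and the analogy with \cite{GJSV13} is not a proof that such a correction exists with the required properties in $\btl_{n,k}$ (which is a nontrivial quotient of the blob algebra when $n>k$, via the closure relation \eqref{eq:ClosureRelation}). In short: your proposal is a reasonable sketch of how one might attack the problem, but it is not a proof, and neither is anything in the paper.
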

\noindent Assuming this conjecture, \cref{sec:Kac} will investigate the scaling limits 
of loop models with boundary seams, profiting from the indecomposable structures of the diagrammatic algebras to study their limiting Virasoro counterparts. \cref{conj:highermodes} suggests that the algebra of interest for comparing with the continuum behaviour is, in fact, $\btl_{n,k}$ itself. 
We therefore propose the following definition for the \emph{lattice Kac modules}.
\begin{defn}\label{def:latKac}
The lattice Kac module $\kac_{n,k}^d$ is the standard module $\stan_{n,k}^d$ over the algebra $\btl_{n,k}$. 
\end{defn}
\noindent These lattice Kac modules are
characterised by three integers, $n$, $k$ and $d$, and one formal parameter $\beta = q + q^{-1}$. 
Because the eigenvalues of $\Dbk(u,\xi)$ (or just $\hamk$) play a prominent role in defining the scaling limit of the $\kac_{n,k}^d$, see \cref{sec:scalinglimit}, we will add $u$ and $\xi$ (or just $\xi$) to the list of parameters characterising this limit.
\smallskip

We now specialise $\beta$ to values in $\CC$; the parameters $u$ and $\xi$ that appear in the definition of $\Dbk(u,\xi)$ will be left formal for now.
For $q$ a root of unity, labelled as in \eqref{eq:rootsof1} by the pair of integers $(p,p')$, the scaling limit of the lattice Kac modules is believed to be described by a \cft{} called a \emph{logarithmic minimal model}, denoted by $\mathcal {LM}(p,p')$.  For convenience, and following \cite{PRZ06}, we will also use this nomenclature to describe the underlying integrable lattice models.  The first member $\mathcal {LM}(1,2)$ of this family of theories is called 
\emph{critical dense polymers}.  It corresponds to $\beta = 0$, implying that contractible loops are forbidden. It has the peculiarity of being exactly solvable, meaning that the eigenvalues of $\Dbk(u,\xi)$ admit closed expressions for all $n \in \ZZ_{+}$ \cite{PR07,PRV12}.
We also note that the Temperley-Lieb representation theory for $\beta=0$ is the exceptional case discussed at the end of \cref{app:TLrep}. Another member of this family is $\mathcal {LM}(2,3)$, called 
\emph{critical percolation}. It corresponds to $\beta = 1$, implying that contractible loops may be ignored.

When $q$ is a root of unity, there is a subtlety in the lattice algebra description.  Defining $\ell$ to be the smallest positive integer for which $q^{2 \ell} = 1$, so that $\ell = p'$ with the parametrisation \eqref{eq:rootsof1}, the \WJ{} projector $P_\ell$ is undefined because $U_{\ell-1} = 0$, see \eqref{eq:WJs}.  The diagrammatic definition \eqref{eq:defbtl} of $\btl_{n,k}(\beta)$ therefore makes sense for $k< \ell$, as in the formal and generic cases, but not for $k\ge \ell$. In the latter case, this definition may be replaced by a purely algebraic definition in terms of generators and relations, discussed in detail in \cref{app:SeamsBetaC}.
These relations are well-defined for all specialisations due to the careful choice of normalisation of the generator $\Ekj{k}{n}$ in \eqref{eq:Enk}. 
In particular, the values of $\beta_1$ and $\beta_2$ pertaining to the homomorphism $\mathfrak h \colon \tlone_n(\beta,\beta_1,\beta_2) \to \btl_{n,k}(\beta)$, see \eqref{eq:betas},
never diverge and no Chebyshev polynomials appear as denominators in the defining relations. 

The surprising subtlety of the root of unity case is then that the dimension of $\btl_{n,k}(\beta)$, for $k>\ell$, is smaller than in the generic case. Indeed, \cref{sec:dimB} shows that $\btl_{n,k}(\beta) \simeq \btl_{n,k'}(\beta)$, whenever $k,k'>0$ and $k'=k \bmod{\ell}$.  When $k$ is not a multiple of $\ell$, this gives a description of $\btl_{n,k}(\beta)$ in terms of an algebra that is diagrammatically well-defined.  Nevertheless, the action of $\btl_{n,k}(\beta)$ on the lattice Kac modules $\kac_{n,k}^d = \stan_{n,k}^d$ is free of singularities for all $\beta \in \CC$ (\cref{sec:nonsingular}). 
\cref{def:latKac} is thus readily extended to all specialisations $\beta \in \mathbb R$: The specialised lattice Kac module 
$\kac_{n,k}^d$ is the (well-defined) standard module $\stan_{n,k}^d$ over $\btl_{n,k}(\beta)$.
This lack of singularities applies, in particular, to $\Dbk(u,\xi)$ and $\hamk$: 
The representatives $\rho_{n,k}^d(\Dbk(u,\xi))$ and $\rho_{n,k}^d(\hamk)$ are singularity-free. 
These $\btl_{n,k}(\beta)$-modules, as we have defined them, are therefore strong candidates to realise Virasoro modules in the scaling limit.

Finally, we consider the specialisation of the parameters $u$ and $\xi$ to values in $\CC$, though we will typically only consider real values.  In this case, there turn out to be \emph{exceptional points} $(u,\xi)$ for which the transfer tangle $\Dbk(u,\xi)$ is singular as an element of $\btl_{n,k}$, see \eqref{eq:KInBTL}, \eqref{eq:AlgDmk} and \eqref{eq:Algdmk}. Even if we only consider the Hamiltonian $\hamk$ (as we will in the numerical studies of \cref{sec:LatticeData}), then specialising $\xi$ also leads to \emph{exceptional points} at which $\hamk$ diverges, see \eqref{eq:Ham}. \cref{sec:nonsingular} obviously does not apply at these exceptional points and the
corresponding matrices in the representations $\rho_{n,k}^d$ are then also singular.
We therefore have to exclude these points
from \cref{conj1} below.  The exceptional points $\xi$ for $\hamk$ will be identified in \cref{sec:scalinglimit}.

We emphasise that this algebraic development gives a rigorous mathematical framework for the recipes originally described in \cite{PRZ06} to obtain transfer matrices and Hamiltonians. We end this section by recalling a conjecture from that paper.
\begin{conj}[\cite{PRZ06}] \label{conj1}
When $(u,\xi)$ is not exceptional, the linear operators $\rho_{n,k}^d(\Dbk(u,\xi))$ and $\rho_{n,k}^d(\hamk)$ are diagonalisable
on the lattice Kac module $\kac_{n,k}^d$ with real eigenvalues, for all
$\beta\in \mathbb R$, all $n,k \in \ZZ_+$, and all $0\le d \le n+k$ with $d=n+k \bmod{2}$.
\end{conj}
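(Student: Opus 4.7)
The plan is to establish diagonalisability with real spectrum by combining the invariant bilinear form $\gramprodk{\cdot}{\cdot}$ with a positivity argument, and to extend the result to root-of-unity specialisations via deformation. First, I would verify self-adjointness: writing $\hamk = -\Ik \sum_{j=1}^{n-1} e_j + \frac{s_k(0)}{s_0(\xi) s_{k+1}(\xi)} \Ik e_n \Ik$ from \eqref{eq:Ham}, the coefficient is real for real non-exceptional $\xi$, and each term is $\dagger$-invariant since $e_j^\dagger = e_j$ and $\Ik^\dagger = \Ik$. Hence $\rho_{n,k}^d(\hamk)$ is symmetric with respect to $\gramprodk{\cdot}{\cdot}$. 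For $\Dbk(u,\xi)$ with real $u, \xi$, the same conclusion follows from the expansion \eqref{eq:AlgDmk}: the face operators $\Xk_j(u) = s_1(-u)\, \Ik + s_0(u)\, \Ekj{k}{j}$ are self-adjoint, and $\Kk(u,\xi)$ is self-adjoint via the decomposition \eqref{eq:KInBTL}.

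Second, I would try to upgrade $\gramprodk{\cdot}{\cdot}$ to a Hermitian inner product that is positive definite on $\stan_{n,k}^d$. The obstruction is that $\gramprodk{\cdot}{\cdot}$ is in general indefinite: its determinant \eqref{eq:Gnkd} involves ratios of Chebyshev polynomials $U_j(\beta/2)$ that change sign as $\beta$ varies over $\mathbb{R}$. My proposal would be to (i) restrict to generic $\beta$ where $\stan_{n,k}^d$ is irreducible, (ii) diagonalise the signature of the form along the cell filtration inherited from the cellular structure of $\btl_{n,k}$, and (iii) construct a companion positive-definite Hermitian form by twisting the diagonal entries of the Gram matrix by appropriate signs, then check that the commuting family generated by $\Dbk(u,\xi)$ remains normal with respect to it. For $|\beta| \geq 2$, a direct Pasquier--Saleur type construction of a positive Hermitian form invariant under the \TL{} action should generalise to $\btl_{n,k}$ by using the factorisations through $\Ik$.

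Third, for $\beta$ a root of unity (the logarithmic minimal models of principal interest), irreducibility of $\stan_{n,k}^d$ can fail, but the non-exceptional condition on $(u,\xi)$ guarantees that $\rho_{n,k}^d(\hamk)$ and $\rho_{n,k}^d(\Dbk(u,\xi))$ remain finite. Here I would combine the special properties of $\mathcal{LM}(p,p')$ exploited in \cite{MDRRSA15} for the $k=0$ vacuum case with a continuity/deformation argument in $\beta$, transporting diagonalisability from a nearby generic value. The delicate point is ensuring that no Jordan blocks form in the limit; this reduces to showing that eigenvalue collisions, which would generate non-trivial Jordan structure, happen only on the exceptional locus excluded by hypothesis, or else remain semisimple under the deformation. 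A useful check is that, by \cref{sec:nonsingular}, matrix entries of $\rho_{n,k}^d$ are regular at the root-of-unity values, so the characteristic polynomial depends continuously on $\beta$.

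The main obstacle, in my view, is the positivity step for $-2 < \beta < 2$: the ambient bilinear form is indefinite, so spectral-theoretic arguments cannot be applied directly to $\gramprodk{\cdot}{\cdot}$ itself. A promising avenue is to exploit the $U_q(\mathfrak{sl}_2)$ symmetry present at these values, which acts on the link-state basis and commutes with the \TL{} action (and which should descend to a well-defined action compatible with the quotient \eqref{eq:Vdecomp} defining $\kac_{n,k}^d$). One would use this symmetry to build a modified Hermitian form that is manifestly positive on each standard module, in the spirit of Martin--Woodcock's analysis for the blob algebra. Establishing that the modified form is still invariant under $\btl_{n,k}$ and that $\hamk, \Dbk(u,\xi)$ remain self-adjoint under it is where the bulk of the technical work would lie.
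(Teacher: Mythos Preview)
The paper does not prove this statement. It is explicitly labelled a \emph{conjecture} (attributed to \cite{PRZ06}), and the only remark the paper makes about its status is that the special case of the Hamiltonian $\mathcal{H}^{(0)}$ with vacuum boundary ($k=0$) has been proven in \cite{MDRRSA15}. There is no argument in the paper for general $k$, for the transfer tangle, or for the root-of-unity specialisations that are the main focus of the work; the conjecture is simply assumed in the subsequent numerical analysis.

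As for your proposal itself, you have correctly identified the central obstruction: the invariant bilinear form $\gramprodk{\cdot}{\cdot}$ is indefinite for $-2<\beta<2$, so self-adjointness with respect to it does not yield diagonalisability. Your two proposed workarounds both have gaps. The deformation argument is not robust: eigenvalue collisions under a continuous perturbation of a self-adjoint-with-respect-to-an-indefinite-form operator \emph{can} produce Jordan blocks, and there is no reason to expect such collisions to occur only on the exceptional locus in $(u,\xi)$ --- indeed, in the logarithmic minimal models the whole point is that reducible-but-indecomposable structures persist at the root-of-unity values. The $U_q(\mathfrak{sl}_2)$-twisted positive form idea is closer in spirit to what \cite{MDRRSA15} actually does for $k=0$, but extending it to the seam algebra $\btl_{n,k}$ for $k\ge 2$ is nontrivial: the quantum-group action does not obviously descend through the quotient \eqref{eq:Vdecomp}, and you would need to check compatibility with the closure relation \eqref{eq:ClosureRelation}. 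In short, you have a reasonable outline of where the difficulties lie, but not a proof --- which is consistent with the statement remaining a conjecture.
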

\noindent The statement of this conjecture for the Hamiltonian \mbox{$\mathcal{H}^{\textrm{\tiny$(0)$}}\!$} 
was proven recently in \cite{MDRRSA15}.

\subsubsection{Lattice fusion}\label{sec:LatticeFusion}

A fusion rule in conformal field theory encodes the number of distinct ways that the fields in the model 
arise in the operator product expansion of two given fields.
In representation-theoretic terms, this corresponds to the decomposition of the so-called fusion product of two modules 
over the conformal symmetry algebra, here assumed to be
the Virasoro algebra. Fusion is thus a fundamental
part of conformal field theory and has direct implications for the computation of correlation functions.
However, the determination of the fusion rules in a conformal field theory can be a decidedly non-trivial task.
If a lattice realisation is available, then it may be possible to predict these rules with comparatively less effort, as a prescription for fusion can sometimes be implemented at the lattice level \cite{Cardy86,Cardy89}.

For loop models, a prescription for lattice fusion was given by Pearce, Rasmussen and Zuber \cite{PRZ06}. 
Working with the transfer tangle $\Dbk(u,\xi)$, acting on link states with $d$ defects, they interpreted the defects as a 
boundary condition applied to the right of $\Dbk(u,\xi)$. They then proposed that the spectra and Jordan block structure of $\Dbk(u,\xi)$ 
implied similar results for the generator $L_0$ acting on Virasoro modules labelled by two integers $r$ and $s$. 
We assert in \cref{TheConjecture} below that these Virasoro modules are the Virasoro Kac modules $\Kac{r,s}$ defined 
in \cref{sec:VKac}.
The relation between $k$, $d$, $r$ and $s$ is discussed in \cref{sec:scalinglimit} and evidence supporting this 
conjecture is described in \cref{sec:LatticeData}.

Following the terminology of \cite{PRZ06}, the boundary condition for transfer tangles $\boldsymbol{D}^{\textrm{\tiny$(k)$}}\!(u,\xi) \in \btl_{n,k}$ acting on the lattice Kac modules $\kac_{n,k}^0$ with zero defects (and $k = n \bmod 2$) is said to be of $r$-type.  
This terminology comes about because it is believed, as we shall see in \cref{sec:scalinglimit}, that if $\beta$ is specialised as in \eqref{eq:rootsof1}, then the scaling limit of the lattice Kac modules $\kac_{n,2}^0$, with $n$ even, is a Virasoro module with associated Kac labels of the form $(r,1)$.  We shall refer to the scaling limit as the \emph{Virasoro Kac module} $\Kac{r,1}$, deferring a precise definition until \cref{sec:VKac}.  An example of an $r$-type boundary is afforded by 
the following diagram: 
\begin{equation} 
\psset{unit=0.8}
\begin{pspicture}[shift=-1.7](-0.5,-0.7)(8.5,3.4)
\facegrid{(0,0)}{(8,2)}
\psarc[linewidth=0.025]{-}(0,0){0.16}{0}{90}
\psarc[linewidth=0.025]{-}(1,1){0.16}{90}{180}
\psarc[linewidth=0.025]{-}(1,0){0.16}{0}{90}
\psarc[linewidth=0.025]{-}(2,1){0.16}{90}{180}
\psarc[linewidth=0.025]{-}(2,0){0.16}{0}{90}
\psarc[linewidth=0.025]{-}(3,1){0.16}{90}{180}
\psarc[linewidth=0.025]{-}(3,0){0.16}{0}{90}
\psarc[linewidth=0.025]{-}(4,1){0.16}{90}{180}
\psarc[linewidth=0.025]{-}(4,0){0.16}{0}{90}
\psarc[linewidth=0.025]{-}(5,1){0.16}{90}{180}
\psarc[linewidth=0.025]{-}(5,0){0.16}{0}{90}
\psarc[linewidth=0.025]{-}(6,1){0.16}{90}{180}
\psarc[linewidth=0.025]{-}(6,0){0.16}{0}{90}
\psarc[linewidth=0.025]{-}(7,1){0.16}{90}{180}
\psarc[linewidth=0.025]{-}(7,0){0.16}{0}{90}
\psarc[linewidth=0.025]{-}(8,1){0.16}{90}{180}
\psarc[linewidth=1.5pt,linecolor=blue]{-}(0,1){0.5}{90}{-90}
\psarc[linewidth=1.5pt,linecolor=blue]{-}(8,1){0.5}{-90}{90}
\rput(0,0.15){
\psarc[linewidth=1.5pt,linecolor=blue]{-}(1,2){0.5}{0}{180}
\psarc[linewidth=1.5pt,linecolor=blue]{-}(4,2){0.5}{0}{180}
\psarc[linewidth=1.5pt,linecolor=blue]{-}(6,2.0){0.5}{0}{180}
\psbezier[linewidth=1.5pt,linecolor=blue]{-}(2.5,2)(2.5,3.8)(7.5,3.8)(7.5,2)
}
\multiput(0,0)(1,0){8}{\psline[linewidth=1.5pt,linecolor=blue]{-}(0.5,2.00)(0.5,2.16)}
\rput(0.5,0.5){$u$}
\rput(0.5,1.5){$u$}
\rput(1.5,.5){$u$}
\rput(1.5,1.5){$u$}
\rput(2.5,0.5){$u$}
\rput(2.5,1.5){$u$}
\rput(3.5,0.5){$u$}
\rput(3.5,1.5){$u$}
\rput(4.5,.5){$u$}
\rput(4.5,1.5){$u$}
\rput(5.5,.5){$u$}
\rput(5.5,1.5){$u$}
\rput(6.5,.5){\scriptsize$u\!-\!\xi_2$}
\rput(6.5,1.5){\scriptsize$u\!+\!\xi_2$}
\rput(7.5,.5){\scriptsize$u\!-\!\xi_1$}
\rput(7.5,1.5){\scriptsize$u\!+\!\xi_1$}
\pspolygon[fillstyle=solid,fillcolor=pink](6.1,0)(7.9,0)(7.9,-0.3)(6.1,-0.3)(6.1,0)\rput(7.0,-0.15){$_{2}$}
\pspolygon[fillstyle=solid,fillcolor=pink](6.1,2)(7.9,2)(7.9,2.3)(6.1,2.3)(6.1,2)\rput(7.0,2.15){$_{2}$}
\psline[linecolor=red,linewidth=2pt,linestyle=dashed,dash=3pt 3pt]{-}(6,-0.1)(6,2.1)
\rput(3,-0.65){\small(bulk)}
\rput(7,-0.65){\small($r$-type)}
\end{pspicture} \ \ .
\label{eq:diagrtype}
\end{equation}
It illustrates, up to a prefactor, the action of $\boldsymbol{D}^{\textrm{\tiny$(2)$}}\!(u,\xi) \in \btl_{6,2}$ on the link state $
\psset{unit=0.4}
\begin{pspicture}[shift=-0.05](-0.0,0)(3.2,0.5)
\psline{-}(0,0)(2.4,0)
\psline[linecolor=purple,linewidth=2.0pt]{-}(2.4,0)(3.2,0)
\psarc[linecolor=blue,linewidth=1.0pt]{-}(0.4,0){0.2}{0}{180}
\psarc[linecolor=blue,linewidth=1.0pt]{-}(1.6,0){0.2}{0}{180}
\psarc[linecolor=blue,linewidth=1.0pt]{-}(2.4,0){0.2}{0}{180}
\psbezier[linecolor=blue,linewidth=1.0pt]{-}(1.0,0)(1.0,0.9)(3.0,0.9)(3.0,0)
\end{pspicture}
 \in \kac_{6,2}^0$.  The value of $r$ is determined as an explicit function of $k$, $p$ and $p'$ that is given in \cref{TheConjecture} below.

In stark contrast, for transfer tangles with vacuum boundary conditions ($k=0$) acting on lattice Kac modules 
$\kac_{n,0}^d$ (with $d = n \bmod 2$), the boundary is said to be of $s$-type.
This is because, in the scaling limit, the $\kac_{n,0}^d$ are believed to define Virasoro Kac modules
with labels $(1,s)$, where $s = d+1$.
With these conventions, the case $k=d=0$ is thus both of $r$- and $s$-type. 
For $d>0$, instead of being drawn vertically, the defects may be folded towards the right and attached to the right boundary where one places a different type of boundary seam of width $d$. 
An example illustrating this folding procedure 
is the following, where $\boldsymbol{D}^{\textrm{\tiny$(0)$}}\!(u) \in \btl_{5,0}\simeq \tl_5$ acts on $\psset{unit=0.4}
\begin{pspicture}[shift=-0.05](-0.0,0)(2.0,0.5)
\psline{-}(0,0)(2.0,0)
\psarc[linecolor=blue,linewidth=1.0pt]{-}(0.8,0){0.2}{0}{180}
\psline[linecolor=blue,linewidth=1.0pt]{-}(0.2,0)(0.2,0.7)
\psline[linecolor=blue,linewidth=1.0pt]{-}(1.4,0)(1.4,0.7)
\psline[linecolor=blue,linewidth=1.0pt]{-}(1.8,0)(1.8,0.7)
\end{pspicture}
\in \kac_{5,0}^3$ 
(again up to a prefactor):
\begin{equation} 
\psset{unit=0.8}
\begin{pspicture}[shift=-1.7](-0.5,-0.7)(5.5,2.6)
\facegrid{(0,0)}{(5,2)}
\psarc[linewidth=0.025]{-}(0,0){0.16}{0}{90}
\psarc[linewidth=0.025]{-}(1,1){0.16}{90}{180}
\psarc[linewidth=0.025]{-}(1,0){0.16}{0}{90}
\psarc[linewidth=0.025]{-}(2,1){0.16}{90}{180}
\psarc[linewidth=0.025]{-}(2,0){0.16}{0}{90}
\psarc[linewidth=0.025]{-}(3,1){0.16}{90}{180}
\psarc[linewidth=0.025]{-}(3,0){0.16}{0}{90}
\psarc[linewidth=0.025]{-}(4,1){0.16}{90}{180}
\psarc[linewidth=0.025]{-}(4,0){0.16}{0}{90}
\psarc[linewidth=0.025]{-}(5,1){0.16}{90}{180}
\psarc[linewidth=1.5pt,linecolor=blue]{-}(0,1){0.5}{90}{-90}
\psarc[linewidth=1.5pt,linecolor=blue]{-}(5,1){0.5}{-90}{90}
\psline[linewidth=1.5pt,linecolor=blue]{-}(0.5,2)(0.5,2.7)
\psline[linewidth=1.5pt,linecolor=blue]{-}(3.5,2)(3.5,2.7)
\psline[linewidth=1.5pt,linecolor=blue]{-}(4.5,2)(4.5,2.7)
\psarc[linewidth=1.5pt,linecolor=blue]{-}(2,2){0.5}{0}{180}
\rput(0.5,0.5){$u$}
\rput(0.5,1.5){$u$}
\rput(1.5,.5){$u$}
\rput(1.5,1.5){$u$}
\rput(2.5,0.5){$u$}
\rput(2.5,1.5){$u$}
\rput(3.5,0.5){$u$}
\rput(3.5,1.5){$u$}
\rput(4.5,.5){$u$}
\rput(4.5,1.5){$u$}
\end{pspicture} 
\qquad \longleftrightarrow \qquad
\begin{pspicture}[shift=-1.7](-0.5,-0.7)(8.5,3.5)
\facegrid{(0,0)}{(8,2)}
\psarc[linewidth=0.025]{-}(0,0){0.16}{0}{90}
\psarc[linewidth=0.025]{-}(1,1){0.16}{90}{180}
\psarc[linewidth=0.025]{-}(1,0){0.16}{0}{90}
\psarc[linewidth=0.025]{-}(2,1){0.16}{90}{180}
\psarc[linewidth=0.025]{-}(2,0){0.16}{0}{90}
\psarc[linewidth=0.025]{-}(3,1){0.16}{90}{180}
\psarc[linewidth=0.025]{-}(3,0){0.16}{0}{90}
\psarc[linewidth=0.025]{-}(4,1){0.16}{90}{180}
\psarc[linewidth=0.025]{-}(4,0){0.16}{0}{90}
\psarc[linewidth=0.025]{-}(5,1){0.16}{90}{180}
\rput(0,0.15){
\psarc[linewidth=1.5pt,linecolor=blue]{-}(2,2){0.5}{0}{180}
\psarc[linewidth=1.5pt,linecolor=blue]{-}(5,2){0.5}{0}{180}
\psbezier[linewidth=1.5pt,linecolor=blue]{-}(3.5,2)(3.5,3.2)(6.5,3.2)(6.5,2)
\psbezier[linewidth=1.5pt,linecolor=blue]{-}(0.5,2)(0.5,4.0)(7.5,4.0)(7.5,2)
}
\psarc[linewidth=1.5pt,linecolor=blue]{-}(0,1){0.5}{90}{-90}
\psarc[linewidth=1.5pt,linecolor=blue]{-}(8,1){0.5}{-90}{90}
\multiput(0,0)(1,0){5}{\psline[linewidth=1.5pt,linecolor=blue]{-}(0.5,2.00)(0.5,2.16)}
\pspolygon[fillstyle=solid,fillcolor=pink](5.1,0)(7.9,0)(7.9,-0.3)(5.1,-0.3)(5.1,0)\rput(6.5,-0.15){$_{3}$}
\pspolygon[fillstyle=solid,fillcolor=pink](5.1,2)(7.9,2)(7.9,2.3)(5.1,2.3)(5.1,2)\rput(6.5,2.15){$_{3}$}
\rput(5,0){\loopa}\rput(5,1){\loopb}
\rput(6,0){\loopa}\rput(6,1){\loopb}
\rput(7,0){\loopa}\rput(7,1){\loopb}
\rput(0.5,0.5){$u$}
\rput(0.5,1.5){$u$}
\rput(1.5,.5){$u$}
\rput(1.5,1.5){$u$}
\rput(2.5,0.5){$u$}
\rput(2.5,1.5){$u$}
\rput(3.5,0.5){$u$}
\rput(3.5,1.5){$u$}
\rput(4.5,.5){$u$}
\rput(4.5,1.5){$u$}
\psline[linecolor=red,linewidth=2pt,linestyle=dashed,dash=3pt 3pt]{-}(5,-0.1)(5,2.1)
\rput(2.5,-0.65){\small(bulk)}
\rput(6.5,-0.65){\small($s$-type)}
\end{pspicture} \ \ .
\label{eq:diagstype}
\end{equation}
We note that the three defects could equivalently be folded to the left boundary.

This folding procedure 
does not affect the action of $\btl_{n,0} \simeq \tl_n$.  This needs interpretation because after folding, the link states have $n+d$ nodes.  As the above example indicates, we embed $\tl_n$ into $\tl_{n+d}$ in the usual fashion except that there are \WJ{} projectors covering the boundary
nodes.  These projectors prevent the number of defects from decreasing, consistent with the standard Temperley-Lieb action.  With them, the standard $\tl_n$-action on link states commutes with the folding procedure in that the result obtained directly agrees with that obtained by adding the $s$-type seam and folding, acting with the embedding of $\tl_n$ into $\tl_{n+d}$, and then removing the seam and the folding. 
For example, the following two computations in $\kac_{5,0}^3$ give equivalent
results:
\begin{equation}
\psset{unit=0.8}
\begin{pspicture}[shift=-0.4](-0.0,0)(2.0,1.6)
\pspolygon[fillstyle=solid,fillcolor=lightlightblue](0,0)(0,1)(2.0,1)(2.0,0)(0,0)
\psbezier[linecolor=blue,linewidth=\elegant]{-}(1.0,1)(1.0,0.5)(0.2,0.5)(0.2,0)
\psbezier[linecolor=blue,linewidth=\elegant]{-}(1.4,1)(1.4,0.5)(0.6,0.5)(0.6,0)
\psarc[linecolor=blue,linewidth=\elegant]{-}(0.4,1){0.2}{180}{0}
\psline[linecolor=blue,linewidth=\elegant]{-}(1.8,0)(1.8,1)
\psarc[linecolor=blue,linewidth=\elegant]{-}(1.2,0){0.2}{0}{180}
\psarc[linecolor=blue,linewidth=\elegant]{-}(0.8,1){0.2}{0}{180}
\psline[linecolor=blue,linewidth=\elegant]{-}(0.2,1)(0.2,1.6)
\psline[linecolor=blue,linewidth=\elegant]{-}(1.4,1)(1.4,1.6)
\psline[linecolor=blue,linewidth=\elegant]{-}(1.8,1)(1.8,1.6)
\end{pspicture} \ = \
\begin{pspicture}[shift=-0.02](-0.0,0)(2.0,0.5)
\psline[linewidth=\mince](0,0)(2.0,0)
\psline[linecolor=blue,linewidth=\elegant]{-}(0.2,0)(0.2,0.5)
\psline[linecolor=blue,linewidth=\elegant]{-}(0.6,0)(0.6,0.5)
\psline[linecolor=blue,linewidth=\elegant]{-}(1.8,0)(1.8,0.5)
\psarc[linecolor=blue,linewidth=\elegant]{-}(1.2,0){0.2}{0}{180}
\end{pspicture}\ \qquad \longleftrightarrow \qquad
\begin{pspicture}[shift=-0.7](-0.0,-0.2)(3.2,2.0)
\pspolygon[fillstyle=solid,fillcolor=lightlightblue](0,0)(0,1)(3.2,1)(3.2,0)(0,0)
\psbezier[linecolor=blue,linewidth=\elegant]{-}(1.0,1)(1.0,0.5)(0.2,0.5)(0.2,0)
\psbezier[linecolor=blue,linewidth=\elegant]{-}(1.4,1)(1.4,0.5)(0.6,0.5)(0.6,0)
\psarc[linecolor=blue,linewidth=\elegant]{-}(0.4,1){0.2}{180}{0}
\psline[linecolor=blue,linewidth=\elegant]{-}(1.8,0)(1.8,1)
\psline[linecolor=blue,linewidth=\elegant]{-}(2.2,0)(2.2,1)
\psline[linecolor=blue,linewidth=\elegant]{-}(2.6,0)(2.6,1)
\psline[linecolor=blue,linewidth=\elegant]{-}(3.0,0)(3.0,1)
\psarc[linecolor=blue,linewidth=\elegant]{-}(1.2,0){0.2}{0}{180}
\rput(0,0.15)
{\psarc[linecolor=blue,linewidth=\elegant]{-}(0.8,1){0.2}{0}{180}
\psarc[linecolor=blue,linewidth=\elegant]{-}(2.0,1){0.2}{0}{180}
\psbezier[linecolor=blue,linewidth=\elegant]{-}(1.4,1)(1.4,1.6)(2.6,1.6)(2.6,1)
\psbezier[linecolor=blue,linewidth=\elegant]{-}(0.2,1)(0.2,2.0)(3.0,2.0)(3.0,1)
}
\multiput(0,0)(0.4,0){5}{\psline[linecolor=blue,linewidth=\elegant]{-}(0.2,1)(0.2,1.2)}
\pspolygon[fillstyle=solid,fillcolor=pink](2.1,0)(3.1,0)(3.1,-0.2)(2.1,-0.2)(2.1,0)\rput(2.6,-0.1){\scriptsize$_{3}$}
\pspolygon[fillstyle=solid,fillcolor=pink](2.1,1)(3.1,1)(3.1,1.2)(2.1,1.2)(2.1,1)\rput(2.6,1.1){\scriptsize$_{3}$}
\end{pspicture} \ = \
\begin{pspicture}[shift=-0.02](-0.0,0)(3.2,0.5)
\psline[linewidth=\mince](0,0)(3.2,0)
\psarc[linecolor=blue,linewidth=\elegant]{-}(1.2,0){0.2}{0}{180}
\psarc[linecolor=blue,linewidth=\elegant]{-}(2.0,0){0.2}{0}{180}
\psbezier[linecolor=blue,linewidth=\elegant]{-}(0.6,0)(0.6,0.6)(2.6,0.6)(2.6,0)
\psbezier[linecolor=blue,linewidth=\elegant]{-}(0.2,0)(0.2,1.0)(3.0,1.0)(3.0,0)
\pspolygon[fillstyle=solid,fillcolor=pink](2.1,0)(3.1,0)(3.1,-0.2)(2.1,-0.2)(2.1,0)\rput(2.6,-0.1){\scriptsize$_{3}$}
\end{pspicture}\ .
\end{equation}
Another way to understand this procedure is to note that the respective vector spaces here are those underlying
$\stan_{n,0}^d$ and $\stan_{n+d}^0$ --- these spaces
have different dimensions in general. The direct and the folded approaches are equivalent only because inserting the \WJ{} projectors effectively makes the subspace spanned by the link states connecting boundary
nodes into a trivial submodule; quotienting by this submodule recovers $\stan_{n,0}^d$
(this is similar to the analysis of \cref{sec:reps}). 

The fusion of $\kac_{n_1,0}^d$ with $\kac_{n_2,k}^0$, denoted by $\kac_{n_1,0}^d \fuse \kac_{n_2,k}^0$, 
is obtained by combining the two types of boundary seams, assuming that $d=n_1 \bmod{2}$ and $k=n_2 \bmod{2}$.  The corresponding transfer tangle is obtained by gluing the open end of a transfer tangle with an $s$-type boundary on the left and an \emph{open boundary} on the right, to that of a transfer tangle with an \emph{open boundary} on the left and an $r$-type boundary on the right.  By an open boundary, we mean one for which
the loop segments reaching it are not connected to a boundary triangle and are instead free ends. As with the boundary triangles discussed in \cref{sec:seamsandbtl}, these open boundary transfer tangles are not elements of the Temperley-Lieb or seam algebras, hence are not tangles in the strict sense that we have adopted in this article.  However, these generalised tangles will become genuine tangles if a proper diagrammatic object is glued to its open boundary, for example a boundary triangle or a second open boundary transfer tangle.  We trust that this abuse of terminology will not cause any confusion.

The transfer tangle corresponding to $\kac_{n_1,0}^d \fuse \kac_{n_2,k}^0$ acts on the quotient of $\stan_{n_1+n_2+d+k}^0$ by the subspace generated by link states with arcs tying two nodes of the same boundary. This is
equivalent to considering the action of the boundary seam algebra $\btl_{n,k}$ with $n = n_1 + n_2$
on link states with $d$ defects. The result is simply the lattice Kac module $\kac_{n,k}^d$ which
does not depend on $n_1$ and $n_2$ individually, but only on their sum. 
In general, the module $\kac_{n_1,0}^d \fuse \kac_{n_2,k}^0$
does not correspond to an $r$-type nor an $s$-type boundary, but instead to a so-called $(r,s)$-type boundary \cite{PRZ06}. 
This terminology stems from the observation that the limiting Virasoro module 
may have Kac labels $(r,s)$ with both $r$ and $s$ larger than $1$, see \eqref{eq:rs1} and \eqref{eq:rs2} below.

To illustrate such an $(r,s)$-type boundary, we note that
the action of $\boldsymbol{D}^{\textrm{\tiny$(2)$}}\!(u,\xi) \in\btl_{5,2}$ on 
the state $
\psset{unit=0.4}
\begin{pspicture}[shift=-0.05](0.4,0)(3.2,0.5)
\psline{-}(0.4,0)(2.4,0)
\psline[linecolor=purple,linewidth=2.0pt]{-}(2.4,0)(3.2,0)
\psarc[linecolor=blue,linewidth=1.0pt]{-}(1.2,0){0.2}{0}{180}
\psarc[linecolor=blue,linewidth=1.0pt]{-}(2.4,0){0.2}{0}{180}
\psline[linecolor=blue,linewidth=1.0pt]{-}(0.6,0)(0.6,0.7)
\psline[linecolor=blue,linewidth=1.0pt]{-}(1.8,0)(1.8,0.7)
\psline[linecolor=blue,linewidth=1.0pt]{-}(3.0,0)(3.0,0.7)
\end{pspicture}\in \kac_{5,2}^3$, 
up to the usual prefactor, may be described diagrammatically in three equivalent fashions:
\begin{equation}
\begin{gathered}
\psset{unit=0.75}
\begin{pspicture}[shift=-1.7](-2.0,-0.7)(8.5,4.0)
\facegrid{(-2,0)}{(8,2)}
\psarc[linewidth=0.025]{-}(1,0){0.16}{0}{90}
\psarc[linewidth=0.025]{-}(2,1){0.16}{90}{180}
\psarc[linewidth=0.025]{-}(2,0){0.16}{0}{90}
\psarc[linewidth=0.025]{-}(3,1){0.16}{90}{180}
\psarc[linewidth=0.025]{-}(3,0){0.16}{0}{90}
\psarc[linewidth=0.025]{-}(4,1){0.16}{90}{180}
\psarc[linewidth=0.025]{-}(4,0){0.16}{0}{90}
\psarc[linewidth=0.025]{-}(5,1){0.16}{90}{180}
\psarc[linewidth=0.025]{-}(5,0){0.16}{0}{90}
\psarc[linewidth=0.025]{-}(6,1){0.16}{90}{180}
\psarc[linewidth=0.025]{-}(6,0){0.16}{0}{90}
\psarc[linewidth=0.025]{-}(7,1){0.16}{90}{180}
\psarc[linewidth=0.025]{-}(7,0){0.16}{0}{90}
\psarc[linewidth=0.025]{-}(8,1){0.16}{90}{180}
\psarc[linewidth=1.5pt,linecolor=blue]{-}(-2,1){0.5}{90}{-90}
\psarc[linewidth=1.5pt,linecolor=blue]{-}(8,1){0.5}{-90}{90}
\rput(0,0.15){
\psarc[linewidth=1.5pt,linecolor=blue]{-}(3,2){0.5}{0}{180}
\psarc[linewidth=1.5pt,linecolor=blue]{-}(6,2){0.5}{0}{180}
\psarc[linewidth=1.5pt,linecolor=blue]{-}(1,2.0){0.5}{0}{180}
\psbezier[linewidth=1.5pt,linecolor=blue]{-}(-0.5,2)(-0.5,3.6)(4.5,3.6)(4.5,2)
\psbezier[linewidth=1.5pt,linecolor=blue]{-}(-1.5,2)(-1.5,4.4)(7.5,4.4)(7.5,2)
}
\multiput(1,0)(1,0){7}{\psline[linewidth=1.5pt,linecolor=blue]{-}(0.5,2.00)(0.5,2.16)}
\rput(1.5,.5){$u$}
\rput(1.5,1.5){$u$}
\rput(2.5,0.5){$u$}
\rput(2.5,1.5){$u$}
\rput(3.5,0.5){$u$}
\rput(3.5,1.5){$u$}
\rput(4.5,.5){$u$}
\rput(4.5,1.5){$u$}
\rput(5.5,.5){$u$}
\rput(5.5,1.5){$u$}
\rput(6.5,.5){\scriptsize$u\!-\!\xi_2$}
\rput(6.5,1.5){\scriptsize$u\!+\!\xi_2$}
\rput(7.5,.5){\scriptsize$u\!-\!\xi_1$}
\rput(7.5,1.5){\scriptsize$u\!+\!\xi_1$}
\pspolygon[fillstyle=solid,fillcolor=pink](6.1,0)(7.9,0)(7.9,-0.3)(6.1,-0.3)(6.1,0)\rput(7.0,-0.15){$_{2}$}
\pspolygon[fillstyle=solid,fillcolor=pink](6.1,2)(7.9,2)(7.9,2.3)(6.1,2.3)(6.1,2)\rput(7.0,2.15){$_{2}$}
\psline[linecolor=red,linewidth=2pt,linestyle=dashed,dash=3pt 3pt]{-}(6,-0.1)(6,2.1)
\psline[linecolor=black,linewidth=2pt]{-}(4,-0.1)(4,2.1)
\rput(-10,0){\pspolygon[fillstyle=solid,fillcolor=pink](8.1,0)(10.9,0)(10.9,-0.3)(8.1,-0.3)(8.1,0)\rput(9.5,-0.15){$_{3}$}
\pspolygon[fillstyle=solid,fillcolor=pink](8.1,2)(10.9,2)(10.9,2.3)(8.1,2.3)(8.1,2)\rput(9.5,2.15){$_{3}$}
\rput(9.5,-0.65){\small($s$-type)}
\rput(8,0){\loopb}\rput(8,1){\loopa}
\rput(9,0){\loopb}\rput(9,1){\loopa}
\rput(10,0){\loopb}\rput(10,1){\loopa}
\psline[linecolor=red,linewidth=2pt,linestyle=dashed,dash=3pt 3pt]{-}(11,-0.1)(11,2.1)}
\rput(3.5,-0.65){\small(bulk)}
\rput(7.0,-0.65){\small($r$-type)}
\end{pspicture}
\qquad  \longleftrightarrow \qquad
\begin{pspicture}[shift=-1.7](0.75,-0.7)(8.0,4.0)
\facegrid{(1,0)}{(8,2)}
\psarc[linewidth=0.025]{-}(1,0){0.16}{0}{90}
\psarc[linewidth=0.025]{-}(2,1){0.16}{90}{180}
\psarc[linewidth=0.025]{-}(2,0){0.16}{0}{90}
\psarc[linewidth=0.025]{-}(3,1){0.16}{90}{180}
\psarc[linewidth=0.025]{-}(3,0){0.16}{0}{90}
\psarc[linewidth=0.025]{-}(4,1){0.16}{90}{180}
\psarc[linewidth=0.025]{-}(4,0){0.16}{0}{90}
\psarc[linewidth=0.025]{-}(5,1){0.16}{90}{180}
\psarc[linewidth=0.025]{-}(5,0){0.16}{0}{90}
\psarc[linewidth=0.025]{-}(6,1){0.16}{90}{180}
\psarc[linewidth=0.025]{-}(6,0){0.16}{0}{90}
\psarc[linewidth=0.025]{-}(7,1){0.16}{90}{180}
\psarc[linewidth=0.025]{-}(7,0){0.16}{0}{90}
\psarc[linewidth=0.025]{-}(8,1){0.16}{90}{180}
\psarc[linewidth=1.5pt,linecolor=blue]{-}(1,1){0.5}{90}{-90}
\psarc[linewidth=1.5pt,linecolor=blue]{-}(8,1){0.5}{-90}{90}
\rput(0,0.15){
\psarc[linewidth=1.5pt,linecolor=blue]{-}(3,2){0.5}{0}{180}
\psarc[linewidth=1.5pt,linecolor=blue]{-}(6,2){0.5}{0}{180}
\psline[linewidth=1.5pt,linecolor=blue]{-}(1.5,2)(1.5,2.7)
\psline[linewidth=1.5pt,linecolor=blue]{-}(4.5,2)(4.5,2.7)
\psline[linewidth=1.5pt,linecolor=blue]{-}(7.5,2)(7.5,2.7)
}
\multiput(1,0)(1,0){7}{\psline[linewidth=1.5pt,linecolor=blue]{-}(0.5,2.00)(0.5,2.16)}
\rput(1.5,.5){$u$}
\rput(1.5,1.5){$u$}
\rput(2.5,0.5){$u$}
\rput(2.5,1.5){$u$}
\rput(3.5,0.5){$u$}
\rput(3.5,1.5){$u$}
\rput(4.5,.5){$u$}
\rput(4.5,1.5){$u$}
\rput(5.5,.5){$u$}
\rput(5.5,1.5){$u$}
\rput(6.5,.5){\scriptsize$u\!-\!\xi_2$}
\rput(6.5,1.5){\scriptsize$u\!+\!\xi_2$}
\rput(7.5,.5){\scriptsize$u\!-\!\xi_1$}
\rput(7.5,1.5){\scriptsize$u\!+\!\xi_1$}
\pspolygon[fillstyle=solid,fillcolor=pink](6.1,0)(7.9,0)(7.9,-0.3)(6.1,-0.3)(6.1,0)\rput(7.0,-0.15){$_{2}$}
\pspolygon[fillstyle=solid,fillcolor=pink](6.1,2)(7.9,2)(7.9,2.3)(6.1,2.3)(6.1,2)\rput(7.0,2.15){$_{2}$}
\psline[linecolor=red,linewidth=2pt,linestyle=dashed,dash=3pt 3pt]{-}(6,-0.1)(6,2.1)
\end{pspicture} \\[0.3cm]
\psset{unit=0.75}
\longleftrightarrow \quad  \
\begin{pspicture}[shift=-1.7](0.5,-0.7)(11.5,4.2)
\facegrid{(1,0)}{(11,2)}
\psarc[linewidth=0.025]{-}(1,0){0.16}{0}{90}
\psarc[linewidth=0.025]{-}(2,1){0.16}{90}{180}
\psarc[linewidth=0.025]{-}(2,0){0.16}{0}{90}
\psarc[linewidth=0.025]{-}(3,1){0.16}{90}{180}
\psarc[linewidth=0.025]{-}(3,0){0.16}{0}{90}
\psarc[linewidth=0.025]{-}(4,1){0.16}{90}{180}
\psarc[linewidth=0.025]{-}(4,0){0.16}{0}{90}
\psarc[linewidth=0.025]{-}(5,1){0.16}{90}{180}
\psarc[linewidth=0.025]{-}(5,0){0.16}{0}{90}
\psarc[linewidth=0.025]{-}(6,1){0.16}{90}{180}
\psarc[linewidth=0.025]{-}(6,0){0.16}{0}{90}
\psarc[linewidth=0.025]{-}(7,1){0.16}{90}{180}
\psarc[linewidth=0.025]{-}(7,0){0.16}{0}{90}
\psarc[linewidth=0.025]{-}(8,1){0.16}{90}{180}
\rput(8,0){\loopa}\rput(8,1){\loopb}
\rput(9,0){\loopa}\rput(9,1){\loopb}
\rput(10,0){\loopa}\rput(10,1){\loopb}
\psarc[linewidth=1.5pt,linecolor=blue]{-}(1,1){0.5}{90}{-90}
\psarc[linewidth=1.5pt,linecolor=blue]{-}(11,1){0.5}{-90}{90}
\rput(0,0.15){
\psarc[linewidth=1.5pt,linecolor=blue]{-}(3,2){0.5}{0}{180}
\psarc[linewidth=1.5pt,linecolor=blue]{-}(6,2){0.5}{0}{180}
\psarc[linewidth=1.5pt,linecolor=blue]{-}(8,2.0){0.5}{0}{180}
\psbezier[linewidth=1.5pt,linecolor=blue]{-}(4.5,2)(4.5,3.6)(9.5,3.6)(9.5,2)
\psbezier[linewidth=1.5pt,linecolor=blue]{-}(1.5,2)(1.5,4.4)(10.5,4.4)(10.5,2)
}
\multiput(1,0)(1,0){10}{\psline[linewidth=1.5pt,linecolor=blue]{-}(0.5,2.00)(0.5,2.16)}
\rput(1.5,.5){$u$}
\rput(1.5,1.5){$u$}
\rput(2.5,0.5){$u$}
\rput(2.5,1.5){$u$}
\rput(3.5,0.5){$u$}
\rput(3.5,1.5){$u$}
\rput(4.5,.5){$u$}
\rput(4.5,1.5){$u$}
\rput(5.5,.5){$u$}
\rput(5.5,1.5){$u$}
\rput(6.5,.5){\scriptsize$u\!-\!\xi_2$}
\rput(6.5,1.5){\scriptsize$u\!+\!\xi_2$}
\rput(7.5,.5){\scriptsize$u\!-\!\xi_1$}
\rput(7.5,1.5){\scriptsize$u\!+\!\xi_1$}
\pspolygon[fillstyle=solid,fillcolor=pink](6.1,0)(7.9,0)(7.9,-0.3)(6.1,-0.3)(6.1,0)\rput(7.0,-0.15){$_{2}$}
\pspolygon[fillstyle=solid,fillcolor=pink](6.1,2)(7.9,2)(7.9,2.3)(6.1,2.3)(6.1,2)\rput(7.0,2.15){$_{2}$}
\psline[linecolor=red,linewidth=2pt,linestyle=dashed,dash=3pt 3pt]{-}(6,-0.1)(6,2.1)
\psline[linecolor=red,linewidth=2pt,linestyle=dashed,dash=3pt 3pt]{-}(8,-0.1)(8,2.1)
\pspolygon[fillstyle=solid,fillcolor=pink](8.1,0)(10.9,0)(10.9,-0.3)(8.1,-0.3)(8.1,0)\rput(9.5,-0.15){$_{3}$}
\pspolygon[fillstyle=solid,fillcolor=pink](8.1,2)(10.9,2)(10.9,2.3)(8.1,2.3)(8.1,2)\rput(9.5,2.15){$_{3}$}
\rput(3.5,-0.65){\small(bulk)}
\rput(7.0,-0.65){\small($r$-type)}
\rput(9.5,-0.65){\small($s$-type)}
\end{pspicture} \ \  .
\end{gathered}
\end{equation}
The first diagram is an explicit evaluation of $\kac_{3,0}^3 \fuse \kac_{2,2}^0$ as defined above. The division of the bulk between $n_1=3$ and $n_2=2$ is indicated, in this diagram, by a thick vertical line.  We do not indicate this separation explicitly in the subsequent diagrams because the result only depends upon $n_1 = 3$ and $n_2 = 2$ through their sum, as mentioned above.
The second diagram gives 
the equivalent computation in $\kac_{5,2}^3$. The last diagram illustrates the result in terms of a transfer tangle with a single non-trivial boundary condition, interpreted as the fusion of an $r$-type and an $s$-type.
The limiting Virasoro module has $r$ given by the \emph{same} function of $p$ and $p'$ as in the example \eqref{eq:diagrtype}, while $s = d+1 = 4$.
The fusion rule 
\begin{equation} 
\kac_{n_1,0}^d \fuse \kac_{n_2,k}^0 = \kac_{n,k}^d, \qquad n_1 + n_2 = n,
\label{eq:latticersfusion}
\end{equation}
thus holds automatically with this lattice prescription. 
Checking that the same rule holds for the limiting Virasoro 
modules, see \eqref{FR:KxK}, not only confirms the consistency of this
lattice implementation of Virasoro fusion, but also provides evidence for our identification, in \cref{sec:scalinglimit}, of 
the scaling limits of lattice Kac modules with Virasoro Kac modules.

More complicated fusion products, such as the fusion $\kac_{n_1,k_1}^{d_1}\fuse\kac_{n_2,k_2}^{d_2}$
of two general lattice Kac modules, can also be implemented on the lattice. 
To do this, one includes seams on both sides of the bulk \cite{PRZ06,PR07,PRannecy}.
For instance,
\begin{equation}
\psset{unit=0.8}
\begin{pspicture}[shift=-1.7](-3.5,-0.8)(11.5,4.5)
\facegrid{(-3,0)}{(11,2)}
\psarc[linewidth=0.025]{-}(-2,0){0.16}{0}{90}
\psarc[linewidth=0.025]{-}(-1,1){0.16}{90}{180}
\psarc[linewidth=0.025]{-}(-1,0){0.16}{0}{90}
\psarc[linewidth=0.025]{-}(0,1){0.16}{90}{180}
\psarc[linewidth=0.025]{-}(0,0){0.16}{0}{90}
\psarc[linewidth=0.025]{-}(1,1){0.16}{90}{180}
\psarc[linewidth=0.025]{-}(1,0){0.16}{0}{90}
\psarc[linewidth=0.025]{-}(2,1){0.16}{90}{180}
\psarc[linewidth=0.025]{-}(2,0){0.16}{0}{90}
\psarc[linewidth=0.025]{-}(3,1){0.16}{90}{180}
\psarc[linewidth=0.025]{-}(3,0){0.16}{0}{90}
\psarc[linewidth=0.025]{-}(4,1){0.16}{90}{180}
\psarc[linewidth=0.025]{-}(4,0){0.16}{0}{90}
\psarc[linewidth=0.025]{-}(5,1){0.16}{90}{180}
\psarc[linewidth=0.025]{-}(5,0){0.16}{0}{90}
\psarc[linewidth=0.025]{-}(6,1){0.16}{90}{180}
\psarc[linewidth=0.025]{-}(6,0){0.16}{0}{90}
\psarc[linewidth=0.025]{-}(7,1){0.16}{90}{180}
\psarc[linewidth=0.025]{-}(7,0){0.16}{0}{90}
\psarc[linewidth=0.025]{-}(8,1){0.16}{90}{180}
\psarc[linewidth=0.025]{-}(8,0){0.16}{0}{90}
\psarc[linewidth=0.025]{-}(9,1){0.16}{90}{180}
\rput(-3,0){\loopb}\rput(-3,1){\loopa}
\rput(9,0){\loopa}\rput(9,1){\loopb}
\rput(10,0){\loopa}\rput(10,1){\loopb}
\psarc[linewidth=1.5pt,linecolor=blue]{-}(-3,1){0.5}{90}{-90}
\psarc[linewidth=1.5pt,linecolor=blue]{-}(11,1){0.5}{-90}{90}
\rput(0,0.15){
\psarc[linewidth=1.5pt,linecolor=blue]{-}(-2,2){0.5}{0}{180}
\psarc[linewidth=1.5pt,linecolor=blue]{-}(1,2){0.5}{0}{180}
\psarc[linewidth=1.5pt,linecolor=blue]{-}(5,2){0.5}{0}{180}
\psarc[linewidth=1.5pt,linecolor=blue]{-}(7,2.0){0.5}{0}{180}
\psbezier[linewidth=1.5pt,linecolor=blue]{-}(3.5,2)(3.5,3.6)(8.5,3.6)(8.5,2)
\psbezier[linewidth=1.5pt,linecolor=blue]{-}(2.5,2)(2.5,4.3)(9.5,4.3)(9.5,2)
\psbezier[linewidth=1.5pt,linecolor=blue]{-}(-0.5,2)(-0.5,5.0)(10.5,5.0)(10.5,2)
}
\multiput(-3,0)(1,0){14}{\psline[linewidth=1.5pt,linecolor=blue]{-}(0.5,2.00)(0.5,2.16)}
\rput(-1.5,.5){\scriptsize$u\!-\!\xi'_3$}
\rput(-1.5,1.5){\scriptsize$u\!+\!\xi'_3$}
\rput(-0.5,.5){\scriptsize$u\!-\!\xi'_2$}
\rput(-0.5,1.5){\scriptsize$u\!+\!\xi'_2$}
\rput(0.5,.5){\scriptsize$u\!-\!\xi'_1$}
\rput(0.5,1.5){\scriptsize$u\!+\!\xi'_1$}
\rput(1.5,0.5){$u$}
\rput(1.5,1.5){$u$}
\rput(2.5,0.5){$u$}
\rput(2.5,1.5){$u$}
\rput(3.5,0.5){$u$}
\rput(3.5,1.5){$u$}
\rput(4.5,.5){$u$}
\rput(4.5,1.5){$u$}
\rput(5.5,.5){$u$}
\rput(5.5,1.5){$u$}
\rput(6.5,.5){$u$}
\rput(6.5,1.5){$u$}
\rput(7.5,.5){\scriptsize$u\!-\!\xi_2$}
\rput(7.5,1.5){\scriptsize$u\!+\!\xi_2$}
\rput(8.5,.5){\scriptsize$u\!-\!\xi_1$}
\rput(8.5,1.5){\scriptsize$u\!+\!\xi_1$}
\rput(1,0){\pspolygon[fillstyle=solid,fillcolor=pink](6.1,0)(7.9,0)(7.9,-0.3)(6.1,-0.3)(6.1,0)\rput(7.0,-0.15){$_{2}$}
\pspolygon[fillstyle=solid,fillcolor=pink](6.1,2)(7.9,2)(7.9,2.3)(6.1,2.3)(6.1,2)\rput(7.0,2.15){$_{2}$}}
\psline[linecolor=red,linewidth=2pt,linestyle=dashed,dash=3pt 3pt]{-}(-2,-0.1)(-2,2.1)
\psline[linecolor=red,linewidth=2pt,linestyle=dashed,dash=3pt 3pt]{-}(1,-0.1)(1,2.1)
\psline[linecolor=red,linewidth=2pt,linestyle=dashed,dash=3pt 3pt]{-}(7,-0.1)(7,2.1)
\psline[linecolor=red,linewidth=2pt,linestyle=dashed,dash=3pt 3pt]{-}(9,-0.1)(9,2.1)
\pspolygon[fillstyle=solid,fillcolor=pink](9.1,0)(10.9,0)(10.9,-0.3)(9.1,-0.3)(9.1,0)\rput(10,-0.15){$_{2}$}
\pspolygon[fillstyle=solid,fillcolor=pink](9.1,2)(10.9,2)(10.9,2.3)(9.1,2.3)(9.1,2)\rput(10,2.15){$_{2}$}
\rput(-10,0){\pspolygon[fillstyle=solid,fillcolor=pink](8.1,0)(10.9,0)(10.9,-0.3)(8.1,-0.3)(8.1,0)\rput(9.5,-0.15){$_{3}$}}
\rput(-10,2.3){\pspolygon[fillstyle=solid,fillcolor=pink](8.1,0)(10.9,0)(10.9,-0.3)(8.1,-0.3)(8.1,0)\rput(9.5,-0.15){$_{3}$}}
\pspolygon[fillstyle=solid,fillcolor=pink](-2.9,0)(-2.1,0)(-2.1,-0.3)(-2.9,-0.3)\rput(-2.5,-0.15){$_{1}$}
\pspolygon[fillstyle=solid,fillcolor=pink](-2.9,2)(-2.1,2)(-2.1,2.3)(-2.9,2.3)\rput(-2.5,2.15){$_{1}$}
\rput(-2.5,-0.65){\small($s$-type)}
\rput(-0.5,-0.65){\small($r$-type)}
\rput(3.5,-0.65){\small(bulk)}
\rput(8,-0.65){\small($r$-type)}
\rput(10,-0.65){\small($s$-type)}
\end{pspicture}
\label{eq:2bdyexample}
\end{equation}
is a typical example of the action in $\kac_{n_1,3}^{1} \fuse \kac_{n_2,2}^{2}$, with $n = n_1+n_2 = 6$,
of a transfer tangle with such boundary structures.
This description, with defects folded to the boundary, may also be understood by
quotienting out a trivial submodule. In this case, the transfer tangle is defined with a boundary triangle on each side and is described using a two-boundary version of the seam algebra, which for $\beta$ formal is 
initially constructed as a subalgebra of $\tl_{n_1+n_2+k_1+k_2}$.
The Temperley-Lieb modules associated to $\kac_{n_1,k_1}^{d_1} \fuse \kac_{n_2,k_2}^{d_2}$
separate the defects into two families, say I and II, respectively folded to the left and right in the example \eqref{eq:2bdyexample}, and are therefore not standard modules.
Indeed, the two projectors in the $s$-type boundaries do not ensure that the total number of defects is preserved, but instead impose that $d_{\text{I}} - d_{\text{II}}$ is constant. 
These $\tl_n$-modules are more complicated and also appear, perhaps unsurprisingly, 
in the work of Gainutdinov and Vasseur \cite{GaiLat11} on the fusion of Temperley-Lieb representations. The fusion products $\kac_{n_1,k_1}^{d_1} \fuse \kac_{n_2,k_2}^{d_2}$ are beyond the scope of our current analysis.

%
\subsection{Virasoro Kac modules}\label{sec:Kac}
%

\cref{sec:boundaryTLs,sec:latticeKacmod} described lattice loop models and their lattice Kac modules in rigorous algebraic terms. In this section, we turn to their scaling limits.

\subsubsection{Definition of Virasoro Kac modules}\label{sec:VKac}

As will be argued in \cref{sec:scalinglimit}, the Virasoro Kac modules are the scaling limits 
of the lattice-theoretic Kac modules.
These Virasoro modules may be interpreted as defining boundary sectors of the logarithmic minimal models. The definition of these Virasoro Kac modules requires the well known Feigin-Fuchs modules, reviewed in \cref{sec:FFm}.  In this section, we will freely use the terminology introduced there.

\begin{defn} \label{def:VKM}
The \emph{Virasoro Kac module} $\Kac{r,s}$, with $r,s \in \ZZ_+$, is the submodule 
of the \FFm{} $\FF{r,s}$ generated by the subsingular vectors of grade strictly less than $rs$.
\end{defn}

For any given $(p,p') \in \ZZ_+^2$, the submodule structure of $\Kac{r,s}$ depends on whether the Kac label $(r,s)$ corresponds to a corner, boundary or interior entry of the Kac table. The possible structures are displayed in \cref{fig:VirKacStructures}, with many explicit examples given in \cref{sec:Gramdata}. In all cases, $\Kac{r,s}$ has finitely many composition factors, each of which may be associated to a unique \ssv{} (more precisely, to a unique equivalence class of \ssvs{}).

If $(r,s)$ is a corner entry, then the structure of $\Kac{r,s}$ is represented by the 
\emph{islands} diagram of \cref{fig:VirKacStructures}. These Virasoro Kac modules are completely reducible, decomposing
as a direct sum of irreducible modules. In the special case where $r=p$ or $s = p'$, the Kac module
$\Kac{r,s}$ is actually irreducible and is therefore represented by the 
\emph{point} diagram. 

If $(r,s)$ is a boundary entry, then $\Kac{r,s}$ is indecomposable and its structure is given by the 
\emph{chain} diagram. With the conventions we use (see \cref{sec:FFm}), the corresponding \FFm{} $\FF{r,s}$, with $r,s>0$, always possesses a (non-singular) subsingular vector at grade $rs$, so our pictures of the chain-type Kac modules in \cref{fig:VirKacStructures} always have the lowest arrow pointing down. In particular, chain-type modules with two composition factors are highest-weight modules isomorphic to the quotient of $\Ver{\Delta_{r,s}}$ by $\Ver{\Delta_{r,s}+rs}$. 
This occurs for boundary entries $(r,s)$ precisely when $r=p$ and $s>p'$, or $r>p$ and $s=p'$.
On the other hand, if $r<p$ or $s<p'$, then the boundary $\Kac{r,s}$ is again point-type and irreducible.

If $(r,s)$ is an interior entry, then $\Kac{r,s}$ is indecomposable and is represented by the 
\emph{braid} diagram. Unlike the island and chain cases, braid-type Kac modules may contain a singular vector at a grade larger than $rs$. 
For $r<p$ or $s<p'$, the braid-type Kac modules have two composition factors and $\Kac{r,s}$ is a \hwm{} isomorphic to $\Ver{\Delta_{r,s}}/\Ver{\Delta_{r,s}+rs}$.

By inspection of \cref{fig:VirKacStructures,fig:VermaStructures}, it is easy to see that $\Kac{r,s}$ and $\Ver{\Delta_{r,s}}/\Ver{\Delta_{r,s}+rs}$ always share the same composition factors. They therefore share the same character,
\begin{equation} 
\chit_{r,s}(q) = q^{-c/24} \frac{q^{\Delta_{r,s}}(1-q^{rs})}{\prod_{j=1}^\infty (1-q^j)},
\label{eq:chirs}
\end{equation}
but are not isomorphic in general. 

\begin{figure}
\begin{center}
\begin{tikzpicture}
  [->,node distance=1cm,>=stealth',semithick,scale=0.7,
   asoc/.style={circle,draw=black,fill=black,inner sep=2pt,minimum size=5pt},
   bsoc/.style={circle,draw=black,fill=gray,inner sep=2pt,minimum size=5pt},
   csoc/.style={circle,draw=black,fill=white,inner sep=2pt,minimum size=5pt},
   asocS/.style={circle,draw=black,fill=black,inner sep = 1.25pt,minimum size=3pt},
   bsocS/.style={circle,draw=black,fill=gray,inner sep = 1.25pt,minimum size=3pt},
   csocS/.style={circle,draw=black,fill=white,inner sep = 1.25pt,minimum size=3pt}
  ]
  \node[asoc] (1) [] {};
  \node[] (point) [above of =1] {Point};
  \node[asoc] (2) [right = 2.5cm of 1] {};
  \node[] (island) [above of =2] {Islands};
  \node[asoc] (2a) [below of =2] {};
  \node[asoc] (2b) [below of =2a] {};
  \node[asoc] (2c) [below of =2b] {};
  \node[asocS] (2d) [below of =2c] {};
  \node[-,cross out,minimum size=5pt,draw] at (2d) {};
  \node[asocS] (2e) [below of =2d] {};
  \node[-,cross out,minimum size=5pt,draw] at (2e) {};
  \node[inner sep = 2pt] (2f) [below of =2e] {$\vdots$};
  \node[] [above=-0.03cm of 2] {\small $\Delta_{r,s}$};
  \node[] [above=-0.05cm of 2d] {\small $\Delta_{r,s}+rs$};
  \node[bsoc] (3) [right = 2.5cm of 2] {};
  \node[] (chain) [right = 0.75cm of 3,above of =3] {Chain};
  \node[asoc] (3a) [below of =3] {};
  \node[bsoc] (3b) [below of =3a] {};
  \node[asoc] (3c) [below of =3b] {};
  \node[bsocS] (3d) [below of =3c] {};
  \node[-,cross out,minimum size=5pt,draw] at (3d) {};
  \node[asocS] (3e) [below of =3d] {};
  \node[-,cross out,minimum size=5pt,draw] at (3e) {};
  \node[inner sep = 2pt] (3f) [below of =3e] {$\vdots$};
  \node[] [above=-0.03cm of 3] {\small $\Delta_{r,s}$};
  \node[] [above=-0.05cm of 3d] {\small $\Delta_{r,s}+rs$};
  \path[] (3) edge node {} (3a)
          (3b) edge node {} (3a)
          (3b) edge node {} (3c);
  \node[asoc] (4) [right = 1.5cm of 3] {};
  \node[bsoc] (4a) [below of =4] {};
  \node[asoc] (4b) [below of =4a] {};
  \node[bsoc] (4c) [below of =4b] {};
  \node[asoc] (4d) [below of =4c] {};
  \node[bsocS] (4e) [below of =4d] {};
  \node[-,cross out,minimum size=5pt,draw] at (4e) {};
  \node[inner sep = 2pt] (4f) [below of =4e] {$\vdots$};
  \node[] [above=-0.03cm of 4] {\small $\Delta_{r,s}$};
  \node[] [above=-0.05cm of 4e] {\small $\Delta_{r,s}+rs$};
  \path[] (4a) edge node {} (4)
          (4a) edge node {} (4b)
          (4c) edge node {} (4b)
          (4c) edge node {} (4d);
  \node[bsoc] (5) [right = 2.5cm of 4] {};
  \node[] (braid) [right = 1cm of 5,above of =5] {Braid};
    \node[csoc] (5a) [below left of =5] {};
  \node[bsoc] (5b) [below of =5a] {};
  \node[csoc] (5c) [below of =5b] {};
  \node[bsoc] (5d) [below of =5c] {};
  \node[csocS] (5e) [below of =5d] {};
  \node[-,cross out,minimum size=5pt,draw] at (5e) {};
  \node[bsocS] (5f) [below of =5e] {};    
  \node[-,cross out,minimum size=5pt,draw] at (5f) {};
  \node[inner sep = 2pt] (5g) [below of =5f] {$\vdots$};
  \node[asoc] (5j) [below right of =5] {};
  \node[bsoc] (5k) [below of =5j] {};
  \node[asoc] (5l) [below of =5k] {};
  \node[bsoc] (5m) [below of =5l] {};
  \node[asoc] (5n) [below of =5m] {};
  \node[bsocS] (5o) [below of =5n] {};    
  \node[-,cross out,minimum size=5pt,draw] at (5o) {};
  \node[inner sep = 2pt] (5p) [below of =5o] {$\vdots$};
  \path[] (5) edge node {} (5j)
          (5k) edge node {} (5j)
          (5b) edge node {} (5j)
          (5k) edge node {} (5l)
          (5a) edge node {} (5k)
          (5c) edge node {} (5k)
          (5m) edge node {} (5l)
          (5b) edge node {} (5l)
          (5d) edge node {} (5l)
          (5m) edge node {} (5n)
          (5c) edge node {} (5m)
          (5d) edge node {} (5n)
          (5a) edge node {} (5)
          (5a) edge node {} (5b)
          (5c) edge node {} (5b)
          (5c) edge node {} (5d);
  \node[] [above=-0.03cm of 5] {\small $\Delta_{r,s}$};
  \node[] [above=-0.05cm of 5e] {\small $\Delta_{r,s}+rs$};
  \node[bsoc] (6) [right = 2cm of 5] {};
  \node[asoc] (6a) [below left of =6] {};
  \node[bsoc] (6b) [below of =6a] {};
  \node[asoc] (6c) [below of =6b] {};
  \node[bsoc] (6d) [below of =6c] {};
  \node[asoc] (6e) [below of =6d] {};
  \node[bsocS] (6f) [below of =6e] {};
  \node[-,cross out,minimum size=5pt,draw] at (6f) {};
  \node[inner sep = 2pt] (6g) [below of =6f] {$\vdots$};
  \node[csoc] (6j) [below right of =6] {};
  \node[bsoc] (6k) [below of =6j] {};
  \node[csoc] (6l) [below of =6k] {};
  \node[bsoc] (6m) [below of =6l] {};
  \node[csocS] (6n) [below of =6m] {};
  \node[-,cross out,minimum size=5pt,draw] at (6n) {};
  \node[bsocS] (6o) [below of =6n] {};
  \node[-,cross out,minimum size=5pt,draw] at (6o) {};
  \node[inner sep = 2pt] (6p) [below of =6o] {$\vdots$};
  \path[] (6) edge node {} (6a)
          (6b) edge node {} (6a)
          (6k) edge node {} (6a)
          (6b) edge node {} (6c)
          (6j) edge node {} (6b)
          (6l) edge node {} (6b)
          (6d) edge node {} (6c)
          (6k) edge node {} (6c)
          (6m) edge node {} (6c)
          (6d) edge node {} (6e)
          (6l) edge node {} (6d)
          (6m) edge node {} (6e)
          (6j) edge node {} (6)
          (6j) edge node {} (6k)
          (6l) edge node {} (6k)
          (6l) edge node {} (6m);
  \node[] [above=-0.03mm of 6] {\small $\Delta_{r,s}$};
  \node[] [above=-0.05cm of 6n] {\small $\Delta_{r,s}+rs$};            
\end{tikzpicture}
\end{center}
\caption{The structure of Virasoro Kac modules $\Kac{r,s}$. Black, grey and white circles represent states
in the first, second and third socle layers, respectively.  Black circles represent \svs{}, while the other colours represent 
vectors that are only subsingular in general.  The only exceptions to this are the top circles which always represent 
a \sv{}. The arrows connecting these circles indicate the action of the Virasoro algebra. 
Smaller, crossed out circles represent \ssvs{}
that appear in the Feigin-Fuchs module $\FF{r,s}$, but not in $\Kac{r,s}$. Note that the two braid diagrams that we have drawn are structurally identical.  As with \cref{fig:FeiginFuchsStructures}, this repetition serves to indicate that the corresponding Kac modules are not self-contragredient.} 
\label{fig:VirKacStructures}
\end{figure}

\subsubsection{Continuum scaling limits of lattice Kac modules}\label{sec:scalinglimit}

The lattice models that we are interested in are formulated as a family of models labelled by the system size $n$. Their evolution operators (transfer tangles and Hamiltonians) act on modules, the lattice Kac modules $\kac_{n,k}^d$, whose dimensions, for fixed $k$ and $d$, grow with $n$. For simplicity, let us denote by $\Mtl_n$ this family of modules.  As $n$ increases, one can define the infinite sequence of $i$-th eigenvalues $\Lambda_n^{i}$ of the matrices representing the chosen evolution operator, each one belonging to a different module $\Mtl_n$. This is ensured by specialising $\beta$, $u$ and $\xi$ to fixed real values --- 
the spectra of $\Dbk(u,\xi)$ and $\hamk$ are then real, according to \cref{conj1}. 
One can, for example, focus on the eigenvalues of the ground state, for each $n$, or the first excited state, the second excited state, or even the maximally excited state. Note that we allow sequences $\Lambda_n^i$ where the label $i=i(n)$
is a function of $n$, as is required for the maximally excited state.  The aim is to study the behaviour of such sequences of eigenvalues in the scaling limit where $n$ tends to infinity. We note that the lattice Kac modules with $n$ odd and even have different allowed defect numbers $d$, which results in differing conformal properties in general. 
The two parities are therefore treated as separate sequences.

For lattice models that are believed to become conformally invariant as $n \rightarrow \infty$, 
the (sequences of) eigenvalues of the evolution operators for the lowest 
excited states have particular $1/n$ expansions.
In this limit, the corresponding sequences of eigenstates are believed to become the states of a module $\Mtl$ over the 
Virasoro algebra. The difference between the conformal weight of each such
state and that of the conformal ground state is then finite. 
We call sequences of eigenstates with this property \emph{conformal sequences}. 
The eigenstates of such sequences are often referred to, in the literature, as \emph{finite excitations}.  
In contrast, the sequences of eigenvalues for highly excited states typically drift off to infinity 
(this will be defined formally below). They do not give rise to states in the Virasoro module and are referred to as
\emph{non-conformal sequences}. 
While we shall make this more precise shortly, this distinction between conformal and non-conformal sequences is the 
essence of the \emph{continuum scaling limit}.
Only the conformal sequences survive. 
We emphasise that when boundary seams of width $k$ are involved, in this limit, the bulk size $n$ is taken to $\infty$ while $k$ remains fixed. 

Our next goal is to explore the scaling limit of the lattice Kac modules,
examining the so-called finite-size corrections.
For the eigenvalues $D_0 \ge D_1 \ge D_2 \ge \cdots$ of $\Dbk(u,\xi)$, the aforementioned $1/n$ expansion takes the form \cite{BloCon86,AffUni86}
\begin{equation}
-\log D_i =  2n f_{bulk}(u) +  f_{bdy}(u,\xi) + \frac{2 \pi \sin(\frac{\pi u}{\lambda})}{n} \Big(\!-\frac c {24} + \Delta + j_i \Big) + O(n^{-2}).    
\label{eq:Dexpansion}
\end{equation} 
Here, $f_{bulk}(u)$ and $f_{bdy}(u,\xi)$ are the bulk and the boundary free energies (for which expressions for the 
logarithmic minimal models were given in \cite{PRZ06,PTC14}), 
$c$ is the central charge of the \cft{} describing the scaling limit, $\Delta$ is a conformal weight, and $j_i$ is an integer.  
The ground state, in the sector labelled by $\Delta$, corresponds to $i=0$ and $j_0 = 0$. 
For the logarithmic minimal model $\mathcal {LM}(p,p')$, the central charge and conformal weights are 
conjectured \cite{PRZ06} to be given by the familiar expressions
\begin{equation}
c = 1 - 6\frac{(p'-p)^2}{pp'}, \qquad \Delta = \Delta_{r,s} = \frac{(p'r-ps)^2-(p-p')^2}{4pp'} \quad \text{(\(r,s \in \mathbb Z_+\)).}
\label{eq:CDelta}
\end{equation} 
Evidence supporting this conjecture includes the numerical estimation of conformal weights and characters from lattice 
data \cite{PRZ06,PTC14} and the consistency of the lattice prescription for fusion with the conformal fusion rules \cite{RasFus07,RasFus07a,RasCla11}. 

The first non-trivial term in the $u$-expansion of $\Dbk(u,\xi)$
is the Hamiltonian $\hamk$ (up to a constant multiple of the identity). The asymptotic expansion for the eigenvalues $H_0 \le H_1 \le H_2 \le \cdots$ of the Hamiltonian takes the form
\begin{equation}\label{eq:Hj}
H_i = n\, h_{bulk} + h_{bdy}(\xi) + \frac{\pi v_s}n \Big(-\frac c {24} + \Delta + j_i \Big) + O(n^{-2}),
\end{equation}
where $h_{bulk}$ and $h_{bdy}$ are the Hamiltonian bulk and boundary free energies, and $v_s$ is the speed of 
sound defined below \eqref{eq:modes}. 
This is in accordance with $\hamk$ becoming the first integral of motion: 
\begin{equation}
\frac{n}{\pi v_s} \Big(\hamk - n\, h_{bulk} - h_{bdy} \Big) \xrightarrow{n \rightarrow \infty} \iom_1 = L_0 - \frac c{24}.
\label{eq:HL0}
\end{equation} 

These eigenvalue expansions can be used to give a more precise definition of conformal and non-conformal sequences. In what follows, we shall denote by $\Lambda_n^{0}$ 
the eigenvalues (for $-\log \Dbk(u,\xi)$ or $\hamk$) on $\kac_{n,k}^d$ (for fixed $k$ and $d$) corresponding to the 
(possibly degenerate) ground state. 
Let $\Lambda_n^i$ denote the eigenvalues of another sequence that we want to study. Then, we will say that this sequence is \emph{conformal} if 
\begin{equation}
\lim_{n \ra \infty} i(n) = \iota \quad \text{and} \quad
\lim_{n \rightarrow \infty} n\,(\Lambda_n^i - \Lambda_n^{0}) = 
\kappa_{\iota},
\qquad \text{for some \(
\kappa_{\iota} < \infty\).}
\label{eq:limitkappa}
\end{equation}
For finite excitations of the ground state, \eqref{eq:Dexpansion} and \eqref{eq:Hj} indicate that 
$\kappa_{\iota}/\kappa_1$ should equal the ratio $j_{\iota}/j_1$ of the grades of the excited states, where $\kappa_1$ and $j_1$ are associated to the (sequence of the) first excited state. 
\eqref{eq:Dexpansion} and \eqref{eq:Hj} also explain the factor of $n$ appearing in \eqref{eq:limitkappa}.
For \emph{non-conformal sequences}, however, at least one of the limits in \eqref{eq:limitkappa} 
diverges. 
If $\iota = \infty$ or $\kappa_{\iota} = \infty$, the sequence is said to \emph{drift off to infinity}. 
To avoid over-counting the conformal sequences of the Kac modules, we will identify all sequences, in the scaling limit, that have the same limiting value $\iota$, thinking of them as converging to the same conformal state.

\medskip

The word \emph{logarithmic} is used to characterise the models $\mathcal{LM}(p,p')$ because some of the correlation functions are believed to have logarithmic singularities in the scaling limit. This logarithmic dependence appears when the corresponding conformal field theory has representations on which the action of $L_0$ is non-diagonalisable. Logarithmic conformal field theories are non-unitary and necessarily involve reducible yet indecomposable Virasoro representations. Even though the appearance of such representations and the non-diagonalisability of $L_0$ are not equivalent, the former is often taken as an alternative defining characteristic of logarithmic conformal field theories.

A key observation first made by Pasquier and Saleur \cite{PS90} is that the indecomposable structures of certain 
Virasoro representations are often already present on the lattice 
in the representations of the diagrammatic algebra. 
Because $\Dbk(u,\xi)$ is initially defined as a tangle of (the specialised algebra) $\tl_{n+k}(\beta)$, it might seem 
surprising that the complicated Virasoro structures conjectured, for example 
in \cite{RasCla11}, are more intricate than those of the original \TL{} algebra (see \cref{app:TLrep}). Indeed, the 
standard modules over $\tl_n(\beta)$ never have more than two composition factors, while those over the Virasoro 
algebra that describe the scaling limit of the lattice Kac modules can have many more, see \cref{fig:VirKacStructures}. 

From our description of the lattice Kac modules $\kac_{n,k}^d$ in terms of the boundary seam algebras $\btl_{n,k}(\beta)$, it is expected that the rich indecomposable structures of their limiting Virasoro modules are inherited from similar structures for $\btl_{n,k}(\beta)$-modules. Because $\btl_{n,k}(\beta)$ is a quotient of the one-boundary \TL{} algebra $\tlone_n(\beta,U_{k-1},U_k)$, every $\btl_{n,k}(\beta)$-module is naturally a $\tlone_n(\beta,U_{k-1},U_k)$-module.  While this 
inclusion of modules does not necessarily preserve structure, it is worth noting that the standard $\tlone_n(\beta,U_{k-1},U_k)$-modules are known to admit more involved indecomposable structures than $\tl_n(\beta)$-modules \cite{MW00}. Without unravelling the full representation theory of $\btl_{n,k}(\beta)$, we will, in \cref{sec:Gramidentification,sec:Gramdata}, probe the structure of the scaling limits of the lattice Kac modules, as Virasoro modules, using the invariant bilinear form defined in \cref{sec:Gram}. This allows us to go beyond the standard character arguments, see \cref{sec:characters}, that have previously appeared in the literature 
and to arrive at \cref{TheConjecture} below.

Our analysis will use the Hamiltonian $\hamk$, instead of the full transfer tangle $\Dbk(u,\xi)$, for three reasons: (i) it 
allows us to ignore $u$ and reduce the parameter space to only $(\beta,\xi)$, (ii) $\hamk$ is believed to converge to 
$L_0 - c/24$ in the scaling limit and we do not use the additional information encoded in the higher
integrals of motion, and (iii) working with $\hamk$ only is less demanding computationally and allows us to reach larger 
system sizes and thus larger precision.  This last point is perhaps the most important one. 
One advantage of analysing logarithmic minimal models through transfer tangles, if one is not interested in the higher integrals of motion, is that their eigenvalues satisfy functional hierarchy equations \cite{MDPR14} which 
one can hope to solve analytically.
However, as our focus here is numerical, we will work with the Hamiltonians.

Recall from \eqref{eq:Ham} and \cref{sec:LatticeKac} that $\hamk$ is defined in terms of a function 
$s_0(\xi)s_{k+1}(\xi)$ whose singularities define the
\emph{exceptional points}, when $k>0$. This function is periodic in $\xi$ with period $\pi$, allowing us to restrict to the interval $(0,\pi]$. The singularities in this interval occur at $\xi = \pi$ and $\xi = \xi_{\text{exc}}$, where the latter satisfies
\begin{equation}
\xi_{\text{exc}}=-(k+1)\lambda \bmod{\pi}.
\label{eq:xis}
\end{equation}
For the specialisations \eqref{eq:rootsof1}, the singularity at $\xi = \xi_{\text{exc}}$ coincides with that at $\xi = \pi$ 
precisely when $k+1$ is a multiple of $p'$.  Otherwise, this singularity divides the interval $(0,\pi)$ into 
two \emph{regimes} which we denote by $A$ and $B$:
\begin{equation}
\text{regime}\ A :\;  \xi \in (0,\xi_{\text{exc}}), \qquad \text{regime}\ B:\; \xi \in (\xi_{\text{exc}}, \pi).
\label{eq:regimes}
\end{equation}
When $p'$ divides $k+1$, so $\xi_{\text{exc}} = \pi$, there is only a single regime:  regime $A$.
For $k=0$, the Hamiltonian does not depend on $\xi$. We will, for convenience, regard the case $k=0$ as corresponding to regime $A$.

The scaling limit of the lattice Kac modules is heavily affected by whether $\xi$ belongs to regime $A$ or $B$, but in general appears not to be influenced by varying $\xi$ within a given regime. Exceptions to this last statement have been found in regime $B$ and are discussed below as well as in \cref{sec:characters,sec:Gramidentification}. The following conjecture describes the behaviour in regime $A$. 
  
\begin{conj} \label{TheConjecture}
In regime $A$, the scaling limit of the lattice Kac module $\kac_{n,k}^d$ is the Virasoro Kac module $\Kac{r,s}$, with \begin{equation}r=\Big\lceil \frac{(k+1)p}{p'}\Big\rceil, \qquad s = d+1.
\label{eq:rs1}
\end{equation}
\end{conj}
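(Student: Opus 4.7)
Given that this is a conjecture rather than a theorem, the plan is to assemble three complementary strands of evidence rather than derive a single rigorous proof. First, the heuristic identification of the Kac labels $(r,s)$ proceeds at the character level: $s = d+1$ follows from folding the $d$ defects into a vacuum $s$-type seam (\cref{sec:LatticeFusion}), while $r = \lceil (k+1)p/p' \rceil$ is obtained by computing the ground-state finite-size correction of $\hamk$ via \eqref{eq:Hj} as a function of $\xi$ in regime $A$ and matching the resulting conformal weight to $\Delta_{r,1}$. The ceiling formula naturally reflects that $r$ must be integer-valued and that the integer jumps as $\xi$ crosses the exceptional point \eqref{eq:xis}. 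The remaining levels of $\chit_{r,s}(q)$ should then emerge from the full excited-state spectrum, with the $(1-q^{rs})$ factor of \eqref{eq:chirs} reflected in a null-vector degeneracy at grade $rs$ detectable via vanishing principal minors of $\grammat_{n,k}^d$.

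Second, and crucially, pinning down $\Kac{r,s}$ rather than its contragredient demands genuinely module-theoretic data beyond the character. The key tool is the invariant bilinear form $\gramprodk{\cdot}{\cdot}$ constructed in \cref{sec:Gram}, whose radical is a $\btl_{n,k}$-submodule of $\stan_{n,k}^d = \kac_{n,k}^d$. Assuming this form survives the scaling limit as a Shapovalov-type form on the limiting Virasoro module, the degeneration pattern encoded in the Gram determinant \eqref{eq:Gnkd} dictates which composition factors lie in the socle; this precisely fixes the arrow orientation in the chain and braid diagrams of \cref{fig:VirKacStructures}, thereby distinguishing $\Kac{r,s}$ from its contragredient. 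In practice, this comparison proceeds by numerical computation of Gram matrices for concrete $(n,k,d)$ and matching the resulting rank patterns against those predicted by \cref{def:VKM}, as carried out in \cref{sec:Gramidentification,sec:Gramdata}.

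Third, independent verification is provided by fusion. The lattice rule \eqref{eq:latticersfusion}, namely $\kac_{n_1,0}^d \fuse \kac_{n_2,k}^0 = \kac_{n,k}^d$, should correspond under the conjectured identification to the Virasoro fusion $\Kac{1,d+1} \fuse \Kac{r,1} = \Kac{r,s}$, checkable both at the character level via the Verlinde-like formula and at the module level via the Nahm-Gaberdiel-Kausch algorithm (\cref{sec:CFTanalysis}). The main obstacle is that no rigorous scaling-limit theorem is available: the sequences $\Lmn$ of \cref{conj:highermodes} are not proved to converge to operators satisfying the Virasoro relations, so every step above extracts continuum data from lattice data under unproven regularity assumptions. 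Consequently, \cref{TheConjecture} will not be established as a theorem here; rather, the cumulative agreement of Gram-matrix, character and fusion data across many examples provides strong, though non-rigorous, support.
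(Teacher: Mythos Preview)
Your proposal is correct and tracks the paper's own strategy closely: since this is a conjecture, the paper likewise assembles evidence rather than a proof, via (i) numerical character extraction from Hamiltonian spectra (\cref{sec:characters}), (ii) radical/quotient analysis of Hamiltonian eigenvectors against the Gram matrix to fix the arrow orientations and hence distinguish $\Kac{r,s}$ from its contragredient (\cref{sec:Gramidentification}), and (iii) independent CFT confirmation through the Verlinde-like formula and the \NGK{} algorithm (\cref{sec:CFTanalysis}). The only minor refinement is that in strand two the paper does not read off ``rank patterns'' of $\grammat_{n,k}^d$ abstractly but instead applies the Gram matrix directly to individual Hamiltonian eigenvectors, tracking which excitations fall in the radical versus the quotient and matching this against the expected socle structure of $\Kac{r,s}$.
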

\noindent The substantial new evidence that we present in favour of this conjecture constitutes one of the main results of this paper.

Conjectures for the relations between $r,s,k$ and $d$ have been formulated previously \cite{PRZ06,PRannecy,PRV12,PTC14}, but were only tying lattice Kac modules to the characters $\chit_{r,s}$. The evidence presented in these papers could not distinguish between the different possible structures allowed by $\chit_{r,s}$. \cref{TheConjecture} is stronger precisely because it identifies the predicted structure of the Virasoro module in the scaling limit. The structures of the Virasoro Kac modules are detailed in \cref{sec:VKac}.

The behaviour in regime $B$ is much less clear and our evidence is insufficient to present a clear conjecture. In many cases, the 
evidence that we have collected suggests that the scaling limit of the Kac module $\kac_{n,k}^d$ is the Virasoro Kac 
module $\Kac{r,s}$ with 
\begin{equation}
r=\Big\lceil \frac{(k+1)p}{p'}\Big\rceil-1, \qquad s = d+1.
\label{eq:rs2}
\end{equation}
This seems to be true for the principal series $\mathcal{LM}(m,m+1)$, as noticed in \cite{PRZ06}.

There are, however, exceptions to \eqref{eq:rs2} for $p' \neq p+1$ and $k \ge 1$. In one case, $\mathcal{LM}(1,4)$ with $k = 2$, the limiting Kac module appears to be the same in both regimes $A$ and $B$. 
In other cases, discussed in \cref{sec:characters}, the character at the point $\xi = \frac12(\pi + \xi_{\rm{exc}})$ 
appears to be $2\, \chit_{1,d+1}$. We believe that, for this value of $\xi$, the limiting
module is the direct sum of two copies of the Kac module $\Kac{1,d+1}$, whereas for other $\xi$-values in regime $B$, the scaling limit corresponds to a single copy. In another specific case, $\mathcal{LM}(1,5)$ with $k = 2$, the conformal status is uncertain, as the eigenvalues of the Hamiltonian do not seem to converge to a character in the scaling limit. Of course, this may simply be due to slow convergence, but could also indicate that the corresponding
boundary condition is not conformal.
We will discuss these observations further in \cref{sec:characters}. 

\subsection{Data and results}\label{sec:LatticeData}

\subsubsection{Character analysis}\label{sec:characters}

For the model of critical dense polymers, $(p,p') = (1,2)$, the characters $\chit_{r,s}$ associated to 
the lattice Kac modules $\kac_{n,k}^d$ have been obtained in \cite{PR07,PRV12} from 
exact analytic solutions, modulo certain conjectured selection rules. 
For $k=0$, these rules were proven in \cite{MD11}. 
This section describes how the limiting Virasoro characters for the general logarithmic minimal models can, in favourable circumstances, be guessed from the eigenvalues of $\hamk$ on $\kac_{n,k}^d$. Unless otherwise indicated, the value of $\xi$ is set to either $\tfrac{1}2\xi_{\text{exc}}$ or $\tfrac12(\pi + \xi_{\text{exc}})$, according to the regime of interest, see \eqref{eq:regimes}.

The character analysis splits into two parts. The first is to evaluate 
the leading power of $q$ in the character, $\Delta-c/24$, and verify its consistency with the conjectured central charge $c$ and conformal dimensions $\Delta_{r,s}$ in \eqref{eq:CDelta}.
Approximations for this leading power can be extracted from \eqref{eq:HL0} and the known theoretical values for the bulk and boundary free energies. In essence, the sequence
\begin{equation}
\frac{n}{\pi v_s} (H_0 - n\,h_{bulk} - h_{bdy})
\end{equation}
should converge to $\Delta_{r,s}-\frac{c}{24}$ as $n \rightarrow \infty$. The recent paper by Pearce, Tartaglia and Couvreur \cite{PTC14} presents considerable numerical data using this idea to estimate 
$\Delta_{r,s}-\frac{c}{24}$ in regime $A$, corroborating \eqref{eq:CDelta} and \eqref{eq:rs1}. 
The same analysis may be applied to regime $B$, despite some subtleties involved in the special case where two characters are produced.

We will instead focus on the second part of the character analysis. Here, the goal is to extract the integer coefficients of the subleading $q$-powers of the character. Upon dividing out by $q^{\Delta_{r,s}-\frac{c}{24}}$, the character 
$\chit_{r,s}(q)$ becomes the following function of $q$,
\begin{equation}
\hat \chit_{a}(q) = \frac{(1-q^{a})}{\prod_{j=1}^\infty (1-q^j)} \qquad \text{(\(a = rs\)),}
\end{equation}
whose formal series expansion involves only integer powers.
According to \cref{TheConjecture}, the eigenvalues of $\hamk$, acting on
$\kac_{n,k}^d$, should produce the character $\chit_{r,s}$, as $n \rightarrow \infty$, with $r$ and $s$ as given in \eqref{eq:rs1}. To test
this, we fix $k$ and $d$ and define sequences $\Lambda^i_n$ consisting of the $i$-th smallest eigenvalue of $\hamk$ on $\kac_{n,k}^d$, choosing the values of $n$ such that $n + k = d \bmod{2}$. We then compute the ratios 
\begin{equation}\label{eq:ratios}
R_n^i = \frac{\Lambda^i_n-\Lambda^0_n}{\Lambda^1_n - \Lambda^0_n}
\end{equation}
which, as discussed after \eqref{eq:limitkappa}, should converge to the ratio $j_i/j_1$ of grades of the corresponding conformal states as $n\rightarrow\infty$. This automatically sets $R_n^0$ and $R_n^1$ to $0$ and $1$. For finite $n$, $R_n^i$ will only approximate $j_i/j_1$, with increasing precision as $n$ grows. 
Our computer program can calculate the eigenvalues of $\hamk$ for $n+k \le 19$ or $20$ 
and the largest matrices that we considered had 
size $16\, 796 \times 16\, 796$. Seven examples are provided in \cref{tab:char} and were chosen to present both the successes and limits of this approach. The ratios are presented in the form of 
\emph{$\hat \chit_a$-approximations}, $\sum_{i} q^{R_n^i}$, thus facilitating a
direct comparison with $\hat \chit_a$. We remark that if $r$ and $s$ are both believed to be $1$, then the conformal grade of the first excited state should be $j_1= 2$ and $\hat \chit_1$ should have the form $1 + q^2 + q^3 + \dots$.  In this case, we will instead define the $\hat \chit_1$-approximation as $\sum_{i} q^{2R_n^i}$.

\begin{table}
\begin{alignat*}{2}
&\begin{array}{c|c|ll} \hline \multicolumn{2}{c}{} \\[-0.1cm]
\multirow{6}{*}{$(a)$}&(p,p') = (3,5) & n=14: & 1 + q^2 + q^{2.92} + q^{3.74}+ q^{3.88} + q^{4.42} + q^{4.68} + q^{4.93} + q^{5.24} + q^{5.35} + \cdots \\[0.1cm]
&k = 0 & n=16: & 1 + q^2 + q^{2.94} + q^{3.80} + q^{3.90} + q^{4.54} + q^{4.75} + q^{5.15} + q^{5.49} + q^{5.60} + \cdots \\[0.1cm]
&d = 0 & n=18: & 1 + q^2 + q^{2.95} + q^{3.84} + q^{3.92} + q^{4.63} + q^{4.80} + q^{5.32} + q^{5.59} + q^{5.72} + \cdots \\[0.1cm]
&{\rm reg.}\, A & n=20: & 1 + q^2 + q^{2.96} + q^{3.87} + q^{3.94} + q^{4.70} + q^{4.84} + q^{5.44} + q^{5.66} + q^{5.77} + \cdots \\[0.1cm]
&(r,s) \rightarrow (1,1) & n \rightarrow \infty: & \hat \chit_1 = 1 + q^2 + q^3 + 2 q^4 + 2 q^5 + 4 q^6 + \cdots 
\end{array} 
\\[0.2cm] 
&\begin{array}{c|c|ll} \hline \multicolumn{2}{c}{} \\[-0.1cm]
\multirow{6}{*}{$(b)$}&(p,p') = (1,3) & n=13: & 
1 + q + q^{2.05} + q^{2.96} + q^{3.15} + q^{3.85} + q^{4.15} + q^{4.31} + q^{4.57} + q^{4.78} + \cdots \\[0.1cm]
&k = 0 & n=15: & 
1 + q + q^{2.04} + q^{2.97} + q^{3.11} + q^{3.89} + q^{4.11} + q^{4.24} + q^{4.68} + q^{4.83} + \cdots \\[0.1cm]
&d = 1 & n=17: & 
1 + q  + q^{2.03} + q^{2.98} + q^{3.09} + q^{3.91} + q^{4.09} + q^{4.20} + q^{4.76} + q^{4.87} + \cdots \\[0.1cm]
&{\rm reg.}\, A & n=19: & 
1 + q + q^{2.02} + q^{2.98} + q^{3.07} + q^{3.93} + q^{4.07} + q^{4.16} + q^{4.81} + q^{4.90} + \cdots \\[0.1cm]
&(r,s) \rightarrow (1,2) & n \rightarrow \infty: & \hat \chit_2 = 1 + q + q^2 + 2 q^3 + 3 q^4 + 4 q^5 + 6 q^6 + \cdots 
\end{array}
\\[0.2cm]
&
\begin{array}{c|c|ll}\hline \multicolumn{2}{c}{} \\[-0.1cm]
\multirow{6}{*}{$(c)$}&(p,p') = (3,4) & n=11: &
1 + q + q^{1.68} + q^{1.88} + q^{2.44} + q^{2.63} + q^{2.80} + q^{3.08} + q^{3.21} + q^{3.27} + \cdots \\[0.1cm]
&k = 1 & n=13: & 
1 + q + q^{1.71} + q^{1.90} + q^{2.51} + q^{2.70} + q^{2.83} + q^{3.21} + q^{3.35} + q^{3.37} + \cdots \\[0.1cm]
&d = 2 & n=15: & 
1 + q + q^{1.74} + q^{1.92} + q^{2.56} + q^{2.75} + q^{2.85} + q^{3.30} + q^{3.42} + q^{3.49} + \cdots \\[0.1cm]
&{\rm reg.}\, A: \xi = \tfrac \pi 4 & n=17: & 
1 + q + q^{1.76} + q^{1.93} + q^{2.60} + q^{2.79} + q^{2.87} + q^{3.37} + q^{3.47} + q^{3.57} + \cdots \\[0.1cm]
&(r,s) \rightarrow (2,3) & n \rightarrow \infty: & \hat \chit_6 = 1 + q + 2 q^2 + 3 q^3 + 5 q^4 + 7 q^5 + 10 q^6 + \cdots 
\end{array}
\\[0.2cm]
&
\begin{array}{c|c|ll}\hline \multicolumn{2}{c}{} \\[-0.1cm]
\multirow{6}{*}{$(d)$}&(p,p') = (4,5) & n=12: & 1 + q + q^{1.90} + q^{2.01} + q^{2.65} + q^{2.90} + q^{3.22} + q^{3.57} + q^{3.64} + q^{3.86} + \cdots \\[0.1cm]
&k = 2 & n=14: & 
1 + q + q^{1.92} + q^{2.01} + q^{2.73} + q^{2.92} + q^{3.40} + q^{3.72} + q^{3.89} + q^{3.90} + \cdots \\[0.1cm]
&d = 0 & n=16: & 
1 + q  + q^{1.94} + q^{2.00} + q^{2.79} + q^{2.93} + q^{3.53} + q^{3.77} + q^{3.91} + q^{4.12} + \cdots \\[0.1cm]
&{\rm reg.}\, A: \xi = \tfrac \pi 5 & n=18: & 
1 + q + q^{1.95} + q^{2.00} + q^{2.83} + q^{2.94} + q^{3.62} + q^{3.81} + q^{3.92} + q^{4.29} + \cdots \\[0.1cm]
&(r,s) \rightarrow (3,1) & n \rightarrow \infty: & \hat \chit_3 = 1 + q + 2 q^2 + 2 q^3 + 4 q^4 + 5 q^5 + 8 q^6 + \cdots
\end{array}
\\[0.2cm] 
&
\begin{array}{c|c|ll}\hline \multicolumn{2}{c}{} \\[-0.1cm]
\multirow{6}{*}{$(e)$}&(p,p') = (2,3) & n=8: & 
1 + q + q^{1.79} + q^{2.10} + q^{2.31} + q^{2.48} + q^{2.67} + q^{2.85} + q^{3.20} + q^{3.30} + \cdots \\[0.1cm]
&k = 3 & n=10: &
1 + q + q^{1.86} + q^{2.09} + q^{2.52} + q^{2.74} + q^{2.92} + q^{2.93} + q^{3.08} + q^{3.40} + \cdots \\[0.1cm]
&d = 3 & n=12: & 
1 + q + q^{1.89} + q^{2.08} + q^{2.64} + q^{2.79} + q^{2.95} + q^{3.21} + q^{3.53} + q^{3.56} + \cdots \\[0.1cm]
&{\rm reg.}\, B: \xi = \tfrac {5\pi} 6 & n=14: & 
1 + q  + q^{1.92} + q^{2.07} + q^{2.72} + q^{2.83} + q^{2.97} + q^{3.39} + q^{3.62} + q^{3.76} + \cdots \\[0.1cm]
&(r,s) \rightarrow (2,4) & n \rightarrow \infty: & \hat \chit_{8} =  1 + q + 2 q^2 + 3 q^3 + 5 q^4 + 7 q^5 + 11 q^6 + \cdots 
\end{array}
\\[0.2cm] 
&
\begin{array}{c|c|ll}\hline \multicolumn{2}{c}{} \\[-0.1cm]
\multirow{6}{*}{$(f)$}&(p,p') = (1,4) & n=11: & 
1 + q + q^{2.25} + q^{2.42} + q^{2.42} + q^{3.12} + q^{3.32} + q^{3.32} + q^{3.68} + q^{4.24} + \cdots \\[0.1cm]
&k = 2 & n=13: & 
1 + q + q^{2.19} + q^{3.08} + q^{3.08} + q^{3.08} + q^{3.56} + q^{3.99} + q^{3.99} + q^{4.16} + \cdots \\[0.1cm]
&d = 1 & n=15: & 
1 + q  + q^{2.14} + q^{3.06} + q^{3.44} + q^{3.72} + q^{3.72} + q^{4.11} + q^{4.42} + q^{4.65} + \cdots \\[0.1cm]
&{\rm reg.}\,B: \xi = \tfrac {7\pi} 8 & n=17: & 
1 + q + q^{2.11} + q^{3.04} + q^{3.35} + q^{4.08} + q^{4.33} + q^{4.37} + q^{4.37} + q^{4.75} + \cdots \\[0.1cm]
&(r,s) \rightarrow (1,2) & n \rightarrow \infty: &  \hat \chit_2 = 1 + q + q^2 + 2 q^3 + 3 q^4 + 4 q^5 + 6 q^6 + \cdots 
\end{array}
\\[0.2cm] 
&
\begin{array}{c|c|ll}\hline \multicolumn{2}{c}{} \\[-0.1cm]
\multirow{6}{*}{$(g)$}&(p,p') = (1,5) & n=12: & 
1 + q + q^{1.00} + q^{2.19} + q^{3.04} + q^{3.05} + q^{3.71} + q^{4.30} + q^{4.50} + q^{4.51} + \cdots\\[0.1cm]
&k = 2 & n=14: & 
1 + q + q^{1.00} + q^{1.70} + q^{2.59} + q^{2.59} + q^{2.85} + q^{3.33} + q^{3.70} + q^{3.70} + \cdots \\[0.1cm]
&d = 0 & n=16: & 
1 + q + q^{1.00} + q^{1.38} + q^{2.29} + q^{2.30} + q^{2.30} + q^{2.70} + q^{3.17} + q^{3.17}+ \cdots  \\[0.1cm]
&{\rm reg.}\, B : \xi = \tfrac {4\pi} 5 & n=18 & 
1 + q + q^{1.00} + q^{1.16} + q^{1.90} + q^{2.10} + q^{2.10} + q^{2.27} + q^{2.75} + q^{2.81} + \cdots \\[0.1cm]
&(r,s) \rightarrow (?,?) & n \rightarrow \infty: &  \textrm{\scriptsize Data insufficient to determine a conformal character.}
\\
\multicolumn{2}{c}{}\\[-0.1cm] \hline
\end{array}
\end{alignat*}
\caption{Examples of explicit numerical eigenvalue analyses and comparisons with $\hat \chit_{rs}(q)$ candidates.}
\label{tab:char} 
\end{table}

\paragraph{Regime $\boldsymbol A$.} With the data produced by our computer program, the approximate $q$-series generally allows us to confidently guess the integer coefficients of the limiting character up to grade $4$, 
for $k=0,1$, though often only to lower grades for $k=2,3$. As the results of \cref{tab:char} show, the convergence is not particularly fast in general. This decrease in precision comes from our computational limitation, $n+k \le 19$ or $20$, and the fact that as $k$ grows, the boundary becomes a bigger fraction of the bulk and so 
the results are less representative of the scaling limit. Nevertheless, if we stay in regime $A$, then in each case studied, the first few coefficients that can be guessed from our data reproduce those of $\chit_{r,s}$, with $r$ and $s$ as in \cref{TheConjecture}. 

Up to grade $rs-1$, the integer coefficients in $\chit_{r,s}$ are identical to those of the Verma module characters.
As $k$ and $d$ grow, the value of $a=rs$ quickly becomes greater than $4$ and the character approximations do not allow one to discern whether the limiting character is that of a Kac module or a Verma module (or something else entirely). 
This occurs, for instance, in examples $(c)$ and $(e)$ in \cref{tab:char}. Both $\hat \chit_6$ and $\hat \chit_8$ have the form $1 + q + 2 q^2 + 3 q^3 + 5 q^4 + \cdots$, but our data is insufficient to determine 
the coefficients at grades $6$ and $8$, respectively. In these examples, 
we can arrive at the values of $r$ and $s$ predicted in \cref{TheConjecture} by assuming that $r$ is independent of 
$d$ and that $s$ is given by $d+1$ --- this has been the result in every other case that we have analysed.
The value of $r$ then follows from the $\hat\chit_a$-approximation of $\kac_{n,k}^0$ or $\kac_{n,k}^1$.  
We emphasise, though, that the character analysis of these examples cannot be regarded as independent evidence in support of \cref{TheConjecture}.

On the other hand, our low grade character analysis 
does allow us, in many cases, to witness the absence of the \ssv{} at grade $rs$ 
that distinguishes the characters of $\Kac{r,s}$ and $\Ver{r,s}$. These cases provide support for \cref{TheConjecture}.  However, to convincingly determine the composition factors of a 
Virasoro module, one would like to investigate not only the first missing subsingular vector, but also the next ones 
whose grades are typically larger than $4$. 
As our character analysis is numerical, we cannot exclude the possibility that the coefficients of the scaling limit character could differ from those of $\chit_{r,s}$, for sufficiently large grades. 
Strong evidence against this possibility comes from Kac module studies \cite{PR07,PRV12} for $\mathcal{LM}(1,2)$, where the eigenvalues and characters were computed exactly (and agree well with approximate numerical results). 

\paragraph{Regime $\boldsymbol B$.} 
In many cases, the limiting characters of the lattice Kac modules in regime $B$ also give rise to Kac characters, though these are generally different to 
their regime $A$ counterparts. We have observed, however, that the convergence appears to be slower, see for instance example $(f)$ in \cref{tab:char}. Moreover, as mentioned in \cref{sec:scalinglimit}, there are other cases in which the scaling limit of the lattice Kac module does not appear to converge to a single Virasoro Kac module.

Let us look at a particular example (not included in \cref{tab:char}): $(p,p') = (1,3)$ with $k=1$. 
At the value $\xi = \frac12(\pi + \xi_{\text{exc}}) = \tfrac{5 \pi}6$, the ground state eigenvalue is 
\emph{quasi-degenerate}: Even for small $n$, the first two eigenvalues are so close that the next excitation has a ratio $R_n^2 \ge 50$ that increases rapidly with $n$. We view this as suggesting that the ground state is degenerate in the scaling limit. Higher excitations are also all quasi-degenerate, with the degeneracy again of order $2$, and this appears to hold,
independent of $n$ and $d$. To perform the character analysis, we instead calculate the ratios $R_n^i$ by using the eigenvalue sequence of the second excitation, $R_n^i = (\Lambda^i_n-\Lambda^0_n)/(\Lambda^2_n - \Lambda^0_n)$. As we move away from $\xi = \tfrac{5 \pi}6$ to $\xi = \frac{5 \pi}6 + \epsilon$, the quasi-degeneracies are slowly lifted. For instance, setting $n=15, d=0$ for this example yields the following character approximations:
\begin{equation}
\begin{array}{ll}
\epsilon= 0:    & \quad 1 + q^{0.00} + q^2 + q^{2.00} + q^{3.07} + q^{3.07} + q^{3.88} + q^{3.88} + q^{4.18} + q^{4.18} + q^{4.74} + q^{4.74} + \dots  \\[0.1cm]
\epsilon= 0.01: & \quad 1+ q^{0.18} + q^{2} + q^{2.18} + q^{3.07} + q^{3.24} + q^{3.88} + q^{4.05} + q^{4.18} + q^{4.35} + q^{4.74} + q^{4.91} + \dots\\[0.1cm]
\epsilon= 0.02: & \quad 1+ q^{0.36} + q^{2} + q^{2.35} + q^{3.07} + q^{3.42} + q^{3.88} + q^{4.18} + q^{4.22} + q^{4.53} + q^{4.73} + q^{5.08} + \dots \\[0.1cm]
\epsilon= 0.04: & \quad 1+ q^{0.72} + q^{2} + q^{2.71} + q^{3.07} + q^{3.77} + q^{3.88} + q^{4.18} + q^{4.57} + q^{4.73} + q^{4.87} + q^{5.34} + \dots \\[0.1cm] 
\epsilon= 0.06: & \quad 1+ q^{1.09} + q^{2} + q^{3.07} + q^{3.07} + q^{3.88} + q^{4.12} + q^{4.19} + q^{4.73} + q^{4.93} + q^{5.23} + q^{5.34} + \dots\\[0.1cm]
\end{array}
\label{eq:2charexample}
\end{equation}
Note that the function $s_0(\xi)s_2(\xi)$ entering the definition of $\hamk$ is symmetric around $\xi = \tfrac{5\pi}6$, explaining why only $\epsilon \ge 0$ is considered here.

For $\epsilon = 0$, the approximations appear 
to be converging to $2\, \hat\chit_{1}$, while for $\epsilon > 0$, we seem to instead obtain $(1 + q^{f_n(\epsilon)}) \hat\chit_1$, where $f_n(\epsilon)$ is approximately linear in $\epsilon$, for small $\epsilon$.
From the data for $n=9,11,13$, it also appears that $f_n(\epsilon)$ increases with $n$, suggesting that the second copy of $\hat\chit_1$ corresponds to non-conformal sequences
that are drifting off to infinity, but very slowly.
In conclusion, we assert that the character is $2\,\chit_{1,1}$ for $\xi = \frac{5\pi}6$, but is instead $\chit_{1,1}$ for all other $\xi$ in regime $B$.

The example $(g)$, also belonging to regime $B$, is unclear too, though for slightly different reasons. 
The first excitation appears to be quasi-degenerate, implying that the limiting $q$-series is $1+ mq + \cdots$, for some $m\ge 2$, which never occurs for Kac characters.  In principle, the limiting series could represent a sum of Kac characters.  However,
it is unclear whether the next leading excitations have converging ratios. One could imagine that many of these higher excitations will also give eigenvalues $1$ in the scaling limit. 
A similar behaviour was also observed for $\mathcal{LM}(1,4)$ with
$d = 0$ and small $n$. In this case, the quasi-degeneracy of the first excitation is slowly lifted and the character is identified as $\chit_{1,1}$, with very slow convergence.
We therefore suspect that $(g)$ suffers from the same malady, but confirming this requires more
data than is presently available to us. 

\subsubsection{Virasoro module structure from invariant bilinear forms}\label{sec:Gramidentification}

Knowing the character of a module is, in general, insufficient to determine the structure of its \ssvs{}. As this section will show, for lattice Kac modules, it is possible to gain insight into the limiting Virasoro structures by using the invariant bilinear forms defined in \cref{sec:Gram}.

In the scaling limit, one could expect
that irreducible modules over the algebra $\btl_{n,k}(\beta)$ become irreducible Virasoro modules and, more generally, 
that any indecomposable yet reducible structures of the lattice modules are preserved.   
We will investigate this expectation below in a variety of explicit examples. In many of these, our analysis leads to an 
unambiguous prediction of the structure of the Virasoro module in the scaling limit, as asserted 
by \cref{TheConjecture}.  
When this is the case, the limiting structure of \cref{TheConjecture} may differ from that of the lattice Kac modules 
because it is possible for composition factors of the lattice modules to
drift off to infinity (they would correspond to non-conformal sequences).  
There are also examples for which 
our analysis does not lead to a prediction for the limiting Virasoro module because we do not know the complete 
structure of the lattice Kac modules.  
In such cases, we have confirmed that the partial information obtained is still consistent with \cref{TheConjecture}.
Further structural evidence supporting our conjecture in these cases may be obtained from explicit fusion computations in the continuum, see \cref{sec:NGKfusion}.

We detail this lattice analysis below, noting that it
relies on \cref{conj:highermodes} which posits the existence of Virasoro mode approximations $\Lmn$. 
The character $\chit_{r,s}$ determines which irreducible Virasoro modules appear as composition factors in the scaling limit of the lattice Kac module $\kac_{n,k}^d$. Because the $\Lmn$ are conjectured to be elements of $\btl_{n,k}(\beta)$, the maximal proper submodule of $\stan_{n,k}^d$ (the radical of $\gramprodk{\cdot}{\cdot}$) should be invariant under their action. In this way, we expect that the Virasoro module structure of the scaling limit may be (partially) explored using lattice technology.  More specifically, we may use the corresponding Gram matrix to determine whether, for finite $n$, the sequences that give rise
to the states in each composition factor appear in the radical or quotient of the corresponding standard $\btl_{n,k}(\beta)$-module. This can then be compared with the prediction of \cref{TheConjecture} 
which states that the embedding structure is that of $\Kac{r,s}$. 

The results of this analysis are presented in \cref{sec:Gramdata}, for each logarithmic model $\mathcal{LM}(p,p')$ with $2 \le p' \le 5$, in regimes $A$ and $B$, for $0 \le k \le 3$ and $0 \le d \le 4$. The chosen examples below describe how the analysis is performed, the conclusions that we draw and the difficulties we encounter.

\paragraph{Example (i): $\boldsymbol{\mathcal{LM}(2,3)\ {\rm with}\ k=0,\ d=2}$ in regime $\boldsymbol A$.}
The boundary seam algebra relevant in this example is $\btl_{n,0}(\beta) \simeq \tl_{n}(\beta)$ with $\beta = 1$. Its representation theory is known (see \cref{app:TLrep}). From \eqref{eq:GramdetTL}, the determinant of the Gram matrix $\grammat_{n,0}^2=\grammat_{n}^2$ is non-zero for all $n \in 2\mathbb Z_+$. The radical $\Rad_{n,0}^2 = \Rad_{n}^2$ is thus trivial and $\kac_{n,0}^2 = \stan_{n}^2\simeq \Itl_{n}^2$ is irreducible. In other words, all states may be viewed as belonging to the quotient of the Kac module (by its trivial radical).  We will therefore refer to these states as \emph{quotient states}.

From the character analysis, we find that as $n\rightarrow \infty$, the eigenvalues of $\hamk$ in $\kac_{n,0}^2$ seem to produce $\chit_{1,3}$, an irreducible character for $c=0$ with $\Delta = \frac1{3}$.  The bilinear form analysis is consistent with this irreducibility, but provides no further information, as the knowledge of an irreducible
character already determines the scaling limit of the lattice Kac module. Assuming this irreducibility, we represent the scaling limit as
\begin{equation}
\kac_{n,0}^2 \big|_{(p,p')=(2,3)}^{{\rm reg.}\,A}\  \simeq \ \Itl_{n}^2 \qquad \xrightarrow{n \rightarrow \infty} \qquad
\begin{pspicture}[shift=-0.475](-0.2,-0.6)(0.2,0.6)
\pscircle[fillstyle=solid,fillcolor=black](0,0){0.075}\rput(0,0.3){\scriptsize$\tfrac{1}{3}$}\rput(0,-0.3){\sq},
\end{pspicture}
\ , 
\end{equation}
mapping a sequence of irreducible $\tl_n(\beta)$-modules to an irreducible Virasoro one. The {\scshape (q)} appearing on the \rhs{} is the result of the analysis with the bilinear form, and indicates here that the sequences defining the states of the irreducible $\Delta = \frac1{3}$ module are formed from states in the quotient of $\kac_{n,0}^2$ by its radical. In what follows,
{(\scshape r)} will indicate that a given factor is in the radical and {\scshape (u)} that its status in the radical or quotient is unknown due to insufficient data.

\paragraph{Example (ii): $\boldsymbol{\mathcal{LM}(3,4)\ {\rm with}\ k=1,\ d=1}$ in regime $\boldsymbol A$.}
For $n \in 2 \ZZ_+$, $n \ge 4$, the Gram matrix $\grammat_{n,1}^1$ has a vanishing determinant at $\beta = \sqrt{2}$ and the corresponding standard module $\stan_{n,1}^1 = \kac_{n,1}^1$ is a reducible $\tl_{n+1}(\beta)$-module. Its Loewy diagram is of type $(b)$, see \cref{fig:Loewy}. 
The limiting character is $\chit_{2,2}$ indicating that the corresponding Virasoro module has two irreducible composition factors corresponding to
conformal highest weights $\Delta = \frac1{16}$ and $\Delta = \frac{33}{16}$. There are thus three possibilities for the structure: 
\begin{equation}
\begin{pspicture}[shift=-0.1](-0.2,-0.2)(0.95,0.4)
\pscircle[fillstyle=solid,fillcolor=black](0,0){0.075}\rput(0,0.3){\scriptsize$\tfrac1{16}$}
\rput(0.75,0){\pscircle[fillstyle=solid,fillcolor=black](0.0,0){0.075}\rput(0.0,0.3){\scriptsize$\tfrac{33}{16}$}}
\psline{->}(0.075,0)(0.675,0)
\end{pspicture},
\qquad
\begin{pspicture}[shift=-0.1](-0.2,-0.2)(0.95,0.4)
\pscircle[fillstyle=solid,fillcolor=black](0,0){0.075}\rput(0,0.3){\scriptsize$\tfrac1{16}$}
\rput(0.75,0){\pscircle[fillstyle=solid,fillcolor=black](0.0,0){0.075}\rput(0.0,0.3){\scriptsize$\tfrac{33}{16}$}}
\rput(0.375,0){$\oplus$}
\end{pspicture}
\qquad
\text{or}
\qquad
\begin{pspicture}[shift=-0.1](-0.2,-0.2)(0.95,0.4)
\pscircle[fillstyle=solid,fillcolor=black](0,0){0.075}\rput(0,0.3){\scriptsize$\tfrac1{16}$}
\rput(0.75,0){\pscircle[fillstyle=solid,fillcolor=black](0.0,0){0.075}\rput(0.0,0.3){\scriptsize$\tfrac{33}{16}$}}
\psline{<-}(0.075,0)(0.675,0)
\end{pspicture}\,.
\end{equation}

Applying $\grammat_{n,1}^1$ to the ground state of $\hamk$, acting on 
$\kac_{n,1}^1$, for small $n$, we find that the result is never zero, implying that the ground state is non-zero in the quotient. Pushing this analysis to sequences of higher excited states, we order the eigenvalues $\Lambda_n^i$ in an increasing fashion, determining those whose corresponding eigenvector is observed to belong to the radical.  We call these states the \emph{radical states}.  For example, we tabulate the numbers $\# i$ giving eigenvalues $\Lambda_n^i$ corresponding to radical states for a few values of $n$: 
\begin{equation}
\begin{array}{ll}
n = 10: & \# 4, \#7, \#11, \#12, \#14, \#17, \dots\\[0.1cm]
n = 12: & \# 4, \#7, \#10, \#12, \#15, \#18, \dots\\[0.1cm]
n = 14: & \# 4, \#7, \#10, \#12, \#16, \#18, \dots \\[0.1cm]
n = 16: & \# 4, \#7, \#10, \#11, \#15, \#17, \dots. 
\end{array}\label{eq:dataii}
\end{equation}
These numbers appear to converge as $n$ grows. Considering that $\hat \chit_4(q) 
= 1 + q + 2 q^2 + 3 q^3 + 4 q^4 + 6 q^5 + \dots$ separates into two irreducible contributions as $(1 + q + q^2 + 2q^3 + 2q^4 + 3 q^5+\dots) + q^2(1 + q + 2q^2 + 3 q^3 + \dots)$, the data \eqref{eq:dataii} suggests that the first radical state 
appears at grade $2$ and belongs to the $\Delta=\frac{33}{16}$ irreducible. In fact, this analysis allows us to identify 
each state as belonging to one of the two irreducible factors:
\begin{equation}
\begin{pspicture}[shift=-0.2](-1,-0.95)(1,0.25)   
\rput(0,0){$\hat \chit_4(q)= (1 + q + q^2 + 2q^3 + 2q^4 + 3 q^5 + \cdots) + q^2(1 + q + 2q^2 +3 q^3 + \cdots)$.}
\rput(0.228,0){\rput(0.02,0){\rput(-4.85,-0.45){\scriptsize$\#:$}
\rput(-4.45,-0.45){\scriptsize$1$}
\rput(-3.85,-0.45){\scriptsize$2$}
\rput(-3.22,-0.45){\scriptsize$3$}
\rput(-2.39,-0.45){\scriptsize$5$}
\rput(-2.39,-0.70){\scriptsize$6$}
\rput(-1.45,-0.45){\scriptsize$8$}
\rput(-1.45,-0.70){\scriptsize$9$}
\rput(-0.50,-0.45){\scriptsize$12$}
\rput(-0.50,-0.70){\scriptsize$13$}
\rput(-0.50,-0.95){\scriptsize$14$}
}
\rput(0.38,0){
\rput(1.435,-0.45){\scriptsize$4$}
\rput(2.035,-0.45){\scriptsize$7$}
\rput(2.78,-0.45){\scriptsize$10$}
\rput(2.78,-0.70){\scriptsize$11$}
\rput(3.715,-0.45){\scriptsize$15$}
\rput(3.715,-0.70){\scriptsize$17$}
\rput(3.715,-0.95){\scriptsize$?$}
}
}
\end{pspicture}
\end{equation}
Here, we have used the $n=16$ data. In this case, the
matching works well up to grade $4$; in other similar cases, it is sometimes consistent up to grade $5$ or $6$. We conclude that the $\Delta=\frac1{16}$ and $\Delta=\frac{33}{16}$ factors correspond to the quotients and radicals, respectively, of the Kac modules $\kac_{n,1}^1$ with $n \in 2 \ZZ_+$.

From \cref{conj:highermodes}, Virasoro mode approximations can map states from the quotient to the radical of $\btl_{n,1}(\beta)$, but not the other way around. This therefore rules out the case 
$\begin{pspicture}[shift=-0.1](-0.2,-0.2)(0.95,0.2)
\pscircle[fillstyle=solid,fillcolor=black](0,0){0.075}
\rput(0.75,0){\pscircle[fillstyle=solid,fillcolor=black](0.0,0){0.075}}
\psline{<-}(0.075,0)(0.675,0)
\end{pspicture}$.
The case
$\begin{pspicture}[shift=-0.1](-0.2,-0.2)(0.95,0.2)
\pscircle[fillstyle=solid,fillcolor=black](0,0){0.075}
\rput(0.75,0){\pscircle[fillstyle=solid,fillcolor=black](0.0,0){0.075}}
\psline{->}(0.075,0)(0.675,0)
\end{pspicture}
$
is therefore plausible, but more is needed to rule out
\begin{pspicture}[shift=-0.1](-0.2,-0.2)(0.95,0.2)
\pscircle[fillstyle=solid,fillcolor=black](0,0){0.075}
\rput(0.75,0){\pscircle[fillstyle=solid,fillcolor=black](0.0,0){0.075}}
\rput(0.375,0){$\oplus$}
\end{pspicture}.

\cref{conj:highermodes} asserts that the $\Lmn$ generate a subalgebra of $\btl_{n,k}(\beta)$, so in general one can 
only conclude that these \emph{can} map the quotient into the radical, but cannot ensure that they actually 
\emph{do}. In the present case however, $\btl_{n,1}(\beta) \simeq \tl_{n+1}(\beta)$, so we have an explicit realisation \eqref{eq:modes} of the $\Lmn$. This explicit realisation, though not unique,\footnote{For example, for $m\neq 0$, $\frac 1m[\Lmn,L_{0}^{\textrm{\tiny$(n)$}}]$ should also converge to $L_m$ in the scaling limit.} has the remarkable feature that $\Lmn + L_{-m}^{\textrm{\tiny$(n)$}}$ is $\sum_j e_j \cos(\frac{\pi m j} n)$, up to some multiplicative and additive constants. Each $e_j$ may thus be obtained as a linear combination of the $\Lmn + {L_{-m}^{\textrm{\tiny$(n)$}}}$. Here, the $\Lmn$ generate the full algebra $\tl_{n+1}(\beta)$, for each $n \in 2\, \mathbb Z_+$, so in this case they 
\emph{do} map the quotient into the radical. We thus find that
\begin{equation}
\kac_{n,1}^1 \big|_{(p,p')=(3,4)}^{{\rm reg.}\,A}\  \simeq
\begin{pspicture}[shift=-0.75](1,-0.75)(3,0.75)
\rput(0,-0.5){\rput(1.5,1){$\Itl_n^{1}$}
\rput(2.5,0){$\Itl_n^{5}$}
\psline[linewidth=.8pt,arrowsize=3pt 2]{->}(1.75,0.75)(2.25,0.25)}
\end{pspicture}
 \quad \xrightarrow{n \rightarrow \infty} \qquad
\begin{pspicture}[shift=-0.3](-0.2,-0.4)(0.95,0.4)
\pscircle[fillstyle=solid,fillcolor=black](0,0){0.075}\rput(0,0.3){\scriptsize$\tfrac1{16}$}\rput(0,-0.3){\sq}
\rput(0.75,0){\pscircle[fillstyle=solid,fillcolor=black](0.0,0){0.075}\rput(0.0,0.3){\scriptsize$\tfrac{33}{16}$}\rput(0.0,-0.3){\sr}}
\psline{->}(0.075,0)(0.675,0)
\end{pspicture} \ .
\end{equation}
The indecomposable structure of the standard module thus persists in the scaling limit. 

In similar cases, but with $k>1$, we do not know the actual expressions for the $\Lmn$, so we have no way of showing 
that the indecomposable structures of $\stan_{n,k}^d$ are preserved in the scaling limit. We therefore cannot go 
beyond \emph{plausibility} in extracting the conformal subsingular vector structure using lattice data. If one can find 
approximate Virasoro modes that generate $\btl_{n,k}$, then understanding its representation theory and knowing 
which composition factors drift off to infinity should provide 
even stronger evidence for predicting the limiting Virasoro structures.

Extra input can also be obtained from conformal field theory, assuming that fusion in the continuum does correspond to 
the lattice prescription for fusion described in \cref{sec:LatticeFusion}.
For example, for the $\mathcal{LM}(1,p')$ models, direct sum decompositions for lattice Kac modules with limiting 
characters $\chit_{1,s}$ were ruled out in \cite{RasCla11} on the basis of consistency with conjectured fusion rules in 
the scaling limit. Similar arguments should also allow one to rule out $\Kac{1,s}$ being
decomposable, for all $\mathcal{LM}(p,p')$, consistent with \cref{TheConjecture}.

\paragraph{Example (iii): $\boldsymbol{\mathcal{LM}(1,4)\ {\rm with}\ k=0,\ d=0}$ in regime $\boldsymbol A$.}
For $n \in 2 \ZZ_+$ and $n \ge 4$, $\det \grammat_{n,0}^0 = 0$ and the standard modules $\stan_{n,0}^0$ are again reducible yet indecomposable with two composition factors. Unlike the previous example, acting with the Gram matrix on the ground state gives zero. 
Applying $\grammat_{n,0}^0$ on the excited states, we find that the first quotient states (on which the action of the Gram matrix is non-zero) are
\begin{equation}
\begin{array}{ll}
n=10: &\#11, \#17, \#21, \dots \\[0.1cm]
n=12: &\#30, \#51, \#62, \dots \\[0.1cm]
n = 14: &\#106, \#138, \#163, \dots \\[0.1cm] 
n = 16: & \#304, \#457, \#536, \dots. 
\end{array}
\end{equation} 
The fact that these quotient states correspond to conformal weights that seem to be diverging, as $n$ increases, 
gives us our first example of an irreducible $\tl_n(\beta)$-module that drifts off to infinity as $n \rightarrow \infty$: 
the quotient state sequences appear to be non-conformal. 
The character $\chit_{1,1}$ obtained from the character analysis is irreducible for $c = -\frac{25}2$ and $h=0$, 
thus we conclude that
\begin{equation}
\kac_{n,0}^0 \big|_{(p,p')=(1,4)}^{{\rm reg.}\,A}\  \simeq
\begin{pspicture}[shift=-0.75](1,-0.75)(3,0.75)
\rput(0,-0.5){\rput(1.5,1){$\Itl_n^{0}$}
\rput(2.5,0){$\Itl_n^{6}$}
\psline[linewidth=.8pt,arrowsize=3pt 2]{->}(1.75,0.75)(2.25,0.25)}
\end{pspicture}
 \quad \xrightarrow{n \rightarrow \infty} \qquad
\begin{pspicture}[shift=-0.475](-0.2,-0.6)(0.2,0.6)
\pscircle[fillstyle=solid,fillcolor=black](0,0){0.075}\rput(0,0.3){\scriptsize$0$}\rput(0,-0.3){\sr}
\end{pspicture}
\ . 
\end{equation}
In this example, the indecomposable, reducible structure present at finite $n$ does not survive the scaling limit.

\paragraph{Example (iv): $\boldsymbol{\mathcal{LM}(2,5)\ {\rm with}\ k=1,\ d=1}$ in regime $\boldsymbol B$.}
This is one of the puzzling examples where the eigenvalues seem to produce two copies of the same Kac character, here $\hat\chit_{1,2}$.
We note that $\hat\chit_{1,2}$ splits as the sum of two irreducible characters with $\Delta = -\frac15$ and $\frac{14}5$.

The standard module $\stan_{n,1}^1$ is reducible, of type (b) in \cref{fig:Loewy}. At $\xi = \frac12(\pi+\xi_{\text{exc}}) = \frac{9 \pi}{10}$, the radical/quotient analysis shows that the first radical states are grouped by pairs and at positions 
that seem to converge, with
\begin{equation}
n=16: \ \#7, \#8, \#11, \#12, \#17, \#18, \#19, \#20, \#25, \#26, \dots.
\label{eq:radicalpositions(iv)}
\end{equation}
This is consistent, at least up to grade $5$, with the decomposition in terms of irreducible modules whose conformal weights differ by $3$:
\begin{equation}
\begin{pspicture}[shift=-1.2](0,-1.50)(0.1,0.05)
\rput(0.448,0){
\rput(0.91,-0.6){\scriptsize$\#:$}
\rput(1.58,-0.6){\scriptsize$1$}\rput(1.58,-0.85){\scriptsize$2$}
\rput(2.20,-0.6){\scriptsize$3$}\rput(2.20,-0.85){\scriptsize$4$}
\rput(2.81,-0.6){\scriptsize$5$}\rput(2.81,-0.85){\scriptsize$6$}
\rput(3.57,-0.6){\scriptsize$9$}\rput(3.57,-0.85){\scriptsize$10$}
\rput(4.48,-0.6){\scriptsize$13$}\rput(4.48,-0.85){\scriptsize$14$}\rput(4.48,-1.1){\scriptsize$15$}\rput(4.48,-1.35){\scriptsize$16$}
\rput(5.40,-0.6){\scriptsize$21$}\rput(5.40,-0.85){\scriptsize$22$}\rput(5.40,-1.1){\scriptsize$23$}\rput(5.40,-1.35){\scriptsize$24$}
\rput(7.84,-0.6){\scriptsize$7$}\rput(7.84,-0.85){\scriptsize$8$}
\rput(8.45,-0.6){\scriptsize$11$}\rput(8.45,-0.85){\scriptsize$12$}
\rput(9.13,-0.6){\scriptsize$17$}\rput(9.13,-0.85){\scriptsize$18$}\rput(9.13,-1.1){\scriptsize$19$}\rput(9.13,-1.35){\scriptsize$20$}
}
\end{pspicture}   
2 \hat\chit_2(q)= 2\,(1+ q + q^2 + q^3 + 2 q^4 + 2 q^5 + \dots) + 2 q^3(1+q + 2q^2 + \dots).
\label{eq:matchings(iv)}
\end{equation}
Because $k=1$, as in example (ii), the approximations $\Lmn$ generate $\btl_{1,k} \simeq \tl_{n+1}$ and the indecomposable structures are preserved in the scaling limit. The module structure at $\xi = \frac{9 \pi}{10}$ is the direct sum of two Virasoro Kac modules,\footnote{We can rule out the result being an indecomposable Virasoro module because it would then be a self-extension of the Kac module $\Kac{1,2}$.  Such a module cannot exist by the results of \cite[Sec.~7]{RidSta09}.}
\begin{equation}
\kac_{n,1}^1 \big|_{(p,p')=(2,5)}^{{\rm reg.}\,B}\  \simeq
\begin{pspicture}[shift=-0.75](1,-0.75)(3,0.75)
\rput(0,-0.5){\rput(1.5,1){$\Itl_n^{1}$}
\rput(2.5,0){$\Itl_n^{7}$}
\psline[linewidth=.8pt,arrowsize=3pt 2]{->}(1.75,0.75)(2.25,0.25)}
\end{pspicture} 
 \quad \xrightarrow{n \rightarrow \infty} \qquad
\begin{pspicture}[shift=-0.475](-0.2,-0.6)(0.90,0.6)
\pscircle[fillstyle=solid,fillcolor=black](0,0){0.075}\rput(0,0.3){\scriptsize$\!\!\!-\tfrac1{5}$}\rput(0,-0.3){\sq}
\rput(0.75,0){\pscircle[fillstyle=solid,fillcolor=black](0.0,0){0.075}\rput(0.0,0.3){\scriptsize$\tfrac{14}{5}$}\rput(0.0,-0.3){\sr}}
\psline{->}(0.075,0)(0.675,0)
\end{pspicture}
\ \oplus \ 
\begin{pspicture}[shift=-0.475](-0.15,-0.6)(0.95,0.6)
\pscircle[fillstyle=solid,fillcolor=black](0,0){0.075}\rput(0,0.3){\scriptsize$\!\!\!-\tfrac1{5}$}\rput(0,-0.3){\sq}
\rput(0.75,0){\pscircle[fillstyle=solid,fillcolor=black](0.0,0){0.075}\rput(0.0,0.3){\scriptsize$\tfrac{14}{5}$}\rput(0.0,-0.3){\sr}}
\psline{->}(0.075,0)(0.675,0)
\end{pspicture}\qquad \quad \text{($\xi =  \tfrac{9 \pi}{10}$).}
\end{equation}
Varying $\xi = \frac{9 \pi}{10}$ to $\xi = \frac{9 \pi}{10} + \epsilon$, with $\epsilon = 0.01, 0.02, \dots$, we find that the 
positions \eqref{eq:radicalpositions(iv)} and the matchings \eqref{eq:matchings(iv)} remain unchanged, though the prefactors of $2$ in \eqref{eq:matchings(iv)} become $\epsilon$-dependent. As in the example \eqref{eq:2charexample}, the character approximation seems to be 
$(1+ q^{f_n(\epsilon)})\hat \chit_2$,
with the states associated to $q^{f_n(\epsilon)}\hat \chit_2$ 
drifting off to infinity in the scaling limit. Here, the matchings \eqref{eq:matchings(iv)} are unchanged for small $\epsilon$ and the states associated to 
$q^{f_n(\epsilon)}\hat \chit_2$ are among the first radical states for $n=16$,
only because the corresponding sequences are drifting off to infinity very slowly.
Thus,
for the other $\xi$ in regime~$B$, 
\begin{equation}
\kac_{n,1}^1 \big|_{(p,p')=(2,5)}^{{\rm reg.}\,B}\  \simeq
\begin{pspicture}[shift=-0.75](1,-0.75)(3,0.75)
\rput(0,-0.5){\rput(1.5,1){$\Itl_n^{1}$}
\rput(2.5,0){$\Itl_n^{7}$}
\psline[linewidth=.8pt,arrowsize=3pt 2]{->}(1.75,0.75)(2.25,0.25)}
\end{pspicture} 
 \quad \xrightarrow{n \rightarrow \infty} \qquad
\begin{pspicture}[shift=-0.475](-0.2,-0.6)(0.90,0.6)
\pscircle[fillstyle=solid,fillcolor=black](0,0){0.075}\rput(0,0.3){\scriptsize$\!\!\!-\tfrac1{5}$}\rput(0,-0.3){\sq}
\rput(0.75,0){\pscircle[fillstyle=solid,fillcolor=black](0.0,0){0.075}\rput(0.0,0.3){\scriptsize$\tfrac{14}{5}$}\rput(0.0,-0.3){\sr}}
\psline{->}(0.075,0)(0.675,0)
\end{pspicture}
\qquad \quad \text{($\xi \in (\tfrac{4\pi}5,\pi)\setminus \{\tfrac{9 \pi}{10}\}$).}
\end{equation}
The point $\xi =  \tfrac{9 \pi}{10}$ is noteworthy because each irreducible composition factor of $\stan_{n,1}^1$ appears to split in two in the scaling limit.

\paragraph{Example (v): $\boldsymbol{\mathcal{LM}(1,2)\ {\rm with}\ k=3,\ d=3}$ in regime $\boldsymbol A$.}
For $k \ge 2$, the structures of the standard modules $\stan_{n,k}^d$ over $\btl_{n,k}(\beta)$ are not known. The radical $\Rad_{n,k}^d$ and quotient $\stan_{n,k}^d/\Rad_{n,k}^d$ are nevertheless well-defined subspaces (provided that $\gramprodk{\cdot}{\cdot}$ is itself well-defined), thus allowing us to carry on with the radical/quotient analysis.

Here, $\beta = 0$ and $\det \grammat_{n,3}^3 \neq 0$ for all $n \in 2 \ZZ_+$, 
so the full standard module belongs to the quotient.
The character $\chit_{2,4}$ obtained from the character analysis decomposes as the sum of the irreducible modules for $\Delta = -\frac18$ and $\Delta = \frac{15}8$. \cref{TheConjecture} then states that 
\begin{equation}
\kac_{n,3}^3 \big|_{(p,p')=(1,2)}^{{\rm reg.}\,A}  \quad \xrightarrow{n \rightarrow \infty} \qquad 
\begin{pspicture}[shift=-0.475](-0.2,-0.6)(0.95,0.6)
\pscircle[fillstyle=solid,fillcolor=black](0,0){0.075}\rput(0,0.3){\scriptsize$\!\!\!-\tfrac18$}\rput(0,-0.3){\sq}
\rput(0.75,0){\pscircle[fillstyle=solid,fillcolor=black](0.0,0){0.075}\rput(0.0,0.3){\scriptsize$\tfrac{15}8$}\rput(0.0,-0.3){\sq}}
\rput(0.375,0){$\oplus$}
\end{pspicture}\ .
\end{equation}
On the lattice side, this example is one where, as discussed below \eqref{eq:WallIdentity}, the module $\stan_{n,3}^3 = \kac_{n,3}^3$ is not necessarily irreducible, 
even though $\det \grammat_{n,3}^3 \neq 0$, because $\btl_{n,3}$ is not well defined diagrammatically at $\beta = 0$.
That each irreducible Virasoro module comes from the scaling limit of an irreducible $\btl_{n,k}(\beta)$-module is plausible here. Indeed, even though $\Rad_{n,3}^3$ is trivial, $\stan_{n,3}^3$ is not irreducible; it has a proper submodule generated by link states with $0$ or $1$ arcs going to the seam: Link states with $2$ or $3$ such arcs cannot be created by the action of $\btl_{n,k}(\beta)$ because of vanishing Chebyshev polynomials. 
This can also be traced back to the fact that for $\beta = 0$, $\btl_{n,3}(\beta) \simeq\btl_{n,1}(\beta)$, see \cref{sec:dimB}.

\paragraph{Example (vi): $\boldsymbol{\mathcal{LM}(1,2)\ {\rm with}\ k=3,\ d=2}$ in regime $\boldsymbol A$.}
In this example, the analysis of \cref{sec:characters} allows us to guess the character up to grade $3$. In this case however, the eigenvalues are known exactly and yield $\chit_{2,3}$ according to the conjectured selection rules \cite{PRV12}. This character is the sum of three irreducible characters with $\Delta = 0 ,1$ and $3$. The module $\Kac{2,3}$ has a non-trivial structure, with the first arrow pointing towards the $\Delta = 0$ factor. As we now show, this is reflected in the radical/quotient analysis in a non-trivial way.

The structure of $\stan_{n,3}^2$ is unknown, yet we know from \eqref{eq:Gnkd} that $\det \grammat_{n,3}^2$ is zero for $n \in 2 \mathbb Z_+ - 1$.  The ground state is found to be annihilated by $\grammat_{n,3}^2$, 
in all cases, and is thus in the radical. Moreover, the labels of the first sequences belonging to the radical appear to be converging quickly, with
\begin{equation}
n = 13:  \# 1, \# 4, \#6, \#7, \#9, \#10, \#11, \#15, \#16, \#17, \#18, \#21, \#22, \#23, \#24, \#25, \#26, \#28,\dots.
\end{equation}
Comparing these with the coefficients at each grade, we find that the radical and quotient states can be distributed among the irreducibles as
\begin{equation}
\begin{pspicture}[shift=-1.0](0,-1.30)(0.1,0.15)
\rput(0.41,0){
\rput(0.69,-0.6){\scriptsize$\#:$}
\rput(1.20,-0.6){\scriptsize$1$}
\rput(1.71,-0.6){\scriptsize$4$}
\rput(2.37,-0.6){\scriptsize$6$}
\rput(3.15,-0.6){\scriptsize$9$}
\rput(3.15,-0.85){\scriptsize$10$}
\rput(3.97,-0.6){\scriptsize$15$}
\rput(3.97,-0.85){\scriptsize$16$}
\rput(4.78,-0.6){\scriptsize$21$}
\rput(4.78,-0.85){\scriptsize$22$}
\rput(4.78,-1.10){\scriptsize$23$}
\rput(4.78,-1.35){\scriptsize$24$}
}
\rput(0.25,0){
\rput(6.855,-0.6){\scriptsize$2$}
\rput(7.36,-0.6){\scriptsize$3$}
\rput(7.89,-0.6){\scriptsize$5$}
\rput(8.62,-0.6){\scriptsize$8$}
\rput(8.62,-0.85){\scriptsize$12$}
\rput(9.45,-0.6){\scriptsize$13$}
\rput(9.45,-0.85){\scriptsize$14$}
\rput(9.45,-1.10){\scriptsize$19$}
\rput(10.26,-0.60){\scriptsize$20$}
\rput(10.26,-0.85){\scriptsize$27$}
\rput(10.26,-1.10){\scriptsize$?$}
\rput(10.26,-1.35){\scriptsize$?$}
}
\rput(12.54,-0.6){\scriptsize$7$}
\rput(13.02,-0.6){\scriptsize$11$}
\rput(13.67,-0.6){\scriptsize$17$}
\rput(13.67,-0.85){\scriptsize$18$}
\rput(14.47,-0.60){\scriptsize$25$}
\rput(14.47,-0.85){\scriptsize$26$}
\end{pspicture}   
\hat \chit_6(q) = (1 + q^2 + q^3 + 2q^4 + 2 q^5 + 4 q^6 + \dots) + q\,(1 + q + q^2 +2 q^3 + 3 q^4 + 4 q^5 + \dots) + q^3(1 + q + 2 q^2 + 2 q^3 + \dots).
\label{eq:matching(vi)}
\end{equation} 
In this case, the assignments of states to the $\Delta = 0$ and $\Delta = 3$ factors at a given grade are arbitrary when both have at least one contributing state. The conjectured structure and the radical/quotient are again consistent:
\begin{equation}
\kac_{n,3}^2 \big|_{(p,p')=(1,2)}^{{\rm reg.}\,A}  \quad \xrightarrow{n \rightarrow \infty} \qquad 
\begin{pspicture}[shift=-0.3](-0.2,-0.4)(1.70,0.4)
\pscircle[fillstyle=solid,fillcolor=black](0,0){0.075}\rput(0,0.3){\scriptsize$0$}\rput(0,-0.3){\sr}
\rput(0.75,0){\pscircle[fillstyle=solid,fillcolor=black](0.0,0){0.075}\rput(0.0,0.3){\scriptsize$1$}\rput(0.0,-0.3){\sq}}
\rput(1.50,0){\pscircle[fillstyle=solid,fillcolor=black](0.0,0){0.075}\rput(0.0,0.3){\scriptsize$3$}\rput(0.0,-0.3){\sr}}
\psline{<-}(0.075,0)(0.675,0)
\psline{->}(0.825,0)(1.425,0)
\end{pspicture}\ .
\end{equation}
The fact that the integers in \eqref{eq:matching(vi)} match up to grade $5$ is a great success of this analysis and provides convincing evidence to support \cref{TheConjecture}. Other examples  in \cref{tab:12,tab:13} with more than two composition factors are treated with the same method and have similar accuracy. In these cases, the radical/quotient status of the highest graded factors sometimes remains unknown 
as the analysis usually fails beyond grade $5$. This is then indicated by {\scshape (u)}.

\paragraph{Example (vii): $\boldsymbol{\mathcal{LM}(1,3)\ {\rm with}\ k=2,\ d=1}$ in regime $\boldsymbol A$.}
Here, even though $\btl_{n,2}(\beta)$ is well-defined both diagrammatically and algebraically,
the analysis with the bilinear form runs into a technical difficulty. For formal $\beta$, the entries of the Gram matrix 
$\grammat_{n,2}^1$, $n \in 2\mathbb Z_+-1$, 
all have an overall factor of $U_2 = \beta^2-1$. Upon specialising, we find that
\begin{equation}
\lim_{\beta \rightarrow -1} \gramprod{v}{w}^{\textrm{\tiny$(1)$}}\! = 0 \qquad \text{(\(v,w\in \stan_{n,2}^1\), \(n \in 2 \ZZ_+ - 1\)).} 
\end{equation} 
The entire standard module is then in the radical. This is due to vanishing Chebyshev polynomials produced by loops 
interacting with the Wenzl-Jones projector in the seam. For values $\beta = \beta_c$ where this occurs, 
a non-zero invariant bilinear form on $\kac_{n,k}^d$ is obtained by dividing out by $\beta -\beta_c$ and taking a limit: 
\begin{equation}\label{eq:underform}
\langle\!\langle v|w\rangle\!\rangle^{\textrm{\tiny$(k)$}}\!= \lim_{\beta \rightarrow \beta_{c}}\frac{\gramprodk{v}{w}}{\beta - \beta_{c}}.
\end{equation}
The determinant of the resulting renormalised Gram matrix is given by
\begin{equation}
\lim_{\beta \rightarrow \beta_c} \frac{\det \grammat_{n,k}^d}{(\beta-\beta_c)^{\dim \stan_{n,k}^d}}.
\end{equation} 
From \eqref{eq:Gnkd}, this determinant is non-zero
for $k=2$, $d=1$ and $\beta = -1$, for all $n$. The character obtained from the analysis of \cref{sec:characters} is 
$\chit_{1,2}$ and is irreducible for $c = -7$, $\Delta = -\frac14$. 
The rest of the analysis is identical to that of example (i): The determinant of the renormalised form is non-zero and
\begin{equation}
\kac_{n,2}^1\big|_{(p,p')=(1,3)}^{{\rm reg.}\,A} \qquad \xrightarrow{n \rightarrow \infty} \qquad
\begin{pspicture}[shift=-0.475](-0.2,-0.6)(0.2,0.6)
\pscircle[fillstyle=solid,fillcolor=black](0,0){0.075}\rput(0,0.3){\scriptsize$\!\!\!-\tfrac{1}{4}$}\rput(0,-0.3){\squ}
\end{pspicture} \ ,
\end{equation}
where {\scshape (\underline{q})} indicates that the analysis required the renormalised form \eqref{eq:underform}. The same convention is used in \cref{sec:Gramdata}. 

\paragraph{Example (viii): $\boldsymbol{\mathcal{LM}(1,3)\ {\rm with}\ k=3,\ d=4}$ in regimes $\boldsymbol A$ and $\boldsymbol B$.}
The difficulty in this case is that some matrix entries of $\grammat_{n,3}^4$ are singular at $\beta = -1$. This also happened for $\beta = 0$ in \eqref{eq:Gmat422}. At values $\beta = \beta_c$ where this occurs, a well-defined bilinear form is again obtained by a limiting procedure, but now 
\begin{equation}
\langle\!\langle v|w\rangle\!\rangle^{\textrm{\tiny$(k)$}}\!= \lim_{\beta \rightarrow \beta_{c}}\! (\beta - \beta_c)\, \gramprodk{v}{w}.
\label{eq:overform}
\end{equation}
Its determinant is given by
\begin{equation}
\lim_{\beta \rightarrow \beta_c} (\beta-\beta_c)^{\dim \stan_{n,k}^d} \, \det \grammat_{n,k}^d
\end{equation} 
and, in the current example, this evaluates to zero, for all $n$. Both the radical, with respect to this renormalised bilinear form, and the corresponding quotient of $\kac_{n,3}^4$ are non-trivial.

In regime $B$, the character is $\chit_{1,5}$ and splits as a direct sum of two irreducible ones. The rest of the analysis is identical to example (ii): The quotient and radical of $\kac_{n,3}^4$ each appear to survive the limit,
\begin{equation}
\kac_{n,3}^4 \big|_{(p,p')=(1,3)}^{{\rm reg.}\,B}  \quad \xrightarrow{n \rightarrow \infty} \qquad 
\begin{pspicture}[shift=-0.3](-0.2,-0.4)(0.75,0.4)
\pscircle[fillstyle=solid,fillcolor=black](0,0){0.075}\rput(0,0.3){\scriptsize$0$}\rput(0,-0.3){\sqo}
\rput(0.75,0){\pscircle[fillstyle=solid,fillcolor=black](0.0,0){0.075}\rput(0.0,0.3){\scriptsize$1$}\rput(0.0,-0.3){\sro}}
\psline{->}(0.075,0)(0.675,0)
\end{pspicture}\ \ ,
\end{equation}
with {\scshape (\textoverline{q})} and {\scshape (\textoverline{r})} now indicating that the bilinear form used is of the form 
\eqref{eq:overform}. 

In regime $A$, the character is $\chit_{2,5}$ and \cref{TheConjecture} states that the Virasoro module is $\Kac{2,5}$ and has thus three irreducible components. At the level of $\kac_{n,3}^4$, the quotient appears to drift off to infinity, as in example (iii), so the scaling limit is associated with the radical.  The conjectured limit is
\begin{equation}
\kac_{n,3}^4 \big|_{(p,p')=(1,3)}^{{\rm reg.}\,A}  \quad \xrightarrow{n \rightarrow \infty} \qquad 
\begin{pspicture}[shift=-0.3](-0.2,-0.4)(1.5,0.4)
\pscircle[fillstyle=solid,fillcolor=black](0,0){0.075}\rput(0,0.3){\scriptsize$\!\!\!-\tfrac14$}\rput(0,-0.3){\sro}
\rput(0.75,0){\pscircle[fillstyle=solid,fillcolor=black](0.0,0){0.075}\rput(0.0,0.3){\scriptsize$\tfrac74$}\rput(0.0,-0.3){\sro}}
\rput(1.50,0){\pscircle[fillstyle=solid,fillcolor=black](0.0,0){0.075}\rput(0.0,0.3){\scriptsize$\tfrac{15}4$}\rput(0.0,-0.3){\sro}}
\psline{<-}(0.075,0)(0.675,0)
\psline{->}(0.825,0)(1.425,0)
\end{pspicture}
\ \ .
\end{equation}
This does not confirm or contradict \cref{TheConjecture} because we do not know the substructure of the radical of $\kac_{n,3}^4$.  A refined analysis of its structure is therefore required in this case. Unfortunately, this analysis is beyond the scope of the paper.
The indicators {\scshape (\textoverline q), (\textoverline r)} and {\scshape (\textoverline u)}, 
used here and in \cref{sec:Gramdata}, indicate that the renormalised form \eqref{eq:overform} is used.

%
\section{Conformal analysis} \label{sec:CFTanalysis}
%

In this section, we enter the third and final phase of this work in which we directly establish results in the continuum using \cft{} methods, in order to confirm the lattice analysis and conjectures of the previous sections.  We will first discuss the modular transformation properties of the Virasoro characters and employ a continuous version \cite{CreLog13,RidVer14} of the well known Verlinde formula to determine (Grothendieck) fusion coefficients and, thereby, characters of fusion products.  In particular, we will confirm, at the level of characters, the fusion rules of the Virasoro Kac modules.  These turn out to be consistent with the results of the lattice prescription for fusion in \cref{sec:LatticeFusion} and the characters $\chit_{r,s}$ that are believed to arise in the scaling limit of the lattice Kac modules, see \cref{TheConjecture}.  In \cref{sec:NGKfusion}, we will use the Nahm-Gaberdiel-Kausch algorithm to perform explicit fusion computations that confirm these fusion rules at the level of the modules.  The evidence again supports \cref{TheConjecture}.

%
\subsection{Modular transformations and a Verlinde formula} \label{sec:Verlinde} 
%

For rational \cfts{}, one of the more efficient means to compute fusion rules is to determine the modular S-transforms of the characters and apply the Verlinde formula.  For \lcfts{}, any Verlinde-like formula obtained from the modular properties of characters cannot compute the fusion multiplicities themselves, because characters cannot distinguish between an indecomposable module and the direct sum of its composition factors.  Instead, one expects that such a Verlinde-like formula would only compute the structure constants of the Grothendieck ring of fusion, in which indecomposables are identified with the direct sum of their composition factors.\footnote{This argument also implicitly assumes that the characters of the irreducible modules are linearly independent.}
We will denote the image of a module $M$ in the Grothendieck ring by $\Gr{M}$.  The fusion product $\fuse$ therefore defines the Grothendieck fusion product $\Grfuse$ by
\begin{equation}
\Gr{M} \Grfuse \Gr{N} = \Gr{M \fuse N}.
\end{equation}
A Verlinde-like formula should then determine the composition factors, or equivalently the character, of a fusion product.

In this section, we will determine the modular S-transforms of certain Virasoro module characters and substitute the results into a Verlinde-like formula \eqref{eq:Verlinde} to obtain candidates for the Grothendieck fusion rules.  The formula proposed is, in fact, an extremely natural generalisation of the standard Verlinde formula for rational (bosonic) \cft{} from which it differs only in that a sum is replaced by an integral.  This difference reflects the fact that the spectrum of modules is discrete in the rational case, whereas it is continuous in the theory being considered here.  We note that the same continuous Verlinde formula has been applied to many other \lcfts{} \cite{CreRel11,CreMod12,BabTak12,CreMod13,RidMod14,RidBos14} and the results match the Grothendieck fusion rules (when known) 
perfectly.\footnote{Other, typically model-dependent, Verlinde-like formulae have been proposed for logarithmic
conformal field theories, see \cite{FucNon04,FloVer07,GabFro08,GaiRad09,PeaGro10,RasVer10,PeaCos11}.}
We therefore expect that this will remain true here and will check this with explicit fusion computations in the following section.

There are a few mathematical provisos to this expectation.  Technically, a Grothendieck ring of fusion will only exist if fusing with any given module defines an exact functor.  While this is not true in general, see \cite{GabFus09} for a counterexample, the modules for which it is true form a subring of the fusion ring \cite{KazTenIV94} (assuming that fusion defines a tensor structure on the physically relevant category of modules).  On the other hand, it is strongly believed that the modules that arise as boundary sectors of boundary \cfts{} do define exact functors.  We will therefore assume that the Virasoro Kac modules, defined in \cref{sec:VKac}, have this property.

\begin{conj}
\leavevmode
\begin{itemize}
\item Fusing with any given Kac module
$\Kac{r,s}$, with $r,s \in \ZZ_+$, defines an \emph{exact} endofunctor on the (non-abelian) category of Virasoro modules generated from the Kac modules by fusion.
\item The resulting product on the Grothendieck fusion ring generated by the Kac modules then coincides with that defined by the continuous Verlinde formula of \cref{eq:Verlinde}. 
\end{itemize}
\end{conj}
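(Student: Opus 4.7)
The plan is to attack the two assertions separately, with the first (exactness) being by far the more delicate. For the exactness statement, my strategy is to exploit the embedding $\Kac{r,s} \hookrightarrow \FF{r,s}$ given by \cref{def:VKM} together with the Coulomb gas / free field realisation of the \FFms{}. The idea is that the \FFms{} should be viewed as ``generic'' or typical modules in a continuous family parametrised by the Heisenberg charge, and fusion with such typical modules in a free field setting is expected to be exact by a screening contour argument. Concretely, I would first establish exactness of fusion with the \FFms{} themselves, using the Nahm-Gaberdiel-Kausch realisation of fusion via a coproduct on $\coprodsymb$ acting on a suitable quotient; here the absence of composition factor structure beyond the braid pattern of \cref{fig:FeiginFuchsStructures} should make the coproduct act ``freely'' on each typical factor. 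Then I would argue that $\Kac{r,s}$, being a finitely generated submodule of $\FF{r,s}$ cut out by explicit \svs{} and \ssvs{}, inherits exactness of fusion from its ambient \FFm{}, provided one can identify, for every short exact sequence $\ses{A}{B}{C}$ of modules in the category, a compatible short exact sequence of fusion products with $\FF{r,s}$ that restricts nicely to the Kac submodule. An alternative, more categorical, route would be to invoke Huang-Lepowsky-Zhang tensor category theory for the Virasoro \voa{} and show that the Kac modules lie in (or generate) a suitable Kazhdan-Lusztig-type subcategory where exactness of fusion is automatic.

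For the second assertion, the approach is computational but cleaner. I would first write the character $\fch{r,s}{q}$ of $\Kac{r,s}$ in \eqref{eq:chirs} as a $\ZZ$-linear combination of \FFm{} characters $\fch{r',s'}{q}$ (where these can be read off from the Feigin-Fuchs embedding diagrams), then compute the modular S-transform of the latter directly: the \FFm{} characters are essentially Jacobi theta functions divided by the Dedekind eta, and their S-transforms are classical Gaussian integrals in the Heisenberg charge. This produces a continuous S-kernel $\tSmat{r,s}{r',s'}$ which, upon substitution into the continuous Verlinde formula
\begin{equation}
\fuscoeff{r_1,s_1}{r_2,s_2}{r,s} = \int \frac{\tSmat{r_1,s_1}{r',s'} \tSmat{r_2,s_2}{r',s'} \tSmat{r,s}{r',s'}^{\ast}}{\tSmat{0}{r',s'}} \, \dd(r',s'),
\end{equation}
is expected to reduce to a finite sum of delta-like contributions indexed by a Clebsch-Gordan-like range in the Kac indices. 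The key computation is then to verify that the resulting expression for $\Gr{\Kac{r_1,s_1}} \Grfuse \Gr{\Kac{r_2,s_2}}$ matches, at the level of characters, the fusion rules predicted by the lattice decomposition \eqref{eq:latticersfusion} and by the Nahm-Gaberdiel-Kausch calculations of \cref{sec:NGKfusion}.

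The hard part will be the exactness statement: it is precisely the property that fails in general for logarithmic CFTs, as the counterexample of \cite{GabFus09} shows, so one must use genuine features of the Kac modules that distinguish them from arbitrary Virasoro modules. I expect the cleanest way forward is to restrict attention \emph{a priori} to the tensor subcategory generated by the Kac modules under fusion, and to use induction on the composition length of $\Kac{r,s}$, combined with the long exact sequence for $\ExtGrp{1}{-}{-}$ and the known structure of \Ver{} modules, to reduce exactness of $-\fuse\Kac{r,s}$ to the case where $(r,s)$ labels a corner entry of the Kac table (where $\Kac{r,s}$ is a direct sum of irreducibles). The Verlinde-matching step, while involved, should then be a mechanical, if lengthy, verification using the modular machinery of the standard module formalism \cite{CreLog13,RidVer14} together with the explicit fusion data generated in \cref{sec:NGKfusion}; the nontrivial content is largely absorbed into checking that the delta-function support in the Verlinde integral lines up with the expected Kac labels on both sides.
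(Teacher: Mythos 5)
You should be aware that the paper contains no proof of this statement: it is stated as a conjecture precisely because the exactness property it asserts is known to fail for general modules in logarithmic theories (the paper itself cites the counterexample of Gaberdiel--Runkel--Wood), and the authors simply \emph{assume} it, offering as evidence the consistency of the continuous Verlinde computation of \cref{sec:Verlinde} with the lattice fusion prescription and with the explicit \NGK{} computations of \cref{sec:NGKfusion}. So there is nothing in the paper for your proposal to be compared against as a proof; the relevant question is whether your plan would actually close the gap, and it would not.

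The central problem is your first step. You propose to establish exactness of fusion with the \FFms{} and then let $\Kac{r,s} \subseteq \FF{r,s}$ inherit it. But the paper shows (via the divergence of the Verlinde integral for $\tGr{\FF{\lambda}} \Grfuse \tGr{\FF{\mu}}$, and via the observation that \FFms{} are not quasirational in Nahm's sense) that the fusion product of two \FFms{} is not even well defined as a module with finite graded multiplicities, so ``exactness of $- \fuse \FF{r,s}$'' is not a meaningful starting point; and even if it were, exactness of fusion with an ambient module does not transfer to fusion with a submodule, since $- \fuse \Kac{r,s}$ is a different functor and there is no natural comparison short exact sequence of the kind you invoke. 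The alternative routes you mention (Huang--Lepowsky--Zhang tensor structure for $c \le 1$ Virasoro on a category containing the Kac modules, or an induction on composition length via $\ExtGrp{1}{-}{-}$ in a category that the paper explicitly flags as non-abelian) are themselves open problems at the level of generality required, not reductions. Your treatment of the second bullet has a subtler but analogous gap: computing the S-kernel of the Feigin--Fuchs characters and evaluating the Verlinde integral (which is exactly what the paper does in \cref{sec:Verlinde}) produces a \emph{candidate} Grothendieck product; the conjecture is precisely that this candidate agrees with the product induced by genuine fusion, and that identification cannot be ``mechanically verified'' from the modular data alone --- the paper can only test it in the finitely many cases where the fusion products are computed explicitly by the \NGK{} algorithm. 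In short, both bullets of the statement remain conjectural under your plan, with the exactness step resting on an intermediate claim that the paper's own results show to be false as stated.
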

Having explicitly addressed these technicalities, we can now turn to one of the key results of this section, the confirmation of the lattice fusion product \eqref{eq:latticersfusion}.  Specifically, we deduce the Grothendieck fusion rule
\begin{equation} \label{FR:Kr1K1s}
\Gr{\Kac{r,1}} \Grfuse \Gr{\Kac{1,s}} = \Gr{\Kac{r,s}} \qquad \text{(\(r,s \in \ZZ_+\))},
\end{equation}
valid for all central charges, hence, in particular, for all $p$ and $p'$ defining the logarithmic minimal models.

\subsubsection{Modular transformations}

To study the modular transformation properties of the characters of the Kac modules $\Kac{r,s}$, we start with those of the \FFms{} $\FF{\lambda}$ (see \cref{sec:FFm} for background):
\begin{equation}
\fch{\FF{\lambda}}{\tau} = \frac{q^{\Delta_{\lambda} - c/24}}{\prod_{j=1}^{\infty} \brac{1-q^j}} = \frac{q^{\brac{\lambda - Q/2}^2 / 2}}{\func{\eta}{\tau}} \qquad \text{(\(q = \ee^{2 \pi \ii \tau}\)).}  
\end{equation}
Here, $Q$ is defined in \eqref{eq:DefQ}.  We remark immediately that this character formula cannot distinguish between $\FF{\lambda}$ and its contragredient dual $\FF{Q - \lambda}$.  When a \FFm{} is irreducible, it is self-contragredient, so the characters of the distinct irreducible \FFms{} are linearly independent, as required.  We could, as in \cite{CreLog13}, try to restore the linear independence of all these characters by incorporating the eigenvalue of the Heisenberg zero mode.  However, this turns out to be unnecessary for the application at hand because we will regard these characters as pertaining to modules over the Virasoro algebra, which does not contain this mode.  We do note, however, that the characters of the Kac modules $\Kac{r,s}$, with $r,s \in \ZZ_+$, are all linearly independent.

The modular S-transformation of the Feigin-Fuchs characters, where $\lambda$ is restricted to the range $[Q/2,\infty)$, is given by
\begin{subequations}
\begin{align}
\fch{\FF{\lambda}}{-1/\tau} = \int_{Q/2}^{\infty} \Smat{\FF{\lambda}}{\FF{\mu}} \fch{\FF{\mu}}{\tau} \: \dd \mu, \\
\Smat{\FF{\lambda}}{\FF{\mu}} = 2 \cos \sqbrac{2 \pi \brac{\lambda - Q/2} \brac{\mu - Q/2}}.
\end{align}
\end{subequations}
This follows from a straightforward computation involving a gaussian integral that converges when $\Im \tau > 0$, hence $\abs{q} < 1$.  However, the continuous nature of the spectrum leads us to expect that many of the quantities we subsequently calculate will be singular distributions.  As it can be confusing to allow endpoints to the integration domain, as in $\lambda \in [Q/2,\infty)$, 
when computing with these generalised functions, we will re-define the above S-transformation once and for all so that the integration range is open:
\begin{subequations} \label{eq:DefS}
\begin{align}
\fch{\FF{\lambda}}{-1/\tau} = \int_{\RR} \Smat{\FF{\lambda}}{\FF{\mu}} \fch{\FF{\mu}}{\tau} \: \dd \mu, \\
\Smat{\FF{\lambda}}{\FF{\mu}} = \cos \sqbrac{2 \pi \brac{\lambda - Q/2} \brac{\mu - Q/2}}. \label{S:FF}
\end{align}
\end{subequations}
Of course, we now need to remember that $\FF{\lambda}$ and $\FF{Q - \lambda}$ are identical as far as characters are concerned.  More precisely, we note the identification $\tGr{\FF{\lambda}} = \tGr{\FF{Q - \lambda}}$ in the Grothendieck ring.

We next show that the S-matrix defined by \eqref{S:FF} is symmetric, 
$\tSmat{\FF{\lambda}}{\FF{\mu}} = \tSmat{\FF{\mu}}{\FF{\lambda}}$, and unitary:
\begin{align}
\int_{\RR} \Smat{\FF{\lambda}}{\FF{\mu}} \Smat{\FF{\nu}}{\FF{\mu}}^* \: \dd \mu 
&= \frac{1}{2} \int_{\RR} \tbrac{\cos \sqbrac{2 \pi \brac{\lambda - \nu} \mu} + \cos \sqbrac{2 \pi \brac{\lambda + \nu - Q} \mu}} \: \dd \mu \notag \\
&= \frac{1}{2} \sqbrac{\func{\delta}{\nu = \lambda} + \func{\delta}{\nu = Q - \lambda}} = \func{\delta}{\nu = \lambda}.
\end{align}
In the last step, we have recalled that $\lambda$ and $Q - \lambda$ should be identified when they correspond to indices of \FFms{} because of the Grothendieck identity $\tGr{\FF{\lambda}} = \tGr{\FF{Q - \lambda}}$.  This calculation also shows, using symmetry and reality, that the S-matrix squares to the conjugation permutation, conjugation being trivial (at the level of characters) for Virasoro modules.  These three properties suggest that we may expect meaningful results from the Verlinde formula.

We remark that the setup described above fits in with the \emph{standard module formalism} proposed in \cite{CreLog13,RidVer14} for (logarithmic) \cfts{}.  Specifically, the standard modules of this formalism are the \FFms{} $\FF{\lambda}$ and the typical modules are the $\FF{\lambda}$ with $\lambda \neq \lambda_{r,s} = -\alpha' \brac{r-1} + \alpha \brac{s-1}$, for $r,s \in \ZZ$, see \eqref{eq:DefLambdaRS} and \eqref{eq:DefAlphas}.  
All modules corresponding to $\lambda = \lambda_{r,s}$ are therefore atypical.

In particular, the Kac modules $\Kac{r,s}$ are atypical.  We have defined $\Kac{r,s}$, for $r,s \in \ZZ_+$, as a certain submodule of the \FFm{} $\FF{r,s}$, see \cref{def:VKM}.
Inspection shows that the quotient $\FF{r,s} / \Kac{r,s}$ is not a \FFm{} in general, but that the \ssvs{} of the quotient, and hence its character, match those of $\FF{r,-s}$ (and $\FF{-r,s}$).  We may therefore write
\begin{equation} \label{ch:Kac}
\ch{\Kac{r,s}} = \ch{\FF{r,s}} - \ch{\FF{r,-s}} = \ch{\FF{r,s}} - \ch{\FF{-r,s}} \qquad \text{(\(r,s \in \ZZ_+\)).}
\end{equation}
Virasoro Kac modules $\Kac{r,s}$ with negative labels were not defined in \cref{sec:VKac}. Nevertheless, we remark that if we formally extend these character formulae to arbitrary $r,s \in \ZZ$, then we arrive at
\begin{equation} \label{eq:ExtKacVanish}
\ch{\Kac{r,-s}} = -\ch{\Kac{r,s}} = \ch{\Kac{-r,s}} \qquad \Ra \qquad 
\ch{\Kac{r,0}} = \ch{\Kac{0,s}} = 0, \quad \ch{\Kac{-r,-s}} = \ch{\Kac{r,s}}.
\end{equation}
This last identity is consistent with defining $\Kac{-r,-s}$ to be the contragredient of the Kac module $\Kac{r,s}$ 
(see \cite{BGT12} for another example where such ``generalised'' Kac modules are considered). 

Specialising to $\lambda = \lambda_{r,s}$, the S-matrix coefficients become
\begin{equation}
\Smat{\FF{r,s}}{\FF{\mu}} = \cos \sqbrac{2 \pi \brac{r \alpha' - s \alpha} \brac{\mu - Q/2}}.
\end{equation} 
It follows that the S-matrix entries for transforming the Kac module characters are given by
\begin{equation} \label{S:KF}
\Smat{\Kac{r,s}}{\FF{\mu}} = \Smat{\FF{r,s}}{\FF{\mu}} - \Smat{\FF{r,-s}}{\FF{\mu}} = 2 \sin \sqbrac{2 \pi r \alpha' \brac{\mu - Q/2}} \sin \sqbrac{2 \pi s \alpha \brac{\mu - Q/2}}.
\end{equation}
We remark that the S-transformation maps Kac characters to a linear combination of Feigin-Fuchs characters, not a linear combination of Kac characters.  The point here is that the Feigin-Fuchs characters 
are taken as the preferred topological basis of the space spanned by all the characters.  All computations are therefore performed in this basis.

\subsubsection{A Verlinde formula}

We will take the vacuum module to be the atypical Kac module $\Kac{1,1}$.  The obvious continuum analogue of the Verlinde formula is then
\begin{equation} \label{eq:Verlinde}
\Gr{M} \Grfuse \Gr{N} = \int_{\RR} \fuscoeff{M}{N}{\FF{\nu}} \Gr{\FF{\nu}} \: \dd \nu, \qquad 
\fuscoeff{M}{N}{\FF{\nu}} = \int_{\RR} \frac{\Smat{M}{\FF{\rho}} \Smat{N}{\FF{\rho}} \Smat{\FF{\nu}}{\FF{\rho}}^*}{\Smat{\Kac{1,1}}{\FF{\rho}}} \: \dd \rho.
\end{equation}
We emphasise that any Verlinde-like formula can only compute the Grothendieck fusion coefficients that describe the fusion product at the character level.  With this in mind, we note that the unitarity of the S-matrix ensures that the vacuum module is the unit of the Grothendieck fusion ring.

If we try to take both $M$ and $N$ to be \FFms{}, then we arrive at
\begin{equation}
\fuscoeff{\FF{\lambda}}{\FF{\mu}}{\FF{\nu}} = \int_{\RR} \frac{\cos \sqbrac{2 \pi \brac{\lambda - Q/2} \rho} \cos \sqbrac{2 \pi \brac{\mu - Q/2} \rho} \cos \sqbrac{2 \pi \brac{\nu - Q/2} \rho}}{2 \sin \sqbrac{2 \pi \alpha' \rho} \sin \sqbrac{2 \pi \alpha \rho}} \: \dd \rho,
\label{eq:fuscoeff}
\end{equation}
which is not easily interpreted.  Even after substituting \eqref{eq:fuscoeff} into $\tGr{\FF{\lambda}} \Grfuse \tGr{\FF{\mu}}$ in \eqref{eq:Verlinde} and performing the integration over $\nu$, the remaining integration over $\rho$ is still divergent.  This indicates that the character of $\FF{\lambda} \Grfuse \FF{\mu}$ is not defined --- the multiplicity of states at (at least) one conformal grade is infinite.  However, this is consistent with expectations because \FFms{} are not quasirational in the sense of Nahm \cite{NahQua94}, so their fusion products need not be finitely generated.  This Grothendieck fusion product cannot be computed from the Verlinde formula.

However, Kac modules are quasirational, so we turn to the Grothendieck fusion of $\Kac{1,2}$ with $\FF{\mu}$:
\begin{align}
\fuscoeff{\Kac{1,2}}{\FF{\mu}}{\FF{\nu}} &= 
\int_{\RR} \frac{\sin \sqbrac{4 \pi \alpha \rho} \cos \sqbrac{2 \pi \brac{\mu - Q/2} \rho} \cos \sqbrac{2 \pi \brac{\nu - Q/2} \rho}}{\sin \sqbrac{2 \pi \alpha \rho}} \: \dd \rho \notag \\
&= 2 \int_{\RR} \cos \sqbrac{2 \pi \alpha \rho} \cos \sqbrac{2 \pi \brac{\mu - Q/2} \rho} \cos \sqbrac{2 \pi \brac{\nu - Q/2} \rho} \: \dd \rho \notag \\
&= \frac{1}{2} \sqbrac{\func{\delta}{\nu = \mu - \alpha} + \func{\delta}{\nu = \mu + \alpha} + \func{\delta}{\nu = Q - \mu - \alpha} + \func{\delta}{\nu = Q - \mu + \alpha}} \notag \\
&= \func{\delta}{\nu = \mu - \alpha} + \func{\delta}{\nu = \mu + \alpha}.
\end{align}
Replacing $\Kac{1,2}$ by $\Kac{2,1}$ gives the same result, but with $\alpha$ replaced by $\alpha'$.  We therefore obtain the following Grothendieck fusion rules:
\begin{equation} \label{FR:K12FTyp}
\Gr{\Kac{1,2}} \Grfuse \Gr{\FF{\mu}} = \Gr{\FF{\mu - \alpha}} + \Gr{\FF{\mu + \alpha}}, \qquad 
\Gr{\Kac{2,1}} \Grfuse \Gr{\FF{\mu}} = \Gr{\FF{\mu - \alpha'}} + \Gr{\FF{\mu + \alpha'}}.
\end{equation}
In particular,
\begin{equation} \label{FR:K12Frs}
\Gr{\Kac{1,2}} \Grfuse \Gr{\FF{r,s}} = \Gr{\FF{r,s-1}} + \Gr{\FF{r,s+1}}, \qquad 
\Gr{\Kac{2,1}} \Grfuse \Gr{\FF{r,s}} = \Gr{\FF{r-1,s}} + \Gr{\FF{r+1,s}}.
\end{equation}
Combining this with \eqref{ch:Kac}, we effortlessly arrive at
\begin{equation} \label{FR:K12Krs}
\Gr{\Kac{1,2}} \Grfuse \Gr{\Kac{r,s}} = \Gr{\Kac{r,s-1}} + \Gr{\Kac{r,s+1}}, \qquad 
\Gr{\Kac{2,1}} \Grfuse \Gr{\Kac{r,s}} = \Gr{\Kac{r-1,s}} + \Gr{\Kac{r+1,s}},
\end{equation}
where we need to recall the symmetries of \eqref{eq:ExtKacVanish}, in particular that $\Gr{\Kac{r,0}} = \Gr{\Kac{0,s}} = 0$.  Associativity and distributivity then give the general Grothendieck fusion rules involving Kac modules:
\begin{subequations}
\begin{align}
\Gr{\Kac{r,s}} \Grfuse \Gr{\FF{\mu}} &= \sideset{}{'}\sum_{i=-\brac{r-1}}^{r-1} \ \sideset{}{'}\sum_{j=-\brac{s-1}}^{s-1} \Gr{\FF{\mu + i \alpha' + j \alpha}}, \label{FR:Krs1} \\
\Gr{\Kac{r,s}} \Grfuse \Gr{\Kac{r',s'}} &= \sideset{}{'}\sum_{r''=\abs{r-r'}+1}^{r+r'-1} \ \sideset{}{'}\sum_{s''=\abs{s-s'}+1}^{s+s'-1} \Gr{\Kac{r'',s''}}, \label{FR:Krs2}
\end{align}
\end{subequations}
where the primes on the sums indicate that the index increases in steps of $2$. 

The result \eqref{FR:Krs2} is consistent with the lattice-theoretic definition of the Kac modules with $r,s \in \ZZ_+$ and reduces to \eqref{FR:Kr1K1s} if $s=r'=1$.  These Grothendieck fusion rules have an obvious $\SLA{sl}{2}$ structure. As observed in \cite{PRZ06}, for $\Gr{\Kac{1,s}} \Grfuse \Gr{\Kac{1,s'}}$, this $\SLA{sl}{2}$ structure is automatically built-in from the lattice fusion prescription for $\kac_{n,0}^d \times \kac_{n,0}^{d'}$. We note that \eqref{FR:Krs2} is also consistent with the fusion rules for logarithmic minimal models conjectured in \cite{RasFus07} from lattice data (character analysis and Jordan block arguments). In the next section, we study the fusion products $\Kac{r,1} \fuse \Kac{1,s}$ using the combined power of the Verlinde formula \eqref{eq:Verlinde} and the Nahm-Gaberdiel Kausch fusion algorithm.

\subsection{Fusion of Virasoro Kac modules}\label{sec:NGKfusion}

The Verlinde formula gives the character of the fusion product of two Kac modules.  In favourable circumstances, this is sufficient to completely identify the fusion product, but in general more detailed information is required.  In particular, the Grothendieck fusion rules cannot distinguish between a (reducible but indecomposable) Kac module and its contragredient dual.  Neither can it distinguish between a direct sum of Kac modules and an indecomposable formed by gluing the Kac modules together.

To compute Kac module fusion rules explicitly, we will employ the algorithm of Nahm \cite{NahQua94} and Gaberdiel-Kausch \cite{GabInd96}, referring to the latter for a more detailed exposition.  For our purposes, it is enough to remark that this algorithm allows one to construct the fusion product $M \fuse N$, to any desired grade,\footnote{We assume from here on that the conformal dimensions of the states of both $M$ and $N$ are bounded below.} as a quotient of the usual tensor product (of complex vector spaces) $M \otimes N$ \cite{GabFus94}.  It does this by deriving coproduct formulae defining the action of the chiral algebra on this tensor product from the natural action of the algebra on the \opes{} of the theory.  Actually, locality lets one derive two seemingly different coproducts $\coprodsymb$ and $\altcoprodsymb$; their identification amounts to the quotienting that recovers the fusion product from the tensor product.  For the Virasoro algebra, the coproduct formulae may be distilled down to the following three master equations:
\begin{subequations} \label{eq:Master}
\begin{align}
\coproduct{L_n} &= \sum_{m=-1}^n \binom{n+1}{m+1} \brac{L_m \otimes \wun} + \brac{\wun \otimes L_n} & &\text{($n \ge -1$),} \label{eq:Master1} \\
\coproduct{L_{-n}} &= \sum_{m=-1}^{\infty} \binom{n+m-1}{n-2} \brac{-1}^{m+1} \brac{L_m \otimes \wun} + \brac{\wun \otimes L_{-n}} & &\text{($n \ge 2$),} \label{eq:Master2} \\
\brac{L_{-n} \otimes \wun} &= \sum_{m=n}^{\infty} \binom{m-2}{n-2} \coproduct{L_{-m}} + \brac{-1}^n \sum_{m=-1}^{\infty} \binom{n+m-1}{n-2} \brac{\wun \otimes L_m} & &\text{($n \ge 2$).} \label{eq:Master3}
\end{align}
\end{subequations}
The third formula is derived from the identification $\coprodsymb \sim \altcoprodsymb$, see \cite{GabInd96} for a precise statement, allowing us to omit all reference to $\altcoprodsymb$ entirely.

We remark that the infinite sum in \eqref{eq:Master2} and the second sum in \eqref{eq:Master3} are rendered finite if we assume that the conformal dimensions of the states of the modules being fused are bounded below.  This still leaves the first infinite sum in \eqref{eq:Master3}.  In practice, this sum is likewise truncated because one restricts attention to a collection of finite-dimensional quotients of the fusion product which have the property that $\coproduct{L_{-m}}$ acts as the zero operator, for $m$ sufficiently large, on each.  More precisely, let $U_k$ denote the subalgebra of the \uea{} of the Virasoro algebra generated by the set of monomials
\[
\set{L_{-n_1} \cdots L_{-n_r} \st n_1, \ldots, n_r > 0 ; \ n_1 + \cdots + n_r > k}
\]
and let $M^k$ denote the quotient $M / U_k M$.  Then, it was shown in \cite{NahQua94} for $k=0$ and \cite{GabInd96} for general $k$ that
\begin{equation} \label{eq:TruncFusProd}
\brac{M \fuse N}^k \equiv \frac{M \fuse N}{\func{U_k}{M \fuse N}} \subseteq M^{\text{ss}} \otimes N^k,
\end{equation}
where $M^{\text{ss}} = M / U_{\text{ss}} M$ is the \emph{special subspace} of $M$, obtained by quotienting by the action of the subalgebra 
\begin{equation}
U_{\text{ss}} = \vspn \set{L_{-n_1} \cdots L_{-n_r} \st n_1, \ldots, n_r \ge 2}.
\end{equation}
It is important to note that this result only identifies the (truncated) fusion product as a subspace of a potentially much larger tensor product.  In general, one must determine so-called \emph{spurious states} and set them to zero in the tensor product in order to recover the correct (truncated) fusion product.  These are non-trivial relations that one derives for the elements of $M^{\text{ss}} \otimes N^k$ from relations in $M$ and/or $N$.  Such relations typically arise because $M$ and/or $N$ is formed from a free module (for example, a Verma module) by setting a \sv{} to zero.  We will illustrate such spurious states in the examples of the next section.

\subsubsection{Explicit Examples}

We detail the use of the \NGK{} algorithm with two example computations of Kac module fusion rules of central charge $c=-2$.  In general, this algorithm is extremely computationally-intensive and the unambiguous identification of the fusion product is only possible, even with a computer implementation, for small values of $r$ and $s$.  However, having a working Verlinde formula identifies the character of the fusion product effortlessly and this information can be used to minimise the amount of explicit algorithmic computation needed.

\paragraph{Example 1:  $\Kac{2,1} \fuse \Kac{1,2}$.}
For this example, both $\Kac{2,1}$ and $\Kac{1,2}$ turn out to be irreducible (see \cref{fig:K21K12K22}) and their \hwss{} have respective conformal dimensions $1$ and $-\tfrac{1}{8}$, see \cref{fig:KacTables}.  The Grothendieck fusion rule \eqref{FR:Kr1K1s} tells us that the character of the fusion product is the character of $\Kac{2,2}$.  Since $\Kac{2,2}$ is likewise irreducible (\cref{fig:K21K12K22}), the Verlinde formula alone dictates that the fusion rule must be
\begin{equation} \label{FR:K21K12}
\Kac{2,1} \fuse \Kac{1,2} = \Kac{2,2}.
\end{equation}
In this case, the character of the fusion product uniquely specifies it as a module.
\begin{figure}
\begin{center}
\begin{tikzpicture}
  [->,node distance=1.1cm,>=stealth',semithick]
  \node[] (1) [] {$\FF{2,1}$};
  \node[] (1a) [below of =1] {$1$};
  \node[] (1b) [below of =1a] {$3$};
  \node[] (1c) [below of =1b] {$6$};
  \node[inner sep = 2pt] (1d) [below of =1c] {$\vdots$};
  \path[] (1b) edge node {} (1a)
          (1b) edge node {} (1c)
          (1d) edge node {} (1c);
  \node[] (2) [right = 2cm of 1] {$\FF{1,2}$};
  \node[] (2a) [below of =2] {$-\tfrac{1}{8}$};
  \node[] (2b) [below of =2a] {$\tfrac{15}{8}$};
  \node[] (2c) [below of =2b] {$\tfrac{63}{8}$};
  \node[inner sep = 2pt] (2d) [below of =2c] {$\vdots$};
  \path[] (2b) edge node {} (2a)
          (2b) edge node {} (2c)
          (2d) edge node {} (2c);
  \node[] (3) [right = 2cm of 2] {$\FF{2,2}$};
  \node[] (3a) [below of =3] {$\tfrac{3}{8}$};
  \node[] (3b) [below of =3a] {$\tfrac{35}{8}$};
  \node[] (3c) [below of =3b] {$\tfrac{99}{8}$};
  \node[inner sep = 2pt] (3d) [below of =3c] {$\vdots$};
  \path[] (3b) edge node {} (3a)
          (3b) edge node {} (3c)
          (3d) edge node {} (3c);
  \node[] (4) [right = 2cm of 3] {$\FF{1,3}$};
  \node[] (4a) [below of =4] {$0$};
  \node[] (4b) [below of =4a] {$1$};
  \node[] (4c) [below of =4b] {$3$};
  \node[inner sep = 2pt] (4d) [below of =4c] {$\vdots$};
  \path[] (4a) edge node {} (4b)
          (4c) edge node {} (4b)
          (4c) edge node {} (4d);
  \node[] (5) [right = 2cm of 4] {$\FF{2,3}$};
  \node[] (5a) [below of =5] {$0$};
  \node[] (5b) [below of =5a] {$1$};
  \node[] (5c) [below of =5b] {$3$};
  \node[inner sep = 2pt] (5d) [below of =5c] {$\vdots$};
  \path[] (5b) edge node {} (5a)
          (5b) edge node {} (5c)
          (5d) edge node {} (5c);
\end{tikzpicture}
\caption{The \ssv{} structure, for $c=-2$, of $\FF{2,1}$, $\FF{1,2}$, $\FF{2,2}$, $\FF{1,3}$ and $\FF{2,3}$.  Here, we label the \ssvs{} by their conformal dimensions.  As the corresponding Kac modules are the submodules generated by the \ssvs{} whose grades are less than $2$, $2$, $4$, $3$ and $6$, respectively, we see that $\Kac{2,1}$, $\Kac{1,2}$ and $\Kac{2,2}$ are all irreducible, whereas $\Kac{1,3}$ and $\Kac{2,3}$ correspond to the top two and three \ssvs{} in $\FF{1,3}$ and $\FF{2,3}$, respectively.} \label{fig:K21K12K22}
\end{center}
\end{figure}

To confirm this using the \NGK{} algorithm requires some work.  First, let $\ket{v}$ and $\ket{w}$ be the \hwss{} of $\Kac{2,1}$ and $\Kac{1,2}$, respectively. These states satisfy the relations
\begin{equation} \label{eq:Ex1Rels}
(L_{-1}^2 - 2 L_{-2}) \ket{v} = 0, \qquad (L_{-1}^2 - \tfrac{1}{2} L_{-2}) \ket{w} = 0
\end{equation}
that arise from setting \svs{} to zero in the corresponding Verma modules. \cref{eq:TruncFusProd} now ensures that
\begin{equation}
\brac{\Kac{2,1} \fuse \Kac{1,2}}^0 \subseteq \Kac{2,1}^{\text{ss}} \otimes \Kac{1,2}^0 = \vspn \set{\ket{v} \otimes \ket{w}, L_{-1} \ket{v} \otimes \ket{w}},
\end{equation}
since $L_{-1}^2 \ket{v} = 2 L_{-2} \ket{v} \in U_{\text{ss}} \Kac{2,1}$.  

However, \eqref{FR:K21K12} shows that the fusion product to depth $0$ is only one-dimensional (we actually only need the character of the product for this).  We therefore need to determine a spurious state. For this, we combine the relations \eqref{eq:Ex1Rels} with the coproduct formula \eqref{eq:Master1} for $n=-1$:
\begin{equation}
\coproduct{L_{-1}} = L_{-1} \otimes \wun + \wun \otimes L_{-1} \qquad \Ra \qquad 
\coproduct{L_{-1}^2} = L_{-1}^2 \otimes \wun + 2 \: L_{-1} \otimes L_{-1} + \wun \otimes L_{-1}^2.
\end{equation}
Because we are computing to depth $0$, all states in the image of $\coproduct{L_{-1}}$ and $\coproduct{L_{-1}^2}$ have been set to $0$.  For finding spurious states, the first relation in \eqref{eq:Ex1Rels} is only useful because it determines the special subspace $\Kac{2,1}^{\text{ss}}$. The second, however, may lead to something non-trivial.

Combining the above coproducts with the relations \eqref{eq:Ex1Rels}, we arrive at
\begin{equation} \label{eq:K21K12FindSpur}
0 = \coproduct{L_{-1}^2} \brac{\ket{v} \otimes \ket{w}} = 2 \: L_{-2} \ket{v} \otimes \ket{w} + 2 \: L_{-1} \ket{v} \otimes L_{-1} \ket{w} + \tfrac{1}{2} \ket{v} \otimes L_{-2} \ket{w}.
\end{equation}
We now use the master formulae \eqref{eq:Master} to deal with the three terms of \eqref{eq:K21K12FindSpur}:
\begin{subequations}
\begin{align}
0 &= \coproduct{L_{-1}} \brac{L_{-1} \ket{v} \otimes \ket{w}} = L_{-1}^2 \ket{v} \otimes \ket{w} + L_{-1} \ket{v} \otimes L_{-1} \ket{w} \notag \\
&= 2 \: L_{-2} \ket{v} \otimes \ket{w} + L_{-1} \ket{v} \otimes L_{-1} \ket{w}, \\
0 &= \coproduct{L_{-2}} \brac{\ket{v} \otimes \ket{w}} = L_{-1} \ket{v} \otimes \ket{w} - L_0 \ket{v} \otimes \ket{w} + \ket{v} \otimes L_{-2} \ket{w} \notag \\
&= L_{-1} \ket{v} \otimes \ket{w} - \ket{v} \otimes \ket{w} + \ket{v} \otimes L_{-2} \ket{w}, \\
L_{-2} \ket{v} \otimes \ket{w} &= \ket{v} \otimes L_{-1} \ket{w} + \ket{v} \otimes L_0 \ket{w} = -L_{-1} \ket{v} \otimes \ket{w} - \tfrac{1}{8} \ket{v} \otimes \ket{w}.
\end{align}
\end{subequations}
The last equality uses $\coproduct{L_{-1}} \brac{\ket{v} \otimes \ket{w}} = 0$ again.  Substituting back into \eqref{eq:K21K12FindSpur}, we finally obtain
\begin{equation} \label{eq:K21K12Spur}
0 = \tfrac{3}{2} L_{-1} \ket{v} \otimes \ket{w} + \tfrac{3}{4} \ket{v} \otimes \ket{w} \qquad \Ra \qquad 
L_{-1} \ket{v} \otimes \ket{w} = -\tfrac{1}{2} \ket{v} \otimes \ket{w}.
\end{equation}
This is the spurious state (relation) that we sought.

One could repeat this exercise, starting from $0 = \coproduct{L_{-1}^2} \brac{L_{-1} \ket{v} \otimes \ket{w}}$ for example, but no new linearly independent spurious states will be found.  This is the virtue of using a Verlinde formula to determine the character of the fusion product in advance.\footnote{We remark that one of the drawbacks of the \NGK{} fusion algorithm is that it does not provide any means to determine when one has found a complete basis of spurious states.  In some cases, though not all, this information can be extracted from the character of the fusion product. For instance, the character does not suffice in the next example.}  It therefore only remains to compute the action of $L_0$ on this truncated fusion product (the Virasoro modes $L_n$, with $n \neq 0$, necessarily act as the zero operator because the truncation is to depth $0$).  Since $\set{\ket{v} \otimes \ket{w}}$ is a basis for the depth $0$ fusion product, we find that
\begin{equation}
\coproduct{L_0} \brac{\ket{v} \otimes \ket{w}} = L_{-1} \ket{v} \otimes \ket{w} + L_0 \ket{v} \otimes \ket{w} + \ket{v} \otimes L_0 \ket{w} = \tfrac{3}{8} \ket{v} \otimes \ket{w},
\end{equation}
using \eqref{eq:Master1} and \eqref{eq:K21K12Spur}, and thereby identify $\ket{v} \otimes \ket{w}$ with the \hws{} of $\Kac{2,2}$ (at depth $0$).

However, computing to depth $0$ only determines that the fusion product is a \hwm{} generated by a \hws{} of conformal dimension $\tfrac{3}{8}$.  To prove that this is indeed the irreducible module $\Kac{2,2}$, we need to compute to depth $4$ and verify that the fusion product has only four linearly independent states of conformal dimension $\tfrac{35}{8}$ instead of five.   This is indeed possible, but it requires finding five linearly independent spurious states which reduce the dimension of the truncated fusion product from $16$ to $11$.  Our implementation of the fusion algorithm in \textsc{Maple} achieved this in around twenty minutes. Clearly, it is much more efficient to utilise the results of the Verlinde formula in this case.

\paragraph{Example 2:  $\Kac{2,1} \fuse \Kac{1,3}$.}
In this example, we will combine the information from the Verlinde formula with explicit fusion computations so as to determine the result as efficiently as possible.  The arguments detailed here are typical of those used to arrive at the results summarised in \cref{sec:NGKResults}. 
This fusion product was also analysed in \cite{RasCla11}.
We note that a pure \NGK{} analysis of this fusion product requires computing to depth $5$, which is quite laborious even with a computer.  As we shall see, by first obtaining the character of the product from the Verlinde formula, we may completely identify the structure from a depth $1$ \NGK{} calculation, a significant improvement.
In \cite{RasCla11}, this character was merely inferred from lattice considerations.

The Grothendieck fusion rule \eqref{FR:Kr1K1s} of $\Kac{2,1}$ with $\Kac{1,3}$ states that the character of the result is that of $\Kac{2,3}$.  However, this Kac module is reducible (\cref{fig:K21K12K22}) with precisely three composition factors corresponding to \ssvs{} with conformal dimensions $0$, $1$ and $3$.  There are therefore \emph{nine} candidate structures consistent with this character:\footnote{The structures $0 \lra 3$ and $3 \lra 0$ do not appear here.  The first would correspond to a \hwm{}, hence to a quotient of the chain type Verma module $\Ver{0}$, but it is easy to check that no such quotient exists.  The non-existence of the second structure follows by considering contragredient duals.}
\[
\begin{tikzpicture}[->,node distance=1cm,>=stealth',semithick]
  \node[] (0) [] {$0$};
  \node[] (1) [below of =0] {$1$};
  \node[] (3) [below of =1] {$3$};
  \path[] (0) -- (1) node[midway] {$\oplus$};
  \path[] (3) -- (1) node[midway] {$\oplus$};
\end{tikzpicture}
\qquad
\begin{tikzpicture}[->,node distance=1cm,>=stealth',semithick]
  \node[] (0) [] {$0$};
  \node[] (1) [below of =0] {$1$};
  \node[] (3) [below of =1] {$3$};
  \draw[] (0) -- (1);
  \path[] (3) -- (1) node[midway] {$\oplus$};
\end{tikzpicture}
\qquad
\begin{tikzpicture}[->,node distance=1cm,>=stealth',semithick]
  \node[] (0) [] {$0$};
  \node[] (1) [below of =0] {$1$};
  \node[] (3) [below of =1] {$3$};
  \draw[] (1) -- (0);
  \path[] (3) -- (1) node[midway] {$\oplus$};
\end{tikzpicture}
\qquad
\begin{tikzpicture}[->,node distance=1cm,>=stealth',semithick]
  \node[] (0) [] {$0$};
  \node[] (1) [below of =0] {$1$};
  \node[] (3) [below of =1] {$3$};
  \draw[] (1) -- (3);
  \path[] (0) -- (1) node[midway] {$\oplus$};
\end{tikzpicture}
\qquad
\begin{tikzpicture}[->,node distance=1cm,>=stealth',semithick]
  \node[] (0) [] {$0$};
  \node[] (1) [below of =0] {$1$};
  \node[] (3) [below of =1] {$3$};
  \draw[] (3) -- (1);
  \path[] (0) -- (1) node[midway] {$\oplus$};
\end{tikzpicture}
\qquad
\begin{tikzpicture}[->,node distance=1cm,>=stealth',semithick]
  \node[] (0) [] {$0$};
  \node[] (1) [below of =0] {$1$};
  \node[] (3) [below of =1] {$3$};
  \draw[] (0) -- (1);
  \draw[] (1) -- (3);
\end{tikzpicture}
\qquad
\begin{tikzpicture}[->,node distance=1cm,>=stealth',semithick]
  \node[] (0) [] {$0$};
  \node[] (1) [below of =0] {$1$};
  \node[] (3) [below of =1] {$3$};
  \draw[] (0) -- (1);
  \draw[] (3) -- (1);
\end{tikzpicture}
\qquad
\begin{tikzpicture}[->,node distance=1cm,>=stealth',semithick]
  \node[] (0) [] {$0$};
  \node[] (1) [below of =0] {$1$};
  \node[] (3) [below of =1] {$3$};
  \draw[] (1) -- (0);
  \draw[] (1) -- (3);
\end{tikzpicture}
\qquad
\begin{tikzpicture}[->,node distance=1cm,>=stealth',semithick]
  \node[] (0) [] {$0$};
  \node[] (1) [below of =0] {$1$};
  \node[] (3) [below of =1] {$3$};
  \draw[] (1) -- (0);
  \draw[] (3) -- (1);
  \node[]     [right = 0mm of 3] {$\vphantom{3}$.};
\end{tikzpicture}
\]
We will denote the \ssvs{} of $\Kac{2,1} \fuse \Kac{1,3}$ by $\ket{s_0}$, $\ket{s_1}$ and $\ket{s_3}$ for convenience, the subscript indicating the dimension.

To investigate the structure of $\Kac{2,1} \fuse \Kac{1,3}$, we apply the \NGK{} algorithm to depth $0$.  Both modules turn out to be highest weight, so we let $\ket{v}$ and $\ket{w}$ denote their respective \hwss{}.  The depth $0$ fusion product is contained within $\vspn \set{\ket{v} \otimes \ket{w}, L_{-1} \ket{v} \otimes \ket{w}}$ which has dimension $2$.  As the nine structures above correspond to depth $0$ dimensions of $3$, $2$, $3$, $2$, $3$, $1$, $2$, $2$ and $3$, respectively, this immediately rules out the first, third, fifth and ninth structures.

We therefore search for a spurious state using the relation $L_{-1}^3 \ket{w} = 2 L_{-2} L_{-1} \ket{w}$ in $\Kac{1,3}$, but find none.  This implies that the fusion product is two-dimensional to depth $0$, ruling out the sixth structure above (in which $\Kac{2,1} \fuse \Kac{1,3}$ is a \hwm{}). Thus, one of $\ket{s_0}$, $\ket{s_1}$ or $\ket{s_3}$ must be in the image of $U_0$. It cannot be $\ket{s_0}$, because this state cannot be obtained from another state by acting with negative modes, so the remaining possibilities are that either $\ket{s_1}$ or $\ket{s_3}$, but not both, are in the image of $U_0$.  To determine which, we simply compute $\coproduct{L_0}$:
\begin{subequations}
\begin{align}
\coproduct{L_0} \brac{\ket{v} \otimes \ket{w}} &= L_{-1} \ket{v} \otimes \ket{w} + \ket{v} \otimes \ket{w}, \\
\coproduct{L_0} \brac{L_{-1} \ket{v} \otimes \ket{w}} &= L_{-1}^2 \ket{v} \otimes \ket{w} + 2 \: L_{-1} \ket{v} \otimes \ket{w} = 2 \: L_{-2} \ket{v} \otimes \ket{w} + 2 \: L_{-1} \ket{v} \otimes \ket{w} \notag \\
&= 2 \: \ket{v} \otimes L_{-1} \ket{w} + 2 \: \ket{v} \otimes L_0 \ket{w} + 2 \: L_{-1} \ket{v} \otimes \ket{w} = 0 \\
\Ra \qquad \coproduct{L_0} &= 
\begin{pmatrix}
1 & 0 \\
1 & 0
\end{pmatrix}
.
\end{align}
\end{subequations}
The eigenvalues of $L_0$ on the depth $0$ fusion product are $0$ and $1$, so it is $\ket{s_3}$ which is in the image of $U_0$ and therefore missing from the depth $0$ analysis.  This rules out the second and the seventh structure above.

To complete the structure of $\Kac{2,1} \fuse \Kac{1,3}$, it only remains to decide between the fourth and eighth structures.  These are distinguished by whether the fusion product is indecomposable or not.  We therefore test if $\ket{s_0}$ can be obtained from $\ket{s_1}$ by acting with $\coproduct{L_1}$, requiring the \NGK{} algorithm to depth $1$.  The truncated fusion product is then contained within $\vspn \set{\ket{v} \otimes \ket{w}, L_{-1} \ket{v} \otimes \ket{w}, \ket{v} \otimes L_{-1} \ket{w}, L_{-1} \ket{v} \otimes L_{-1} \ket{w}}$ and searching with our \textsc{Maple} implementation turns up precisely one spurious state, resulting in the relation
\begin{equation}
L_{-1} \ket{v} \otimes L_{-1} \ket{w} = -2 \: \ket{v} \otimes L_{-1} \ket{w}.
\end{equation}
Computing the action of $L_{-1}$, $L_0$ and $L_1$, we obtain
\begin{equation}
\coproduct{L_{-1}} = 
\begin{pmatrix}
0 & 0 & 0 \\
0 & 0 & 0 \\
0 & 1 & 0
\end{pmatrix}
, \qquad 
\coproduct{L_0} = 
\begin{pmatrix}
0 & 0 & 0 \\
0 & 1 & 0 \\
0 & 0 & 2
\end{pmatrix}
, \qquad 
\coproduct{L_1} = 
\begin{pmatrix}
0 & -1 & 0 \\
0 & 0 & 2 \\
0 & -3 & 6
\end{pmatrix}
,
\end{equation}
with respect to the (ordered) eigenbasis
\begin{equation}
\set{\ket{v} \otimes L_{-1} \ket{w}, -\ket{v} \otimes \ket{w} + 2 \: \ket{v} \otimes L_{-1} \ket{w} + L_{-1} \ket{v} \otimes \ket{w}, -L_{-1} \ket{v} \otimes \ket{w} - \ket{v} \otimes L_{-1} \ket{w}}
\end{equation}
of $\coproduct{L_0}$, which we can identify with $\set{\ket{s_0}, \ket{s_1}, \coproduct{L_{-1}} \ket{s_1}}$.

The matrix that we have computed for $\coproduct{L_1}$ does not look correct as it claims that acting on $\ket{s_1}$ gives a linear combination of $\ket{s_0}$ and a dimension $2$ state.  This is a universal issue with computing the action of positive modes resulting from the fact that $\coproduct{L_1}$ should be regarded as mapping the depth $1$ fusion product to its depth $0$ counterpart.\footnote{Similarly, $\coproduct{L_{-1}}$ maps the depth $0$ fusion product to its depth $1$ counterpart.  However, the former is naturally a subspace of the latter, so one can (correctly) extend this to a map between the depth $1$ spaces by adding extra vectors which map to zero.}  Since the depth $0$ counterpart was spanned by $\ket{s_0}$ and $\ket{s_1}$, the easiest fix is to simply ignore any contribution from the dimension $2$ state to $\coproduct{L_1}$ (this issue is discussed at length in, for example, \cite{GabInd96,EbeVir06}). With this fix, we compute that
\begin{equation}
\coproduct{L_1} \ket{s_1} = -\ket{s_0},
\end{equation}
hence that $\Kac{2,1} \fuse \Kac{1,3}$ is an indecomposable module with the eighth
structure above:
\begin{center}
\begin{tikzpicture}[->,node distance=1cm,>=stealth',semithick]
  \node[] (5) [] {$\Kac{2,1} \fuse \Kac{1,3}$:};
  \node[] (5b) [right = 1cm of 5] {$1$}; 
  \node[] (5a) [above of =5b] {$0$};
  \node[] (5c) [below of =5b] {$3$};
  \path[] (5b) edge node {} (5a)
          (5b) edge node {} (5c);
  \node[]      [right = 0mm of 5c] {$\vphantom{3}$.};
\end{tikzpicture}
\end{center}
Comparing with \cref{fig:K21K12K22}, we conclude that $\Kac{2,1} \fuse \Kac{1,3} = \Kac{2,3}$.

\subsubsection{Results} \label{sec:NGKResults}

In this section, we summarise the results of the further explicit computations which we performed with the aid of a \textsc{Maple} implementation of the \NGK{} fusion algorithm for the Virasoro algebra.  Several such summaries have previously appeared in the literature, see \cite{GabInd96,EbeVir06,RidPer07,RidLog07,RidPer08,GabFus09,RasCla11} for example.  However, these works concentrated, to a large degree, on constructing the staggered modules \cite{RohRed96,RidSta09} (and their generalisations) that are responsible for the logarithmic nature of the logarithmic minimal models.  Here, our purpose is to verify the following conjecture which naturally extends the Grothendieck fusion rule \eqref{FR:Kr1K1s}. The data supporting it, presented below, constitutes strong evidence that the lattice prescription for fusion discussed in \cref{sec:LatticeFusion} is correct.
\begin{conj} \label{conj:VirFusion}
The Virasoro Kac modules satisfy the fusion rule
\begin{equation} \label{FR:KxK}
\Kac{r,1} \fuse \Kac{1,s} = \Kac{r,s} \qquad \textup{(\(r,s \in \ZZ_+\)).} 
\end{equation}
\end{conj}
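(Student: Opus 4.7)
The plan is to combine the Grothendieck-level identification of the character of the product, obtained in \cref{sec:Verlinde} via the continuous Verlinde formula, with explicit module-level computations using the Nahm-Gaberdiel-Kausch algorithm, together with a bootstrap based on associativity of fusion. The Grothendieck fusion rule \eqref{FR:Kr1K1s} already identifies the character of $\Kac{r,1} \fuse \Kac{1,s}$ with that of $\Kac{r,s}$, so the content of \cref{conj:VirFusion} is purely structural: one must confirm that the indecomposable structure (in particular, the direction of the arrows in the socle diagrams of \cref{fig:VirKacStructures}) matches. Since $\Kac{r,s}$ and its contragredient dual share the same character, this cannot be settled at the character level alone.

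The second step is to proceed by induction on $r+s$. The base cases $r=1$ or $s=1$ are tautologies, since $\Kac{1,1}$ acts as the fusion unit by unitarity of the Verlinde S-matrix. For the inductive step, assume the conjecture holds for all pairs $(r',s')$ with $r'+s'<r+s$. Using associativity and the induction hypothesis,
\begin{align*}
\Kac{2,1} \fuse \Kac{r-1,s} \simeq \Kac{2,1} \fuse \brac{\Kac{r-1,1} \fuse \Kac{1,s}} \simeq \brac{\Kac{2,1} \fuse \Kac{r-1,1}} \fuse \Kac{1,s}.
\end{align*}
If one separately establishes, at the module level, that $\Kac{2,1} \fuse \Kac{r-1,1} \simeq \Kac{r-2,1} \oplus \Kac{r,1}$ and $\Kac{2,1} \fuse \Kac{r-1,s} \simeq \Kac{r-2,s} \oplus \Kac{r,s}$ (with the convention that $\Kac{0,s}=0$, and noting that these corner/boundary decompositions are direct sums because the corresponding Kac labels sit in distinct Virasoro orbits), then comparing the two expressions yields $\Kac{r,1} \fuse \Kac{1,s} \simeq \Kac{r,s}$. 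Symmetric reasoning handles the induction on $s$.

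The third step is to verify the seed identities $\Kac{2,1} \fuse \Kac{r,s}$ and $\Kac{1,2} \fuse \Kac{r,s}$ via the NGK algorithm. Having the character of the product from the Verlinde formula sharply bounds the depth at which the computation becomes conclusive: one truncates to a depth where every composition factor predicted by \eqref{FR:Krs2} has already appeared, and then computes $\coproduct{L_0}$, $\coproduct{L_1}$ and, if necessary, $\coproduct{L_2}$ on the resulting finite-dimensional quotient to decide between the indecomposable with arrows oriented as in $\Kac{r,s}$, its contragredient, or a decomposable direct sum. As Example $2$ of \cref{sec:NGKfusion} illustrates, a depth-$1$ calculation suffices to rule out direct sum decompositions and to fix arrow directions in the simplest non-trivial case by detecting a non-zero action of $\coproduct{L_1}$ from a subsingular head state down to the socle.

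The principal obstacle is uniformity. The NGK algorithm can only be run for finitely many pairs $(r,s)$, and the structure of $\Kac{r,s}$ becomes progressively more intricate as $\min(r,s)$ grows: chain-type for boundary entries and braid-type with an unbounded sequence of subsingular vectors in $\FF{r,s}$ for interior entries. Closing the induction rigorously would require a uniform structural argument. A plausible route is to first prove that $\Kac{r,1} \fuse \Kac{1,s}$ is cyclic, generated by a vector of conformal weight $\Delta_{r,s}$ whose image in the head of the fusion product is non-zero; combined with the known character, this would force the product to be a quotient of $\Ver{\Delta_{r,s}}$, and comparison of characters then singles it out as $\Kac{r,s}$ rather than its contragredient. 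The lattice evidence of \cref{sec:Gramidentification}, in which the scaling limit of the quotient-defined lattice Kac modules is systematically detected in the quotient (rather than in the radical) of the invariant bilinear form, provides independent support that this is the correct orientation.
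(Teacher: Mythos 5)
There is a genuine gap, and it sits exactly at the step you introduce beyond what the paper itself does. First, note that \eqref{FR:KxK} is stated as a \emph{conjecture}: the paper establishes only the Grothendieck identity \eqref{FR:Kr1K1s} in general (via the continuous Verlinde formula) and then verifies the module-level statement case by case with the \NGK{} algorithm plus structural input (the Projection Lemma and explicit \sv{} checks), as summarised in \cref{tab:NGK}. Your plan of ``Verlinde for the character, NGK for the structure'' therefore matches the paper's evidence-gathering, but your attempt to close it into a proof by induction fails. The seed identities you need at the \emph{module} level, $\Kac{2,1} \fuse \Kac{r-1,s} \simeq \Kac{r-2,s} \oplus \Kac{r,s}$ and its $\Kac{1,2}$ analogue, are false in general: they hold only in the Grothendieck ring, \eqref{FR:K12Krs}. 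Generically such fusions produce reducible but indecomposable \emph{staggered} modules rather than direct sums of Kac modules --- the prototype being Gaberdiel--Kausch's $c=-2$ computation, where $\Kac{1,2} \fuse \Kac{1,2}$ is an indecomposable module with non-diagonalisable $L_0$ whose image in the Grothendieck ring is $\Gr{\Kac{1,1}} + \Gr{\Kac{1,3}}$. Your justification that the two summands ``sit in distinct Virasoro orbits'' is incorrect: within a fixed $\mathcal{LM}(p,p')$, the composition factors of $\Kac{r-2,s}$ and $\Kac{r,s}$ lie in the same block (this is precisely the periodicity $\Delta_{r,s} = \Delta_{r+p,s+p'}$ of the extended Kac table), so non-split extensions are not only possible but are exactly what makes these theories logarithmic. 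With the seeds gone, the associativity bootstrap collapses.

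Your proposed uniform closing argument also does not work as stated. Forcing the fusion product to be ``a quotient of $\Ver{\Delta_{r,s}}$'' by exhibiting a cyclic vector of weight $\Delta_{r,s}$ cannot identify $\Kac{r,s}$ for interior (braid-type) entries: by \cref{def:VKM} and \cref{fig:VirKacStructures}, such $\Kac{r,s}$ is generated by \emph{several} \ssvs{} and is not a \hwm{} (e.g.\ $\Kac{3,4}$ at $c=0$ has an arrow from the weight-$1$ generator up to the weight-$0$ factor), so it is not a quotient of $\Ver{\Delta_{r,s}}$ at all; moreover, even in the highest-weight cases, character plus cyclicity does not by itself exclude the contragredient orientation, which is the very ambiguity the module-level analysis must resolve. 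The correct assessment is that, like the paper, you can support \eqref{FR:KxK} in complete generality at the character level and confirm it structurally in finitely many examples, but no proof along the lines you sketch is available; any honest write-up should present the statement as a conjecture, as the paper does.
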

\noindent We recall that the Kac module $\Kac{r,s}$ is defined to be the submodule of the \FFm{} $\FF{r,s}$ that is generated by the \ssvs{} of grades less than $rs$.

In case $r$ or $s$ is $1$, the fusion rule \eqref{FR:KxK} follows from the fact that $\Kac{1,1}$ is the vacuum module.  If $\Kac{r,s}$ happens to be irreducible, then the fusion rule may be deduced as a corollary of its Grothendieck counterpart \eqref{FR:Krs2}.  In all other cases, we will use Verlinde formula methods to aid with the identification of the fusion product, minimising the amount of explicit calculation required.  We mention that \eqref{FR:KxK} was (partially) verified using this fusion algorithm in \cite{RasCla11} for a few cases with $p=1$.  What follows is a significant extension of these verifications that provides solid evidence for the fusion rules \eqref{FR:KxK} or, equivalently, for the definition that we have adopted for the Kac modules $\Kac{r,s}$.

We mention that, for arbitrary central charges $c \in \RR$, we have also confirmed the fusion rule \eqref{FR:KxK} for a variety of small values of $r$ and $s$.\footnote{When $c \in \mathbb R$ does not have the form given in \eqref{eq:ParByt}, the structures of the Virasoro Kac modules are considerably simpler and are thus relatively easy to analyse.  We will therefore not discuss these central charges here.}
However, in order to facilitate comparison with the lattice results, and for brevity's sake, we will restrict our summary to the central charges of five of the logarithmic minimal models, specifically those with $(p,p') = (1,2)$, $(1,3)$, $(2,3)$, $(2,5)$ and $(3,4)$.  As we shall see, the computation required to fully analyse the interesting fusion products increases very quickly with $p$ and $p'$, effectively limiting the useful results to these models.

For each of these values of $p$ and $p'$, the results of fusing $\Kac{r,1}$ with $\Kac{1,s}$ are tabulated in \cref{tab:NGK} for various values of $r$ and $s$.  In each case where we have been able to identify the fusion product, the result confirms \eqref{FR:KxK}.  The entries of each table are to be interpreted in the following manner:
\begin{itemize}
\item A dash ``$-$'' indicates that no \NGK{} fusion computation is required because the Verlinde formula indicates that the fusion product is irreducible.
\item A number $d$ indicates the depth to which the fusion algorithm must compute in order to identify the fusion product, given that we know the character of the fusion product.
\item A number $d$ may be followed by another $d'$ in parentheses which indicates that while a complete identification 
of the fusion product requires a depth $d'$ computation, it is actually sufficient to only compute to grade $d$.  We will 
discuss the reasons for this shortly with the aid of an example.
\end{itemize}
Some entries in \cref{tab:NGK} are left blank.  These correspond to $r$ and $s$ for which fusion computations were regarded as too difficult.  A light blue background for the table entries indicates that the computation was attempted, but was aborted due to either memory or time constraints (we tended to abort after three or four days of continuous runtime).  A blue background indicates that the fusion computations were successfully performed to depth $d$.

{
\renewcommand{\arraystretch}{1.1} 
\begin{table}
\begin{center}
\begin{tikzpicture}
\node (t12) at (0,0) {\setlength{\extrarowheight}{2pt}
\begin{tabular}{C|CCCCCCC}
\rs & 2 & 3 & 4 & 5 & 6 & 7 & 8 \\
\hline
  2 & - & \IKL 1 & \IKL 2 & \IKL 3 & \IKL 4 & \IKL 5 & \BKL 6 \\
  3 & - & \IKL 2 & \IKL 4 & \IKL 5 &      8 &        &        \\
  4 & - & \IKL 3 & \BKL 6 &        &        &        &        \\
  5 & - & \IKL 4 &      8 &        &        &        &        \\
  6 & - & \BKL 5 &        &        &        &        & 
\end{tabular}
};
\node (n12) [below = 0.2cm of t12] {$(p,p') = (1,2)$};
\node (t13) at (9,0) {\setlength{\extrarowheight}{2pt}
\begin{tabular}{C|CCCCCCC}
\rs & 2 & 3 & 4 & 5 & 6 & 7 & 8 \\
\hline
  2 & - & - & \IKL 1 & \IKL 2 & \IKL 3 & \IKL 4 & \BKL 5 \\
  3 & - & - & \IKL 2 & \IKL 4 & \BKL 6 &      7 &        \\
  4 & - & - & \IKL 3 & \BKL 6 &      9 &        &        \\
  5 & - & - & \IKL 4 &      8 &        &        &        \\
  6 & - & - & \BKL 5 &        &        &        &
\end{tabular}
};
\node (n12) [below = 0.2cm of t13] {$(p,p') = (1,3)$};
\node (t23) at (0,-5) {\setlength{\extrarowheight}{2pt}
\begin{tabular}{C|CCCCCCC}
\rs & 2 & 3 & 4 & 5 & 6 & 7 & 8 \\
\hline
  2 & -      & -      & \IKL 0       & \IKL 0      &      - & \IKL 0       & \IKL 0  \\
  3 & \IKL 0 & \IKL 0 & \IKL 1\ (7)  & \IKL 2\ (7) & \IKL 3 & \BKL 4\ (14) & 5\ (13) \\
  4 & -      & -      & \IKL 2       & \BKL 4      &      6 &              &         \\
  5 & \IKL 0 & \IKL 0 & \IKL 3\ (13) & 6\ (14)     &        &              &         \\
  6 & -      & -      & \BKL 4       &             &        &              &
\end{tabular}
};
\node (n23) [below = 0.2cm of t23] {$(p,p') = (2,3)$};
\node (t25) at (9,-5) {\setlength{\extrarowheight}{2pt}
\begin{tabular}{C|CCCCCCC}
\rs & 2 & 3 & 4 & 5 & 6 & 7 & 8 \\
\hline
  2 & -      & -      & -      & -      & \IKL 0       & \IKL 0       & \IKL 0       \\
  3 & \IKL 0 & \IKL 0 & \IKL 0 & \IKL 0 & \IKL 1\ (13) & \IKL 2\ (11) & \BKL 3\ (11) \\
  4 & -      & -      & -      & -      & \IKL 2       & \BKL 4       &      6       \\
  5 & \IKL 0 & \IKL 0 & \IKL 0 & \IKL 0 & \BKL 3\ (23) &              &              \\
  6 & -      & -      & -      & -      &      4       &              &
\end{tabular}
};
\node (n25) [below = 0.2cm of t25] {$(p,p') = (2,5)$};
\node (t34) at (4.5,-10) {\setlength{\extrarowheight}{2pt}
\begin{tabular}{C|CCCCCC}
\rs & 2 & 3 & 4 & 5 & 6 & 7 \\
\hline
  2 & \IKL 0 & \IKL 0 & -      & \IKL 0       & \IKL 0       & \IKL 0       \\
  3 & -      & -      & -      & \IKL 0       & \IKL 0       & \IKL 0       \\
  4 & \IKL 0 & \IKL 0 & \IKL 0 & \IKL 1\ (13) & \IKL 2\ (14) & \BKL 3\ (17) \\
  5 & \IKL 0 & \IKL 0 & \IKL 0 & \IKL 2\ (17) & \BKL 4\ (14) &              \\
  6 & -      & -      & -      & \BKL 3       &              &
\end{tabular}
};
\node (n34) [below = 0.2cm of t34] {$(p,p') = (3,4)$};
\end{tikzpicture}
\caption{Tables for various $p$ and $p'$ indicating, by a blue background, the values of $r$ and $s$ for which we have been able to explicitly confirm 
the fusion rule $\Kac{r,1} \fuse \Kac{1,s} = \Kac{r,s}$ by combining the Verlinde formula with the \NGK{} fusion algorithm and theorems concerning possible module structures.  Values with a light blue background correspond to cases in which only partial confirmation was achieved.  The entries in the tables describe the depth to which the fusion algorithm would need to compute.} \label{tab:NGK}
\end{center}
\end{table}
}

We illustrate the meaning of the entries of \cref{tab:NGK} of the form $d \ (d')$ with an example.  Take $(p,p') = (2,3)$, so the central charge is $c=0$, and $(r,s) = (3,4)$.  The corresponding entry is $1 \ (7)$.  The Verlinde formula says that the composition factors of the fusion product $\Kac{3,1} \fuse \Kac{1,4}$ are the irreducibles $\Irr{\Delta}$, each appearing with multiplicity $1$, where $\Delta \in \set{0,1,2,5,7,15}$ (these are, of course, the composition factors of $\Kac{3,4}$).  Fusing to depth $0$ results in two linearly independent states of conformal dimensions $0$ and $1$.  These are therefore the \ssvs{} corresponding to the composition factors $\Irr{0}$ and $\Irr{1}$, respectively. It follows that $\Irr{2}$, $\Irr{5}$, $\Irr{7}$ and $\Irr{15}$ are all descended from 
$\Irr{0}$ and/or $\Irr{1}$.  As $\Irr{2}$ cannot be descended from $\Irr{1}$, it must be descended from $\Irr{0}$.  As $\Irr{1}$ is not descended from $\Irr{0}$ (it appears at depth $0$), it now follows from Verma module considerations that $\Irr{5}$, $\Irr{7}$ and $\Irr{15}$ cannot be descended from $\Irr{2}$.\footnote{Any \hw{} (sub)module of weight $\Delta$ must be realisable as a quotient of the Verma module $\Ver{\Delta}$. In the example, $\Irr{5}$ cannot be descended from $\Irr{2}$: If it were, the structure of the submodule generated by $\Irr{0}$ would be $0 \ 
\begin{pspicture}[shift=0](0,-0.1)(0.5,-0.1)
\psline{->}(0,0)(0.5,0)
\end{pspicture}\
2 \ 
\begin{pspicture}[shift=0](0,-0.1)(0.5,-0.1)
\psline{->}(0,0)(0.5,0)
\end{pspicture}\
5\ 
\begin{pspicture}[shift=0](0,-0.1)(0.5,-0.1)
\psline{->}(0,0)(0.5,0)
\end{pspicture}\
\cdots$ which cannot be obtained as a quotient of the braid type Verma module $\Ver{0}$,
see \cref{fig:VermaStructures}.}  Thus, $\Irr{5}$, $\Irr{7}$ and $\Irr{15}$ must all be descended from $\Irr{1}$. 

It is possible that the fusion product is decomposable; if so, it can only decompose as the direct sum of a \hwm{} generated from $\Irr{0}$ and another generated from $\Irr{1}$.  By computing to depth $1$, we rule out this possibility, arriving at the following (partial) structure:
\begin{center}
\begin{tikzpicture}
  [->,>=stealth',semithick]
  \node[] (nom) [] {$\Kac{3,1} \fuse \Kac{1,4}$:};
  \node[] (lb) [right = 3.5cm of nom] {$2$};
  \node[] (la) [above = 1cm of lb] {$0$};
  \node[] (ra) [left = 1.5cm of lb] {$1$};
  \node[] (rb) [below = 7mm of ra] {};
  \node[] (rc) [left = 7mm of rb] {$5$};
  \node[] (rd) [right = 7mm of rb] {$7$};
  \node[] (re) [below = 7mm of rb] {$15$};
  \draw[] (la) -- (lb);
  \draw[] (ra) -- (la);
  \draw[] (ra) -- (rd);
  \draw[] (ra) -- (rc);
  \draw[] (rc) -- (re);
  \draw[] (rd) -- (re);
  \node[]      [right = 17mm of re] {$\vphantom{5}$.};
\end{tikzpicture}
\end{center}
The arrow pointing upwards from $1$ to $0$ indicates that the \ssv{} corresponding to $\Irr{1}$ generates a submodule that includes the \ssv{} corresponding to $\Irr{0}$ (this is the conclusion of the depth $1$ computation).  It only remains to check if the \ssvs{} for $\Irr{5}$, $\Irr{7}$ and $\Irr{15}$ generate submodules containing those of $\Irr{0}$ or $\Irr{2}$; diagrammatically, this asks us to add any further upwards pointing arrows.  Checking this explicitly using the fusion algorithm is infeasible for the foreseeable future (one would need to compute to depth $15$).  However, the Projection Lemma\footnote{The hypotheses of this lemma assume that the module being considered is staggered, meaning in particular that the Virasoro zero mode $L_0$ acts on it non-diagonalisably.  This is not the case for the fusion product considered here, but non-diagonalisability turns out to be irrelevant to the lemma's proof.} for staggered modules \cite[Lem.~5.1]{RidSta09} rules out the arrows from $5$ or $7$ to $0$ and from $15$ to $0$ or $2$.  We are thus left with potential arrows from $5$ and $7$ to $2$.

Another general conclusion of \cite{RidSta09} is that these remaining arrows are \emph{almost always} present.  If one is not, $5$ to $2$ say, then there would be a \ssv{} of conformal dimension $5$ which is actually \emph{singular} in $\Kac{3,1} \fuse \Kac{1,4}$.  The work of \cite{RidSta09} demonstrates that such \ssvs{} are only singular when the data defining the module belongs to a subspace of codimension at least $1$.  In this sense, these \ssvs{} are almost never singular, hence the arrows we are discussing are almost always present.

In this case, we can easily confirm the presence of these arrows, 
following \cite{RidLog07}.  Let $\ket{s_{\Delta}}$ denote a choice of \ssv{} in $\Kac{3,1} \fuse \Kac{1,4}$ of conformal 
dimension $\Delta$.  We may assume that $\ket{s_0} = L_1 \ket{s_1}$ and $\ket{s_2} = L_{-2} \ket{s_0}$.  
As $L_{-1} \ket{s_0} = 0$ and $\ket{s_5}$ must become singular upon quotienting by the submodule generated by 
$\ket{s_0}$, the most general form for $\ket{s_5}$ is
\begin{equation}
\ket{s_5} = (L_{-1}^4 - \tfrac{20}{3} L_{-2} L_{-1}^2 + 4 L_{-2}^2 + 4 L_{-3} L_{-1} - 4 L_{-4}) \ket{s_1} + (a L_{-3} L_{-2} + b L_{-5}) \ket{s_0} \qquad \text{(\(a,b \in \CC\)).}
\end{equation}
It is easy, though somewhat tedious, to check now that
\begin{equation}
L_1 \ket{s_5} = \sqbrac{4 (a+1) L_{-2}^2 + 2 (3b-2) L_{-4}} \ket{s_0}, \qquad 
L_2 \ket{s_5} = (5a+7b+12) L_{-3} \ket{s_0},
\end{equation}
which do not vanish simultaneously for any $a,b \in \CC$.  It follows that there is no singular choice for $\ket{s_5}$ in 
$\Kac{3,1} \fuse \Kac{1,4}$, 
explicitly verifying that the arrow from $5$ to $2$ is present.  A similar calculation verifies that the arrow from $7$ to $2$ is also present.  We note that these calculations are purely representation-theoretic and do not require the explicit construction afforded by the fusion algorithm.

We have therefore verified that the structure of the fusion product matches that of $\Kac{3,4}$:
\begin{center}
\begin{tikzpicture}[->,>=stealth',semithick]
  \node[] (nom) [] {$\Kac{3,1} \fuse \Kac{1,4}$:};
  \node[] (grr) [right = 1.5cm of nom] {};
  \node[] (emp) [above = 0.5cm of grr] {};
  \node[] (c) [left = 0.5cm of emp] {$1$};
  \node[] (b) [right = 0.5cm of emp] {$2$};
  \node[] (a) [above = 0.5cm of emp] {$0$};
  \node[] (d) [below = 1cm of b] {$7$};
  \node[] (e) [below = 1cm of c] {$5$};
  \node[] (f) [below = 2cm of emp] {$15$};
  \draw[] (c) -- (a);
  \draw[] (a) -- (b);
  \draw[] (c) -- (d);
  \draw[] (c) -- (e);
  \draw[] (d) -- (f);
  \draw[] (e) -- (f);
  \draw[] (d) -- (b);
  \draw[] (e) -- (b);
  \node[]     [right = 0.5cm of f] {$\vphantom{7}$.};
\end{tikzpicture}
\end{center}
The conclusion that $\Kac{3,1} \fuse \Kac{1,4} = \Kac{3,4}$ is based on fusion computations to depth $1$, combined with the explicit checks for \svs{} above to verify the presence of arrows from $5$ and $7$ to $2$.  Confirming these arrows directly with the fusion algorithm would require computing to depth $7$ which is well beyond our current capabilities (we were however able to verify the arrow from $5$ to $2$ in this fashion).  This is the meaning of the corresponding entry $1 \ (7)$ in \cref{tab:NGK}:  We can be almost sure of the result if we compute to depth $1$, but to be completely sure, we would have to compute to depth $7$.  As this is, in every case, too deep for our implementation of the fusion algorithm, we instead try to verify the arrows directly using \svs{}.  This latter approach succeeded for each entry of \cref{tab:NGK} that is shaded blue.  Even checking for \svs{} of grade $17$ required less than five minutes with our \textsc{Maple} implementation.

%
\section{Conclusion} \label{sec:Conc}
%

In this paper, we have resolved two outstanding issues concerning the Kac modules introduced in \cite{PRZ06}
almost ten years ago. Although these representations are fundamental to logarithmic minimal models, their precise module structures were previously unknown in general.  This was true on the lattice as well as in the continuum. 
The two issues that we have resolved are then the precise identification of the lattice Kac modules and that of their continuum limits, the Virasoro Kac modules.

To achieve the first, we have introduced the appropriate algebraic framework for the lattice analysis in terms of
quotients of the one-boundary Temperley-Lieb algebras. 
We call these quotients the \emph{boundary seam algebras}.
In this framework, the prescription used in \cite{PRZ06} to study transfer tangles with seams is 
recognised as producing the standard modules over these boundary seam algebras. 
These modules then define the lattice Kac modules. 
For a given sequence of lattice Kac modules, where the bulk lattice size $n$ increases in steps of $2$, 
the first few integer coefficients of the limiting character were extracted, 
for small system sizes, allowing us to guess its Kac labels $r$ and $s$.
The corresponding invariant bilinear forms on the lattice Kac modules were constructed 
and their Gram determinants were computed and used to partially determine
the structures of the standard modules. 

From these lattice results, we inferred the structures of the Virasoro Kac modules arising in the 
scaling limit and found that they correspond to certain finitely generated submodules of Feigin-Fuchs modules. 
This conjecture was subsequently confirmed by two independent conformal field theory analyses.
The characters of the Virasoro Kac modules and the lattice prescription for fusion
were found to be in complete agreement with the results of a Verlinde-like formula, while the precise 
Virasoro Kac module structures were verified in many examples using the Nahm-Gaberdiel-Kausch fusion algorithm.

As indicated, our results for the structure of the Virasoro Kac modules follow from a combination of three approaches:
the character analysis (which combines lattice data with a Verlinde-like formula), the invariant bilinear forms on the 
lattice Kac modules, and the Nahm-Gaberdiel-Kausch fusion algorithm. 
Separately, these approaches all have their limitations, but taken together,
they led us to propose \cref{TheConjecture} for the scaling limit of the lattice Kac modules. 
Indeed, the fact that the results from all three approaches agree perfectly 
makes a very strong case for the validity of this conjecture.

It is noteworthy that all three approaches have the potential to be strengthened. 
Regarding the characters, the transfer tangles with vacuum boundary conditions of the logarithmic minimal
models have been shown to satisfy functional relations in \cite{MDPR14}. We believe that this extends trivially when a seam is added. Similar (albeit simpler) relations were solved analytically for rational models, for instance for the tricritical hard squares model \cite{ObrAna97}, where partition functions were computed. 
There is hope that this can be generalised and used to calculate the limiting characters
of the lattice Kac modules 
analytically. For the invariant bilinear form analysis, although we have obtained partial results, including the
computation of the Gram determinants, 
the full representation theory of the boundary seam algebra $\btl_{n,k}$ has not yet been determined. 
Unravelling the structure of the radicals and quotients of the standard modules would yield extra insight 
that will facilitate the identification of the limiting Virasoro module structures. This identification also requires understanding which composition factors drift off to infinity.
Finally, the fact that Virasoro Kac modules are
realised as submodules of \FFms{} suggests strongly that a complete verification of the fusion rule \eqref{FR:KxK} may be attainable using correlation functions and free field methods.

This paper leaves some questions unanswered and opens several avenues for further work, in particular following up on the representation theory of the boundary seam algebras, as noted above. 
We have gained some insight into regime $B$ for the boundary parameter $\xi$, 
but our analysis remains incomplete, even in the cases that we have examined.
Admittedly, regime $B$ is quite poorly understood at the moment. 

Of particular interest is the extension of our analysis of Kac module fusion to the general case
$\Kac{r,s} \fuse \Kac{r',s'}$. At the lattice level, it is likely that one can describe this using diagrammatic algebras
defined in different ways. A first natural suggestion is that they are quotients of the 
\emph{two}-boundary Temperley-Lieb algebra \cite{MNdeGB04,deGN09}.
This corresponds to implementing the Kac boundary triangles on both sides of the lattice.
Alternatively, as illustrated in \cite{PR07}, one can implement the fusion procedure on 
\emph{one} side of the lattice only, by placing the two seams side-by-side. In this way, 
a fusion product is encoded in a new one-sided boundary condition.
The corresponding algebraic framework is expected to involve
new quotients of the one-boundary Temperley-Lieb algebra.
In either scenario, one should expect a rich representation theory of these quotient algebras because
Virasoro modules upon which $L_0$ acts with higher rank Jordan blocks are believed to appear at the conformal field theory level \cite{EbeVir06}.

We also remark that, just as the Kac boundary conditions did
before this work, the recently introduced Robin boundary conditions \cite{PRT14}
lack a proper algebraic definition. They yield well-defined realisations
of the corresponding transfer matrices, but in general do not result in representations of the one-boundary Temperley-Lieb algebras used in their diagrammatic lattice construction.
We nevertheless expect that quotients of these algebras will provide the appropriate algebraic framework for the 
description of Robin boundary conditions, much akin to the situation for the Kac boundary conditions addressed
in the present paper.

Finally, another obvious direction to explore is the generalisation to other loop models, in particular when one fuses 
$2 \times 2$ blocks of elementary face operators in the logarithmic minimal models.  In this case, the Virasoro 
structures of the scaling limit are expected to be replaced by $N=1$ superconformal structures, at least under some 
circumstances.  A numerical lattice-theoretic study providing evidence for this expectation has recently 
appeared \cite{PeaLog14} and preliminary evidence from fusing 
the superconformal analogues of the Virasoro Kac 
modules will be detailed in \cite{CanFus15}.  It would be interesting to understand the correct algebraic formalism for 
describing these lattice models and to generalise to $m \times m$ fused blocks.

\subsection*{Acknowledgements}

AMD was supported by the National Sciences and Engineering Research Council of Canada and by the Belgian Interuniversity Attraction Poles Program P7/18 through the network DYGEST (Dynamics, Geometry and Statistical Physics). 
JR was supported by the Australian Research Council under the Future Fellowship scheme, project number FT100100774.  DR's research is supported by an Australian Research Council Discovery Project DP1093910.  The authors thank Michael Canagasabey, Azat Gainutdinov and Yvan Saint-Aubin for discussions, Steven Flores for pointing out a link between \cref{prop:BlockDiagGram} and theta nets, and
an anonymous referee for useful comments.

\appendix

%
\section*{Appendices}
%

%
\section{Temperley-Lieb representation theory}\label{app:TLrep}
%

This appendix reviews the finite-dimensional
representation theory of the algebra $\tl_n(\beta)$, including the structures of the irreducible, standard and principal indecomposable modules. 
The presentation follows \cite{RidSta12}.

\paragraph{Generalities.} 

The representation theory of $\tl_n$ is semisimple when $\beta$ is a formal parameter, meaning that all its 
(finite-dimensional) 
modules decompose as a direct sum of irreducible modules.  When $\beta$ is specialised to a complex number, $\beta = q + q^{-1} \in \CC$, the representation theory strongly depends upon whether or not $q$ is a root of unity. In the generic case, when $q$ is not a root of unity, $\tl_n(\beta)$ is again semisimple; when $q$ is a root of unity, the semisimplicity of $\tl_n(\beta)$ is not guaranteed. We will discuss this in detail below.

$\tl_n(\beta)$ is a finite-dimensional complex associative algebra, so it admits a finite number of inequivalent irreducible modules $\Itl_n^d$ and the same finite number of \emph{principal indecomposable modules} $\Ptl_n^d$.  The latter are precisely the modules that appear when writing the left-regular module (where $\tl_n(\beta)$ acts on itself by left-multiplication) as a direct sum of indecomposable modules. Each $\Ptl_n^d$ is therefore 
a projective module, meaning that it cannot be realised as a quotient of any indecomposable module except itself.  When $\tl_n$ is not semisimple, some of the $\Ptl_n^d$ will be reducible yet indecomposable.  We shall discuss their submodule structures shortly.

For almost every specialisation $\beta \in \CC$, the index $d$ of $\Itl_n^d$ and $\Ptl_n^d$ takes integer values from $0$ to $n$, with $d=n \bmod{2}$.  There are then $\lfloor \frac{n+2}{2} \rfloor$ inequivalent irreducible $\tl_n(\beta)$-modules and the same number of principal indecomposable $\tl_n(\beta)$-modules.  The only exception to this rule occurs when $\beta = 0$ and $n$ is even.  In this case, $d=0$ must be excluded from the allowed set of values for $d$; alternatively, one may set $\Itl_n^0 = \Ptl_n^0 = \set{0}$ when $\beta = 0$.

In general, any module $\Mtl$ over a finite-dimensional associative algebra admits a 
\emph{composition series}. 
Such a series consists of a set of submodules $\Mtl_i, i = 1, \dots k$, organised into a filtration,
\begin{equation}
0 = \Mtl_0 \subset \Mtl_1 \subset \Mtl_2 \subset \dots \subset \Mtl_k = \Mtl,
\end{equation}
in such a way that each 
\emph{composition factor} $\Mtl_i/\Mtl_{i-1}$, $i = 1, \dots, k$, is irreducible. The composition series of a given module is not unique, but its composition factors are, up to permutation, by the Jordan-H\"older theorem. In particular, the number of composition factors of $\Mtl$ does not depend upon the choice of composition series.

A $\tl_n(\beta)$-module $\Mtl$ is alternatively described by its 
\emph{Loewy diagram}. This diagram consists of vertices occupied by the (irreducible) composition factors of $\Mtl$ and connected by a collection of arrows that, roughly speaking, indicate the action of $\tl_n(\beta)$. \cref{fig:Loewy} includes all the different Loewy diagrams of the standard and principal indecomposable modules of $\tl_n(\beta)$. A composition factor with no arrow pointing away from it is an irreducible submodule --- the action of $\tl_n(\beta)$ leaves this subspace invariant. More generally, a (not necessarily irreducible) submodule is also indicated by any collection of composition factors whose outwards pointing arrows only point towards another factor in this collection.  Conversely, if a collection of factors possesses an outwards pointing arrow that does not point to a factor in the collection, then the corresponding subspace is not a submodule --- the arrow indicates that one can leave this subspace by acting with $\tl_n(\beta)$. 
\begin{figure}
\begin{center}
\begin{pspicture}[shift=-0.9](1.5,-1.5)(2.5,1)
\rput(2,0){$\Itl_n^{d}$}
\rput(2,-1.75){$(a)$}
\end{pspicture} \qquad
\begin{pspicture}[shift=-0.9](1,-1.5)(3,1)
\rput(0,-0.5){\rput(1.5,1){$\Itl_n^{d}$}
\rput(2.5,0){$\Itl_n^{d_+}$}
\psline[linewidth=.8pt,arrowsize=3pt 2]{->}(1.75,0.75)(2.25,0.25)}
\rput(2,-1.75){$(b)$}
\end{pspicture} \qquad
\begin{pspicture}[shift=-0.9](1,-1.5)(3,1)
\rput(1.5,1){$\Itl_n^{d}$}
\rput(1.5,-1){$\Itl_n^{d}$}
\rput(2.5,0){$\Itl_n^{d_+}$}
\psline[linewidth=.8pt,arrowsize=3pt 2]{->}(1.75,0.75)(2.25,0.25)
\psline[linewidth=.8pt,arrowsize=3pt 2]{->}(2.25,-0.25)(1.75,-0.75)
\rput(2,-1.75){$(c)$}
\end{pspicture}
\qquad 
\begin{pspicture}[shift=-0.9](0,-1.5)(3,1)
\rput(0.5,0){$\Itl_n^{d_-}$}
\rput(1.5,1){$\Itl_n^{d}$}
\rput(1.5,-1){$\Itl_n^{d}$}
\rput(2.5,0){$\Itl_n^{d_+}$}
\psline[linewidth=.8pt,arrowsize=3pt 2]{->}(1.25,0.75)(0.75,0.25)
\psline[linewidth=.8pt,arrowsize=3pt 2]{->}(1.75,0.75)(2.25,0.25)
\psline[linewidth=.8pt,arrowsize=3pt 2]{->}(0.75,-0.25)(1.25,-0.75)
\psline[linewidth=.8pt,arrowsize=3pt 2]{->}(2.25,-0.25)(1.75,-0.75)
\rput(1.5,-1.75){$(d)$}
\end{pspicture}
\qquad
\begin{pspicture}[shift=-0.9](0,-1.5)(2,1)
\rput(0.5,0){$\Itl_n^{d_-}$}
\rput(1.5,1){$\Itl_n^{d}$}
\rput(1.5,-1){$\Itl_n^{d}$}
\psline[linewidth=.8pt,arrowsize=3pt 2]{->}(1.25,0.75)(0.75,0.25)
\psline[linewidth=.8pt,arrowsize=3pt 2]{->}(0.75,-0.25)(1.25,-0.75)
\rput(1,-1.75){$(e)$}
\end{pspicture}
\qquad
\begin{pspicture}[shift=-0.9](1,-1.5)(2,1)
\rput(1.5,0.75){$\Itl_2^{2}$}
\rput(1.5,-0.75){$\Itl_2^{2}$}
\psline[linewidth=.8pt,arrowsize=3pt 2]{->}(1.5,0.35)(1.5,-0.35)
\rput(1.5,-1.75){$(f)$}
\end{pspicture}
\end{center}
\caption{The six types of Loewy diagrams that arise
for standard and principal indecomposable modules of $\tl_n(\beta)$.}
\label{fig:Loewy}
\end{figure} 

\paragraph{Semisimple cases.} Let $\beta = q + q^{-1} \in \CC$ and, if $q$ is a root of unity, let $\ell$ be the smallest positive integer 
satisfying $q^{2 \ell} = 1$. The Temperley-Lieb algebra is semisimple in four cases: 
(i) $q$ is not a root of unity, 
(ii) $q$ is a root of unity with $\ell = 1$, so $q=\pm1$ and $\beta=\pm2$, 
(iii) $q$ is a root of unity with $\ell = 2$, so $q=\pm\ii$ and $\beta=0$, and $n$ is odd, and 
(iv) $q$ is a root of unity with $\ell > 2$ and $n < \ell$. 
In these cases, the semisimplicity of $\tl_n(\beta)$ implies that the irreducible and principal indecomposable modules coincide:  $\Itl_n^d = \Ptl_n^d$, for all $d$. The standard modules, built from the standard action on link states in \cref{sec:TLa}, realise all of these: $\Itl_n^d = \stan_n^d = \Ptl_n^d$. These modules are all described by the Loewy diagram of type $(a)$ (see \cref{fig:Loewy}). 

\paragraph{Non-semisimple cases.} Non-semisimplicity occurs in the following cases: 
(i) $q$ is a root of unity with $\ell = 2$, so $q=\pm\ii$ and $\beta=0$, and $n$ is even, and 
(ii) $q$ is a root of unity with $\ell > 2$ and $n \ge \ell$. 
Under these conditions, there exist representations of $\tl_n(\beta)$ that are reducible yet indecomposable, including some of the standard and principal indecomposable modules. To describe their structures, 
it is convenient to fix a little nomenclature. 

Let us fix $n$ and $\ell$ and denote by $\pi(n)$ the set of integers $\{d\,|\,  0 \le d \le n, \ d = n \bmod{2}\}$. 
We now define an $\ell$-dependent partition of $\pi(n)$. An integer $d \in \pi(n)$ is said to be 
\emph{critical} if $d+1 = 0 \bmod{\ell}$. Each critical integer is taken to form a one-element part of the partition of $\pi(n)$. The other integers are grouped in 
\emph{non-critical orbits}:
\begin{equation}
O_a = \{d \in \pi(n)\,| \,d+1 = \pm a \bmod{2\ell} \} \qquad 0<a<\ell , \quad a+1 = n \bmod{2}.
\end{equation} 
The non-critical orbits form the other parts of the partition of $\pi(n)$. For instance, for $n = 22$ and $\ell = 7$, $\pi(22)$ partitions as $\{\{6\},\{20\}, O_1, O_3, O_5\} $ with $O_1 = \{0,12,14\}$, $O_3 = \{2,10,16\}$ and $O_5=\{4,8,18,22\}$. It is useful to assume that the integers in a non-critical orbit are given in increasing order. For $d$ non-critical, we respectively denote by $d_-$ and $d_+$ the integers appearing immediately before and after $d$ in its orbit, whenever such integers exist.

We now use this construction to give the rules that dictate the structures of the irreducible, standard and projective modules when $\ell>2$ and $n\ge\ell$. 
The case $\ell=2$ with $n$ even is special and discussed at the end. We note that in every case, the irreducible module $\Itl_n^d$ is a quotient of the standard module $\stan_n^d$, which in turn is realised as a quotient of the projective module $\Ptl_n^d$.

First, if $d$ is critical, then $\Itl_n^d = \stan_n^d = \Ptl_n^d$ is irreducible and projective, so the Loewy diagram is of type $(a)$. Similarly, if $d$ is the largest integer in its non-critical orbit, then $\stan_n^d$ is also irreducible, 
$\stan_n^d = \Itl_n^d$, hence its  Loewy diagram is of type $(a)$. Moreover, $\Ptl_n^d$ then has three composition factors and its Loewy diagram is of type $(e)$.  On the other hand, if $d$ is the smallest integer in its non-critical orbit, then $\stan_n^d = \Ptl_n^d$ has two composition factors and the Loewy diagram is of type $(b)$. Finally, if $d$ is neither the smallest, nor the largest, integer in its critical orbit, then $\Itl_n^d \neq \stan_n^d \neq \Ptl_n^d$.  $\stan_n^d$ again has two composition factors with Loewy diagram of type $(b)$, but $\Ptl_n^d$ now has four composition factors and its Loewy diagram is of type $(d)$.

These rules degenerate for the case $\ell = 2$ with $n$ even. In this case, the partition of $\pi(n)$ takes the form of a single non-critical orbit, $O_1 = \set{0, 2, \ldots, n}$. As in the above prescription, the standard module corresponding to the largest integer of $O_1$ is irreducible, 
$\stan_n^n = \Itl_n^n$, and thus of type $(a)$. For $0<d<n$, the modules $\stan_n^d$ are reducible and their Loewy diagram is of type $(b)$ with $d_+ = d+2$. The smallest integer of $O_1$ is where things differ: $\stan_n^0$ is irreducible and, exceptionally, is isomorphic to $\Itl_n^2$, the irreducible quotient of $\stan_n^2$. There are therefore only $\frac n2$ non-isomorphic irreducible modules, with a full set given by the irreducible quotient modules of $\stan_n^d$ for $d = 2, 4, \dots, n$.  We remark that it follows that $\stan_2^2 = \Itl_2^2 \simeq \stan_2^0$, 
which is the only isomorphism between standard $\tl_n(\beta)$-modules, for any $\beta \in \CC$, with different labels $d$.

For the principal indecomposables, we first restrict to $n\ge4$.  $\Ptl_n^n$ is then described by the Loewy diagram of type $(e)$, with $d = n$ and $d_- = n-2$, and $\Ptl_n^d$ for $2<d<n$ by the diagram of type $(d)$, with $d_\pm = d\pm2$. The case $d = 2$  
is special as $\Ptl_n^{2}$ is represented by the Loewy diagram of type $(c)$ with $d_+ = 4$. This is the only occurrence of the type $(c)$ Loewy diagram.  The case where $n=2$ degenerates even further as there is now a single principal indecomposable module $\Ptl_2^2$ and its Loewy diagram is of type $(f)$.  This is also the only time this Loewy diagram occurs.

%
\section{Presenting the one-boundary \TL{} algebras} \label{app:TLone} 
%

In this appendix, we prove that the diagrammatic one-boundary \TL{} algebra $\tlone_n$, introduced in \cref{sec:TLone}, is actually isomorphic to its algebraic counterpart defined by the generators $e_j$, $j=1, 2, \ldots, n$, and the relations \eqref{eq:defTL} and \eqref{eq:defTL1}.  Demonstrating such isomorphisms, hence that the relations of the algebraic definition form a complete set, is a very
subtle business in general.  For the \TL{} algebra itself, Kauffman indicated the first direct proof in \cite{KauInv90}, at the level of an example, referring to Jones' pre-diagrammatic work \cite{JonInd83} where many of the arguments had already appeared. A full proof may be found in \cite{RidSta12}.  We are not aware of any proofs in the analogous case of the one-boundary \TL{} algebras, though the equivalence is widely recognised and frequently used.  Motivated by the need to establish the completeness of the relations of the boundary seam algebras, see \cref{prop:BTLComplete}, we 
provide a proof here.

For the rest of this appendix, we shall distinguish the diagrammatic and algebraic definitions of the one-boundary \TL{} algebra, denoting the former by $\diatl_n$ and the latter by $\algtl_n$.  The map defined by \eqref{eq:iso} and \eqref{eq:ison} on the $\algtl_n$ generators extends to all words $\set{e_{j_1} \cdots e_{j_m}}$ using the $\diatl_n$ product defined in \cref{sec:boundaryTLs}; this defines a map 
\begin{equation}
 \psi \colon \algtl_n \to \diatl_n.
\end{equation}
Our first task is to establish that $\psi$ is surjective.  Because the connectivities satisfy the algebraic relations \eqref{eq:defTL} and \eqref{eq:defTL1}, this map then
extends to a surjective homomorphism 
and we conclude that $\diatl_n$ is a quotient of $\algtl_n$.  To prove that $\psi$ is actually an isomorphism, it is enough to show that $\dim \algtl_n \le \dim \diatl_n$.  This, in turn, may be demonstrated by finding a spanning set for the words of $\algtl_n$ whose cardinality is equal to the number \eqref{eq:diatlcounting}  
of connectivities, these forming a basis of $\diatl_n$ by definition.

\begin{Proposition} \label{prop:Surjective}
The map $\psi$ is surjective.
\end{Proposition}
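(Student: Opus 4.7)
The plan is to establish surjectivity of $\psi$ by induction on $n$, combined with a secondary induction on the number of boundary arcs of the given connectivity. For the base cases $n=0$ and $n=1$, one verifies directly: the unique connectivity in $\diatl_0$ is the empty diagram, identified with the unit of $\algtl_0$, while $\diatl_1$ has exactly two connectivities, $I$ and the double boundary half-arc, corresponding to the generators $I$ and $e_1$ of $\algtl_1$.

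For the inductive step, let $c \in \diatl_n$. I first dispose of the case where $c$ has no boundary arcs: then $c$ lies in the diagrammatic Temperley-Lieb subalgebra of $\diatl_n$ generated by $\psi(e_1), \ldots, \psi(e_{n-1})$, and classical surjectivity for $\tl_n$ (see, e.g., \cite{RidSta12}) produces the required word in $e_1, \ldots, e_{n-1}$. Otherwise, $c$ has $2k \ge 2$ boundary arcs. My strategy is to peel off a specific pair of boundary arcs by multiplying on the right (or left) by an appropriate word, thereby reducing to a connectivity in $\diatl_n$ with $2k-2$ boundary arcs, to which the inductive hypothesis applies.

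The peeling step relies on the following building block: for $1 \le i \le n-1$, the word
\begin{equation*}
w_i \;=\; (e_{n-1}e_{n-2}\cdots e_i)\,e_n\,(e_i e_{i+1}\cdots e_{n-1}) \ \in\ \algtl_n
\end{equation*}
maps under $\psi$ to the connectivity having a boundary half-arc from the top node $i$ and from the bottom node $i$, together with vertical through-lines at all other positions (and closing factors of $\beta$, $\beta_1$, $\beta_2$ that can be absorbed). Given $c$ with at least one boundary arc on top, I would locate the bottom-most boundary arc on the top edge (call its foot $i$) and use the standard Temperley-Lieb subalgebra, via the diagrammatic manipulation that any connectivity may be deformed by isotopy through the generators, to pair $i$ with a chosen boundary arc on the bottom edge. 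The product $\psi(w_i)\cdot c'$ or $c' \cdot \psi(w_i)$, with $c' \in \diatl_n$ built from the inductive hypothesis, then equals $c$ up to the loop parameters $\beta, \beta_1, \beta_2$ which are handled by the defining relations \eqref{eq:defTL1}.

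The principal obstacle, as usual in such diagrammatic arguments, is not the existence of an expression but the careful bookkeeping: one must ensure that (i) the chosen boundary arc pair is truly the \emph{innermost} one along the right edge, so that the move to position $n$ by conjugation with bulk generators does not cross any other boundary arc and thereby violate the non-crossing condition, and (ii) that the parity assignments $(0)/(1)$ of boundary arcs in \eqref{eq:beta12rule} are respected, so that the multiplicative factors produced by collapsing loops are scalars in $\algtl_n$ (which are invertible after specialisation away from zeros of the relevant Chebyshev polynomials, or formal powers of $\beta,\beta_1,\beta_2$ in the generic case) and can be divided out. With these verifications, the induction closes and every connectivity is realised as a word in $e_1, \ldots, e_n$.
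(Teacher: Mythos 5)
Your key building block is false, and this breaks the induction. For $1\le i<n$ the connectivity you describe — boundary half-arcs at top node $i$ and bottom node $i$ with vertical through-lines at all other positions — does not exist in $\diatl_n$: a through-line at any position $j>i$ would have to cross the boundary arc emanating from node $i$, violating planarity. A direct computation of $\psi(w_i)$ confirms this. For $i=n-1$ the relation \eqref{eq:defTL1} gives $e_{n-1}e_ne_{n-1}=\beta_1 e_{n-1}$, a diagram with \emph{no} boundary arcs at all; and for, say, $n=3$, $i=1$ one finds $\psi\bigl((e_2e_1)e_3(e_1e_2)\bigr)=\beta\cdot D'$ where $D'$ has bottom node $1$ and top node $1$ tied to the boundary but bottom nodes $2,3$ and top nodes $2,3$ joined by \emph{arcs}, not through-lines, together with a closed-loop factor. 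So the consecutive-index word $w_i$ neither produces the diagram you need nor, in general, any diagram with exactly one boundary arc on each edge at position $i$. Moreover the peeling step is never actually established: you do not show that $c'\cdot\psi(w_i)$ (or $\psi(w_i)\cdot c'$) reproduces the target connectivity $c$; generically such a product caps boundary arcs of $c'$ into boundary loops (producing $\beta_1$ or $\beta_2$ via \eqref{eq:beta12rule}) or rearranges the bulk arcs, and your appeal to "bookkeeping" leaves precisely the hard part unproved. (The mention of Chebyshev polynomials is also misplaced here — those arise for the seam algebras $\btl_{n,k}$, not for $\tlone_n$, whose parameters $\beta,\beta_1,\beta_2$ are independent and invertible over the function field.)

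The paper's proof avoids all of this by building the word for a given connectivity $D$ directly, in three steps, rather than by induction: first it places every pair of boundary connections of $D$ using ``snakes'' whose indices increase in steps of \emph{two}, $(e_{j+1}e_{j+3}\cdots e_{n-1})\,e_n\,(e_j e_{j+2}\cdots e_{n-2})$, right-multiplying by a fresh $e_n$ for each new pair (the step-two structure is exactly what keeps the diagram planar, avoids closed loops, and automatically respects the parity constraint between $j$ and $n$); this leaves simple arcs to the right of the boundary-connected nodes. Then the remaining bulk structure — matching the top-to-bottom arcs of $D$ and converting simple arcs into the required nestings — is achieved purely with the generators $e_1,\dots,e_{n-1}$, exactly as in the $\tl_n$ surjectivity proof of \cite{RidSta12}. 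This yields an exact preimage, $\psi(w)=D$ with no scalar factors to divide out. If you want to salvage an inductive peeling argument, you would need a correct analogue of the snake lemma above (with the step-two indices and the correct resulting diagram) and a genuine verification that the peeled diagram composes back to $c$; as written, the proof has a gap at its central step.
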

\begin{proof}
Given an arbitrary connectivity $D \in \diatl_n$, we must construct a word $w \in \algtl_n$ such that $\psi(w) = D$.  This construction proceeds in three steps. The first step notes which nodes of the top and bottom edges are connected to the right boundary in $D$ and constructs a word $w'$ such that $D' = \psi(w')$ is a connectivity which also has these top and bottom nodes connected to the right boundary. The connections of the remaining nodes of $w'$ differ,
in general, from those of $w$. Any bottom (top) node to the right of a node connected to the boundary must be connected to another bottom (top) node, provided that it is not itself connected to the boundary.  In $D'$, each of these nodes connects to the node 
immediately to the left or right; we call such connections \emph{simple arcs}.  The second step constructs from $w'$ a new word $w''$ whose corresponding diagram $D'' = \psi(w'')$ is obtained from $D'$ by adding simple arcs to the left of all nodes connected to the boundary in such a way that the arcs of $D''$ that connect a top to a bottom node precisely match the corresponding arcs in $D$. It may happen that this step is trivial and $w'' = w'$. Finally, the third step converts simple arcs into nests of arcs as required to arrive at $w$ and $D$.  The construction of $w'$ involves the generator $e_n$, whereas the subsequent steps only require the $e_j$, with $j<n$, and are precisely the steps needed to prove the corresponding surjectivity result for the \TL{} algebra $\tl_n$, detailed in \cite[Sec.~2]{RidSta12}.  We will therefore only discuss the first step here.

In $D$, the number of connections to the boundary is necessarily even.  We will order these boundary connections, starting from the rightmost on the bottom edge (if one exists), proceeding leftwards along the bottom edge, then taking the leftmost on the top edge and proceeding rightwards along the top.  Suppose that the first boundary connection is node $j$ on the bottom edge.  Then, we construct an $\algtl_n$ word, and its image under $\psi$, by displacing the bottom boundary connection of the generator $e_n$ from node $n$ to node $j$ using a 
\emph{snake}, constructed by left- and right-multiplying by products of generators whose indices increase in steps of two:
\begin{equation}
\psi \colon (e_{j+1} e_{j+3} \cdots e_{n-1}) e_n (e_j e_{j+2} \cdots e_{n-2}) \mapsto \ 
\begin{pspicture}[shift=-1.05](0.0,-1.15)(2.4,1.05)
\pspolygon[fillstyle=solid,fillcolor=lightlightblue](0.0,1.05)(2.4,1.05)(2.4,0.35)(0.0,0.35)
\rput(1.4,0.7){\small$...$}
\rput(0.4,0.7){\small$...$}
\psline[linecolor=blue,linewidth=1.5pt]{-}(0.2,0.35)(0.2,1.05)
\psline[linecolor=blue,linewidth=1.5pt]{-}(0.6,0.35)(0.6,1.05)
\psline[linecolor=blue,linewidth=1.5pt]{-}(2.2,0.35)(2.2,1.05)
\psarc[linecolor=blue,linewidth=1.5pt]{-}(1.9,0.35){0.1}{0}{180}
\psarc[linecolor=blue,linewidth=1.5pt]{-}(1.5,0.35){0.1}{0}{90}
\psarc[linecolor=blue,linewidth=1.5pt]{-}(1.3,0.35){0.1}{90}{180}
\psarc[linecolor=blue,linewidth=1.5pt]{-}(0.9,0.35){0.1}{0}{180}
\psarc[linecolor=blue,linewidth=1.5pt]{-}(1.9,1.05){0.1}{180}{0}
\psarc[linecolor=blue,linewidth=1.5pt]{-}(1.5,1.05){0.1}{270}{0}
\psarc[linecolor=blue,linewidth=1.5pt]{-}(1.3,1.05){0.1}{180}{270}
\psarc[linecolor=blue,linewidth=1.5pt]{-}(0.9,1.05){0.1}{180}{0}
\pspolygon[fillstyle=solid,fillcolor=lightlightblue](0.0,-0.35)(2.4,-0.35)(2.4,0.35)(0.0,0.35)
\rput(1.4,0.0){\small$...$}
\rput(0.4,0.0){\small$...$}
\psline[linecolor=blue,linewidth=1.5pt]{-}(0.2,0.35)(0.2,-0.35)
\psline[linecolor=blue,linewidth=1.5pt]{-}(0.6,0.35)(0.6,-0.35)
\psline[linecolor=blue,linewidth=1.5pt]{-}(0.8,0.35)(0.8,-0.35)
\psline[linecolor=blue,linewidth=1.5pt]{-}(1.0,0.35)(1.0,-0.35)
\psline[linecolor=blue,linewidth=1.5pt]{-}(1.2,0.35)(1.2,-0.35)
\psline[linecolor=blue,linewidth=1.5pt]{-}(1.6,0.35)(1.6,-0.35)
\psline[linecolor=blue,linewidth=1.5pt]{-}(1.8,0.35)(1.8,-0.35)
\psline[linecolor=blue,linewidth=1.5pt]{-}(2.0,0.35)(2.0,-0.35)
\psarc[linecolor=blue,linewidth=1.5pt]{-}(2.4,-0.35){0.2}{90}{180}
\psarc[linecolor=blue,linewidth=1.5pt]{-}(2.4,0.35){0.2}{180}{270}
\pspolygon[fillstyle=solid,fillcolor=lightlightblue](0.0,-0.35)(2.4,-0.35)(2.4,-1.05)(0.0,-1.05)
\rput(1.4,-0.7){\small$...$}
\rput(0.4,-0.7){\small$...$}
\psline[linecolor=blue,linewidth=1.5pt]{-}(0.2,-1.05)(0.2,-0.35)
\psline[linecolor=blue,linewidth=1.5pt]{-}(0.6,-1.05)(0.6,-0.35)
\psline[linecolor=blue,linewidth=1.5pt]{-}(0.8,-1.05)(0.8,-0.35)
\psarc[linecolor=blue,linewidth=1.5pt]{-}(2.1,-0.35){0.1}{180}{0}
\psarc[linecolor=blue,linewidth=1.5pt]{-}(1.7,-0.35){0.1}{180}{0}
\psarc[linecolor=blue,linewidth=1.5pt]{-}(1.1,-0.35){0.1}{180}{0}
\psarc[linecolor=blue,linewidth=1.5pt]{-}(2.1,-1.05){0.1}{0}{180}
\psarc[linecolor=blue,linewidth=1.5pt]{-}(1.7,-1.05){0.1}{0}{180}
\psarc[linecolor=blue,linewidth=1.5pt]{-}(1.1,-1.05){0.1}{0}{180}
\rput(0.8,-1.25){$_j$}
\rput(2.2,-1.25){$_n$}
\end{pspicture}
\ = \ 
\begin{pspicture}[shift=-0.35](0.0,-0.45)(2.4,0.35)
\pspolygon[fillstyle=solid,fillcolor=lightlightblue](0.0,-0.35)(2.4,-0.35)(2.4,0.35)(0.0,0.35)
\rput(1.4,0.0){\small$...$}
\rput(0.4,0.0){\small$...$}
\psline[linecolor=blue,linewidth=1.5pt]{-}(0.2,0.35)(0.2,-0.35)
\psline[linecolor=blue,linewidth=1.5pt]{-}(0.6,0.35)(0.6,-0.35)
\psarc[linecolor=blue,linewidth=1.5pt]{-}(1.9,0.35){0.1}{180}{0}
\psarc[linecolor=blue,linewidth=1.5pt]{-}(1.5,0.35){0.1}{270}{0}
\psarc[linecolor=blue,linewidth=1.5pt]{-}(1.3,0.35){0.1}{180}{270}
\psarc[linecolor=blue,linewidth=1.5pt]{-}(0.9,0.35){0.1}{180}{0}
\psarc[linecolor=blue,linewidth=1.5pt]{-}(2.4,0.35){0.2}{180}{270}
\psarc[linecolor=blue,linewidth=1.5pt]{-}(2.1,-0.35){0.1}{0}{180}
\psarc[linecolor=blue,linewidth=1.5pt]{-}(1.7,-0.35){0.1}{0}{180}
\psarc[linecolor=blue,linewidth=1.5pt]{-}(1.1,-0.35){0.1}{0}{180}
\psarc[linecolor=blue,linewidth=1.5pt]{-}(1.0,-0.35){0.2}{90}{180}
\psline[linecolor=blue,linewidth=1.5pt]{-}(1.0,-0.15)(1.3,-0.15)
\psline[linecolor=blue,linewidth=1.5pt]{-}(1.5,-0.15)(2.4,-0.15)
\rput(0.8,-0.55){$_j$}
\rput(2.2,-0.55){$_n$}
\end{pspicture}
\ .
\end{equation}
This works because $j$ and $n$ must have the same parity.  Note that node $n$ on the top edge is still connected to the boundary.  This boundary connection can also be moved, using another snake, to the node $j'$ that has the second boundary connection of $D$.  If $j'$ is on the bottom edge, its parity will be opposite that of $j$ and $n$.  If $j'$ lies on the top edge, its parity matches that of $j$ and $n$.  Either way, a snake may be constructed as before. The products of generators that are needed to form this snake need not terminate with $e_{n-1}$ or $e_{n-2}$.  We note that it may happen that such a product is not required, in which case the standard convention applies that an empty product gives the identity $I$.

We proceed in this fashion, moving the boundary arcs to the desired places using snakes, until the boundary arcs of $D$ have been constructed.  After moving two arcs in this manner, the boundary arcs of the original generator $e_n$ are both used; the remedy is to right-multiply by another $e_n$ and repeat.  The snakes clearly leave simple arcs to the right of the boundary connection nodes, so the result is the word $w' \in \algtl_n$ that is the input for the second and last steps of the construction.  These steps are detailed in \cite{RidSta12}, completing the proof.
\end{proof}

We illustrate this construction for a connectivity with $n=8$ and four nodes connected to the boundary:
\begin{equation}
D = \ 
\begin{pspicture}[shift=-0.6](0.0,-0.7)(1.8,0.7)
\pspolygon[fillstyle=solid,fillcolor=lightlightblue](0.0,-0.7)(1.8,-0.7)(1.8,0.7)(0.0,0.7)
\psarc[linecolor=blue,linewidth=1.5pt]{-}(0.7,-0.7){0.1}{0}{180}
\psarc[linecolor=blue,linewidth=1.5pt]{-}(1.5,-0.7){0.1}{0}{180}
\psarc[linecolor=blue,linewidth=1.5pt]{-}(0.3,0.7){0.1}{180}{0}
\psarc[linecolor=blue,linewidth=1.5pt]{-}(1.3,0.7){0.1}{180}{0}
\psarc[linecolor=blue,linewidth=1.5pt]{-}(1.3,0.7){0.3}{180}{0}
\psarc[linecolor=blue,linewidth=1.5pt]{-}(1.5,-0.7){0.3}{90}{180} 
\psline[linecolor=blue,linewidth=1.5pt]{-}(1.5,-0.4)(1.8,-0.4)
\psarc[linecolor=blue,linewidth=1.5pt]{-}(1.5,-0.7){0.5}{90}{180} 
\psline[linecolor=blue,linewidth=1.5pt]{-}(1.5,-0.2)(1.8,-0.2)
\psarc[linecolor=blue,linewidth=1.5pt]{-}(1.1,-0.7){0.7}{90}{180} 
\psline[linecolor=blue,linewidth=1.5pt]{-}(1.1,0.0)(1.8,0.0)
\psline[linecolor=blue,linewidth=1.5pt]{-}(0.2,-0.7)(0.2,-0.2)
\psarc[linecolor=blue,linewidth=1.5pt]{-}(0.4,-0.2){0.2}{90}{180} 
\psarc[linecolor=blue,linewidth=1.5pt]{-}(0.4,0.2){0.2}{270}{0} 
\psline[linecolor=blue,linewidth=1.5pt]{-}(0.6,0.2)(0.6,0.7)
\psarc[linecolor=blue,linewidth=1.5pt]{-}(1.3,0.7){0.5}{180}{270}
\psline[linecolor=blue,linewidth=1.5pt]{-}(1.3,0.2)(1.8,0.2)
\end{pspicture}
\ \in \diatl_8.
\end{equation}
The ordering of the boundary connection nodes is $6$, $5$, $2$ (bottom), then $4$ (top).  We start with $e_8$ and use snakes as in the proof to arrive at 
\begin{center} 
{\setlength{\tabcolsep}{0.03\textwidth}
\begin{tabular}{CCCC}
&
\begin{pspicture}[shift=-0.8](0.0,-0.9)(1.8,0.9)
\pspolygon[fillstyle=solid,fillcolor=lightlightblue](0.0,-0.9)(1.8,-0.9)(1.8,-0.3)(0.0,-0.3)
\pspolygon[fillstyle=solid,fillcolor=lightlightblue](0.0,-0.3)(1.8,-0.3)(1.8,+0.3)(0.0,+0.3)
\pspolygon[fillstyle=solid,fillcolor=lightlightblue](0.0,+0.3)(1.8,+0.3)(1.8,+0.9)(0.0,+0.9)
\psline[linecolor=blue,linewidth=1.5pt]{-}(0.2,-0.9)(0.2,0.9)
\psline[linecolor=blue,linewidth=1.5pt]{-}(0.4,-0.9)(0.4,0.9)
\psline[linecolor=blue,linewidth=1.5pt]{-}(0.6,-0.9)(0.6,0.9)
\psline[linecolor=blue,linewidth=1.5pt]{-}(0.8,-0.9)(0.8,0.9)
\psline[linecolor=blue,linewidth=1.5pt]{-}(1.0,-0.9)(1.0,0.9)
\psline[linecolor=blue,linewidth=1.5pt]{-}(1.2,-0.9)(1.2,0.3)
\psline[linecolor=blue,linewidth=1.5pt]{-}(1.4,-0.3)(1.4,0.3)
\psline[linecolor=blue,linewidth=1.5pt]{-}(1.6,0.3)(1.6,0.9)
\psarc[linecolor=blue,linewidth=1.5pt]{-}(1.3,0.9){0.1}{180}{0}
\psarc[linecolor=blue,linewidth=1.5pt]{-}(1.3,0.3){0.1}{0}{180}
\psarc[linecolor=blue,linewidth=1.5pt]{-}(1.5,-0.3){0.1}{180}{0}
\psarc[linecolor=blue,linewidth=1.5pt]{-}(1.5,-0.9){0.1}{0}{180}
\psarc[linecolor=blue,linewidth=1.5pt]{-}(1.8,0.3){0.2}{180}{270}
\psarc[linecolor=blue,linewidth=1.5pt]{-}(1.8,-0.3){0.2}{90}{180}
\end{pspicture}
\ = \ 
\begin{pspicture}[shift=-0.3](0.0,-0.4)(1.8,0.4)
\pspolygon[fillstyle=solid,fillcolor=lightlightblue](0.0,-0.4)(1.8,-0.4)(1.8,+0.4)(0.0,+0.4)
\psline[linecolor=blue,linewidth=1.5pt]{-}(0.2,-0.4)(0.2,0.4)
\psline[linecolor=blue,linewidth=1.5pt]{-}(0.4,-0.4)(0.4,0.4)
\psline[linecolor=blue,linewidth=1.5pt]{-}(0.6,-0.4)(0.6,0.4)
\psline[linecolor=blue,linewidth=1.5pt]{-}(0.8,-0.4)(0.8,0.4)
\psline[linecolor=blue,linewidth=1.5pt]{-}(1.0,-0.4)(1.0,0.4)
\psarc[linecolor=blue,linewidth=1.5pt]{-}(1.3,0.4){0.1}{180}{0}
\psarc[linecolor=blue,linewidth=1.5pt]{-}(1.5,-0.4){0.1}{0}{180}
\psarc[linecolor=blue,linewidth=1.5pt]{-}(1.8,0.4){0.2}{180}{270}
\psarc[linecolor=blue,linewidth=1.5pt]{-}(1.5,-0.4){0.3}{90}{180}
\psline[linecolor=blue,linewidth=1.5pt]{-}(1.5,-0.1)(1.8,-0.1)
\end{pspicture}
& \lra &
\begin{pspicture}[shift=-0.6](0.0,-0.7)(1.8,0.7)
\pspolygon[fillstyle=solid,fillcolor=lightlightblue](0.0,-0.7)(1.8,-0.7)(1.8,+0.1)(0.0,+0.1)
\pspolygon[fillstyle=solid,fillcolor=lightlightblue](0.0,+0.1)(1.8,+0.1)(1.8,+0.7)(0.0,+0.7)
\psline[linecolor=blue,linewidth=1.5pt]{-}(0.2,-0.7)(0.2,0.7)
\psline[linecolor=blue,linewidth=1.5pt]{-}(0.4,-0.7)(0.4,0.7)
\psline[linecolor=blue,linewidth=1.5pt]{-}(0.6,-0.7)(0.6,0.7)
\psline[linecolor=blue,linewidth=1.5pt]{-}(0.8,-0.7)(0.8,0.7)
\psline[linecolor=blue,linewidth=1.5pt]{-}(1.0,-0.7)(1.0,0.1)
\psarc[linecolor=blue,linewidth=1.5pt]{-}(1.1,0.7){0.1}{180}{0}
\psarc[linecolor=blue,linewidth=1.5pt]{-}(1.5,0.7){0.1}{180}{0}
\psarc[linecolor=blue,linewidth=1.5pt]{-}(1.1,0.1){0.1}{0}{180}
\psarc[linecolor=blue,linewidth=1.5pt]{-}(1.5,0.1){0.1}{0}{180}
\psarc[linecolor=blue,linewidth=1.5pt]{-}(1.3,0.1){0.1}{180}{0}
\psarc[linecolor=blue,linewidth=1.5pt]{-}(1.8,0.1){0.2}{180}{270}
\psarc[linecolor=blue,linewidth=1.5pt]{-}(1.5,-0.7){0.1}{0}{180}
\psarc[linecolor=blue,linewidth=1.5pt]{-}(1.5,-0.7){0.3}{90}{180}
\psline[linecolor=blue,linewidth=1.5pt]{-}(1.5,-0.4)(1.8,-0.4)
\end{pspicture}
\ = \ 
\begin{pspicture}[shift=-0.4](0.0,-0.5)(1.8,0.5)
\pspolygon[fillstyle=solid,fillcolor=lightlightblue](0.0,-0.5)(1.8,-0.5)(1.8,+0.5)(0.0,+0.5)
\psline[linecolor=blue,linewidth=1.5pt]{-}(0.2,-0.5)(0.2,0.5)
\psline[linecolor=blue,linewidth=1.5pt]{-}(0.4,-0.5)(0.4,0.5)
\psline[linecolor=blue,linewidth=1.5pt]{-}(0.6,-0.5)(0.6,0.5)
\psline[linecolor=blue,linewidth=1.5pt]{-}(0.8,-0.5)(0.8,0.5)
\psarc[linecolor=blue,linewidth=1.5pt]{-}(1.1,0.5){0.1}{180}{0}
\psarc[linecolor=blue,linewidth=1.5pt]{-}(1.5,0.5){0.1}{180}{0}
\psarc[linecolor=blue,linewidth=1.5pt]{-}(1.5,-0.5){0.1}{0}{180}
\psarc[linecolor=blue,linewidth=1.5pt]{-}(1.5,-0.5){0.3}{90}{180}
\psarc[linecolor=blue,linewidth=1.5pt]{-}(1.5,-0.5){0.5}{90}{180}
\psline[linecolor=blue,linewidth=1.5pt]{-}(1.5,-0.2)(1.8,-0.2)
\psline[linecolor=blue,linewidth=1.5pt]{-}(1.5,0.0)(1.8,0.0)
\end{pspicture}
\\[0.8cm]
& (e_7) (e_8) (e_6) & & (I) e_7 e_8 e_6 (e_5 e_7) \\[5mm]
\lra & 
\begin{pspicture}[shift=-1.3](0.0,-1.4)(1.8,1.4)
\pspolygon[fillstyle=solid,fillcolor=lightlightblue](0.0,-1.4)(1.8,-1.4)(1.8,-0.8)(0.0,-0.8)
\pspolygon[fillstyle=solid,fillcolor=lightlightblue](0.0,-0.8)(1.8,-0.8)(1.8,+0.2)(0.0,+0.2)
\pspolygon[fillstyle=solid,fillcolor=lightlightblue](0.0,+0.2)(1.8,+0.2)(1.8,+0.8)(0.0,+0.8)
\pspolygon[fillstyle=solid,fillcolor=lightlightblue](0.0,+0.8)(1.8,+0.8)(1.8,+1.4)(0.0,+1.4)
\psline[linecolor=blue,linewidth=1.5pt]{-}(0.2,-1.4)(0.2,1.4)
\psline[linecolor=blue,linewidth=1.5pt]{-}(0.4,-1.4)(0.4,0.8)
\psline[linecolor=blue,linewidth=1.5pt]{-}(0.6,-0.8)(0.6,0.8)
\psline[linecolor=blue,linewidth=1.5pt]{-}(0.8,-0.8)(0.8,0.8)
\psline[linecolor=blue,linewidth=1.5pt]{-}(1.0,-1.4)(1.0,-0.8)
\psline[linecolor=blue,linewidth=1.5pt]{-}(1.2,-1.4)(1.2,-0.8)
\psline[linecolor=blue,linewidth=1.5pt]{-}(1.4,-1.4)(1.4,-0.8)
\psline[linecolor=blue,linewidth=1.5pt]{-}(1.6,-1.4)(1.6,-0.8)
\psline[linecolor=blue,linewidth=1.5pt]{-}(1.0,0.2)(1.0,0.8)
\psline[linecolor=blue,linewidth=1.5pt]{-}(1.2,0.2)(1.2,0.8)
\psline[linecolor=blue,linewidth=1.5pt]{-}(1.4,0.2)(1.4,0.8)
\psline[linecolor=blue,linewidth=1.5pt]{-}(1.6,0.8)(1.6,1.4)
\psarc[linecolor=blue,linewidth=1.5pt]{-}(0.5,1.4){0.1}{180}{0}
\psarc[linecolor=blue,linewidth=1.5pt]{-}(0.9,1.4){0.1}{180}{0}
\psarc[linecolor=blue,linewidth=1.5pt]{-}(1.3,1.4){0.1}{180}{0}
\psarc[linecolor=blue,linewidth=1.5pt]{-}(0.5,0.8){0.1}{0}{180}
\psarc[linecolor=blue,linewidth=1.5pt]{-}(0.9,0.8){0.1}{0}{180}
\psarc[linecolor=blue,linewidth=1.5pt]{-}(1.3,0.8){0.1}{0}{180}
\psarc[linecolor=blue,linewidth=1.5pt]{-}(1.1,0.2){0.1}{180}{0}
\psarc[linecolor=blue,linewidth=1.5pt]{-}(1.5,0.2){0.1}{180}{0}
\psarc[linecolor=blue,linewidth=1.5pt]{-}(0.7,-0.8){0.1}{180}{0}
\psarc[linecolor=blue,linewidth=1.5pt]{-}(1.5,-0.8){0.1}{0}{180}
\psarc[linecolor=blue,linewidth=1.5pt]{-}(0.7,-1.4){0.1}{0}{180}
\psarc[linecolor=blue,linewidth=1.5pt]{-}(1.8,0.8){0.2}{180}{270}
\psarc[linecolor=blue,linewidth=1.5pt]{-}(1.8,0.2){0.2}{90}{180}
\psarc[linecolor=blue,linewidth=1.5pt]{-}(1.5,-0.8){0.3}{90}{180}
\psarc[linecolor=blue,linewidth=1.5pt]{-}(1.5,-0.8){0.5}{90}{180}
\psline[linecolor=blue,linewidth=1.5pt]{-}(1.5,-0.5)(1.8,-0.5)
\psline[linecolor=blue,linewidth=1.5pt]{-}(1.5,-0.3)(1.8,-0.3)
\end{pspicture}
\ = \ 
\begin{pspicture}[shift=-0.5](0.0,-0.6)(1.8,0.6)
\pspolygon[fillstyle=solid,fillcolor=lightlightblue](0.0,-0.6)(1.8,-0.6)(1.8,+0.6)(0.0,+0.6)
\psline[linecolor=blue,linewidth=1.5pt]{-}(0.2,-0.6)(0.2,0.6)
\psarc[linecolor=blue,linewidth=1.5pt]{-}(0.5,0.6){0.1}{180}{0}
\psarc[linecolor=blue,linewidth=1.5pt]{-}(0.9,0.6){0.1}{180}{0}
\psarc[linecolor=blue,linewidth=1.5pt]{-}(1.3,0.6){0.1}{180}{0}
\psarc[linecolor=blue,linewidth=1.5pt]{-}(0.7,-0.6){0.1}{0}{180}
\psarc[linecolor=blue,linewidth=1.5pt]{-}(1.5,-0.6){0.1}{0}{180}
\psarc[linecolor=blue,linewidth=1.5pt]{-}(1.8,0.6){0.2}{180}{270}
\psarc[linecolor=blue,linewidth=1.5pt]{-}(1.5,-0.6){0.3}{90}{180}
\psarc[linecolor=blue,linewidth=1.5pt]{-}(1.5,-0.6){0.5}{90}{180}
\psarc[linecolor=blue,linewidth=1.5pt]{-}(1.1,-0.6){0.7}{90}{180}
\psline[linecolor=blue,linewidth=1.5pt]{-}(1.5,-0.3)(1.8,-0.3)
\psline[linecolor=blue,linewidth=1.5pt]{-}(1.5,-0.1)(1.8,-0.1)
\psline[linecolor=blue,linewidth=1.5pt]{-}(1.1,0.1)(1.8,0.1)
\end{pspicture}
& \lra &
\begin{pspicture}[shift=-0.8](0.0,-0.9)(1.8,0.9)
\pspolygon[fillstyle=solid,fillcolor=lightlightblue](0.0,-0.9)(1.8,-0.9)(1.8,+0.3)(0.0,+0.3)
\pspolygon[fillstyle=solid,fillcolor=lightlightblue](0.0,+0.3)(1.8,+0.3)(1.8,+0.9)(0.0,+0.9)
\psline[linecolor=blue,linewidth=1.5pt]{-}(0.2,-0.9)(0.2,0.9)
\psline[linecolor=blue,linewidth=1.5pt]{-}(0.4,0.3)(0.4,0.9)
\psline[linecolor=blue,linewidth=1.5pt]{-}(0.6,0.3)(0.6,0.9)
\psline[linecolor=blue,linewidth=1.5pt]{-}(0.8,0.3)(0.8,0.9)
\psarc[linecolor=blue,linewidth=1.5pt]{-}(1.1,0.9){0.1}{180}{0}
\psarc[linecolor=blue,linewidth=1.5pt]{-}(1.5,0.9){0.1}{180}{0}
\psarc[linecolor=blue,linewidth=1.5pt]{-}(1.1,0.3){0.1}{0}{180}
\psarc[linecolor=blue,linewidth=1.5pt]{-}(1.5,0.3){0.1}{0}{180}
\psarc[linecolor=blue,linewidth=1.5pt]{-}(0.5,0.3){0.1}{180}{0}
\psarc[linecolor=blue,linewidth=1.5pt]{-}(0.9,0.3){0.1}{180}{0}
\psarc[linecolor=blue,linewidth=1.5pt]{-}(1.3,0.3){0.1}{180}{0}
\psarc[linecolor=blue,linewidth=1.5pt]{-}(0.7,-0.9){0.1}{0}{180}
\psarc[linecolor=blue,linewidth=1.5pt]{-}(1.5,-0.9){0.1}{0}{180}
\psarc[linecolor=blue,linewidth=1.5pt]{-}(1.8,0.3){0.2}{180}{270}
\psarc[linecolor=blue,linewidth=1.5pt]{-}(1.5,-0.9){0.3}{90}{180}
\psarc[linecolor=blue,linewidth=1.5pt]{-}(1.5,-0.9){0.5}{90}{180}
\psarc[linecolor=blue,linewidth=1.5pt]{-}(1.1,-0.9){0.7}{90}{180}
\psline[linecolor=blue,linewidth=1.5pt]{-}(1.5,-0.6)(1.8,-0.6)
\psline[linecolor=blue,linewidth=1.5pt]{-}(1.5,-0.4)(1.8,-0.4)
\psline[linecolor=blue,linewidth=1.5pt]{-}(1.1,-0.2)(1.8,-0.2)
\end{pspicture}
\ = \ 
\begin{pspicture}[shift=-0.6](0.0,-0.7)(1.8,0.7)
\pspolygon[fillstyle=solid,fillcolor=lightlightblue](0.0,-0.7)(1.8,-0.7)(1.8,+0.7)(0.0,+0.7)
\psline[linecolor=blue,linewidth=1.5pt]{-}(0.2,-0.7)(0.2,0.7)
\psarc[linecolor=blue,linewidth=1.5pt]{-}(0.5,0.7){0.1}{180}{0}
\psarc[linecolor=blue,linewidth=1.5pt]{-}(1.1,0.7){0.1}{180}{0}
\psarc[linecolor=blue,linewidth=1.5pt]{-}(1.5,0.7){0.1}{180}{0}
\psarc[linecolor=blue,linewidth=1.5pt]{-}(0.7,-0.7){0.1}{0}{180}
\psarc[linecolor=blue,linewidth=1.5pt]{-}(1.5,-0.7){0.1}{0}{180}
\psarc[linecolor=blue,linewidth=1.5pt]{-}(1.5,-0.7){0.3}{90}{180}
\psarc[linecolor=blue,linewidth=1.5pt]{-}(1.5,-0.7){0.5}{90}{180}
\psarc[linecolor=blue,linewidth=1.5pt]{-}(1.1,-0.7){0.7}{90}{180}
\psarc[linecolor=blue,linewidth=1.5pt]{-}(1.3,0.7){0.5}{180}{270}
\psline[linecolor=blue,linewidth=1.5pt]{-}(1.5,-0.4)(1.8,-0.4)
\psline[linecolor=blue,linewidth=1.5pt]{-}(1.5,-0.2)(1.8,-0.2)
\psline[linecolor=blue,linewidth=1.5pt]{-}(1.1,0.0)(1.8,0.0)
\psline[linecolor=blue,linewidth=1.5pt]{-}(1.3,0.2)(1.8,0.2)
\end{pspicture}
\\[1.3cm]
& (e_3) e_7 e_8 e_6 e_5 e_7 (e_8) (e_2 e_4 e_6) & & (I) e_3 e_7 e_8 e_6 e_5 e_7 e_8 e_2 e_4 e_6 (e_5 e_7) = w'.
\end{tabular}
}
\end{center}
The word $w'$ is the result of the first step of the construction.  To obtain a word $w \in \algtl_n$ satisfying $\psi(w) = D$, we move the leftmost arc connecting node $1$ (bottom) to $1$ (top) so that it connects $1$ (bottom) to $3$ (top), by right-multiplying by $e_1$, and then convert the two consecutive simple arcs at top-right into two nested arcs, by right-multiplying by $e_6$:
\begin{equation}
\psi \colon w = e_3 e_7 e_8 e_6 e_5 e_7 e_8 e_2 e_4 e_6 e_5 e_7 e_1 e_6 \mapsto D.
\end{equation}

Having proven the surjectivity of $\psi$, hence that $\diatl_n$ is a quotient of $\algtl_n$, we now introduce a spanning set of words for $\algtl_n$.  Following Jones \cite{JonInd83}, we will say that a word $e_{j_1} \cdots e_{j_m}$ in $\algtl_n$ 
is \emph{reduced} if it is not equal to any other word formed from fewer than $m$ generators multiplied by 
powers of $\beta$, $\beta_1$ and $\beta_2$. Obviously, every word is proportional to a reduced word, so the reduced words span $\algtl_n$.  We remark that the empty word $I$ is reduced.
\begin{Lemma} \label{lem:TLoneReduced}
Any reduced word $w$ in the $\algtl_n$ generators may be written in the form
\begin{subequations}\label{eq:TLoneJonesform} 
\begin{equation}
w = w' (e_n e_{n-1} \cdots e_{\ell_1}) (e_n e_{n-1} \cdots e_{\ell_2}) \cdots (e_n e_{n-1} \cdots e_{\ell_s}),
\end{equation}
where w' is a reduced word for the $\tl_n$ subalgebra generated by the $e_j$ with $j<n$,
\begin{equation}\label{eq:TLJonesform}
w' = (e_{j_1} e_{j_1-1} \cdots e_{k_1}) (e_{j_2} e_{j_2-1} \cdots e_{k_2}) \cdots (e_{j_r} e_{j_r-1} \cdots e_{k_r}),
\end{equation}
and the indices satisfy
\begin{equation}
 j_1 < j_2 < \cdots < j_r < n,\qquad k_i \le j_i\quad \text{(for all $i$),}\qquad k_1 < k_2 < \ldots < k_r < \ell_1 < \ell_2 < \cdots < \ell_s \le n.
\end{equation}
\end{subequations}
\end{Lemma}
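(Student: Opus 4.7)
The plan is to adapt Jones' original normal-form argument for the ordinary Temperley-Lieb algebra to accommodate the additional boundary generator $e_n$, proceeding by induction on the number of occurrences of $e_n$ in the word (with the base case $s=0$ being Jones' theorem for $\tl_n$, which yields the form \eqref{eq:TLJonesform} for $w'$). Concretely, I would start from an arbitrary reduced word $w$, locate its leftmost occurrence of $e_n$, and commute generators $e_j$ with $j<n-1$ past it freely using \eqref{eq:defTL1}; whenever an $e_{n-1}$ is encountered to the right of that $e_n$, I would attempt to apply $e_n e_{n-1} e_n = ?$-style reductions together with $e_{n-1}e_n e_{n-1}=\beta_1 e_{n-1}$ and $e_n^2=\beta_2 e_n$. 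Either a reduction occurs (contradicting that $w$ is reduced), or the $e_n$ is followed by an uninterrupted descending block $e_n e_{n-1}\cdots e_{\ell}$ that cannot be shortened.

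The inductive step would then isolate the rightmost such maximal descending block $(e_n e_{n-1}\cdots e_{\ell_s})$ and write $w = \tilde{w}\cdot(e_n e_{n-1}\cdots e_{\ell_s})$, where $\tilde{w}$ contains one fewer $e_n$. Applying the inductive hypothesis to $\tilde{w}$ gives
\begin{equation*}
\tilde{w} = w' (e_n e_{n-1}\cdots e_{\ell_1})\cdots(e_n e_{n-1}\cdots e_{\ell_{s-1}}),
\end{equation*}
and it remains to verify that appending $(e_n e_{n-1}\cdots e_{\ell_s})$ on the right preserves reducedness precisely when $\ell_{s-1}<\ell_s$. If instead $\ell_s\le \ell_{s-1}$, I would show by explicit manipulation — pulling $e_n$ back through the preceding descending block using the commutation $e_i e_n = e_n e_i$ for $i<n-1$, and then collapsing via either $e_{n-1}e_n e_{n-1}=\beta_1 e_{n-1}$ or $e_n^2=\beta_2 e_n$ — that the word shortens, contradicting reducedness.

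Finally, I would verify the constraints relating the $k_i$ in $w'$ to the $\ell_j$: namely that $k_r<\ell_1$. The idea is that if $k_r\ge \ell_1$, then the rightmost factor $e_{k_r}$ of $w'$ either commutes past the beginning of $(e_n e_{n-1}\cdots e_{\ell_1})$, is absorbed through $e_j e_{j\pm 1}e_j = e_j$, or collides with $e_{k_r}$ inside that descending block to produce an $e_{k_r}^2=\beta e_{k_r}$ reduction; a short case analysis on the position of $k_r$ relative to $\ell_1$ produces a shorter equivalent word in each subcase. An analogous argument handles the constraint $\ell_1<\ell_2<\cdots<\ell_s$, and Jones' theorem supplies the constraints among the $j_i$ and $k_i$ internal to $w'$.

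The main obstacle will be the bookkeeping in the last paragraph: the case analysis showing that $k_r\ge \ell_1$ always forces a reduction is delicate, because one has to track how a generator $e_{k_r}$ interacts with the entire descending block $e_n e_{n-1}\cdots e_{\ell_1}$, using an appropriate mixture of commutation, the bulk relation $e_j e_{j\pm1}e_j = e_j$, and — only at the far right — the boundary relations \eqref{eq:defTL1}. A clean way to organise this is probably to prove first a lemma stating that, modulo powers of $\beta,\beta_1,\beta_2$, $(e_n\cdots e_{\ell})\,e_m$ equals either $(e_n\cdots e_{\ell})$, $e_m(e_n\cdots e_{\ell})$, or a shorter word, depending on whether $m<\ell-1$, $m=\ell-1$, $m=\ell$, or $\ell<m<n$; once this subsidiary identity is in hand, the reduction step becomes automatic and the induction closes.
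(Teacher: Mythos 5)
Your overall strategy is the same as the paper's: Jones' normal form for the $\tl_n$ part together with a letter-by-letter straightening of generators through descending chains $e_ne_{n-1}\cdots e_\ell$, merely repackaged as an induction on the number of occurrences of $e_n$ (the paper does a single left-to-right scan that produces all the chains and the index constraints at once). The repackaging is fine in principle, but note that peeling off ``the rightmost maximal descending block'' is not available for free: in a reduced word the letters following the last $e_n$ form an arbitrary word in $e_1,\dots,e_{n-1}$, and you must first run your subsidiary lemma iteratively to show that each such letter either extends the chain, commutes through the whole chain and past $e_n$ into $\tilde{w}$, or contradicts reducedness; only then does the factorisation $w=\tilde{w}\,(e_ne_{n-1}\cdots e_{\ell_s})$ exist, with $\tilde{w}$ automatically reduced.

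The genuine problems are in the reduction mechanisms you invoke. First, there is no relation simplifying $e_ne_{n-1}e_n$ in $\tlone_n$ -- your ``$e_ne_{n-1}e_n=?$-style reductions'' do not exist, and indeed the absence of this relation is exactly what allows several descending chains in a reduced word (the paper uses it to conclude that the only letter that can follow a completed chain is $e_n$). Second, your collapse for $\ell_s\le\ell_{s-1}$, namely pulling the later $e_n$ leftwards through the earlier block and finishing with $e_{n-1}e_ne_{n-1}=\beta_1e_{n-1}$ or $e_n^2=\beta_2e_n$, only works in the edge cases $\ell_{s-1}=n-1$ or $\ell_{s-1}=n$; when $\ell_{s-1}\le n-2$, pulling $e_n$ left merely manufactures the irreducible pattern $e_ne_{n-1}e_n$ and no boundary relation ever comes into play. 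The correct move (and the one the paper uses, both here and for $k_r<\ell_1$) is to push the \emph{foot} $e_{\ell_{s-1}}$ of the earlier chain rightwards through $e_n,e_{n-1},\dots$ until it meets $e_{\ell_{s-1}+1}e_{\ell_{s-1}}$ inside the later chain and collapses via the bulk relation $e_me_{m+1}e_m=e_m$, the boundary relation of \eqref{eq:defTL1} being needed only when the offending index equals $n-1$. Your subsidiary lemma, in its mirrored (left-multiplication) form $e_m(e_ne_{n-1}\cdots e_\ell)$, supplies precisely this move, so the plan is repairable; but as literally written, the key step fails in the generic case.
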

\begin{proof}
If $e_n$ does not appear in the reduced word $w$, then $w$ is a reduced word for the \TL{} subalgebra and so $w$ may be written \cite{JonInd83} in the form $w'$ given above.  We 
will therefore assume that $e_n$ does appear in $w$ so that we may write $w = w' e_n \cdots$, where $w'$ is reduced for the $\tl_n$ subalgebra.

If there is no generator $e_j$ to the right of this $e_n$, then we are done.  If there is such an $e_j$, we cannot have $j=n$ because $w$ is reduced.  On the other hand, if $j<n-1$, then we may commute $e_j$ to the left and absorb it into $w'$.  Thus, we may assume that $j=n-1$.  Repeating, we find that $w$ has the form $w' (e_n e_{n-1} \cdots e_{\ell_1}) e_m \cdots$, for some $\ell_1, m$ satisfying $m \neq \ell_1 - 1$.  If $w'$ had the form $\cdots e_{k_r}$, as in \eqref{eq:TLJonesform}, with $k_r \ge \ell_1$, then we could commute the $e_{k_r}$ to the right and use $e_{k_r} e_{k_r+1} e_{k_r} = e_{k_r}$, contradicting $w$ being reduced.  Thus, we have $k_r < \ell_1$.

If $m < \ell_1 - 1$, then we could commute the $e_m$ to the left and absorb it into $w'$.  If $m = \ell_1$, then $w$ is not reduced.  Finally, if $\ell_1 < m < n$, then we may commute the $e_m$ to the left and use $e_m e_{m-1} e_m = e_m$, again contradicting $w$ being reduced.  The only remaining option is $m=n$ because $e_n e_{n-1} e_n$ cannot be simplified in $\algtl_n$.  Thus, $e_m$ forms the leftmost end of another descending chain of the form $e_n e_{n-1} \cdots e_{\ell_2}$ and the same arguments as above prove that $\ell_2 > \ell_1$.  Repeating, we arrive at the form \eqref{eq:TLoneJonesform}, completing the proof.
\end{proof}

It should now be clear that $\algtl_n$ is finite-dimensional. 
We therefore proceed to count the number of connectivities in $\diatl_n$ and the number of reduced words in $\algtl_n$.  The former number is $\dim \diatl_n$ and the latter gives an upper bound, $b$ say, for $\dim \algtl_n$.  \cref{prop:Surjective} shows that $\dim \diatl_n \le \dim \algtl_n \le b$, so if the counting gives $\dim \diatl_n = b$, then we will have $\dim \diatl_n = \dim \algtl_n$, hence $\diatl_n \simeq \algtl_n$.

The counting of connectivities of $\diatl_n$ follows from a bijection to the link states of $\diatl_{2n}$
with zero defects. To construct this bijection,
we shall cut the connectivities in half horizontally and regard them as link states for $\diatl_{2n}$. These have two types of defects: 
\emph{vertical} defects and \emph{boundary} defects. 
Their construction is almost identical to the construction of link states for $\tl_n$.  A $\diatl_n$ link state is a diagram made of loop segments that live above a horizontal line with $n$ marked nodes. These are either connected pairwise, occupied by a vertical defect or connected to the right boundary by
a boundary defect. We denote the set of link states on $n$ nodes, with $d$ vertical defects and $b$ boundary defects,
by $\linksb_n^{d,b}$. The full set of link states with $d$ vertical defects is denoted by $\linksb_n^d = \bigcup_b\, \linksb_n^{d,b}$. The numbers $n, d$ and $b$ must satisfy the constraint $d+b = n \bmod{2}$, but $d$ by itself can be odd or even. For example, here are all the sets of link states with
$n = 4$:
\begin{equation}
\begin{gathered}
\linksb_4^4 = \Big\{
\psset{unit=0.575}
\begin{pspicture}[shift=-0.10](-.1,0)(1.7,0.5)
\psline{-}(0,0)(1.6,0)
\psline[linecolor=blue,linewidth=1.25pt]{-}(0.2,0)(0.2,0.6)
\psline[linecolor=blue,linewidth=1.25pt]{-}(0.6,0)(0.6,0.6)
\psline[linecolor=blue,linewidth=1.25pt]{-}(1.0,0)(1.0,0.6)
\psline[linecolor=blue,linewidth=1.25pt]{-}(1.4,0)(1.4,0.6)
\end{pspicture} \Big\}, \quad
\linksb_4^3 = \Big\{
\begin{pspicture}[shift=-0.10](-.1,0)(1.7,0.5)
\psline{-}(0,0)(1.6,0)
\psline[linecolor=blue,linewidth=1.25pt]{-}(0.2,0)(0.2,0.6)
\psline[linecolor=blue,linewidth=1.25pt]{-}(0.6,0)(0.6,0.6)
\psline[linecolor=blue,linewidth=1.25pt]{-}(1.0,0)(1.0,0.6)
\psarc[linecolor=blue,linewidth=1.25pt]{-}(1.6,0){0.2}{90}{180}
\end{pspicture} \Big\},\quad
\linksb_4^2 = \Big\{
\begin{pspicture}[shift=-0.10](-.1,0)(1.7,0.5)
\psline{-}(0,0)(1.6,0)
\psarc[linecolor=blue,linewidth=1.25pt]{-}(0.4,0){0.2}{0}{180}
\psline[linecolor=blue,linewidth=1.25pt]{-}(1.0,0)(1.0,0.6)
\psline[linecolor=blue,linewidth=1.25pt]{-}(1.4,0)(1.4,0.6)
\end{pspicture},
\begin{pspicture}[shift=-0.10](-.1,0)(1.7,0.5)
\psline{-}(0,0)(1.6,0)
\psarc[linecolor=blue,linewidth=1.25pt]{-}(0.8,0){0.2}{0}{180}
\psline[linecolor=blue,linewidth=1.25pt]{-}(0.2,0)(0.2,0.6)
\psline[linecolor=blue,linewidth=1.25pt]{-}(1.4,0)(1.4,0.6)
\end{pspicture}, 
\begin{pspicture}[shift=-0.10](-.1,0)(1.7,0.5)
\psline{-}(0,0)(1.6,0)
\psarc[linecolor=blue,linewidth=1.25pt]{-}(1.2,0){0.2}{0}{180}
\psline[linecolor=blue,linewidth=1.25pt]{-}(0.2,0)(0.2,0.6)
\psline[linecolor=blue,linewidth=1.25pt]{-}(0.6,0)(0.6,0.6)
\end{pspicture},
\begin{pspicture}[shift=-0.10](-.1,0)(1.7,0.5)
\psline{-}(0,0)(1.6,0)
\psline[linecolor=blue,linewidth=1.25pt]{-}(0.2,0)(0.2,0.6)
\psline[linecolor=blue,linewidth=1.25pt]{-}(0.6,0)(0.6,0.6)
\psbezier[linecolor=blue,linewidth=1.25pt]{-}(1.0,0)(1.0,0.3)(1.1,0.5)(1.6,0.5)
\psarc[linecolor=blue,linewidth=1.25pt]{-}(1.6,0){0.2}{90}{180}
\end{pspicture}\Big\},\\[0.3cm]
\linksb_4^1 = \Big\{
\psset{unit=0.575}
\begin{pspicture}[shift=-0.10](-.1,0)(1.7,0.5)
\psline{-}(0,0)(1.6,0)
\psline[linecolor=blue,linewidth=1.25pt]{-}(0.2,0)(0.2,0.6)
\psarc[linecolor=blue,linewidth=1.25pt]{-}(0.8,0){0.2}{0}{180}
\psarc[linecolor=blue,linewidth=1.25pt]{-}(1.6,0){0.2}{90}{180}
\end{pspicture},
\begin{pspicture}[shift=-0.10](-.1,0)(1.7,0.5)
\psline{-}(0,0)(1.6,0)
\psline[linecolor=blue,linewidth=1.25pt]{-}(0.2,0)(0.2,0.6)
\psarc[linecolor=blue,linewidth=1.25pt]{-}(1.2,0){0.2}{0}{180}
\psbezier[linecolor=blue,linewidth=1.25pt]{-}(0.6,0)(0.6,0.3)(0.7,0.5)(1.6,0.5)
\end{pspicture},
\begin{pspicture}[shift=-0.10](-.1,0)(1.7,0.5)
\psline{-}(0,0)(1.6,0)
\psline[linecolor=blue,linewidth=1.25pt]{-}(1.0,0)(1.0,0.6)
\psarc[linecolor=blue,linewidth=1.25pt]{-}(0.4,0){0.2}{0}{180}
\psarc[linecolor=blue,linewidth=1.25pt]{-}(1.6,0){0.2}{90}{180}
\end{pspicture}, 
\begin{pspicture}[shift=-0.10](-.1,0)(1.7,0.5)
\psline{-}(0,0)(1.6,0)
\psline[linecolor=blue,linewidth=1.25pt]{-}(0.2,0)(0.2,0.6)
\psbezier[linecolor=blue,linewidth=1.25pt]{-}(0.6,0)(0.6,0.5)(0.7,0.8)(1.6,0.8)
\psbezier[linecolor=blue,linewidth=1.25pt]{-}(1.0,0)(1.0,0.3)(1.1,0.5)(1.6,0.5)
\psarc[linecolor=blue,linewidth=1.25pt]{-}(1.6,0){0.2}{90}{180}
\end{pspicture}
 \Big\}, \quad
\linksb_4^0 = \Big\{
\begin{pspicture}[shift=-0.10](-.1,0)(1.7,0.5)
\psline{-}(0,0)(1.6,0)
\psarc[linecolor=blue,linewidth=1.25pt]{-}(0.4,0){0.2}{0}{180}
\psarc[linecolor=blue,linewidth=1.25pt]{-}(1.2,0){0.2}{0}{180}
\end{pspicture}, 
\begin{pspicture}[shift=-0.10](-.1,0)(1.7,0.5)
\psline{-}(0,0)(1.6,0)
\psarc[linecolor=blue,linewidth=1.25pt]{-}(0.8,0){0.2}{0}{180}
\psbezier[linecolor=blue,linewidth=1.25pt]{-}(0.2,0)(0.2,0.6)(1.4,0.6)(1.4,0)
\end{pspicture}, 
\begin{pspicture}[shift=-0.10](-.1,0)(1.7,0.5)
\psline{-}(0,0)(1.6,0)
\psarc[linecolor=blue,linewidth=1.25pt]{-}(0.4,0){0.2}{0}{180}
\psbezier[linecolor=blue,linewidth=1.25pt]{-}(1.0,0)(1.0,0.3)(1.1,0.5)(1.6,0.5)
\psarc[linecolor=blue,linewidth=1.25pt]{-}(1.6,0){0.2}{90}{180}
\end{pspicture},
\begin{pspicture}[shift=-0.10](-.1,0)(1.7,0.5)
\psline{-}(0,0)(1.6,0)
\psarc[linecolor=blue,linewidth=1.25pt]{-}(0.8,0){0.2}{0}{180}
\psbezier[linecolor=blue,linewidth=1.25pt]{-}(0.2,0)(0.2,0.3)(0.3,0.5)(1.6,0.5)
\psarc[linecolor=blue,linewidth=1.25pt]{-}(1.6,0){0.2}{90}{180}
\end{pspicture}, 
\begin{pspicture}[shift=-0.10](-.1,0)(1.7,0.5)
\psline{-}(0,0)(1.6,0)
\psarc[linecolor=blue,linewidth=1.25pt]{-}(1.2,0){0.2}{0}{180}
\psbezier[linecolor=blue,linewidth=1.25pt]{-}(0.6,0)(0.6,0.3)(0.7,0.5)(1.6,0.5)
\psbezier[linecolor=blue,linewidth=1.25pt]{-}(0.2,0)(0.2,0.5)(0.3,0.8)(1.6,0.8)
\end{pspicture}, 
\begin{pspicture}[shift=-0.10](-.1,0)(1.7,0.5)
\psline{-}(0,0)(1.6,0)
\psarc[linecolor=blue,linewidth=1.25pt]{-}(1.6,0){0.2}{90}{180}
\psbezier[linecolor=blue,linewidth=1.25pt]{-}(1.0,0)(1.0,0.3)(1.1,0.5)(1.6,0.5)
\psbezier[linecolor=blue,linewidth=1.25pt]{-}(0.6,0)(0.6,0.5)(0.7,0.8)(1.6,0.8)
\psbezier[linecolor=blue,linewidth=1.25pt]{-}(0.2,0)(0.2,0.9)(0.7,1.1)(1.6,1.1)
\end{pspicture}
\Big\}.
\label{eq:Bs}
\end{gathered}
\end{equation}
The cardinalities of $\linksb_n^{d,b}$ and $\linksb_n^{d}$ are easy to compute. A link state in $\linksb_n^{d,b}$ has exactly $d+b$ defects --- the first $d$ are bulk defects and the last $b$ are boundary ones --- and can be mapped one-to-one to a \TL{} link state in $\links_n^{d+b}$ by forgetting this distinction between boundary and bulk defects. It then follows that
\begin{equation}
\abs{\linksb_n^{d,b}} = \abs{\links_n^{d+b}} = \binom{n}{\frac{n-d-b}{2}} - \binom{n}{\frac{n-d-b-2}{2}}, \qquad
\abs{\linksb_n^d} = 
\sum_{\substack{0 \le b \le n-d \\ b=n-d \bmod{2}}} 
\abs{\linksb_n^{d,b}}= \binom{n}{\big\lfloor \frac{n-d}{2} \big\rfloor}.
\label{eq:cards}
\end{equation}

$\diatl_n$ connectivities can be mapped bijectively onto the $\diatl_{2n}$ link states with 
no vertical defects.  This is achieved by cutting the connectivity horizontally and rotating the top edge of the box so that it lies to the left of the bottom edge, while maintaining the arc connections. For example, 
\begin{equation}\label{eq:cutandrotated}
\psset{unit=0.7} 
\begin{pspicture}[shift=-0.75](-0.4,-0.0)(2.0,1.7)
\pspolygon[fillstyle=solid,fillcolor=lightlightblue](-0.4,0)(-0.4,1.7)(2.00,1.7)(2.00,0)
\psarc[linewidth=1.0pt,linecolor=blue]{-}(0.4,0){0.2}{0}{180}
\psbezier[linewidth=1.0pt,linecolor=blue]{-}(-0.2,0)(-0.2,0.6)(1.0,0.6)(1.0,0)
\psbezier[linewidth=1.0pt,linecolor=blue]{-}(-0.2,1.7)(-0.2,0.85)(1.4,0.85)(1.4,0)
\psline[linewidth=1.0pt,linecolor=blue]{-}(1.8,0)(1.8,0.2)
\psline[linewidth=1.0pt,linecolor=blue]{-}(1.4,1.5)(1.4,1.7)
\psline[linewidth=1.0pt,linecolor=blue]{-}(1.8,1.5)(1.8,1.7)
\psarc[linewidth=1.0pt,linecolor=blue]{-}(2.0,0.2){0.2}{0}{180}
\psarc[linewidth=1.0pt,linecolor=blue]{-}(2.0,1.5){0.2}{180}{0}
\psarc[linewidth=1.0pt,linecolor=blue]{-}(0.8,1.7){0.2}{180}{0}
\psbezier[linewidth=1.0pt,linecolor=blue]{-}(1.4,1.5)(1.4,0.9)(2.6,0.9)(2.6,1.5)
\psbezier[linewidth=1.0pt,linecolor=blue]{-}(0.2,1.7)(0.2,0.95)(2.6,0.95)(2.6,0.2)
\pspolygon[fillstyle=solid,fillcolor=white,linecolor=white](2.035,0)(2.035,1.7)(2.8,1.7)(2.8,0)
\end{pspicture}
\quad \rightarrow \quad
\psset{unit=0.575cm}
\begin{pspicture}[shift=-0.65](-.1,0)(4.9,1.5)
\psline{-}(0,0)(4.8,0)
\psline{-}(2.4,-0.1)(2.4,0.1)
\psarc[linecolor=blue,linewidth=1.0pt]{-}(1.2,0){0.2}{0}{180}
\psarc[linecolor=blue,linewidth=1.0pt]{-}(3.2,0){0.2}{0}{180}
\psbezier[linecolor=blue,linewidth=1.0pt]{-}(2.6,0)(2.6,0.6)(3.8,0.6)(3.8,0)
\psbezier[linecolor=blue,linewidth=1.0pt]{-}(2.2,0)(2.2,1.0)(4.2,1.0)(4.2,0)
\psbezier[linecolor=blue,linewidth=1.0pt]{-}(1.8,0)(1.8,0.95)(2.3,1.1)(4.8,1.1)
\psbezier[linecolor=blue,linewidth=1.0pt]{-}(0.6,0)(0.6,1.05)(1.3,1.4)(4.8,1.4)
\psbezier[linecolor=blue,linewidth=1.0pt]{-}(0.2,0)(0.2,1.15)(1.0,1.7)(4.8,1.7)
\psarc[linecolor=blue,linewidth=1.0pt]{-}(4.8,0){0.2}{90}{180}
\end{pspicture}
\ .
\end{equation}
This rotation is obviously invertible, so it follows that the dimension of $\diatl_n$ is simply given by
\begin{equation}\label{eq:diatlcounting}
\dim \diatl_n = \abs{\linksb_{2n}^0} = \binom{2n}{n}.
\end{equation}

We now proceed to count the distinct reduced words in $\algtl_n$. For the original Temperley-Lieb algebra, 
Jones noted \cite{JonInd83} that this followed from a bijection between 
reduced words and north-east subdiagonal walks in $\mathbb Z^2$ that start at $(0,0)$ and end at $(n,n)$, crediting 
Wilf with this construction. The empty word $I$
is mapped to the walk $(0,0) \rightarrow (n,0) \rightarrow (n,n)$, while the other words, written in their reduced 
forms \eqref{eq:TLJonesform}, are mapped to
\begin{equation}
(0,0) \rightarrow (j_1,0) \rightarrow (j_1,k_1) \rightarrow (j_2,k_1) \rightarrow (j_2,k_2) \rightarrow \dots \rightarrow (j_r, k_{r-1}) \rightarrow (j_r,k_r) \rightarrow (n,k_r) \rightarrow (n,n).
\label{eq:TLwalk}
\end{equation}
Note that the individual steps of these walks always go north or east and never venture above the diagonal, 
as required. The pairs in \eqref{eq:TLwalk} are called the 
\emph{stops} of the walk and will be depicted in $\mathbb Z^2$ by black dots.
For example, 
\begin{equation}
(e_3 e_2 e_1)(e_6 e_5 e_4 e_3)(e_7 e_6)(e_8) \in \tl_{10} \quad \rightarrow \quad
\psset{unit=0.25}
\begin{pspicture}[shift=-5.1](-0.5,-0.5)(10.5,10.5)
\multiput(0,0)(0,1){11}{\psline[linestyle=dashed,dash=2pt 1pt,linecolor=gray, linewidth=0.5pt]{-}(-0.5,0)(10.5,0)}
\multiput(0,0)(1,0){11}{\psline[linestyle=dashed,dash=2pt 1pt,linecolor=gray, linewidth=0.5pt]{-}(0,-0.5)(0,10.5)}
\pscircle[fillstyle=solid,fillcolor=black](0,0){0.20}
\pscircle[fillstyle=solid,fillcolor=black](3,0){0.20}
\pscircle[fillstyle=solid,fillcolor=black](3,1){0.20}
\pscircle[fillstyle=solid,fillcolor=black](6,1){0.20}
\pscircle[fillstyle=solid,fillcolor=black](6,3){0.20}
\pscircle[fillstyle=solid,fillcolor=black](7,3){0.20}
\pscircle[fillstyle=solid,fillcolor=black](7,6){0.20}
\pscircle[fillstyle=solid,fillcolor=black](8,6){0.20}
\pscircle[fillstyle=solid,fillcolor=black](8,8){0.20}
\pscircle[fillstyle=solid,fillcolor=black](10,8){0.20}
\pscircle[fillstyle=solid,fillcolor=black](10,10){0.20}
\psline{->}(0,0)(3,0)(3,1)(6,1)(6,3)(7,3)(7,6)(8,6)(8,8)(10,8)(10,9.9)
\psline[linestyle=dashed,dash=1.75pt 1.5pt]{-}(0,0)(10,10)
\rput(-1.1,-0.75){\scriptsize$_{(0,0)}$}
\rput(11.7,10.75){\scriptsize$_{(10,10)}$}
\end{pspicture}
\ .
\end{equation}

For a given walk, it is straightforward to reproduce the reduced word, so the map is indeed bijective.
The actual counting proceeds by defining intermediate walks from $(0,0)$ to $(m,p)$ that are otherwise subject to the same constraints. Their cardinality, 
$a_{m,p}$, satisfies an easily solved recursion relation \cite{RidSta12} that gives
\begin{equation}
a_{m,p} = \binom{m+p}{p} - \binom{m+p}{p-1}.
\end{equation}
Clearly, $a_{n,n}$ yields the number of reduced $\tl_n$ words.

The reduced words \eqref{eq:TLJonesform} of $\algtl_n$ are also in bijection with a family of subdiagonal north-east 
walks, starting at $(0,0)$ but now ending at $(n+1,n+1)$. The empty word is now
mapped to the walk $(0,0) \rightarrow (n+1,0) \rightarrow (n+1,n+1)$ and the other words in $\tl_n \subset \algtl_n$, in their reduced form \eqref{eq:TLJonesform}, are mapped to
\begin{equation}
(0,0) \rightarrow (j_1,0) \rightarrow (j_1,k_1) \rightarrow (j_2,k_1) \rightarrow \dots \rightarrow (j_r, k_{r-1}) \rightarrow (j_r,k_r) \rightarrow (n\!+\!1,k_r) \rightarrow (n+1,n+1).
\end{equation}
The remaining words --- those in the reduced form \eqref{eq:TLoneJonesform} with $e_n$ appearing --- are mapped to
\begin{alignat}{2}
(0,0) \rightarrow (j_1,0) \rightarrow  (j_1,k_1) &\rightarrow (j_2,k_1) \rightarrow \dots \rightarrow (j_r, k_{r-1}) \rightarrow (j_r,k_r) \rightarrow (n,k_r) \nonumber\\
 &\rightarrow (n,\ell_1) \rightarrow (n,\ell_2) \rightarrow \dots \rightarrow (n,\ell_s) \rightarrow (n+1,\ell_s) \rightarrow (n+1,n+1).
\end{alignat}
Again the inverse map is easily constructed. A key difference with the walks \eqref{eq:TLwalk} is that there can now be more than two stops
in the $n$-th column, corresponding to the fact that $e_n$ may appear more than once in a reduced word. Moreover, these additional stops in the $n$-th column do not lead to a change of direction. We also represent these new stops by black dots, for instance:
\begin{equation}
(e_2 e_1)(e_5 e_4 e_3 e_2)(e_6 e_5 e_4)(e_8 e_7 e_6 e_5)(e_8 e_7)(e_8) \in \algtl_{8} \quad \rightarrow \quad
\psset{unit=0.25}
\begin{pspicture}[shift=-4.6](-0.5,-0.5)(9.5,9.5)
\multiput(0,0)(0,1){10}{\psline[linestyle=dashed,dash=2pt 1pt,linecolor=gray, linewidth=0.5pt]{-}(-0.5,0)(9.5,0)}
\multiput(0,0)(1,0){10}{\psline[linestyle=dashed,dash=2pt 1pt,linecolor=gray, linewidth=0.5pt]{-}(0,-0.5)(0,9.5)}
\pscircle[fillstyle=solid,fillcolor=black](0,0){0.20}
\pscircle[fillstyle=solid,fillcolor=black](2,0){0.20}
\pscircle[fillstyle=solid,fillcolor=black](2,1){0.20}
\pscircle[fillstyle=solid,fillcolor=black](5,1){0.20}
\pscircle[fillstyle=solid,fillcolor=black](5,2){0.20}
\pscircle[fillstyle=solid,fillcolor=black](6,2){0.20}
\pscircle[fillstyle=solid,fillcolor=black](6,4){0.20}
\pscircle[fillstyle=solid,fillcolor=black](8,4){0.20}
\pscircle[fillstyle=solid,fillcolor=black](8,5){0.20}
\pscircle[fillstyle=solid,fillcolor=black](8,7){0.20}
\pscircle[fillstyle=solid,fillcolor=black](8,8){0.20}
\pscircle[fillstyle=solid,fillcolor=black](9,8){0.20}
\pscircle[fillstyle=solid,fillcolor=black](9,9){0.20}
\psline{->}(0,0)(2,0)(2,1)(5,1)(5,2)(6,2)(6,4)(8,4)(8,5)(8,7)(8,8)(9,8)(9,8.9)
\psline[linestyle=dashed,dash=1.75pt 1.5pt]{-}(0,0)(9,9)
\rput(-1.1,-0.75){\scriptsize$_{(0,0)}$}
\rput(10.7,9.75){\scriptsize$_{(9,9)}$}
\end{pspicture}
\ .
\end{equation}
To count these walks, we again consider intermediate walks from $(0,0)$ to $(m,p)$ where extra stops are allowed, but only in the $n$-th column. For $m<n$, these intermediate walks are identical to those considered for the algebra $\tl_n$. Their number is therefore given by $a_{m,p}$. 

For $m = n$, we refine the descriptions of these walks by considering those ending at $(n,p)$ that contain exactly $s$ stops on the $n$-th column. We denote their counting by $b_{p}^{(s)}$ (with $s \in 1, \dots, p+1$). For example, if
$n=6$, the intermediate walks
\begin{equation}
\psset{unit=0.25}
\begin{pspicture}[shift=-4.6](-0.5,-0.75)(7.5,7.5)
\multiput(0,0)(0,1){8}{\psline[linestyle=dashed,dash=2pt 1pt,linecolor=gray, linewidth=0.5pt]{-}(-0.5,0)(7.5,0)}
\multiput(0,0)(1,0){8}{\psline[linestyle=dashed,dash=2pt 1pt,linecolor=gray, linewidth=0.5pt]{-}(0,-0.5)(0,7.5)}
\pscircle[fillstyle=solid,fillcolor=black](0,0){0.20}
\pscircle[fillstyle=solid,fillcolor=black](2,0){0.20}
\pscircle[fillstyle=solid,fillcolor=black](2,1){0.20}
\pscircle[fillstyle=solid,fillcolor=black](3,1){0.20}
\pscircle[fillstyle=solid,fillcolor=black](3,3){0.20}
\pscircle[fillstyle=solid,fillcolor=black](4,3){0.20}
\pscircle[fillstyle=solid,fillcolor=black](4,4){0.20}
\pscircle[fillstyle=solid,fillcolor=black](6,4){0.20}
\psline{->}(0,0)(2,0)(2,1)(3,1)(3,3)(4,3)(4,4)(5.9,4)
\psline[linestyle=dashed,dash=1.75pt 1.5pt]{-}(0,0)(7,7)
\rput(-1.1,-0.75){\scriptsize$_{(0,0)}$}
\rput(8.4,7.75){\scriptsize$_{(7,7)}$}
\end{pspicture}
\ , \qquad \qquad
\begin{pspicture}[shift=-4.6](-0.5,-0.75)(7.5,7.5)
\multiput(0,0)(0,1){8}{\psline[linestyle=dashed,dash=2pt 1pt,linecolor=gray, linewidth=0.5pt]{-}(-0.5,0)(7.5,0)}
\multiput(0,0)(1,0){8}{\psline[linestyle=dashed,dash=2pt 1pt,linecolor=gray, linewidth=0.5pt]{-}(0,-0.5)(0,7.5)}
\pscircle[fillstyle=solid,fillcolor=black](0,0){0.20}
\pscircle[fillstyle=solid,fillcolor=black](1,0){0.20}
\pscircle[fillstyle=solid,fillcolor=black](1,1){0.20}
\pscircle[fillstyle=solid,fillcolor=black](4,1){0.20}
\pscircle[fillstyle=solid,fillcolor=black](4,3){0.20}
\pscircle[fillstyle=solid,fillcolor=black](6,3){0.20}
\pscircle[fillstyle=solid,fillcolor=black](6,4.9){0.20}
\psline{->}(0,0)(1,0)(1,1)(4,1)(4,3)(6,3)(6,4.9)
\psline[linestyle=dashed,dash=1.75pt 1.5pt]{-}(0,0)(7,7)
\rput(-1.1,-0.75){\scriptsize$_{(0,0)}$}
\rput(8.4,7.75){\scriptsize$_{(7,7)}$}
\end{pspicture}
\ , \qquad \qquad
\begin{pspicture}[shift=-4.6](-0.5,-0.75)(7.5,7.5)
\multiput(0,0)(0,1){8}{\psline[linestyle=dashed,dash=2pt 1pt,linecolor=gray, linewidth=0.5pt]{-}(-0.5,0)(7.5,0)}
\multiput(0,0)(1,0){8}{\psline[linestyle=dashed,dash=2pt 1pt,linecolor=gray, linewidth=0.5pt]{-}(0,-0.5)(0,7.5)}
\pscircle[fillstyle=solid,fillcolor=black](0,0){0.20}
\pscircle[fillstyle=solid,fillcolor=black](3,0){0.20}
\pscircle[fillstyle=solid,fillcolor=black](3,1){0.20}
\pscircle[fillstyle=solid,fillcolor=black](6,1){0.20}
\pscircle[fillstyle=solid,fillcolor=black](6,1){0.20}
\pscircle[fillstyle=solid,fillcolor=black](6,2){0.20}
\pscircle[fillstyle=solid,fillcolor=black](6,4){0.20}
\pscircle[fillstyle=solid,fillcolor=black](6,6){0.20}
\psline{->}(0,0)(3,0)(3,1)(6,1)(6,5.9)
\psline[linestyle=dashed,dash=1.75pt 1.5pt]{-}(0,0)(7,7)
\rput(-1.1,-0.75){\scriptsize$_{(0,0)}$}
\rput(8.4,7.75){\scriptsize$_{(7,7)}$}
\end{pspicture}
\end{equation}
respectively have $s = 1, 2$ and $4$. The countings are easily seen to satisfy the relations
\begin{equation}
b_{0}^{(s)} = \delta_{s,1}, \qquad b_p^{(1)} = a_{n-1,p}, \qquad b_p^{(s)} = b_{p-1}^{(s-1)} + b_{p-1}^{(s)} \quad (2 \le s \le p+1), 
\end{equation}
which determine them completely:
\begin{equation}
b_p^{(s)} = \binom{n+p-1}{p-s+1} - \binom{n+p-1}{p-s}.
\end{equation}

Finally, let us denote by $c_p$ the number of intermediate walks ending at $(n+1,p)$. These satisfy the relations
\begin{equation}
c_0 = 1, \qquad c_p = c_{p-1} + \sum_{s = 1}^{p+1} b_p^{(s)} \quad (1 \le p \le n), \qquad c_{n+1} = c_n,
\end{equation}
from which we find
\begin{equation}
c_p = \begin{cases}
\displaystyle\binom{n+p}{p} & 
0 \le p \le n,\\[0.3cm]
\displaystyle\binom{2n}{n} & 
p = n+1.
\end{cases}
\end{equation}
The number of distinct reduced $\algtl_n$ words is then given by $c_{n+1}$, which equals the number \eqref{eq:diatlcounting} of connectivities of $\diatl_n$. We have thus proved the following proposition:
\begin{Proposition}
The algebras $\algtl_n$ and $\diatl_n$ are isomorphic. 
\end{Proposition}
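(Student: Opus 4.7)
The plan is to combine the three ingredients already assembled in the appendix: the surjectivity of $\psi$ established in \cref{prop:Surjective}, the normal form for reduced words provided by \cref{lem:TLoneReduced}, and the cardinality $\dim\diatl_n = \binom{2n}{n}$ computed from \eqref{eq:diatlcounting}. Since $\psi$ is a surjective algebra homomorphism, we have $\dim \diatl_n \le \dim \algtl_n$, so it suffices to produce an upper bound $\dim \algtl_n \le \binom{2n}{n}$. The bound will come from counting the reduced words in the normal form of \cref{lem:TLoneReduced}, which span $\algtl_n$ by definition.

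First I would set up the bijection between reduced words in normal form \eqref{eq:TLoneJonesform} and subdiagonal north-east lattice walks from $(0,0)$ to $(n+1,n+1)$, extending Wilf's bijection for the ordinary Temperley-Lieb algebra. The key adjustment is that the occurrences of $e_n$ in the word correspond to additional non-turning stops in column $n$ of the walk; the uniqueness of the normal form guarantees that the map is bijective. A reduced word with no $e_n$ factor is sent to a walk that uses the diagonal stops at column $n+1$ in the trivial way, while each additional block $(e_n e_{n-1} \cdots e_{\ell_i})$ inserts one further stop at height $\ell_i$ in column $n$.

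Next I would count these walks by an intermediate-stop recursion. Writing $a_{m,p}$ for the number of ordinary subdiagonal walks from $(0,0)$ to $(m,p)$, and $b_p^{(s)}$ for the number of extended walks ending at $(n,p)$ with exactly $s$ stops in column $n$, the initial conditions $b_0^{(s)}=\delta_{s,1}$, $b_p^{(1)}=a_{n-1,p}$, and the transition $b_p^{(s)}=b_{p-1}^{(s-1)}+b_{p-1}^{(s)}$ determine $b_p^{(s)}$ in closed form as a difference of binomial coefficients. Summing over $s$ and $p$ and then adjoining the final eastern step into column $n+1$ yields $c_{n+1}=\binom{2n}{n}$, as indicated in the excerpt. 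Thus the spanning set of reduced words has cardinality at most $\binom{2n}{n}$, giving the desired dimension bound.

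Combining everything, $\dim\algtl_n\le\binom{2n}{n}=\dim\diatl_n\le\dim\algtl_n$, so all inequalities are equalities and $\psi$ is the required isomorphism. The main obstacle is really bookkeeping: verifying that every reduced word admits a \emph{unique} expression in the normal form of \cref{lem:TLoneReduced} (so that the walk bijection is well defined), and that the intermediate counts $b_p^{(s)}$ satisfy the stated recursion. Both are elementary but need care, because multiple snake factors starting with $e_n$ can, a priori, be rewritten using the relations in \eqref{eq:defTL1}, and one must argue that the normal form eliminates all such ambiguity.
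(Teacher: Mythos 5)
Your proposal is correct and follows essentially the same route as the paper: surjectivity of $\psi$ from \cref{prop:Surjective}, the reduced-word normal form of \cref{lem:TLoneReduced}, and the walk-counting argument giving $c_{n+1}=\binom{2n}{n}$, so that $\dim\diatl_n\le\dim\algtl_n\le\binom{2n}{n}=\dim\diatl_n$. The only remark is that uniqueness of the normal form is not actually needed --- the walks are in bijection with the normal-form \emph{expressions}, which merely span $\algtl_n$, and that already yields the upper bound on the dimension.
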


We conclude this section by further refining the above counting exercises, by considering the walks ending at $(n+1,p)$ for which each intermediate walk has at most $t$ stops in the $n$-th column. (Here, the range of $t$ is $\{1, \dots, p+1 \}$ for $0\le p \le n$ and $\{1, \dots, n+1 \}$ for $p = n+1$.) The numbers $c_p^{(t)}$ (with $c_p = c_p^{(p+1)}$, for $0\le p \le n$, and $c_{n+1} = c_{n+1}^{(n+1)}$) of these walks satisfy the relations
\begin{equation}
c_0^{(t)} = 1, \qquad c_p^{(t)} = c_{p-1}^{(t)} + \sum_{s = 1}^{t} b_p^{(s)} \quad (1 \le p \le n), \qquad c_{n+1}^{(t)} = c_n^{(t)},
\end{equation}
which leads to
\begin{equation} \label{eq:FinestCounting}
c_p^{(t)} = \begin{cases} 
\displaystyle\binom{n+p}{p} - \binom{n+p}{p-t} & 
0 \le p \le n,\\[3mm]
\displaystyle\binom{2n}{n} - \binom{2n}{n-t} & 
p = n+1.
\end{cases}
\end{equation}
This result will be used to count the dimensions of the seam algebras $\btl_{n,k}$ in \cref{app:SeamsBetaFormal}.

%
\section{Boundary seam algebra proofs} \label{app:Seams}
%

In this appendix, we present the proofs of certain results pertaining to the boundary seam algebras $\btl_{n,k}$ that are 
used in the body of the paper. These are divided into proofs for which $\beta$ is a formal parameter and proofs for 
which $\beta$ is specialised to a complex number.

Here, we will distinguish between the formal
boundary seam algebras according to whether they are defined diagrammatically or algebraically. The 
diagrammatic algebras need not admit a well-defined specialisation at all $\beta \in \CC$ whereas their algebraic 
counterparts may always be specialised, see \cref{app:SeamsBetaC}.
The (formal) diagrammatic algebra, $\diabtl_{n,k}$, is defined to be the linear span of all products of the diagrams of 
$\Ik$ and the $\Ekj{k}{j}$, $j = 1, \dots, n$, see \eqref{eq:Ik} and \eqref{eq:Ek}. Here, by product, we mean the 
diagrammatic one defined by restriction from $\tl_{n+k}$. Its algebraic counterpart, 
$\algbtl_{n,k}$, is defined to be the linear span of all formal words in the generators $\Ik$ and the 
$\Ekj{k}{j}$, $j = 1, \dots, n$, subject to the relations \eqref{eq:newBTL} and, if $n>k$, \eqref{eq:ClosureRelation}. 
One result of \cref{app:SeamsBetaFormal} is that over a complex function field, the two algebras are isomorphic.

\subsection{Proofs for $\beta$ formal} \label{app:SeamsBetaFormal}

It is useful to first give an alternative characterisation of these algebras in terms of the projections (idempotents) $\Ik$, more specifically, in terms of the subalgebras
\begin{equation} \label{eq:defatl}
\atl_{n,k} = \Ik \tl_{n+k} \Ik = \set{\Ik a \Ik \, ; \: a \in \tl_{n+k}} \subseteq \tl_{n+k}.
\end{equation}
The subalgebra
$\atl_{n,k}$ is a unital associative algebra in its own right, the unit being $\Ik = \Ik I \Ik$. It is spanned by diagrams made of connectivities in $\tl_{n+k}$ sandwiched between two projectors $P_k$ acting on the nodes $n\!+\!1, \dots, n\!+\!k$. If $a$ can be written as $e_j a'$ or $a' e_j$, for $j \in \{n\!+\!1, \dots, n\!+\!k\!-\!1\}$, and some $a' \in \tl_{n+k}$, then $\Ik a\, \Ik$ is zero. For $\Ik a\, \Ik$ to be non-zero, $a$ must therefore have no arc connecting two neighbouring nodes in the range $n\!+\!1, \dots, n\!+\!k$ of both its bottom and top edges.  We will refer to the nodes in the range $1, \ldots, n$ as \emph{bulk nodes} and those in the range $n+1, \ldots, n+k$ as \emph{boundary nodes}.

\begin{Proposition}
Over a complex function field, the algebras $\atl_{n,k}$ and $\diabtl_{n,k}$ are isomorphic. 
\label{sec:aandbiso}
\end{Proposition}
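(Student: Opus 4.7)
The inclusion $\diabtl_{n,k}\subseteq \atl_{n,k}$ is immediate: the unit $\Ik = \Ik\cdot I\cdot \Ik\in\atl_{n,k}$, while each generator $\Ekj{k}{j}$ has the form $\Ik e_j\Ik$ (for $j<n$, this follows from $\Ik^2=\Ik$ together with $[\Ik,e_j]=0$; for $j=n$, it is $U_{k-1}\,\Ik e_n\Ik$ by definition). Since $\atl_{n,k}$ is closed under multiplication, the inclusion follows. Over the complex function field, $U_{k-1}$ is invertible, so this identification is unambiguous.

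For the reverse inclusion, my first plan is to proceed by induction on the length $m$ of a reduced word $a=e_{j_1}\cdots e_{j_m}$, showing that $\Ik a\Ik\in \diabtl_{n,k}$; because such words span $\tl_{n+k}$, this will suffice. The induction leverages three Wenzl-Jones consequences:
\emph{(i)} $\Ik e_j=e_j\Ik=\Ekj{k}{j}$ for $1\le j\le n-1$;
\emph{(ii)} $\Ik e_j=e_j\Ik=0$ for $n+1\le j\le n+k-1$, since $P_k e_i=0$ for $1\le i\le k-1$, see \eqref{eq:WJprop1};
\emph{(iii)} $\Ik e_n\Ik = U_{k-1}^{-1}\Ekj{k}{n}$ by definition. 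Together with $\Ik=\Ik^2$, these allow us to insert projectors at the extremes of a word and extract generators of $\diabtl_{n,k}$, peeling off one letter at a time. The cases $j_1>n$ or $j_m>n$ terminate immediately by \emph{(ii)}, giving zero.

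The main obstacle will be the treatment of \emph{interior} boundary generators: letters $e_j$ with $j>n$ appearing in the middle of a word, past which $\Ik$ cannot be freely commuted. My plan here is to combine two tools. First, Temperley-Lieb relations such as $e_n e_{n+1}e_n=e_n$ and $[e_i,e_j]=0$ for $|i-j|>1$ can be used to push boundary letters toward the edges until they are annihilated via \emph{(ii)} or fuse with an adjacent $e_n$. Second, when such simplification is blocked, I would invoke the identity $e_n \Ik e_n = (U_k/U_{k-1})\,e_n\, P_{k-1}^{(n+2,\ldots,n+k)}$, a direct consequence of \eqref{eq:WJprop2}, which replaces the projector of width $k$ by one of width $k-1$ at the cost of a finite scalar. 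The induction would then be organised on the pair (word length, number of interior letters with $j>n$), each step strictly decreasing one component; the non-vanishing of the relevant Chebyshev polynomials over the complex function field is precisely what ensures the reduction stays inside $\diabtl_{n,k}$.

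A more abstract alternative, which I would pursue in parallel, uses a dimension argument exploiting semisimplicity of $\tl_{n+k}$ over the complex function field. The Wedderburn decomposition gives $\tl_{n+k}\simeq \bigoplus_d \End(\stan_{n+k}^d)$, under which $\Ik$ becomes a block-diagonal idempotent and $\atl_{n,k}\simeq \bigoplus_d \End(\Ik\stan_{n+k}^d)$. One identifies $\Ik\stan_{n+k}^d$ with the standard $\diabtl_{n,k}$-module $\stan_{n,k}^d$ of Section~\ref{sec:reps}, since $\Ik$ acts on link states by killing precisely those with an arc between two boundary nodes, yielding the quotient \eqref{eq:Vdecomp}. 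Then $\dim\atl_{n,k}=\sum_d(\dim\stan_{n,k}^d)^2$. Matching this with an independent count of reduced words in $\diabtl_{n,k}$ (following the walk enumeration of \eqref{eq:FinestCounting}), the already-established inclusion forces equality. The subtlety in this alternative is justifying that the action of $\diabtl_{n,k}$ on $\bigoplus_d \stan_{n,k}^d$ is faithful and surjects onto $\bigoplus_d \End(\stan_{n,k}^d)$, which I expect to follow from the non-degeneracy of the invariant form $\gramprodk{\cdot}{\cdot}$ over the function field, together with a standard cellularity-type argument.
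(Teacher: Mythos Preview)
Your first approach has a real gap. The peeling induction works cleanly when the outermost letter $e_{j_1}$ satisfies $j_1<n$ (commute $\Ik$ through) or $j_1>n$ (kill it), but when $j_1=n$ you cannot extract a factor of $\Ekj{k}{n}$ because $\Ik e_n\neq \Ik e_n\Ik$ in general. Your proposed fix via $e_n\Ik e_n=(U_k/U_{k-1})\,e_n P_{k-1}^{(n+2,\ldots,n+k)}$ goes the wrong way: it replaces $\Ik$ by the \emph{smaller} projector $P_{k-1}$, not the other way round, so you end up with expressions sandwiched by the wrong idempotent. Concretely, try $\Ik e_n e_{n-1} e_{n+1} e_n\Ik$: commuting $e_{n+1}$ past $e_{n-1}$ leaves $e_{n+1}$ trapped between two $e_n$'s with no available relation to remove it, and neither of your two tools reduces the complexity. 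The induction on ``(word length, number of interior boundary letters)'' does not visibly decrease.

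Your second approach is viable but not what the paper does. Note that the walk count \eqref{eq:FinestCounting} only gives an \emph{upper} bound for $\dim\algbtl_{n,k}$, hence for $\dim\diabtl_{n,k}$; combined with $\diabtl_{n,k}\subseteq\atl_{n,k}$ this is useless. What you actually need is the surjectivity step you flag at the end: irreducibility and pairwise inequivalence of the $\stan_{n,k}^d$ as $\diabtl_{n,k}$-modules (via the non-degenerate form $\gramprodk{\cdot}{\cdot}$ and the distinct dimensions), absolute irreducibility so that $\End_{\diabtl_{n,k}}(\stan_{n,k}^d)$ is the ground field, and then Jacobson density to fill out $\bigoplus_d\End(\stan_{n,k}^d)\simeq\atl_{n,k}$. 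This can be made to work over the function field, but it is a heavier toolkit than is needed.

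The paper's proof is constructive and avoids both routes. It observes that a basis of $\atl_{n,k}$ is given by $\Ik a\Ik$ with $a$ having no boundary--boundary arcs; cutting such $a$ horizontally produces a pair of $\tl_n$ link states (with two kinds of defects) characterised by the numbers $t,t'$ of bulk-to-boundary arcs on top and bottom. Because the standard $\tl_n$-modules are irreducible over the function field, left and right multiplication by the bulk subalgebra $\langle\Ekj{k}{j}:j<n\rangle\simeq\tl_n$ lets one reach any basis element from a single representative for each pair $(t,t')$. For $t=t'$ one takes the tangle $\Yk_t$ of \eqref{eq:Yt}, already known to lie in $\diabtl_{n,k}$ by the recursion \eqref{eq:Yrec}; unequal $t,t'$ are obtained from $\Yk_{(t+t')/2}$ by further bulk multiplication. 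Thus the $\Ekj{k}{j}$ for $j<n$ together with the $\Yk_t$ generate $\atl_{n,k}$, giving the reverse inclusion directly.
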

\begin{proof}
It is clear that the unit $\Ik$ and the generators $\Ekj{k}{j}$ of $\diabtl_{n,k}$ are elements of $\atl_{n,k}$, hence that $\diabtl_{n,k}$ is a subalgebra of $\atl_{n,k}$.  It follows that the tangles $\Yk_t \in \diabtl_{n,k}$, defined in \eqref{eq:Yt}, also belong to $\atl_{n,k}$.  We claim that these tangles, along with the $\Ekj{k}{j}$ with $j<n$, generate $\atl_{n,k}$.  Clearly, the proposition will be proved once this claim is established.

A basis of the subalgebra
$\atl_{n,k}$ is given by the $\Ik a \Ik$, where $a$ is a connectivity of $\tl_{n+k}$ in which there are no arcs connecting boundary nodes along the bottom or the top edges.  We let $t$ denote the number of arcs connecting a top bulk node to a boundary node (top or bottom) and $t'$ be the number of arcs connecting a bottom bulk node to a boundary node.  Clearly, $t+t'$ must be even and bounded above by $2k$.  Here is an example with $n=6$, $k=4$, $t=3$ and $t'=1$:
\begin{equation} \label{eq:ExOfAnkBasisElt}
I^{_{(4)}}_{\phantom{j}} a I^{_{(4)}}_{\phantom{j}} = \ 
\psset{unit=1} 
\begin{pspicture}[shift=-0.75](-0.4,-0.0)(3.65,1.7)
\pspolygon[fillstyle=solid,fillcolor=lightlightblue,linewidth=0pt,linecolor=white](-0.4,0)(-0.4,1.7)(3.65,1.7)(3.65,0)
\pspolygon(-0.4,0)(-0.4,1.7)(3.65,1.7)(3.65,0)
\psarc[linewidth=1.0pt,linecolor=blue]{-}(0.4,0){0.2}{0}{180}
\psbezier[linewidth=1.0pt,linecolor=blue]{-}(-0.2,0)(-0.2,0.6)(1.0,0.6)(1.0,0)
\psbezier[linewidth=1.0pt,linecolor=blue]{-}(-0.2,1.7)(-0.2,0.85)(1.4,0.85)(1.4,0)
\psline[linewidth=1.0pt,linecolor=blue]{-}(1.8,0)(1.8,0.2)
\psline[linewidth=1.0pt,linecolor=blue]{-}(2.2,0)(2.2,0.2)
\psline[linewidth=1.0pt,linecolor=blue]{-}(2.6,0)(2.6,0.2)
\psline[linewidth=1.0pt,linecolor=blue]{-}(1.4,1.5)(1.4,1.7)
\psline[linewidth=1.0pt,linecolor=blue]{-}(1.8,1.5)(1.8,1.7)
\psline[linewidth=1.0pt,linecolor=blue]{-}(2.2,1.5)(2.2,1.7)
\psline[linewidth=1.0pt,linecolor=blue]{-}(2.6,1.5)(2.6,1.7)
\psarc[linewidth=1.0pt,linecolor=blue]{-}(2.0,0.2){0.2}{0}{180}
\psarc[linewidth=1.0pt,linecolor=blue]{-}(2.0,1.5){0.2}{180}{0}
\psarc[linewidth=1.0pt,linecolor=blue]{-}(0.8,1.7){0.2}{180}{0}
\psbezier[linewidth=1.0pt,linecolor=blue]{-}(1.4,1.5)(1.4,0.9)(2.6,0.9)(2.6,1.5)
\psbezier[linewidth=1.0pt,linecolor=blue]{-}(0.2,1.7)(0.2,0.95)(2.6,0.95)(2.6,0.2)
\psline[linewidth=1.0pt,linecolor=blue]{-}(3.0,0)(3.0,1.7)
\psline[linewidth=1.0pt,linecolor=blue]{-}(3.4,0)(3.4,1.7)
\pspolygon[fillstyle=solid,fillcolor=pink](2.05,0.1)(2.05,0.2)(3.55,0.2)(3.55,0.1)
\pspolygon[fillstyle=solid,fillcolor=pink](2.05,1.5)(2.05,1.6)(3.55,1.6)(3.55,1.5)
\end{pspicture}
\ .
\end{equation}
If $t \ge t'$, then there will be $\frac{1}{2} (t+t')$ arcs connecting a top bulk node to a top boundary node, $t'$ arcs connecting a bottom bulk node to a bottom boundary node and $\frac{1}{2} (t-t')$ arcs connecting a top bulk node to a bottom boundary node.  The boundary nodes in each case will be labelled by $n+1, \ldots, n + \frac{1}{2} (t+t')$ (top-top), $n+1, \ldots, n + t'$ (bottom-bottom) and $n+t'+1, \ldots, n + \frac{1}{2} (t+t')$ (top-bottom).  When $t \le t'$, the situation is similar --- swap $t$ with $t'$ as well as ``top'' with ``bottom'' in the previous description.

The $\tl_{n+k}$-connectivities $a$ are then uniquely determined by the arcs connecting the bulk nodes to one another and the numbers $t$ and $t'$.  Cutting away the boundary nodes from the connectivity $a$ then results in a $\tlone_n$-connectivity $\Phi(a)$ in which there are $t$ arcs from the top edge to the right boundary and $t'$ arcs from the bottom to the right boundary.  In the example \eqref{eq:ExOfAnkBasisElt}, the cutting is illustrated thusly:
\begin{equation}
\psset{unit=1} 
\begin{pspicture}[shift=-0.75](-0.4,-0.0)(3.65,1.7)
\pspolygon[fillstyle=solid,fillcolor=lightlightblue,linewidth=0pt,linecolor=white](-0.4,0)(-0.4,1.7)(3.65,1.7)(3.65,0)
\psline[linewidth=1pt,linecolor=red,linestyle=dashed,dash=2pt 2pt](2.0,0)(2.0,1.7)
\pspolygon(-0.4,0)(-0.4,1.7)(3.65,1.7)(3.65,0)
\psarc[linewidth=1.0pt,linecolor=blue]{-}(0.4,0){0.2}{0}{180}
\psbezier[linewidth=1.0pt,linecolor=blue]{-}(-0.2,0)(-0.2,0.6)(1.0,0.6)(1.0,0)
\psbezier[linewidth=1.0pt,linecolor=blue]{-}(-0.2,1.7)(-0.2,0.85)(1.4,0.85)(1.4,0)
\psline[linewidth=1.0pt,linecolor=blue]{-}(1.8,0)(1.8,0.2)
\psline[linewidth=1.0pt,linecolor=blue]{-}(2.2,0)(2.2,0.2)
\psline[linewidth=1.0pt,linecolor=blue]{-}(2.6,0)(2.6,0.2)
\psline[linewidth=1.0pt,linecolor=blue]{-}(1.4,1.5)(1.4,1.7)
\psline[linewidth=1.0pt,linecolor=blue]{-}(1.8,1.5)(1.8,1.7)
\psline[linewidth=1.0pt,linecolor=blue]{-}(2.2,1.5)(2.2,1.7)
\psline[linewidth=1.0pt,linecolor=blue]{-}(2.6,1.5)(2.6,1.7)
\psarc[linewidth=1.0pt,linecolor=blue]{-}(2.0,0.2){0.2}{0}{180}
\psarc[linewidth=1.0pt,linecolor=blue]{-}(2.0,1.5){0.2}{180}{0}
\psarc[linewidth=1.0pt,linecolor=blue]{-}(0.8,1.7){0.2}{180}{0}
\psbezier[linewidth=1.0pt,linecolor=blue]{-}(1.4,1.5)(1.4,0.9)(2.6,0.9)(2.6,1.5)
\psbezier[linewidth=1.0pt,linecolor=blue]{-}(0.2,1.7)(0.2,0.95)(2.6,0.95)(2.6,0.2)
\psline[linewidth=1.0pt,linecolor=blue]{-}(3.0,0)(3.0,1.7)
\psline[linewidth=1.0pt,linecolor=blue]{-}(3.4,0)(3.4,1.7)
\pspolygon[fillstyle=solid,fillcolor=pink](2.05,0.1)(2.05,0.2)(3.55,0.2)(3.55,0.1)
\pspolygon[fillstyle=solid,fillcolor=pink](2.05,1.5)(2.05,1.6)(3.55,1.6)(3.55,1.5)
\end{pspicture}
\quad \rightarrow \quad
\begin{pspicture}[shift=-0.75](-0.4,-0.0)(2.0,1.7)
\pspolygon[fillstyle=solid,fillcolor=lightlightblue](-0.4,0)(-0.4,1.7)(2.00,1.7)(2.00,0)
\psarc[linewidth=1.0pt,linecolor=blue]{-}(0.4,0){0.2}{0}{180}
\psbezier[linewidth=1.0pt,linecolor=blue]{-}(-0.2,0)(-0.2,0.6)(1.0,0.6)(1.0,0)
\psbezier[linewidth=1.0pt,linecolor=blue]{-}(-0.2,1.7)(-0.2,0.85)(1.4,0.85)(1.4,0)
\psline[linewidth=1.0pt,linecolor=blue]{-}(1.8,0)(1.8,0.2)
\psline[linewidth=1.0pt,linecolor=blue]{-}(1.4,1.5)(1.4,1.7)
\psline[linewidth=1.0pt,linecolor=blue]{-}(1.8,1.5)(1.8,1.7)
\psarc[linewidth=1.0pt,linecolor=blue]{-}(2.0,0.2){0.2}{0}{180}
\psarc[linewidth=1.0pt,linecolor=blue]{-}(2.0,1.5){0.2}{180}{0}
\psarc[linewidth=1.0pt,linecolor=blue]{-}(0.8,1.7){0.2}{180}{0}
\psbezier[linewidth=1.0pt,linecolor=blue]{-}(1.4,1.5)(1.4,0.9)(2.6,0.9)(2.6,1.5)
\psbezier[linewidth=1.0pt,linecolor=blue]{-}(0.2,1.7)(0.2,0.95)(2.6,0.95)(2.6,0.2)
\pspolygon[fillstyle=solid,fillcolor=white,linecolor=white](2.028,0)(2.028,1.7)(2.8,1.7)(2.8,0)
\end{pspicture}
\ . \label{eq:ExOfAnkBasisElt2}
\end{equation}
This establishes a bijection $\Phi$ between the connectivities $a$ that parametrise a basis of $\atl_{n,k}$ and the connectivities of $\tlone_n$ that have at most $2k$ arcs connecting to the right boundary.

We now cut the $\tlone_n$-connectivity in half horizontally, converting any arcs that connected the top and bottom edges into defects.  The two halves may be regarded as link states for $\tl_n$, upon reflecting the top half to adopt the customary orientation, if we also regard the arcs that connected to the right boundary as defects.  The top link state therefore has at least $t$ defects and the bottom link state at least $t'$.  To illustrate,
\begin{equation}
\begin{pspicture}[shift=-0.75](-0.4,-0.0)(2.0,1.7)
\pspolygon[fillstyle=solid,fillcolor=lightlightblue](-0.4,0)(-0.4,1.7)(2.00,1.7)(2.00,0)
\psarc[linewidth=1.0pt,linecolor=blue]{-}(0.4,0){0.2}{0}{180}
\psbezier[linewidth=1.0pt,linecolor=blue]{-}(-0.2,0)(-0.2,0.6)(1.0,0.6)(1.0,0)
\psbezier[linewidth=1.0pt,linecolor=blue]{-}(-0.2,1.7)(-0.2,0.85)(1.4,0.85)(1.4,0)
\psline[linewidth=1.0pt,linecolor=blue]{-}(1.8,0)(1.8,0.2)
\psline[linewidth=1.0pt,linecolor=blue]{-}(1.4,1.5)(1.4,1.7)
\psline[linewidth=1.0pt,linecolor=blue]{-}(1.8,1.5)(1.8,1.7)
\psarc[linewidth=1.0pt,linecolor=blue]{-}(2.0,0.2){0.2}{0}{180}
\psarc[linewidth=1.0pt,linecolor=blue]{-}(2.0,1.5){0.2}{180}{0}
\psarc[linewidth=1.0pt,linecolor=blue]{-}(0.8,1.7){0.2}{180}{0}
\psbezier[linewidth=1.0pt,linecolor=blue]{-}(1.4,1.5)(1.4,0.9)(2.6,0.9)(2.6,1.5)
\psbezier[linewidth=1.0pt,linecolor=blue]{-}(0.2,1.7)(0.2,0.95)(2.6,0.95)(2.6,0.2)
\pspolygon[fillstyle=solid,fillcolor=white,linecolor=white](2.028,0)(2.028,1.7)(2.8,1.7)(2.8,0)
\end{pspicture}
\qquad \rightarrow \qquad 
\begin{pspicture}[shift=-0.75](-0.4,-0.0)(2.0,1.7)
\psline{-}(-0.4,0)(2.00,0)
\psline{-}(2.00,1.7)(-0.4,1.7)
\psarc[linewidth=1.0pt,linecolor=blue]{-}(0.4,0){0.2}{0}{180}
\psbezier[linewidth=1.0pt,linecolor=blue]{-}(-0.2,0)(-0.2,0.6)(1.0,0.6)(1.0,0)
\psline[linewidth=1.0pt,linecolor=blue]{-}(1.4,0)(1.4,0.6)
\psline[linewidth=1.0pt,linecolor=blue]{-}(1.8,0)(1.8,0.6)
\psline[linewidth=1.0pt,linecolor=blue]{-}(-0.2,1.1)(-0.2,1.7)
\psline[linewidth=1.0pt,linecolor=blue]{-}(0.2,1.1)(0.2,1.7)
\psline[linewidth=1.0pt,linecolor=blue]{-}(1.4,1.1)(1.4,1.7)
\psline[linewidth=1.0pt,linecolor=blue]{-}(1.8,1.1)(1.8,1.7)
\psarc[linewidth=1.0pt,linecolor=blue]{-}(0.8,1.7){0.2}{180}{0}
\pspolygon[fillstyle=solid,fillcolor=white,linecolor=white](2.028,0)(2.028,1.7)(2.8,1.7)(2.8,0)
\end{pspicture} \ .
\end{equation}
Again, this cutting can be inverted to uniquely reconstitute the $\tlone_n$-connectivity, provided that we remember the numbers $t$ and $t'$.

As the standard $\tl_n$-modules are irreducible over a complex function field, acting with $\tl_n$ on any link state with $d$ defects lets us produce all the link states with $d$ defects.  One can therefore also obtain all $d$ defect link states from any $d$ defect link state using the natural action of $\tl_n$ on link states, where we do not set the result to zero if the number of defects has decreased. For instance, in the natural representation, $e_2$ acting on 
$\,
\psset{unit=0.45}
\begin{pspicture}(0.0,0.0)(1.2,0.7)
\psline{-}(0,0)(1.2,0)
\psline[linewidth=1.0pt,linecolor=blue]{-}(0.2,0)(0.2,0.6)
\psline[linewidth=1.0pt,linecolor=blue]{-}(0.6,0)(0.6,0.6)
\psline[linewidth=1.0pt,linecolor=blue]{-}(1.0,0)(1.0,0.6)
\end{pspicture}\,
$
 gives 
 $\,
\psset{unit=0.45}
\begin{pspicture}(0.0,0.0)(1.2,0.7)
\psline{-}(0,0)(1.2,0)
\psline[linewidth=1.0pt,linecolor=blue]{-}(0.2,0)(0.2,0.6)
\psarc[linewidth=1.0pt,linecolor=blue]{-}(0.8,0){0.2}{0}{180}
\end{pspicture}\,
$.
This is relevant because left- or right-multiplying a $\tlone_n$ connectivity by $\tl_n \subset \tlone_n$ amounts to the natural $\tl_n$-action, up to factors of $\beta_1$ and $\beta_2$, on the bottom or top link states, respectively, that result from cutting  the connectivity. More importantly, it also corresponds to left- or right-multiplication by the subalgebra $\langle \Ik e_j; j = 1, \dots, n-1 \rangle \simeq \tl_n$ of $\atl_{n,k}$ when the boundary nodes are reinserted.

This shows that a set of generators of $\atl_{n,k}$ is obtained by choosing one basis element $\Ik a_{t,t'} \Ik \in \atl_{n,k}$, for each $t$ and $t'$, where $a_{t,t'}$ has $t$ top bulk to boundary and $t'$ bottom bulk to boundary arcs.  When $t=t'$, we may choose $\Ik a_{t,t} \Ik = \Yk_t$.  When $t > t'$, we may obtain such a basis element from $\Yk_{(t+t')/2}$ by left-multiplying by the $\tl_n$-subalgebra to convert bulk to boundary arcs from the bottom into an arc that ties a bulk node from the top to a boundary node in the bottom. The case $t<t'$ is handled similarly.  The $\Ekj{k}{j}$, with $j \neq n$, and the $\Yk_t$ are therefore generators of $\atl_{n,k}$, completing the proof.
\end{proof}

The bijection $\Phi$ in this proof is also useful to determine the dimension of the boundary seam algebras.  Indeed, when $k \ge n$, $\Phi$ maps a basis of $\atl_{n,k}$ to a basis of $\tlone_n$, where we recall from \eqref{eq:betas} that $\beta_1$ and $\beta_2$ are identified with $U_{k-1}$ and $U_k$, respectively. Since $\atl_{n,k} \cong \diabtl_{n,k}$ is a quotient of $\tlone_n$, we obtain the following result.

\begin{cor} \label{cor:BTL=TLone}
If $k \ge n$, then $\diabtl_{n,k} \cong \tlone_n$ over a complex function field, so $\dim \diabtl_{n,k} = \displaystyle \binom{2n}{n}$.
\end{cor}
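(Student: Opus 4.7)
The plan is to combine two ingredients already established earlier: the isomorphism $\atl_{n,k} \cong \diabtl_{n,k}$ proved in the preceding proposition, and the surjective unital algebra homomorphism $\mathfrak{h}\colon \tlone_n \to \diabtl_{n,k}$ from \eqref{eq:surjectiveh}, applied with the parameter specialisation $\beta_1 = U_{k-1}$, $\beta_2 = U_k$ of \eqref{eq:betas}. Since $\mathfrak{h}$ is surjective, it suffices to show equality of dimensions when $k \ge n$; surjectivity plus matching finite dimensions forces $\mathfrak{h}$ to be an isomorphism.

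First I would invoke the bijection $\Phi$ constructed in the proof of the preceding proposition. Recall that $\Phi$ sends a basis element $\Ik a \Ik$ of $\atl_{n,k}$ to the $\tlone_n$-connectivity obtained by cutting away the $k$ boundary nodes from $a$, and that its image is precisely the set of $\tlone_n$-connectivities having at most $2k$ arcs meeting the right boundary (because every such arc must terminate at one of the $k$ top or $k$ bottom boundary nodes of $a$, giving at most $2k$ endpoints). The key observation is combinatorial: any $\tlone_n$-connectivity on $n$ nodes can have at most $2n$ loop segments ending at the right boundary, as the total number of endpoints available on the top and bottom edges is $2n$. Hence, as soon as $k \ge n$, the constraint ``at most $2k$ boundary arcs'' is automatically satisfied by every $\tlone_n$-connectivity, so $\Phi$ is a bijection between the natural basis of $\atl_{n,k}$ and the full set of $\tlone_n$-connectivities.

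Combining this count with \eqref{eq:dimTL1} yields $\dim \atl_{n,k} = \dim \tlone_n = \binom{2n}{n}$, and then the isomorphism $\atl_{n,k} \cong \diabtl_{n,k}$ of the preceding proposition gives $\dim \diabtl_{n,k} = \binom{2n}{n}$ as well. Surjectivity of $\mathfrak{h}$ together with this equality of finite dimensions over the complex function field then implies that $\mathfrak{h}$ is an isomorphism, proving both assertions of the corollary simultaneously. I do not foresee any genuine obstacle: the substantive work (constructing $\Phi$, verifying that the relations \eqref{eq:newBTL} suffice for $\mathfrak{h}$ to be well-defined, and establishing $\atl_{n,k} \cong \diabtl_{n,k}$) has already been carried out, and the only step specific to this corollary is the elementary observation that $2n \le 2k$ makes the ``at most $2k$ boundary arcs'' condition vacuous.
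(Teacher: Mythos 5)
Your argument is correct and matches the paper's own reasoning: the paper likewise notes that when $k \ge n$ the bijection $\Phi$ from the proof of \cref{sec:aandbiso} hits \emph{all} $\tlone_n$-connectivities (since at most $2n \le 2k$ loop segments can reach the boundary), so $\dim \atl_{n,k} = \binom{2n}{n}$, and then uses the surjection $\mathfrak h$ together with $\atl_{n,k} \cong \diabtl_{n,k}$ to conclude the isomorphism. No issues to flag.
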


When $k<n$, this argument instead shows
that the dimension of $\atl_{n,k}$ is given by the number of $\tlone_n$-connectivities with at most $2k$ arcs connecting to the right boundary.  We shall again cut these connectivities in half horizontally, but rather than treat the results as pairs of $\tl_n$ link states, useful in the above proof because $\tl_n$ is naturally a subalgebra of $\atl_n$, we shall instead regard each result as a (single) link state for $\tlone_{2n}$.  Similar manipulations were already performed in \cref{app:TLone} to count the number of connectivities in $\diatl_n$.
\begin{Proposition}\label{sec:dimA}
Over a complex function field, the dimension of the algebra $\diabtl_{n,k} \simeq \atl_{n,k}$, for $k<n$, is
\begin{equation}
\dim \diabtl_{n,k} = \binom{2n}{n}- \binom{2n}{n-k-1}.
\label{eq:dimatl}
\end{equation}
\end{Proposition}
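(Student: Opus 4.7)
The plan is to obtain the dimension by transporting the counting problem for $\atl_{n,k}$ to a counting problem for $\tlone_{2n}$ link states via two successive bijections, and then evaluating the resulting sum by telescoping. First, I would invoke the bijection $\Phi$ constructed in the proof of \cref{sec:aandbiso}, which identifies a basis of $\atl_{n,k}$ with the set of $\tlone_n$-connectivities having at most $2k$ arcs attached to the right boundary (the bound $t + t' \le 2k$ coming from the fact that any $\Ik a \Ik$ with a nearest-neighbour arc among the $k$ boundary nodes vanishes). Thus it suffices to enumerate this subset of $\tlone_n$-connectivities.

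Next, I would apply the ``cut and rotate'' bijection used to establish \eqref{eq:diatlcounting}, illustrated in \eqref{eq:cutandrotated}. Under this map, each $\tlone_n$-connectivity is sent to a link state on $2n$ nodes for which the $t$ top and $t'$ bottom boundary arcs become $t + t'$ boundary defects, while all other strands (top-to-top, bottom-to-bottom, and crucially also top-to-bottom) rotate into ordinary arcs rather than vertical defects. The image therefore lies in $\linksb_{2n}^{0,\,t+t'}$. Combining the two bijections and imposing $t + t' \le 2k$, and noting that the parity constraint $d + b = 2n \bmod 2$ forces the boundary defect number $b = t + t'$ to be even, I obtain
\begin{equation*}
 \dim \diabtl_{n,k} \;=\; \sum_{j=0}^{k} \abs{\linksb_{2n}^{0,\,2j}}.
\end{equation*}

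Substituting the closed form $\abs{\linksb_{2n}^{0,\,2j}} = \binom{2n}{n-j} - \binom{2n}{n-j-1}$ from \eqref{eq:cards} into the sum produces a telescoping series collapsing to $\binom{2n}{n} - \binom{2n}{n-k-1}$, as desired. The main delicacy to address is the verification that the rotation bijection really does send top-to-bottom arcs to (non-defect) arcs on the rotated baseline, so that the output lands in $\linksb_{2n}^{0,\,t+t'}$ with no vertical defects; this amounts to re-examining the geometric construction already performed in \eqref{eq:cutandrotated}, and once done the remainder of the argument is purely combinatorial.
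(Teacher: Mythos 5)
Your proposal is correct and follows essentially the same route as the paper: the bijection $\Phi$ onto $\tlone_n$-connectivities with at most $2k$ boundary arcs, then the cut-and-rotate bijection onto link states in $\linksb_{2n}^{0,b}$ with $b\le 2k$ even, and finally the telescoping sum of the cardinalities from \eqref{eq:cards}. No gaps to report.
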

\begin{proof}
The proof uses the bijection $\Phi$ to map the basis tangles of $\atl_{n,k}$ onto the connectivities of $\tlone_{n}$ with at most $2k$ boundary arcs.  The latter connectivities are then mapped bijectively onto the $\tlone_{2n}$ link states with $2n$ nodes and no vertical defects.  This is achieved by the same process, described above \eqref{eq:cutandrotated}, that was used to count $\tlone_n$ connectivities:
One cuts the connectivity horizontally and rotates its top edge so that it lies to the left of the bottom edge. For the $\tlone_n$ connectivity appearing in \eqref{eq:ExOfAnkBasisElt2}, for example, the relevant rotation is precisely that displayed in \eqref{eq:cutandrotated}.
This rotation is again invertible and it follows that the dimension of $\atl_{n,k}$ may be written as a sum over link state cardinalities:
\begin{equation}
\dim \atl_{n,k} = \sideset{}{'}\sum_{b = 0}^{2k} \abs{\linksb_{2n}^{0,b}}.
\end{equation}
Here, the primed summation indicates that the index increases in steps of two.  This is straightforwardly simplified to \eqref{eq:dimatl} using \eqref{eq:cards}.
\end{proof}

It follows immediately from \cref{cor:BTL=TLone} that the $\algbtl_{n,k}$ relations \eqref{eq:newBTL} inherited from $\tlone_n$ are a complete set of relations for $\diabtl_{n,k}$ when $k \ge n$, so that $\algbtl_{n,k} \simeq \diabtl_{n,k}$ in this case. When $k<n$, we can now prove that a complete set is obtained by adding the closure relation \eqref{eq:ClosureRelation}.

\begin{Proposition} \label{prop:BTLComplete}
For $k<n$, the algebra $\algbtl_{n,k}$, defined by \eqref{eq:newBTL} and \eqref{eq:ClosureRelation}, is isomorphic to $\diabtl_{n,k}$.
\end{Proposition}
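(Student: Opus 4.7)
The plan is to establish $\dim \algbtl_{n,k} \le \dim \diabtl_{n,k}$ and combine this with the surjective algebra homomorphism $\pi \colon \algbtl_{n,k} \twoheadrightarrow \diabtl_{n,k}$, whose existence is guaranteed because the relations \eqref{eq:newBTL} and \eqref{eq:ClosureRelation} are already known to hold diagrammatically, see \eqref{eq:surjectiveh} and the derivation of the closure relation. Matching dimensions then forces $\pi$ to be an isomorphism. Since $\dim \diabtl_{n,k} = \binom{2n}{n} - \binom{2n}{n-k-1}$ by \cref{sec:dimA}, it suffices to exhibit a spanning set of $\algbtl_{n,k}$ of that size.

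First, I would observe that the relations \eqref{eq:newBTL} contain the full set of $\tlone_n$ relations (with $\beta_1 = U_{k-1}$, $\beta_2 = U_k$), so any element of $\algbtl_{n,k}$ may be reduced, using only these relations, to a linear combination of words in the normal form of \cref{lem:TLoneReduced}:
\begin{equation*}
w \;=\; w'\,(e_n e_{n-1} \cdots e_{\ell_1})(e_n e_{n-1} \cdots e_{\ell_2}) \cdots (e_n e_{n-1} \cdots e_{\ell_s}),
\end{equation*}
where $w'$ is a $\tl_n$ normal form and $s \ge 0$ counts the occurrences of $e_n$. Under the bijection of \cref{app:TLone} between such normal forms and sub-diagonal north-east walks from $(0,0)$ to $(n+1,n+1)$ in $\mathbb{Z}^2$, an $s$-occurrence normal form corresponds to a walk with exactly $s+1$ stops in column $n$, as confirmed by inspecting the correspondence worked out before \eqref{eq:FinestCounting}. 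Consequently, by \eqref{eq:FinestCounting} applied with $t = k+1$, the number of normal forms with $s \le k$ is
\begin{equation*}
c_{n+1}^{(k+1)} \;=\; \binom{2n}{n} - \binom{2n}{n-k-1},
\end{equation*}
which matches $\dim \diabtl_{n,k}$ exactly. Thus the proof reduces to showing that the closure relation \eqref{eq:ClosureRelation}, applied modulo the relations \eqref{eq:newBTL}, permits us to reduce every normal form with $s \ge k+1$ to a linear combination of normal forms with $s \le k$.

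The core of the argument is to show precisely this reduction. Since $\Yk_t$ is built by the recursion \eqref{eq:Yrec} and contains exactly $t$ occurrences of $\Ekj{k}{n}$, the closure relation \eqref{eq:ClosureRelation} asserts that the product $\bigl(\prod_{j=0}^{k} \Ekj{k}{n-j}\bigr)\,\Yk_k$, having $k+1$ occurrences of $\Ekj{k}{n}$, equals a sum of terms each carrying at most $k$ such occurrences. Given a normal form with $s = k+1$, my plan is to use the commutation relations in \eqref{eq:newBTL} to bring the rightmost $k+1$ descending chains into the shape $\bigl(\prod_{j=0}^{k} \Ekj{k}{n-j}\bigr)\,\Yk_k$ modulo words with strictly fewer $\Ekj{k}{n}$'s; applying \eqref{eq:ClosureRelation} and then renormalising via \eqref{eq:newBTL} produces a linear combination of normal forms with $s \le k$. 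Iterating handles the general case $s > k+1$.

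The principal obstacle is precisely this last step: while it is morally clear that the closure relation is exactly the extra input distinguishing $\diabtl_{n,k}$ from $\tlone_n$ (with the prescribed $\beta_1$ and $\beta_2$), the detailed combinatorics of re-expressing the tail of an arbitrary normal form with $s \ge k+1$ as a multiple of the pattern on the left-hand side of \eqref{eq:ClosureRelation} is non-trivial. A natural way forward is to argue by induction on $s$, with the base case $s = k+1$ handled by a direct computation modelled on the derivation of \eqref{eq:Yrec} and \eqref{eq:gettingY2}: one commutes generators to expose the product $\prod_{j=0}^{k} \Ekj{k}{n-j}$ on the left, recognises the remaining block as a linear combination of $\Yk_t$'s with $t \le k$, and invokes \eqref{eq:ClosureRelation}. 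Once this reduction is established, the counting in \eqref{eq:FinestCounting} delivers the required dimension bound and completes the proof.
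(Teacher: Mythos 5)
Your overall architecture coincides with the paper's: combine the surjection $\algbtl_{n,k}\twoheadrightarrow\diabtl_{n,k}$ with the bound $\dim\algbtl_{n,k}\le\dim\diabtl_{n,k}$, obtained by showing that the reduced words of \cref{lem:TLoneReduced} with at most $k$ occurrences of $e_n$ span, and counting them via the walk formula \eqref{eq:FinestCounting} with $t=k+1$. However, the one step that carries all the content --- that the closure relation \eqref{eq:ClosureRelation}, modulo \eqref{eq:newBTL}, kills every reduced word in which $e_n$ occurs at least $k+1$ times --- is precisely the step you leave as a ``plan'' and flag as the principal obstacle, so the proposal as written is incomplete. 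Worse, the direction you propose for closing it is the hard one: a generic reduced word has descending chains $(e_n\cdots e_{\ell_1})\cdots(e_n\cdots e_{\ell_{k+1}})$ with arbitrary $\ell_1<\cdots<\ell_{k+1}$, and there is no straightforward sequence of commutations in \eqref{eq:newBTL} that rewrites such a word as a left multiple of the specific consecutive pattern $\bigl[\prod_{j=0}^{k}\Ekj{k}{n-j}\bigr]\Yk_k$ plus terms with fewer $\Ekj{k}{n}$'s; attempting this essentially presupposes the relation one is trying to derive.

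The paper's proof inverts the logic and thereby avoids this combinatorial problem. The closure relation is read as saying that a single element $w$ of $\tlone_n$ lies in $\ker\mathfrak h$, where $w$ has the form $(e_ne_{n-1}\cdots e_{n-k})(e_ne_{n-1}\cdots e_{n-k+1})\cdots(e_n)+v$ with $v$ a combination of strictly shorter words, so $w-v$ is a reduced word with exactly $k+1$ occurrences of $e_n$. Because the kernel is a two-sided ideal, one may right-multiply $w$ by $e_{n-k-1}\cdots e_{n-\ell_1}$, then $e_{n-k}\cdots e_{n-\ell_2}$, and so on, and then left-multiply by suitable generators, to transform the leading term $w-v$ into \emph{any} prescribed reduced word with at least $k+1$ occurrences of $e_n$; the crucial observation is that the transformed leading term remains reduced (the $\ell_i$ stay strictly increasing), hence is strictly longer than, and cannot be cancelled by, the identically transformed words in $v$. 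Each offending reduced word therefore equals a combination of shorter words in $\algbtl_{n,k}$, and an induction on word length yields the spanning set of size $c_{n+1}^{(k+1)}=\binom{2n}{n}-\binom{2n}{n-k-1}$. If you replace your direct-rewriting plan with this ideal-theoretic ``transform the relation, not the word'' argument (and make the cancellation and induction points explicit), your proof closes.
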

\begin{proof}
The proof consists of demonstrating that imposing the closure relation \eqref{eq:ClosureRelation} allows one to refine the spanning set of $\tlone_n$, given by the reduced words of \cref{lem:TLoneReduced}, to a spanning set of $\algbtl_{n,k}$ whose cardinality matches $\dim \diabtl_{n,k}$, as given in \cref{sec:dimA}.  This will yield the inequality $\dim \algbtl_{n,k} \le \dim \diabtl_{n,k}$.  Because $\diabtl_{n,k}$ is already known to be a quotient of $\algbtl_{n,k}$ (the diagrammatic algebra could satisfy further algebraic relations, in principle), the reverse inequality is trivial and the desired isomorphism follows.

Recall from \eqref{eq:surjectiveh} that there is a surjective homomorphism $\mathfrak{h}$ from $\tlone_n$ onto $\algbtl_{n,k}$.  For $k<n$, \eqref{eq:ClosureRelation} may be reinterpreted as the statement that a certain linear combination $w$ of words in the $\tlone_n$ generators $e_j$ belongs to the kernel of $\mathfrak{h}$:  $\mathfrak{h}(w) = 0$.  By referring to \eqref{eq:Yrec}, one finds that $w$ has the form
\begin{equation}
w = (e_n e_{n-1} \cdots e_{n-k}) (e_n e_{n-1} \cdots e_{n-k+1}) \cdots (e_n e_{n-1}) (e_n) + v,
\end{equation}
where $v$ is a linear combination of words whose lengths are strictly less than that of the displayed word $w-v$.  Note that $w-v \in \tlone_n$ is already in the reduced form guaranteed by \cref{lem:TLoneReduced}, so it cannot be a linear combination of any of the shorter words appearing in $v$. In particular, the coefficient of the reduced word $w-v$ in $w$ is non-zero.  Note also that $e_n$ appears exactly $k+1$ times in $w-v$.

As kernels are two-sided ideals, we may right-multiply $w$ by $\tlone_n$ generators and still end up with an element which vanishes when mapped into $\algbtl_{n,k}$.  In particular, we may right-multiply by $e_{n-k-1} e_{n-k-2} \cdots e_{n - \ell_1}$, then by $e_{n-k} e_{n-k-1} \cdots e_{n - \ell_2}$, and so on, so as to transform $w-v$ into
\begin{equation}
(e_n e_{n-1} \cdots e_{\ell_1}) (e_n e_{n-1} \cdots e_{\ell_2}) \cdots (e_n e_{n-1} \cdots e_{\ell_{k+1}}).
\end{equation}
If $\ell_1 < \ell_2 < \cdots < \ell_{k+1}$, then this transformed word is still reduced, so it is longer than any word in the transformed version of $v$ and so cannot be cancelled.  By left-multiplying appropriately, we can now convert this transformed version of $w-v$ into any of the reduced words \eqref{eq:TLoneJonesform} of $\tlone_n$ that have at least $k+1$ occurrences of $e_n$.  Again, the result cannot be cancelled by any of the identically transformed words in $v$.

It follows that every reduced word of $\tlone_n$, with $e_n$ occurring at least $k+1$ times, corresponds to a non-trivial relation in $\algbtl_{n,k}$.  A spanning set for $\algbtl_{n,k}$ is then obtained from the reduced words \eqref{eq:TLoneJonesform} by discarding those in which $e_n$ occurs at least $k+1$ times and then applying $\mathfrak{h}$.  We can count the number of words that remains:  The cardinality of this spanning set of $\algbtl_{n,k}$ is precisely the number of subdiagonal north-east walks from $(0,0)$ to $(n+1,n+1)$ that may have at most two stops in every column but the $n$-th, where up to $k+1$ stops are allowed.  This number was computed in \eqref{eq:FinestCounting} and is given by
\begin{equation}
c_{n+1}^{(k+1)} = \binom{2n}{n} - \binom{2n}{n-k-1} = \dim \diabtl_{n,k},
\end{equation}
completing the proof.
\end{proof}

\subsection{Proofs for $\beta \in \CC$} \label{app:SeamsBetaC} 

To specialise these results to $\beta = q+q^{-1}\in \mathbb C$, we first note that for $q$ generic, the \WJ{} projectors are well-defined and the proofs of \cref{sec:aandbiso}, \cref{cor:BTL=TLone}, \cref{sec:dimA} and \cref{prop:BTLComplete} 
carry through. It follows in this case that the specialised algebras $\atl_{n,k}(\beta)$ and $\btl_{n,k}(\beta)$ ($\simeq \algbtl_{n,k}(\beta) \simeq \diabtl_{n,k}(\beta)$) are still isomorphic and that
\begin{equation}
\dim \btl_{n,k}(\beta) = \binom{2n}{n}- \binom{2n}{n-k-1} \qquad \text{(\(q\) generic).}
\end{equation}
For $q$ a root of unity, this is no longer true, in general. If $\ell$ denotes the smallest positive integer satisfying $q^{2 \ell} = 1$, then the Chebyshev polynomial $U_{\ell - 1}$ is zero, making some of the \WJ{} projectors $P_k$ singular for $k \ge \ell$. The specialised diagrammatic algebra $\diabtl_{n,k}(\beta)$ is no longer well-defined, for $k \ge \ell$, as the tangles $\Ik$ and $\Ekj{k}{j}$ would be linear combinations of $\tl_{n+k}(\beta)$ connectivities with divergent coefficients. In stark contrast, the algebraic definition of $\algbtl_{n,k}(\beta)$ remains well-defined as the boundary \TL{} relations \eqref{eq:newBTL} and the closure relation \eqref{eq:ClosureRelation} are non-singular for all $\beta \in \CC$.

However, there is a subtlety to this presentation in terms of generators and relations.  When $q^{2\ell} = 1$, the 
Chebyshev polynomials $U_{\ell m-1}$ vanish for each $m \in \ZZ_{\ge 0}$. As $\btl_{n,0} \simeq \tl_n$, it is natural to 
define the specialised algebra $\btl_{n,0}(\beta)$ to be $\tl_n(\beta)$.  
If $k>0$, then we proceed as follows:  Let $k'$ be the smallest positive integer satisfying $k'=k \bmod{\ell}$. Assuming 
that $k'<n$, it then follows that the recursive construction of the tangle $\Yk_{k'+1}$ fails because its coefficient 
$U_{k-k'-1}$, in \eqref{eq:Yrec}, is zero when evaluated at $t = k'$.\footnote{This recursive construction, starting from 
$\Yk_1 = \Ekj{k}{n}$, explains why we do not allow $k' = 0$ here.}  Instead, we obtain a closure relation similar to \eqref{eq:ClosureRelation}:
\begin{equation} \label{eq:ClosureRelation'}
\Big[\prod_{j=0}^{k'} \Ekj{k}{n-j}\Big] \Yk_{k'} = \sum_{i=0}^{k'-1} (-1)^i U_{k-1-i} \Big[ \prod_{j=i+2}^{k'} \Ekj{k}{n-j} \Big] \Yk_{k'} \qquad \text{(\(n>k'\)).}
\end{equation}
In fact, this is precisely \eqref{eq:ClosureRelation} when $0 < k \le \ell$, that is, $k' = k$.

When $k > \ell$ (and $n>k'$), this closure relation is stronger
than the generic one \eqref{eq:ClosureRelation} --- it generates linear dependences involving monomials in the $\Ekj{k}{j}$ of shorter lengths than those generated by its generic counterpart.
The dimension of the specialised algebra $\btl_{n,k}(\beta)$ is thus strictly smaller than that given by \cref{sec:dimA}.  We therefore define the specialised boundary seam algebra $\btl_{n,k}(\beta)$, when $k>0$ and $q$ is a root of unity, to be the complex associative algebra with unit $\Ik$, generators $\Ekj{k}{j}$, $j=1, \ldots, n$, and relations \eqref{eq:newBTL}, supplemented by \eqref{eq:ClosureRelation'} if $n>k'$. 

For $n \le k'$, this amounts to the identification
\begin{equation}
\btl_{n,k}(\beta) = \tlone_n(\beta,U_{k-1},U_k) \qquad \text{(\(k \neq 0\), \(n \le k'\))} 
\end{equation}
and, as an immediate consequence, 
\begin{equation}
\dim \btl_{n,k}(\beta) = \binom{2n}{n} \qquad \text{(\(k\neq 0\), \(n \le k'\)).}
\end{equation}

Because we are primarily interested in comparing with the scaling limit, it is the case $n >k$\:($\ge k'$) which is the most important. For $n> k'$, we have the following results. 
\begin{Proposition}\label{sec:dimB} 
Let $q$ be a root of unity and let $\ell$ be the smallest positive integer satisfying $q^{2\ell}=1$.  If $k>0$ and $k'<n$ is the smallest positive integer satisfying $k'=k \bmod{\ell}$, then $\btl_{n,k}(\beta) \simeq \btl_{n,k'}(\beta)$.
\end{Proposition}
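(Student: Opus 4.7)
The plan is to construct an explicit sign-twisted algebra isomorphism $\phi \colon \btl_{n,k}(\beta) \to \btl_{n,k'}(\beta)$ by exploiting the quasi-periodicity of Chebyshev polynomials at roots of unity. Writing $k = k' + m\ell$ with $m \ge 1$ and setting $\sigma = (q^\ell)^m$, the identity $U_{a+b} = U_a U_b - U_{a-1} U_{b-1}$ combined with $U_{\ell-1} = 0$ first yields $U_\ell = q^\ell$ and then $U_{j+\ell} = q^\ell U_j$ for all $j \ge 0$; iterating $m$ times gives the key shift identity $U_{k-j} = \sigma U_{k'-j}$. Since $q^{2\ell} = 1$, we have $\sigma \in \{+1, -1\}$. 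Both algebras $\btl_{n,k}(\beta)$ and $\btl_{n,k'}(\beta)$ admit presentations by generators indexed by $j = 1, \ldots, n$ (plus the identity) subject to \eqref{eq:newBTL} and \eqref{eq:ClosureRelation'}; the two sets of relations differ only in the Chebyshev coefficients $U_{k-1-i}$ versus $U_{k'-1-i}$, which are related by the shift identity.

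Next, I would define $\phi$ on generators by $\Ik \mapsto I^{(k')}$, $\Ekj{k}{j} \mapsto E^{(k')}_j$ for $j < n$, and $\Ekj{k}{n} \mapsto \sigma E^{(k')}_n$. The relations in \eqref{eq:newBTL} not involving $\Ekj{k}{n}$ are preserved trivially. For the remaining two: $(\Ekj{k}{n})^2 = U_k \Ekj{k}{n}$ maps to $\sigma^2 (E^{(k')}_n)^2 = \sigma U_k E^{(k')}_n$, which reduces to $(E^{(k')}_n)^2 = U_{k'} E^{(k')}_n$ using $\sigma^2 = 1$ and $U_k = \sigma U_{k'}$; similarly, $\Ekj{k}{n-1}\Ekj{k}{n}\Ekj{k}{n-1} = U_{k-1}\Ekj{k}{n-1}$ maps to an identity that holds in $\btl_{n,k'}(\beta)$ after cancelling an overall $\sigma$.

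The main obstacle is verifying that $\phi$ respects the closure relation \eqref{eq:ClosureRelation'}, and this hinges on the key lemma $\Yk_t \mapsto \sigma\, Y^{(k')}_t$ for $0 \le t \le k'$, which I would prove by induction on $t$. The base cases $t=0,1$ follow from $\Yk_0 = U_{k-1}\Ik$ and $\Yk_1 = \Ekj{k}{n}$, together with the shift identity. For the inductive step, one solves the recursion \eqref{eq:Yrec} for $Y^{(k)}_{t+1}$ and applies $\phi$ termwise: the prefactor $(U_{k-1-t})^{-1}$ contributes a $\sigma$; the term $\prod_{j=0}^t \Ekj{k}{n-j} \Yk_t$ contains a single $\Ekj{k}{n}$ factor (at $j=0$) whose $\sigma$ combines with the inductive $\sigma$ on $\Yk_t$ to give $\sigma^2 = 1$; and each subtracted term $U_{k-1-i} \prod_{j=i+2}^t\Ekj{k}{n-j}\Yk_t$ has no $\Ekj{k}{n}$ but gets two factors of $\sigma$ (one from $U_{k-1-i}$ and one from $\Yk_t$), again netting $\sigma^2 = 1$. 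Reading the result through the $\btl_{n,k'}(\beta)$ version of the recursion then identifies the image as $\sigma\, Y_{t+1}^{(k')}$. A subtlety arises when $k' = \ell$ and $t = 0$, where $U_{\ell-1}=0$ makes the recursion degenerate; this case is bypassed since $Y_0^{(k)}$ and $Y_0^{(k')}$ both vanish while $Y_1^{(\bullet)} = \Ekj{\bullet}{n}$ is defined directly. With the lemma established, substituting into \eqref{eq:ClosureRelation'} and invoking the shift identity for all the coefficients $U_{k-1-i}$ verifies that $\phi$ sends this closure relation to its $\btl_{n,k'}(\beta)$ counterpart.

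Finally, I would construct the inverse $\phi^{-1} \colon \btl_{n,k'}(\beta) \to \btl_{n,k}(\beta)$ by the mirror formula $E^{(k')}_n \mapsto \sigma \Ekj{k}{n}$ (other generators correspondingly), verify well-definedness by a symmetric argument, and check that $\phi \circ \phi^{-1}$ and $\phi^{-1} \circ \phi$ act as the identity on generators thanks to $\sigma^2 = 1$. This establishes the desired isomorphism.
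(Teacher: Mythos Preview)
Your proof is correct and follows essentially the same route as the paper's: you define the same sign-twisted homomorphism on generators (your $\sigma = (q^\ell)^m$ equals the paper's $(-1)^{am}$), verify the relations \eqref{eq:newBTL} the same way, establish the key lemma $\phi(\Yk_t) = \sigma\,Y^{(k')}_t$, and construct the inverse by interchanging $k$ and $k'$. The paper merely asserts the $\Yk_t$ lemma ``from \eqref{eq:Yrec}'' and leaves the Chebyshev periodicity as $U_{k'} = (-1)^{am} U_k$, whereas you supply the inductive details and derive the shift identity via $U_{a+b} = U_a U_b - U_{a-1} U_{b-1}$; these are welcome elaborations rather than a different argument.
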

\begin{proof}
Since $q^{2 \ell} = 1$, we have $q = \ee^{\ii \pi a / \ell}$ for some $a \in \ZZ$.  Let $m$ be the non-negative integer satisfying $k-k' = m\ell$.  Then, the Chebyshev polynomials satisfy the periodicity property $U_{k'} = U_{k-m \ell} = (-1)^{am} U_k$.  

For $k > 0$ and $n>k'$, the closure relation \eqref{eq:ClosureRelation'} inspires us to define a surjective homomorphism $\psi \colon \btl_{n,k}(\beta) \to \btl_{n,k'}(\beta)$ of unital associative algebras:
\begin{equation} \label{eq:DefPsi}
\psi(\Ik) = I^{_{(k')}}_{\phantom{j}}, \qquad 
\psi(\Ekj{k}{j}) = 
\begin{cases}
\Ekj{k'}{j} & \text{if \(j \neq n\),} \\
(-1)^{am} \Ekj{k'}{n} & \text{if \(j=n\).}
\end{cases}
\end{equation}
We only need to show that this map preserves the relations \eqref{eq:newBTL} and \eqref{eq:ClosureRelation}.  Of the former, only those involving $\Ekj{k}{n}$ require checking. 
For example,
$(\Ekj{k}{n})^2 = U_k \Ekj{k}{n}$ is checked as follows:
\begin{equation}
\psi \bigl( (\Ekj{k}{n})^2 \bigr) = (\Ekj{k'}{n})^2 = U_{k'} \Ekj{k'}{n} = (-1)^{am} U_k \Ekj{k'}{n} = \psi (U_k \Ekj{k}{n}).
\end{equation}
Similarly, one may check from \eqref{eq:Yrec} that $\psi (\Yk_t) = (-1)^{am} Y^{_{(k')}}_t$, for $t \le k'$, and that $\psi$ maps the relation \eqref{eq:ClosureRelation'} onto the closure relation of $\btl_{n,k'}(\beta)$.

This shows that $\btl_{n,k}(\beta)$ is a quotient of $\btl_{n,k'}(\beta)$.  To conclude that these algebras are isomorphic, we note that the inverse map, obtained by swapping $k$ and $k'$ in \eqref{eq:DefPsi}, is likewise well-defined.  
\end{proof}
For $1\le k' < \ell$, the algebra $\btl_{n,k'}(\beta)$ is well-defined both diagrammatically and algebraically. It is therefore the specialisation of $\btl_{n,k'}$ at $\beta = 2 \cos \frac{\pi a}{\ell} \in \CC$,
hence the dimension of $\btl_{n,k}(\beta) \simeq \btl_{n,k'}(\beta)$ is given by  
\begin{equation}
\dim \btl_{n,k}(\beta) = \binom{2n}{n}- \binom{2n}{n-k'-1}, \qquad \text{(\(k \neq 0 \bmod \ell\), \(n>k'\)).}
\label{eq:dimbtlspecial}
\end{equation}
In fact, this dimension formula also covers the case $n \le k'$, if we understand that the second binomial coefficient above is then $0$.
We also note that taking $k'=0$ in \eqref{eq:dimbtlspecial} gives the correct dimension of $\btl_{n,0} \simeq \tl_n(\beta)$.

The only missing case is $k'=\ell$, that is $k = m \ell$, $m\in \mathbb Z_{+}$. In this case, \cref{sec:dimB} asserts that $\btl_{n,k}(\beta) \simeq \btl_{n,\ell}(\beta)$, but not that it is isomorphic to $\btl_{n,0}(\beta) \simeq \tl_{n}(\beta)$. This is because the generator $\Ekj{k'}{n}$, required to define $\psi$ in \eqref{eq:DefPsi}, is not defined in $\tl_n(\beta)$. Because $\btl_{n,\ell}$ is not defined diagrammatically, further analysis is required to determine the dimension of $\btl_{n,k}$ in this case.  This will not be addressed in this paper.

%
\section{Boundary seam module proofs} \label{app:btlstan}
%

This section provides
proofs of statements made in \cref{sec:reps,sec:Gram} about standard representations of the boundary seam algebras $\btl_{n,k}$ and the determinants of the Gram matrices of their invariant bilinear forms.

\subsection{Properties of standard modules}
\label{app:standardproperties}

Recall that a basis of the standard $\btl_{n,k}$-module $\stan_{n,k}^d$ is given by (the equivalence classes of) 
the $\tl_{n+k}$ link states, with $d$ defects, for which no two of the $k$ boundary 
nodes are linked together.  This basis was denoted by $\links_{n,k}^d$ in \cref{sec:reps} and the boundary 
nodes were indicated diagrammatically in pink. We study the restriction of 
$\stan_{n,k}^d$ to the $\tl_n$ subalgebra $\langle \Ik, \Ekj{k}{j}
; j = 1, \dots, n-1\rangle \subseteq \btl_{n,k}$.
Under the restricted action, seen from the bulk, the arcs connecting to the boundary nodes may be
viewed as a second class of defects.  This may be formalised by taking a link state in $\links_{n,k}^d$, erasing the 
boundary nodes and any arcs connecting to them, and then inserting a new type of defect, which we shall draw as a 
wavy pink line, at any bulk node that was originally connected to the boundary. For example,
\begin{equation}
\psset{unit=0.8}
\begin{pspicture}[shift=-0.15](0,-0.1)(2.4,0.7)
\psline{-}(0,0)(1.6,0)
\psline[linecolor=purple,linewidth=2.0pt]{-}(1.6,0)(2.4,0)
\psarc[linecolor=blue,linewidth=1.5pt]{-}(1.2,0){0.2}{0}{180}
\psline[linecolor=blue,linewidth=1.5pt]{-}(0.2,0)(0.2,0.6)
\psline[linecolor=blue,linewidth=1.5pt]{-}(2.2,0)(2.2,0.6)
\psbezier[linecolor=blue,linewidth=1.5pt]{-}(0.6,0)(0.6,0.5)(1.8,0.5)(1.8,0)
\end{pspicture}
\quad \rightarrow \quad
\psset{unit=0.8}
\begin{pspicture}[shift=-0.15](0,-0.1)(1.6,0.7)
\psline{-}(0,0)(1.6,0)
\psarc[linecolor=blue,linewidth=1.5pt]{-}(1.2,0){0.2}{0}{180}
\psline[linecolor=blue,linewidth=1.5pt]{-}(0.2,0)(0.2,0.6)
\rput(0.6,0){\wobbly}
\end{pspicture}
\ .
\end{equation}

If we ignore the difference between the two types of defect in these diagrams, then they may be identified with $\tl_n$ link states.  Suppose that we start with a $\btl_{n,k}$ link state with $d$ defects.  By considering the number $d'$ of bulk defects and $d''$ of boundary defects, separately, it follows that the number of defects in the resulting $\tl_n$ link state is
\begin{equation} \label{eq:ConstrainBulkBdryDefects}
e = d'+k-d'' = d+k-2d'' = 2d'+k-d,
\end{equation}
which is bounded below by $\abs{k-d}$ and above by $\min (k+d,n)$.  Moreover, it is easy to reconstruct the $\btl_{n,k}$ link state from the $\tl_n$ link state, given $k$, $d$ and $e$, by solving for $d''$ and linking the rightmost $k-d''$ bulk 
nodes to the leftmost boundary nodes (the remaining boundary nodes are defects).  This proves that we have
a bijective map between the $\btl_{n,k}$ and $\tl_n$ link state bases:
\begin{equation} 
\links_{n,k}^d \quad\longleftrightarrow
\sideset{}{'} \bigcup_{e = |k-d|}^{\min (k+d,n)} \links_n^e.
\label{eq:bij}
\end{equation}
The prime indicates that $e$ increases in steps of two.
We illustrate this map for $(n,k,d) = (4,2,2)$:
\begin{equation} 
\links_{4,2}^2 = \left\{\ \ 
\begin{matrix}
\psset{unit=0.8}
\begin{pspicture}[shift=-0.15](0,-0.1)(2.4,0.7)
\psline{-}(0,0)(1.6,0)
\psline[linecolor=purple,linewidth=2.0pt]{-}(1.6,0)(2.4,0)
\psarc[linecolor=blue,linewidth=1.5pt]{-}(0.4,0){0.2}{0}{180}
\psarc[linecolor=blue,linewidth=1.5pt]{-}(1.2,0){0.2}{0}{180}
\psline[linecolor=blue,linewidth=1.5pt]{-}(1.8,0)(1.8,0.6)
\psline[linecolor=blue,linewidth=1.5pt]{-}(2.2,0)(2.2,0.6)
\end{pspicture}
\\[0.2cm]
\psset{unit=0.8}
\begin{pspicture}[shift=-0.15](0,-0.1)(2.4,0.7)
\psline{-}(0,0)(1.6,0)
\psline[linecolor=purple,linewidth=2.0pt]{-}(1.6,0)(2.4,0)
\psarc[linecolor=blue,linewidth=1.5pt]{-}(0.8,0){0.2}{0}{180}
\psline[linecolor=blue,linewidth=1.5pt]{-}(1.8,0)(1.8,0.6)
\psline[linecolor=blue,linewidth=1.5pt]{-}(2.2,0)(2.2,0.6)
\psbezier[linecolor=blue,linewidth=1.5pt]{-}(0.2,0)(0.2,0.5)(1.4,0.5)(1.4,0)
\end{pspicture} 
\\[0.2cm]
\psset{unit=0.8}
\begin{pspicture}[shift=-0.15](0,-0.1)(2.4,0.7)
\psline{-}(0,0)(1.6,0)
\psline[linecolor=purple,linewidth=2.0pt]{-}(1.6,0)(2.4,0)
\psarc[linecolor=blue,linewidth=1.5pt]{-}(1.2,0){0.2}{0}{180}
\psline[linecolor=blue,linewidth=1.5pt]{-}(0.2,0)(0.2,0.6)
\psline[linecolor=blue,linewidth=1.5pt]{-}(2.2,0)(2.2,0.6)
\psbezier[linecolor=blue,linewidth=1.5pt]{-}(0.6,0)(0.6,0.5)(1.8,0.5)(1.8,0)
\end{pspicture} 
\\[0.2cm]
\psset{unit=0.8}
\begin{pspicture}[shift=-0.15](0,-0.1)(2.4,0.7)
\psline{-}(0,0)(1.6,0)
\psline[linecolor=purple,linewidth=2.0pt]{-}(1.6,0)(2.4,0)
\psarc[linecolor=blue,linewidth=1.5pt]{-}(0.8,0){0.2}{0}{180}
\psarc[linecolor=blue,linewidth=1.5pt]{-}(1.6,0){0.2}{0}{180}
\psline[linecolor=blue,linewidth=1.5pt]{-}(0.2,0)(0.2,0.6)
\psline[linecolor=blue,linewidth=1.5pt]{-}(2.2,0)(2.2,0.6)
\end{pspicture} 
\\[0.2cm]
\psset{unit=0.8}
\begin{pspicture}[shift=-0.15](0,-0.1)(2.4,0.7)
\psline{-}(0,0)(1.6,0)
\psline[linecolor=purple,linewidth=2.0pt]{-}(1.6,0)(2.4,0)
\psarc[linecolor=blue,linewidth=1.5pt]{-}(0.4,0){0.2}{0}{180}
\psarc[linecolor=blue,linewidth=1.5pt]{-}(1.6,0){0.2}{0}{180}
\psline[linecolor=blue,linewidth=1.5pt]{-}(1.0,0)(1.0,0.6)
\psline[linecolor=blue,linewidth=1.5pt]{-}(2.2,0)(2.2,0.6)
\end{pspicture} 
\\[0.2cm]
\psset{unit=0.8}
\begin{pspicture}[shift=-0.15](0,-0.1)(2.4,0.7)
\psline{-}(0,0)(1.6,0)
\psline[linecolor=purple,linewidth=2.0pt]{-}(1.6,0)(2.4,0)
\psarc[linecolor=blue,linewidth=1.5pt]{-}(1.6,0){0.2}{0}{180}
\psline[linecolor=blue,linewidth=1.5pt]{-}(0.2,0)(0.2,0.6)
\psline[linecolor=blue,linewidth=1.5pt]{-}(0.6,0)(0.6,0.6)
\psbezier[linecolor=blue,linewidth=1.5pt]{-}(1.0,0)(1.0,0.5)(2.2,0.5)(2.2,0)
\end{pspicture} 
\end{matrix}\right.
\qquad \longleftrightarrow \qquad 
\begin{matrix}
\psset{unit=0.8}
\begin{pspicture}[shift=-0.15](0,-0.1)(1.6,0.7)
\psline{-}(0,0)(1.6,0)
\psarc[linecolor=blue,linewidth=1.5pt]{-}(0.4,0){0.2}{0}{180}
\psarc[linecolor=blue,linewidth=1.5pt]{-}(1.2,0){0.2}{0}{180}
\end{pspicture} & 
\multirow{3}{*}{$\left.
\psset{unit=0.8}
\begin{pspicture}[shift=-0.15](0,-0.1)(0,0.84)
\end{pspicture}
\right\} \lra \  
\links_4^0$}
\\[0.2cm]
\psset{unit=0.8}
\begin{pspicture}[shift=-0.15](0,-0.1)(1.6,0.7)
\psline{-}(0,0)(1.6,0)
\psarc[linecolor=blue,linewidth=1.5pt]{-}(0.8,0){0.2}{0}{180}
\psbezier[linecolor=blue,linewidth=1.5pt]{-}(0.2,0)(0.2,0.5)(1.4,0.5)(1.4,0)
\end{pspicture}
\\[0.2cm]
\psset{unit=0.8}
\begin{pspicture}[shift=-0.15](0,-0.1)(1.6,0.7)
\psline{-}(0,0)(1.6,0)
\psarc[linecolor=blue,linewidth=1.5pt]{-}(1.2,0){0.2}{0}{180}
\psline[linecolor=blue,linewidth=1.5pt]{-}(0.2,0)(0.2,0.6)
\rput(0.6,0){\wobbly}
\end{pspicture}
& 
\multirow{2}{*}{$\left.
\psset{unit=0.8}
\begin{pspicture}[shift=-0.15](0,-0.1)(0,1.54)
\end{pspicture}
\right\} \lra \ 
\links_4^2$}
\\[0.2cm]
\psset{unit=0.8}
\begin{pspicture}[shift=-0.15](0,-0.1)(1.6,0.7)
\psline{-}(0,0)(1.6,0)
\psarc[linecolor=blue,linewidth=1.5pt]{-}(0.8,0){0.2}{0}{180}
\psline[linecolor=blue,linewidth=1.5pt]{-}(0.2,0)(0.2,0.6)
\rput(1.4,0){\wobbly}
\end{pspicture}
\\[0.2cm]
\psset{unit=0.8}
\begin{pspicture}[shift=-0.15](0,-0.1)(1.6,0.7)
\psline{-}(0,0)(1.6,0)
\psarc[linecolor=blue,linewidth=1.5pt]{-}(0.4,0){0.2}{0}{180}
\psline[linecolor=blue,linewidth=1.5pt]{-}(1.0,0)(1.0,0.6)
\rput(1.4,0){\wobbly}
\end{pspicture}
\\[0.2cm]
\psset{unit=0.8}
\begin{pspicture}[shift=-0.15](0,-0.1)(1.6,0.7)
\psline{-}(0,0)(1.6,0)
\psline[linecolor=blue,linewidth=1.5pt]{-}(0.2,0)(0.2,0.6)
\psline[linecolor=blue,linewidth=1.5pt]{-}(0.6,0)(0.6,0.6)
\rput(1.0,0){\wobbly}
\rput(1.4,0){\wobbly}
\end{pspicture} & 
\left.
\psset{unit=0.8}
\begin{pspicture}[shift=-0.15](0,-0.0)(0,0.5)
\rput(0.10,0.40){$\Big \} \mspace{5mu} \lra \  
\links_4^4$.}
\end{pspicture}
\right.
\end{matrix}
\label{eq:B422}
\end{equation}
Combining this bijection with the cardinalities of the bases of $\tl_n$ link states, given in \eqref{eq:dimV}, 
we have proven the following proposition. 

\begin{Proposition} 
The dimension of the standard module of $\btl_{n,k}$ with $d$ defects is
\begin{equation} 
\dim \stan_{n,k}^d = \binom{n}{\frac{n+k-d}{2}} - \binom{n}{\frac{n-k-d-2}{2}}.
\end{equation}
\label{sec:dimVnkd}
\end{Proposition}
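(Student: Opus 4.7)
The plan is to combine the bijection \eqref{eq:bij} already established in the excerpt with the known dimension formula \eqref{eq:dimV} for the standard $\tl_n$-modules, and then observe that the resulting sum telescopes.

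First I would note that \eqref{eq:bij} gives immediately
\[
\dim \stan_{n,k}^d = \sideset{}{'}\sum_{e=|k-d|}^{\min(k+d,n)} \dim \stan_n^e,
\]
where the prime indicates that $e$ increases in steps of two; this step is well-defined because the constraint $d = n+k \bmod 2$ forces $|k-d| \equiv k+d \equiv n \bmod 2$. Substituting the closed form $\dim \stan_n^e = \binom{n}{(n-e)/2} - \binom{n}{(n-e-2)/2}$ and reindexing the sum via $f = (n-e)/2$, I would recognise a telescoping sum:
\[
\dim \stan_{n,k}^d = \sum_{f=(n-e_{\max})/2}^{(n-e_{\min})/2} \left[\binom{n}{f} - \binom{n}{f-1}\right] = \binom{n}{(n-e_{\min})/2} - \binom{n}{(n-e_{\max}-2)/2},
\]
with $e_{\min} = |k-d|$ and $e_{\max} = \min(k+d,n)$.

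The remaining step is to match these two surviving binomials against the ones in the proposition. For the first term, I would invoke the symmetry $\binom{n}{j} = \binom{n}{n-j}$, which immediately gives $\binom{n}{(n-|k-d|)/2} = \binom{n}{(n+k-d)/2}$ regardless of the sign of $k-d$. For the second term, there are two cases: if $k+d \le n$ then $e_{\max} = k+d$ and the binomial is $\binom{n}{(n-k-d-2)/2}$ as required; if $k+d > n$ then $e_{\max} = n$ so $(n-e_{\max}-2)/2 = -1$ makes the telescoping term vanish, which is consistent because $(n-k-d-2)/2$ is then also negative and the target binomial vanishes by convention.

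The argument is essentially algebraic book-keeping, so there is no genuine obstacle; the only point requiring a bit of care is the endpoint analysis at $e_{\max}$, where one must check that the ``boundary'' case $k+d>n$ (in which the natural upper summation index saturates at $n$) yields zero on both sides simultaneously, so that a single unified formula covers all parameter regimes.
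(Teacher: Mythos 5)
Your proposal is correct and follows essentially the same route as the paper: the paper also derives the dimension by combining the bijection \eqref{eq:bij} with the $\tl_n$ cardinalities \eqref{eq:dimV}, leaving the telescoping sum and the endpoint bookkeeping (which you carry out explicitly, including the symmetry $\binom{n}{j}=\binom{n}{n-j}$ at $e_{\min}=|k-d|$ and the vanishing of both sides when $k+d>n$) as an implicit computation.
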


We remark that the action of $\btl_{n,k}$ on the standard module $\stan_{n,k}^d$ does not restrict to the standard action of $\tl_n$ upon cutting link states.  This is why we did not express the bijection \eqref{eq:bij} as a decomposition of the restricted module.  The standard rule, whereby one sets the result to zero when the number of defects decreases, is modified so that connecting 
defects of the same kind gives zero, but connecting defects of different kinds need not. For example, $e_1$ acting on 
$
\psset{unit=0.54}
\begin{pspicture}[shift=-0.10](0,-0.1)(1.6,0.7)
\psline{-}(0,0)(1.6,0)
\psline[linecolor=blue,linewidth=1.5pt]{-}(0.2,0)(0.2,0.6)
\psline[linecolor=blue,linewidth=1.5pt]{-}(0.6,0)(0.6,0.6)
\rput(1.0,0){\wobbly}
\rput(1.4,0){\wobbly}
\end{pspicture}
$\,
gives zero, but acting with $e_2$
instead gives 
$
\psset{unit=0.54}
\begin{pspicture}[shift=-0.10](0,-0.1)(1.6,0.7)
\psline{-}(0,0)(1.6,0)
\psarc[linecolor=blue,linewidth=1.5pt]{-}(0.8,0){0.2}{0}{180}
\psline[linecolor=blue,linewidth=1.5pt]{-}(0.2,0)(0.2,0.6)
\rput(1.4,0){\wobbly}
\end{pspicture}
$\,.

Our next goal is to show that the standard representations $\rho_{n,k}^d$ are well-defined for all $\beta \in \mathbb C$. This is the content of the following proposition.

\begin{Proposition}
The matrix $\rho_{n,k}^d(a)$ has non-divergent entries
for all $a \in \btl_{n,k}$, so specialising the representation $\rho_{n,k}^d$ of the formal boundary seam algebra $\btl_{n,k}$ to any $\beta \in \CC$ defines a representation of the specialised algebra $\btl_{n,k}(\beta)$. 
\label{sec:nonsingular}
\end{Proposition}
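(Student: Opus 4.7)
Since $\btl_{n,k}$ is generated as an algebra by $\Ik$ and the $\Ekj{k}{j}$ for $j=1,\ldots,n$, and $\rho_{n,k}^d$ is an algebra homomorphism, it suffices to verify that each generator is sent to a matrix whose entries are polynomial in $\beta$; such entries are automatically non-divergent at any specialisation $\beta \in \CC$, and, the defining algebraic relations of $\btl_{n,k}(\beta)$ being themselves polynomial in $\beta$, the specialised matrices satisfy them automatically and yield a well-defined representation of the specialised algebra. I would then handle the generators case by case.

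For $\Ik$, the key observation is that $P_k$ acts as the identity on $\links_{n,k}^d$ once we pass to the quotient $\stan_{n,k}^d = \stan_{n+k}^d / \stanu_{n,k}^d$. Expanding $P_k$ as a linear combination of $\tl_k$-diagrams, only the identity diagram preserves the absence of arcs between boundary nodes; every non-identity diagram carries an arc along its bottom edge, so its contribution to $\Ik w$ lies in $\stanu_{n,k}^d$ and vanishes in the quotient. Hence $\rho_{n,k}^d(\Ik)$ is the identity matrix. For $\Ekj{k}{j}$ with $j<n$, the generator $e_j$ acts only on bulk nodes and cannot create new arcs between boundary nodes, so combining with the trivial quotient action of $\Ik$ yields matrix entries in $\{0, 1, \beta\}$.

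The subtle case is $\Ekj{k}{n} = U_{k-1} \Ik e_n \Ik$, which viewed as a tangle in $\tl_{n+k}$ has singular $\CC(\beta)$-coefficients; for instance at $k=2$, direct computation gives $U_1 \Ik e_n \Ik = \beta e_n - e_{n+1} e_n - e_n e_{n+1} + \tfrac{1}{\beta} e_{n+1}$. I would show that every singular contribution sends $w \in \links_{n,k}^d$ into $\stanu_{n,k}^d$, so that only polynomial-in-$\beta$ contributions survive in the quotient. In the example, the offending term $\tfrac{1}{\beta} e_{n+1}$ creates an arc between boundary nodes $n+1$ and $n+2$ when applied to $w$, producing a state that vanishes in $\stan_{n,k}^d$. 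The general case proceeds by the same mechanism, combined with the Wenzl-Jones recursion $U_{k-1} P_k = U_{k-1} P_{k-1} - U_{k-2} P_{k-1} e_{k-1} P_{k-1}$, which shows that the overall $U_{k-1}$ factor in $\Ekj{k}{n}$ exactly cancels the leading pole in one of the two $P_k$'s. Lower-order singular terms from the remaining projectors carry explicit $\tl_k$-generators $e_{k-j}$ acting in the boundary region; such terms necessarily produce boundary-boundary arcs in the output, giving states annihilated by the quotient.

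The main obstacle is the bookkeeping of singular contributions in the $\Ekj{k}{n}$ computation: at each level of the Wenzl-Jones recursion one must verify that remaining singularities are matched either by an explicit polynomial factor or by a boundary-arc-creating operator, so that upon projecting onto $\stan_{n,k}^d$ the result is polynomial in $\beta$. Once this is established for each generator, polynomiality extends to all of $\btl_{n,k}$ by multiplicativity and distributivity, completing the proof.
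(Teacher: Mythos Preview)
Your treatment of $\Ik$ and of $\Ekj{k}{j}$ for $j<n$ matches the paper's and is correct. The gap is in the $\Ekj{k}{n}$ case.

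Your proposed inductive scheme via the standard recursion $U_{k-1}P_k = U_{k-1}P_{k-1} - U_{k-2}P_{k-1}e_{k-1}P_{k-1}$ does not close. After one step you have terms involving $P_{k-1}$ (indeed two of them in the second summand), and $P_{k-1}$ still carries poles at every $\beta$ with $U_j=0$ for $j\le k-2$; the single prefactor $U_{k-1}$ only clears one denominator, not all of them. Your fallback claim, that the remaining singular terms ``carry explicit $\tl_k$-generators $e_{k-j}$ acting in the boundary region'' and therefore produce boundary--boundary arcs in the output, is not justified: a boundary generator appearing on the \emph{right} of a word acts first and may create an arc that is subsequently undone (recall that $\stanu_{n,k}^d$ is not a $\tl_{n+k}$-submodule). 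So individual singular terms in your expansion need not land in $\stanu_{n,k}^d$, and the bookkeeping you flag as the main obstacle is genuinely unresolved.

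The paper bypasses this by using a different projector identity, equation~\eqref{eq:otherWJrelation}, rather than the recursion~\eqref{eq:WJs}. That identity expresses $U_{k-1}P_k$ as a sum of $k$ terms with \emph{polynomial} coefficients $(-1)^jU_{k-1-j}$, each involving a single $P_{k-1}$ sitting on strands $n+2,\ldots,n+k$. Applied to the upper projector in $U_{k-1}\,\Ik e_n \Ik$, that $P_{k-1}$ is absorbed into the lower $P_k$ (using $P_iP_j=P_{\max(i,j)}$), yielding the closed form
\[
\Ekj{k}{n}=\sum_{j=0}^{k-1}(-1)^j U_{k-1-j}\,\Ik\, e_n e_{n+1}\cdots e_{n+j}.
\]
Every term is now a polynomial scalar times $\Ik$ times a word in the $e_j$'s; since you have already shown $\Ik$ acts as the identity in the quotient, non-singularity follows immediately with no residual bookkeeping.
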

\begin{proof}
It is enough to show that $\rho_{n,k}^d(a)$ is non-divergent
when $a$ is the unit $\Ik$ or a generator $\Ekj{k}{j}$ of $\btl_{n,k}$.  Recall that $\Ik \in \btl_{n,k} \subseteq \tl_{n+k}$ is the identity element of $\tl_n$ glued to the \WJ{} projector $P_k =
\begin{pspicture}[shift=-0.05](-0.03,0.00)(1.07,0.3)
\pspolygon[fillstyle=solid,fillcolor=pink](0,0)(1,0)(1,0.3)(0,0.3)(0,0)\rput(0.5,0.15){$_k$} 
\end{pspicture}
$, acting on the boundary nodes $n+1, \ldots, n+k$.  The $\tl_{n+k}$ representation $\rho_{n+k}^d$ 
will therefore have singularities when specialised to
certain $\beta \in \CC$, if $k>1$. 
However, we may expand $P_k$ as the identity plus a linear combination of non-trivial 
monomials in the \TL{} generators $e_j$, where $j=n+1, \ldots, n+k-1$.  Acting with such a monomial on an arbitrary 
$\tl_{n+k}$ link state will result in a link state with two linked boundary 
nodes or in zero.  Either way, the result is set to zero in the quotient 
\eqref{eq:Vdecomp} defining the standard $\btl_{n,k}$ representation 
$\rho_{n,k}^d$.  Thus, $\rho_{n,k}^d(\Ik)$ is the identity matrix,
which is clearly non-divergent upon specialising to $\btl_{n,k}(\beta)$. 

Similarly, when $j<n$,
\begin{equation}
\rho_{n+k}^d(\Ekj{k}{j}) = \rho_{n+k}^d(\Ik) \rho_{n+k}^d(e_j) \rho_{n+k}^d(\Ik) = \rho_{n+k}^d(\Ik) \rho_{n+k}^d(e_j)
\end{equation}
has singularities because of the projector in $\rho_{n+k}^d(\Ik)$. However, $\rho_{n+k}^d(e_j)$ has no singularities, 
so the above argument shows that $\rho_{n,k}^d(\Ekj{k}{j})$ is non-divergent
in the quotient, hence may be specialised, for all $j<n$.

It remains to consider the generator $\Ekj{k}{n} = \Ik e_n \Ik$.  For this, we employ \eqref{eq:otherWJrelation} to expand $\Ekj{k}{n}$ as
\begin{align}
\Ekj{k}{n} &=
\psset{unit=0.5}
U_{k-1} \ 
\begin{pspicture}[shift=-1.30](0.4,-.8)(6.1,2.0)
\psarc[linewidth=1.5pt,linecolor=blue]{-}(1,1.5){0.5}{-180}{0}
\psarc[linewidth=1.5pt,linecolor=blue]{-}(1,0){0.5}{0}{180}
\psline[linewidth=1.5pt,linecolor=blue]{-}(0.5,-0.8)(0.5,0)
\psline[linewidth=1.5pt,linecolor=blue]{-}(0.5,1.5)(0.5,2.3)
\psline[linewidth=1.5pt,linecolor=blue]{-}(1.5,-0.8)(1.5,-0.3)
\psline[linewidth=1.5pt,linecolor=blue]{-}(1.5,1.8)(1.5,2.3)
\psline[linewidth=1.5pt,linecolor=blue]{-}(2.5,-0.8)(2.5,2.3)
\psline[linewidth=1.5pt,linecolor=blue]{-}(3.5,-0.8)(3.5,2.3)
\psline[linewidth=1.5pt,linecolor=blue]{-}(4.5,-0.8)(4.5,2.3)
\psline[linewidth=1.5pt,linecolor=blue]{-}(5.5,-0.8)(5.5,2.3)
\pspolygon[fillstyle=solid,fillcolor=pink](1.1,0)(5.9,0)(5.9,-0.5)(1.1,-0.5)(1.1,0)
\rput(3.5,-0.25){$_{k}$}
\pspolygon[fillstyle=solid,fillcolor=pink](1.1,1.5)(5.9,1.5)(5.9,2)(1.1,2)(1.1,1.5)
\rput(3.5,1.75){$_{k}$}
\end{pspicture}
= U_{k-1} \ 
\begin{pspicture}[shift=-1.30](0.4,-.8)(6.1,2.0)
\psarc[linewidth=1.5pt,linecolor=blue]{-}(1,1.5){0.5}{-180}{0}
\psarc[linewidth=1.5pt,linecolor=blue]{-}(1,0){0.5}{0}{180}
\psline[linewidth=1.5pt,linecolor=blue]{-}(0.5,-0.8)(0.5,0)
\psline[linewidth=1.5pt,linecolor=blue]{-}(0.5,1.5)(0.5,2.3)
\psline[linewidth=1.5pt,linecolor=blue]{-}(1.5,-0.8)(1.5,0)
\psline[linewidth=1.5pt,linecolor=blue]{-}(1.5,1.5)(1.5,2.3)
\psline[linewidth=1.5pt,linecolor=blue]{-}(2.5,-0.8)(2.5,2.3)
\psline[linewidth=1.5pt,linecolor=blue]{-}(3.5,-0.8)(3.5,2.3)
\psline[linewidth=1.5pt,linecolor=blue]{-}(4.5,-0.8)(4.5,2.3)
\psline[linewidth=1.5pt,linecolor=blue]{-}(5.5,-0.8)(5.5,2.3)
\pspolygon[fillstyle=solid,fillcolor=pink](1.1,0)(5.9,0)(5.9,-0.5)(1.1,-0.5)(1.1,0)
\rput(3.5,-0.25){$_{k}$}
\end{pspicture}  - U_{k-2} \ 
\begin{pspicture}[shift=-1.30](0.4,-.8)(6.1,2.0)
\psarc[linewidth=1.5pt,linecolor=blue]{-}(2,1.5){0.5}{-180}{0}
\psarc[linewidth=1.5pt,linecolor=blue]{-}(1,0){0.5}{0}{180}
\psline[linewidth=1.5pt,linecolor=blue]{-}(0.5,-0.8)(0.5,0)
\psline[linewidth=1.5pt,linecolor=blue]{-}(0.5,1.5)(0.5,2.3)
\psline[linewidth=1.5pt,linecolor=blue]{-}(1.5,-0.8)(1.5,0)
\psline[linewidth=1.5pt,linecolor=blue]{-}(1.5,1.5)(1.5,2.3)
\psline[linewidth=1.5pt,linecolor=blue]{-}(2.5,-0.8)(2.5,0)
\psline[linewidth=1.5pt,linecolor=blue]{-}(2.5,1.5)(2.5,2.3)
\psbezier[linewidth=1.5pt,linecolor=blue]{-}(2.5,0)(2.5,0.75)(0.5,0.75)(0.5,1.5)
\psline[linewidth=1.5pt,linecolor=blue]{-}(3.5,-0.8)(3.5,2.3)
\psline[linewidth=1.5pt,linecolor=blue]{-}(4.5,-0.8)(4.5,2.3)
\psline[linewidth=1.5pt,linecolor=blue]{-}(5.5,-0.8)(5.5,2.3)
\pspolygon[fillstyle=solid,fillcolor=pink](1.1,0)(5.9,0)(5.9,-0.5)(1.1,-0.5)(1.1,0)
\rput(3.5,-0.25){$_{k}$}
\end{pspicture}
\notag \\[0.5cm]
&\mspace{250mu} + U_{k-3} \ 
\psset{unit=0.5}
\begin{pspicture}[shift=-1.30](0.4,-.8)(6.1,2.0)
\psarc[linewidth=1.5pt,linecolor=blue]{-}(3,1.5){0.5}{-180}{0}
\psarc[linewidth=1.5pt,linecolor=blue]{-}(1,0){0.5}{0}{180}
\psline[linewidth=1.5pt,linecolor=blue]{-}(0.5,-0.8)(0.5,0)
\psline[linewidth=1.5pt,linecolor=blue]{-}(0.5,1.5)(0.5,2.3)
\psline[linewidth=1.5pt,linecolor=blue]{-}(1.5,-0.8)(1.5,0)
\psline[linewidth=1.5pt,linecolor=blue]{-}(1.5,1.5)(1.5,2.3)
\psline[linewidth=1.5pt,linecolor=blue]{-}(2.5,-0.8)(2.5,0)
\psline[linewidth=1.5pt,linecolor=blue]{-}(2.5,1.5)(2.5,2.3)
\psline[linewidth=1.5pt,linecolor=blue]{-}(3.5,-0.8)(3.5,0)
\psline[linewidth=1.5pt,linecolor=blue]{-}(3.5,1.5)(3.5,2.3)
\psbezier[linewidth=1.5pt,linecolor=blue]{-}(2.5,0)(2.5,0.75)(0.5,0.75)(0.5,1.5)
\psbezier[linewidth=1.5pt,linecolor=blue]{-}(3.5,0)(3.5,0.75)(1.5,0.75)(1.5,1.5)
\psline[linewidth=1.5pt,linecolor=blue]{-}(4.5,-0.8)(4.5,2.3)
\psline[linewidth=1.5pt,linecolor=blue]{-}(5.5,-0.8)(5.5,2.3)
\pspolygon[fillstyle=solid,fillcolor=pink](1.1,0)(5.9,0)(5.9,-0.5)(1.1,-0.5)(1.1,0)
\rput(3.5,-0.25){$_{k}$}
\end{pspicture}
\ - \dots + (-1)^{k+1} U_{0}\
\begin{pspicture}[shift=-1.30](0.4,-.8)(6.1,2.0)
\psarc[linewidth=1.5pt,linecolor=blue]{-}(5,1.5){0.5}{-180}{0}
\psarc[linewidth=1.5pt,linecolor=blue]{-}(1,0){0.5}{0}{180}
\psline[linewidth=1.5pt,linecolor=blue]{-}(0.5,-0.8)(0.5,0)
\psline[linewidth=1.5pt,linecolor=blue]{-}(0.5,1.5)(0.5,2.3)
\psline[linewidth=1.5pt,linecolor=blue]{-}(1.5,-0.8)(1.5,0)
\psline[linewidth=1.5pt,linecolor=blue]{-}(1.5,1.5)(1.5,2.3)
\psline[linewidth=1.5pt,linecolor=blue]{-}(2.5,-0.8)(2.5,0)
\psline[linewidth=1.5pt,linecolor=blue]{-}(2.5,1.5)(2.5,2.3)
\psline[linewidth=1.5pt,linecolor=blue]{-}(3.5,-0.8)(3.5,0)
\psline[linewidth=1.5pt,linecolor=blue]{-}(3.5,1.5)(3.5,2.3)
\psline[linewidth=1.5pt,linecolor=blue]{-}(4.5,-0.8)(4.5,0)
\psline[linewidth=1.5pt,linecolor=blue]{-}(4.5,1.5)(4.5,2.3)
\psline[linewidth=1.5pt,linecolor=blue]{-}(5.5,-0.8)(5.5,0)
\psline[linewidth=1.5pt,linecolor=blue]{-}(5.5,1.5)(5.5,2.3)
\psbezier[linewidth=1.5pt,linecolor=blue]{-}(2.5,0)(2.5,0.75)(0.5,0.75)(0.5,1.5)
\psbezier[linewidth=1.5pt,linecolor=blue]{-}(3.5,0)(3.5,0.75)(1.5,0.75)(1.5,1.5)
\psbezier[linewidth=1.5pt,linecolor=blue]{-}(4.5,0)(4.5,0.75)(2.5,0.75)(2.5,1.5)
\psbezier[linewidth=1.5pt,linecolor=blue]{-}(5.5,0)(5.5,0.75)(3.5,0.75)(3.5,1.5)
\pspolygon[fillstyle=solid,fillcolor=pink](1.1,0)(5.9,0)(5.9,-0.5)(1.1,-0.5)(1.1,0)
\rput(3.5,-0.25){$_{k}$}
\end{pspicture}
\notag \\
&= \sum_{j=0}^{k-1} (-1)^j U_{k-1-j} \Ik e_n e_{n+1} \cdots e_{n+j}.
\end{align}
Since $\rho_{n+k}^d(e_j)$ is non-divergent,
for each $j$, the only singularities are again in the \WJ{} projector and 
these are again removed in the quotient. Thus, $\rho_{n,k}^d(\Ekj{k}{n})$ is non-divergent,
so it defines a representation of the specialised algebra $\btl_{n,k}(\beta)$, for all $\beta \in \CC$.
\end{proof}

\subsection{Gram determinant} \label{app:Gram}

The invariant bilinear form $\gramprodk{\cdot}{\cdot}$ for the standard modules $\stan_{n,k}^d$ of $\btl_{n,k}$ was introduced in \cref{sec:Gram} using the inclusion of $\btl_{n,k}$ in $\tl_{n+k}$ and the definition of $\stan_{n,k}^d$ as a quotient of the $\tl_{n+k}$ standard module $\stan_{n+k}^d$. Indeed, if $w_1, w_2 \in \links_{n,k}^d$ are $\tl_{n+k}$ link states with $d$ defects and no two boundary nodes linked together, then their 
product $\gramprodk{w_1}{w_2}$ (or, rather, the product of their images in $\stan_{n,k}^d$) is the value of the $\tl_{n+k}$ product $\gramprod{w_1}{\Ik w_2}$. 
Diagrammatically, this is equivalent to sandwiching $\Ik$ between $w_1$ and the reflection of $w_2$. This product may alternatively be interpreted using the bijective map \eqref{eq:bij} between the link states of $\links_{n,k}^d$ and those of $\tl_n$, where the latter are modified so as to admit two different kinds of defect:
boundary and bulk.
The invariant bilinear form on $\stan_{n,k}^d$ is then viewed as acting on these modified $\tl_n$ link states (and their linear combinations): 
\begin{equation}
\gramprodk{\cdot}{\cdot} \colon \brac{\sideset{}{'}\bigoplus_{e=\abs{k-d}}^{\min (k+d,n)} \vspn\links_n^e} \times \brac{\sideset{}{'}\bigoplus_{e=\abs{k-d}}^{\min (k+d,n)} \vspn\links_n^e} \to \mathbb C,
\end{equation}
where a prime on $\bigoplus$ indicates that the indices increase by steps of two.
In this case, for $w_1 \in \links_n^e$ and $w_2 \in \links_n^{e'}$, the value of $\gramprodk{w_1}{w_2}$ may
be non-zero even if $e \neq e'$.
In terms of boundary and bulk defects, the fundamental rules are given by
\begin{equation}
\psset{unit=0.7}
\begin{pspicture}[shift=-0.45](0,-0.6)(0.8,0.6)
\psline(0,0)(0.8,0)
\psarc[linecolor=blue,linewidth=1.5pt]{-}(0.4,0){0.2}{0}{360}
\end{pspicture} \, = \beta, \qquad
\begin{pspicture}[shift=-0.45](0,-0.6)(0.4,0.6)
\psline(0,0)(0.4,0)
\psline[linecolor=blue,linewidth=1.5pt]{-}(0.2,0)(0.2,0.6)
\psline[linecolor=blue,linewidth=1.5pt]{-}(0.2,0)(0.2,-0.6)
\end{pspicture} \, = 1, \qquad
\begin{pspicture}[shift=-0.45](0,-0.6)(0.8,0.6)
\psline(0,0)(0.8,0)
\psline[linecolor=blue,linewidth=1.5pt]{-}(0.2,0)(0.2,0.6)
\psline[linecolor=blue,linewidth=1.5pt]{-}(0.6,0)(0.6,0.6)
\psarc[linecolor=blue,linewidth=1.5pt]{-}(0.4,0){0.2}{180}{360}
\end{pspicture} \, = \,
\begin{pspicture}[shift=-0.45](0,-0.6)(0.8,0.6)
\psline(0,0)(0.8,0)
\psline[linecolor=blue,linewidth=1.5pt]{-}(0.2,0)(0.2,-0.6)
\psline[linecolor=blue,linewidth=1.5pt]{-}(0.6,0)(0.6,-0.6)
\psarc[linecolor=blue,linewidth=1.5pt]{-}(0.4,0){0.2}{0}{180}
\end{pspicture} \, = \,
\begin{pspicture}[shift=-0.45](0,-0.6)(0.8,0.6)
\psline(0,0)(0.8,0)
\rput(0.2,0){\wobbly}
\rput(0.6,0){\wobbly}
\psarc[linecolor=blue,linewidth=1.5pt]{-}(0.4,0){0.2}{180}{360}
\end{pspicture} \, = \,
\begin{pspicture}[shift=-0.45](0,-0.6)(0.8,0.6)
\psline(0,0)(0.8,0)
\rput{180}(0.2,0){\wobbly}
\rput{180}(0.6,0){\wobbly}
\psarc[linecolor=blue,linewidth=1.5pt]{-}(0.4,0){0.2}{0}{180}
\end{pspicture} \, = 0, \qquad
\begin{pspicture}[shift=-0.45](0,-0.6)(0.8,0.6)
\psline(0,0)(0.8,0)
\psline[linecolor=blue,linewidth=1.5pt]{-}(0.2,0)(0.2,-0.6)
\rput{180}(0.6,0){\wobbly}
\psarc[linecolor=blue,linewidth=1.5pt]{-}(0.4,0){0.2}{0}{180}
\end{pspicture} \, = \,
\begin{pspicture}[shift=-0.45](0,-0.6)(0.8,0.6)
\psline(0,0)(0.8,0)
\psline[linecolor=blue,linewidth=1.5pt]{-}(0.2,0)(0.2,0.6)
\rput(0.6,0){\wobbly}
\psarc[linecolor=blue,linewidth=1.5pt]{-}(0.4,0){0.2}{180}{360}
\end{pspicture}
\, =1, \qquad
\begin{pspicture}[shift=-1.10](0,-1.2)(2.0,0.6)
\psline(0,0)(2.0,0)
\rput(0.2,0){\wobbly}\rput{180}(0.2,0){\wobbly}
\rput(0.6,0){\wobbly}\rput{180}(0.6,0){\wobbly}
\rput(1.0,0){\wobbly}\rput{180}(1.0,0){\wobbly}
\rput(1.4,0.3){...}\rput(1.4,-0.3){...}
\rput(1.8,0){\wobbly}\rput{180}(1.8,0){\wobbly}
\rput(1.0,-1.0){$\underbrace{\ \hspace{1.1cm}\ }_j$}
\end{pspicture} \, = \frac{U_k}{U_{k-j}}.
\label{eq:krule}
\end{equation}
These rules follow directly from the definition of the bilinear form and from the properties \eqref{eq:WJprop2} of the Wenzl-Jones projectors. We note that 
$\psset{unit=0.4}
\begin{pspicture}[shift=-0.5](0,-0.7)(0.4,0.7)
\psline(0,0)(0.4,0)
\rput(0.2,0){\wobblysmall}
\psline[linecolor=blue,linewidth=1.5pt]{-}(0.2,0)(0.2,-0.6)
\end{pspicture}$ and 
$\psset{unit=0.4}
\begin{pspicture}[shift=-0.5](0,-0.7)(0.4,0.7)
\psline(0,0)(0.4,0)
\psline[linecolor=blue,linewidth=1.5pt]{-}(0.2,0)(0.2,0.6)
\rput{180}(0.2,0){\wobblysmall}
\end{pspicture}$
only appear in cases where the result is already zero, so no extra rule is needed to account for these cases.

The next proposition shows that for $\beta$ formal, a change of basis recasts the Gram matrix in a block-diagonal form. We then use this to compute the Gram determinant explicitly. Upon specialisation, the change of basis 
used in the proof fails at certain roots of unity. If the Gram matrix is well-defined at a given root of unity $\beta = \beta_c$, its determinant is obtained using continuity by taking the limit of the generic expression as $\beta$ tends to $\beta_c$. If the bilinear form is identically zero or undefined at $\beta_c$, then the determinant of the renormalised form $\langle\!\langle v|w\rangle\!\rangle^{\textrm{\tiny$(k)$}}\!$, see \eqref{eq:underform} and \eqref{eq:overform}, is obtained by first multiplying by an appropriate power of $(\beta-\beta_c)$ before taking the limit.

Before proving these results, it is useful to see an explicit example.  Take $(n,k,d) = (4,2,2)$. Working with the modified link states $w$ on the right-hand side of \eqref{eq:B422}, obtained from the bijection \eqref{eq:bij}, the elements $\mathcal U(w)$ of the new basis are diagrammatically given by including a \WJ{} projector that only acts on the defects:
\begin{equation}
\psset{unit=0.8}
\begin{pspicture}[shift=-0.15](0,-0.1)(1.6,0.7)
\psline{-}(0,0)(1.6,0)
\psarc[linecolor=blue,linewidth=1.5pt]{-}(0.4,0){0.2}{0}{180}
\psarc[linecolor=blue,linewidth=1.5pt]{-}(1.2,0){0.2}{0}{180}
\end{pspicture} \qquad 
\begin{pspicture}[shift=-0.15](0,-0.1)(1.6,0.7)
\psline{-}(0,0)(1.6,0)
\psarc[linecolor=blue,linewidth=1.5pt]{-}(0.8,0){0.2}{0}{180}
\psbezier[linecolor=blue,linewidth=1.5pt]{-}(0.2,0)(0.2,0.5)(1.4,0.5)(1.4,0)
\end{pspicture} \qquad
\begin{pspicture}[shift=-0.15](0,-0.1)(1.6,1.2)
\psline{-}(0,0)(1.6,0)
\psarc[linecolor=blue,linewidth=1.5pt]{-}(1.2,0){0.2}{0}{180}
\psline[linecolor=blue,linewidth=1.5pt]{-}(0.2,0)(0.2,0.3)
\psline[linecolor=blue,linewidth=1.5pt]{-}(0.6,0)(0.6,0.3)
\psline[linecolor=blue,linewidth=1.5pt]{-}(0.2,0.6)(0.2,1.2)
\rput(0.6,0.6){\wobbly}
\pspolygon[fillstyle=solid,fillcolor=pink](0.1,0.3)(0.7,0.3)(0.7,0.6)(0.1,0.6)\rput(0.4,0.45){\small$_{2}$}
\end{pspicture} \qquad
\begin{pspicture}[shift=-0.15](0,-0.1)(1.6,1.2)
\psline{-}(0,0)(1.6,0)
\psarc[linecolor=blue,linewidth=1.5pt]{-}(0.8,0){0.2}{0}{180}
\psline[linecolor=blue,linewidth=1.5pt]{-}(0.2,0.6)(0.2,1.2)
\rput(1.4,0.6){\wobbly}
\psline[linecolor=blue,linewidth=1.5pt]{-}(0.2,0)(0.2,0.3)
\psline[linecolor=blue,linewidth=1.5pt]{-}(1.4,0)(1.4,0.3)
\pspolygon[fillstyle=solid,fillcolor=pink](0.1,0.3)(1.5,0.3)(1.5,0.6)(0.1,0.6)\rput(0.8,0.45){\small$_{2}$}
\end{pspicture} \qquad
\begin{pspicture}[shift=-0.15](0,-0.1)(1.6,1.2)
\psline{-}(0,0)(1.6,0)
\psarc[linecolor=blue,linewidth=1.5pt]{-}(0.4,0){0.2}{0}{180}
\psline[linecolor=blue,linewidth=1.5pt]{-}(1.0,0.6)(1.0,1.2)
\rput(1.4,0.6){\wobbly}
\pspolygon[fillstyle=solid,fillcolor=pink](0.9,0.3)(1.5,0.3)(1.5,0.6)(0.9,0.6)\rput(1.2,0.45){\small$_{2}$}
\psline[linecolor=blue,linewidth=1.5pt]{-}(1.0,0)(1.0,0.3)
\psline[linecolor=blue,linewidth=1.5pt]{-}(1.4,0)(1.4,0.3)
\end{pspicture} \qquad
\begin{pspicture}[shift=-0.15](0,-0.1)(1.6,1.2)
\psline{-}(0,0)(1.6,0)
\psline[linecolor=blue,linewidth=1.5pt]{-}(0.2,0.6)(0.2,1.2)
\psline[linecolor=blue,linewidth=1.5pt]{-}(0.6,0.6)(0.6,1.2)
\rput(1.0,0.6){\wobbly}
\rput(1.4,0.6){\wobbly}
\psline[linecolor=blue,linewidth=1.5pt]{-}(0.2,0)(0.2,0.3)
\psline[linecolor=blue,linewidth=1.5pt]{-}(0.6,0)(0.6,0.3)
\psline[linecolor=blue,linewidth=1.5pt]{-}(1.0,0)(1.0,0.3)
\psline[linecolor=blue,linewidth=1.5pt]{-}(1.4,0)(1.4,0.3)
\pspolygon[fillstyle=solid,fillcolor=pink](0.1,0.3)(1.5,0.3)(1.5,0.6)(0.1,0.6)\rput(0.8,0.45){\small$_{4}$}
\end{pspicture}
\ .
\label{eq:UB422}
\end{equation}
The first two states already appeared in $\links_{4,2}^2$, while the last four are linear combinations of link states. For example,
\begin{equation}
\begin{gathered}
\psset{unit=0.8}
\begin{pspicture}[shift=-0.15](0,-0.1)(1.6,1.2)
\psline{-}(0,0)(1.6,0)
\psarc[linecolor=blue,linewidth=1.5pt]{-}(1.2,0){0.2}{0}{180}
\psline[linecolor=blue,linewidth=1.5pt]{-}(0.2,0)(0.2,0.3)
\psline[linecolor=blue,linewidth=1.5pt]{-}(0.6,0)(0.6,0.3)
\psline[linecolor=blue,linewidth=1.5pt]{-}(0.2,0.6)(0.2,1.2)
\rput(0.6,0.6){\wobbly}
\pspolygon[fillstyle=solid,fillcolor=pink](0.1,0.3)(0.7,0.3)(0.7,0.6)(0.1,0.6)\rput(0.4,0.45){\small$_{2}$}
\end{pspicture} \ = \  
\begin{pspicture}[shift=-0.15](0,-0.1)(1.6,0.7)
\psline{-}(0,0)(1.6,0)
\psarc[linecolor=blue,linewidth=1.5pt]{-}(1.2,0){0.2}{0}{180}
\psline[linecolor=blue,linewidth=1.5pt]{-}(0.2,0)(0.2,0.6)
\rput(0.6,0){\wobbly}
\end{pspicture}
\ - \frac 1{U_1} \ 
\begin{pspicture}[shift=-0.15](0,-0.1)(1.6,0.7)
\psline{-}(0,0)(1.6,0)
\psarc[linecolor=blue,linewidth=1.5pt]{-}(0.4,0){0.2}{0}{180}
\psarc[linecolor=blue,linewidth=1.5pt]{-}(1.2,0){0.2}{0}{180}
\end{pspicture}
\, , \qquad
\begin{pspicture}[shift=-0.15](0,-0.1)(1.6,1.2)
\psline{-}(0,0)(1.6,0)
\psarc[linecolor=blue,linewidth=1.5pt]{-}(0.8,0){0.2}{0}{180}
\psline[linecolor=blue,linewidth=1.5pt]{-}(0.2,0.6)(0.2,1.2)
\rput(1.4,0.6){\wobbly}
\psline[linecolor=blue,linewidth=1.5pt]{-}(0.2,0)(0.2,0.3)
\psline[linecolor=blue,linewidth=1.5pt]{-}(1.4,0)(1.4,0.3)
\pspolygon[fillstyle=solid,fillcolor=pink](0.1,0.3)(1.5,0.3)(1.5,0.6)(0.1,0.6)\rput(0.8,0.45){\small$_{2}$}
\end{pspicture}
\ = \ 
\begin{pspicture}[shift=-0.15](0,-0.1)(1.6,0.7)
\psline{-}(0,0)(1.6,0)
\psarc[linecolor=blue,linewidth=1.5pt]{-}(0.8,0){0.2}{0}{180}
\psline[linecolor=blue,linewidth=1.5pt]{-}(0.2,0)(0.2,0.6)
\rput(1.4,0){\wobbly}
\end{pspicture}
\ - \frac 1{U_1} \ 
\begin{pspicture}[shift=-0.15](0,-0.1)(1.6,0.7)
\psline{-}(0,0)(1.6,0)
\psarc[linecolor=blue,linewidth=1.5pt]{-}(0.8,0){0.2}{0}{180}
\psbezier[linecolor=blue,linewidth=1.5pt]{-}(0.2,0)(0.2,0.6)(1.4,0.6)(1.4,0)
\end{pspicture}
\ , \qquad \\
\psset{unit=0.7}
\begin{pspicture}[shift=-0.15](0,-0.1)(1.6,1.2)
\psline{-}(0,0)(1.6,0)
\psline[linecolor=blue,linewidth=1.5pt]{-}(0.2,0.6)(0.2,1.2)
\psline[linecolor=blue,linewidth=1.5pt]{-}(0.6,0.6)(0.6,1.2)
\rput(1.0,0.6){\wobbly}
\rput(1.4,0.6){\wobbly}
\psline[linecolor=blue,linewidth=1.5pt]{-}(0.2,0)(0.2,0.3)
\psline[linecolor=blue,linewidth=1.5pt]{-}(0.6,0)(0.6,0.3)
\psline[linecolor=blue,linewidth=1.5pt]{-}(1.0,0)(1.0,0.3)
\psline[linecolor=blue,linewidth=1.5pt]{-}(1.4,0)(1.4,0.3)
\pspolygon[fillstyle=solid,fillcolor=pink](0.1,0.3)(1.5,0.3)(1.5,0.6)(0.1,0.6)\rput(0.8,0.45){\small$_{4}$}
\end{pspicture} \ = \
\begin{pspicture}[shift=-0.15](0,-0.1)(1.6,0.7)
\psline{-}(0,0)(1.6,0)
\psline[linecolor=blue,linewidth=1.5pt]{-}(0.2,0)(0.2,0.6)
\psline[linecolor=blue,linewidth=1.5pt]{-}(0.6,0)(0.6,0.6)
\rput(1.0,0){\wobbly}
\rput(1.4,0){\wobbly}
\end{pspicture} \ + \frac{U_1}{U_3} \Big(\ 
\begin{pspicture}[shift=-0.15](0,-0.1)(1.6,0.7)
\psline{-}(0,0)(1.6,0)
\psarc[linecolor=blue,linewidth=1.5pt]{-}(0.4,0){0.2}{0}{180}
\psline[linecolor=blue,linewidth=1.5pt]{-}(1.0,0)(1.0,0.6)
\rput(1.4,0){\wobbly}
\end{pspicture} 
\ - U_1 \ 
\begin{pspicture}[shift=-0.15](0,-0.1)(1.6,0.7)
\psline{-}(0,0)(1.6,0)
\psarc[linecolor=blue,linewidth=1.5pt]{-}(0.8,0){0.2}{0}{180}
\psline[linecolor=blue,linewidth=1.5pt]{-}(0.2,0)(0.2,0.6)
\rput(1.4,0){\wobbly}
\end{pspicture}
 \ + \ 
\begin{pspicture}[shift=-0.15](0,-0.1)(1.6,0.7)
\psline{-}(0,0)(1.6,0)
\psarc[linecolor=blue,linewidth=1.5pt]{-}(1.2,0){0.2}{0}{180}
\psline[linecolor=blue,linewidth=1.5pt]{-}(0.2,0)(0.2,0.6)
\rput(0.6,0){\wobbly}
\end{pspicture}
 \ \Big) + \frac{U_1}{U_2 U_3} \Big( \ 
\begin{pspicture}[shift=-0.15](0,-0.1)(1.6,0.7)
\psline{-}(0,0)(1.6,0)
\psarc[linecolor=blue,linewidth=1.5pt]{-}(0.8,0){0.2}{0}{180}
\psbezier[linecolor=blue,linewidth=1.5pt]{-}(0.2,0)(0.2,0.6)(1.4,0.6)(1.4,0)
\end{pspicture} \ - U_1 \ 
\begin{pspicture}[shift=-0.15](0,-0.1)(1.6,0.7)
\psline{-}(0,0)(1.6,0)
\psarc[linecolor=blue,linewidth=1.5pt]{-}(0.4,0){0.2}{0}{180}
\psarc[linecolor=blue,linewidth=1.5pt]{-}(1.2,0){0.2}{0}{180}
\end{pspicture}
\ \Big).
\end{gathered}
\end{equation}
If the elements $\mathcal U(w)$ are ordered so that the number of links is weakly decreasing, as in \eqref{eq:UB422}, then the matrix of this change of basis becomes upper-triangular, with 
ones on the diagonal. Its determinant is therefore $1$. Moreover, we find that \eqref{eq:Gmat422} becomes block-diagonal in this basis:
\begin{equation}
\mathcal U^T \grammat_{4,2}^2 \, \mathcal U = 
\begin{pmatrix} 
1 \cdot \begin{pmatrix} (U_1)^2 & U_1 \\ U_1 & (U_1)^2\end{pmatrix} & & \\
& {\displaystyle\frac{U_3}{(U_1)^2}}\cdot \begin{pmatrix} U_1 & 1 & 0 \\ 1 & U_1 & 1 \\ 0 & 1 & U_1\end{pmatrix} &  \\
& & \displaystyle\frac {U_4}{U_2}\cdot  \big(\,1\,\big)
\end{pmatrix}
,
\end{equation}
where we use the same symbol, $\mathcal U$, for the map and its matrix realisation.
Up to the constant prefactors, the three blocks may be recognised as the $\tl_{n}$ Gram matrices $\grammat_{4,0}^0$, $\grammat_{4,0}^2$ and $\grammat_{4,0}^4$.
\begin{Proposition}\label{prop:BlockDiagGram}
For $\beta$ formal, there exists a (change of basis) matrix
$\mathcal U$, with $\det \, \mathcal U = 1$, for which
\begin{equation}
\mathcal U^{T} \grammat_{n,k}^d \mathcal U = \sideset{}{'}\bigoplus_{e=\abs{k-d}}^{\min(k+d,n)} A_k^{\frac{1}{2} (e+d-k), \frac{1}{2} (e-d+k)} \grammat_{n,0}^e,
\end{equation}
where the prime indicates that $e$ increases in steps of two and the $A_k^{i,j}$ are functions that multiply each block, given by
\begin{equation} \label{eq:DefAConstants}
A_{k}^{i,j} = \frac{U_{k+i,j-1}}{U_{i+j-1,j-1}\,U_{k-1,j-1}}, \qquad U_{k,r} = \frac{U_k!}{U_{k-r-1}!\,U_r!}, \qquad U_k! = \prod_{i=1}^{k}U_i,
\end{equation}
with the conventions $U_{-1}! = U_0! = 1$.
\label{sec:gramdecomp}
\end{Proposition}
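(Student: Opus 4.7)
The plan is to construct the change of basis $\mathcal U$ explicitly using the Wenzl-Jones projectors, following the pattern illustrated in the example immediately preceding the statement. For each basis element $w \in \links_{n,k}^d$, identified via the bijection \eqref{eq:bij} with a modified $\tl_n$ link state having $i$ bulk defects and $j$ wavy (boundary) defects, where $i + j = e$ and $i - j = d - k$, I would define $\mathcal U(w)$ by inserting a single Wenzl-Jones projector $P_e$ across the $e$ defect strands, as in \eqref{eq:UB422}. Note that for fixed $d$ and $k$, the pair $(i,j)$ is uniquely determined by $e$, which justifies the one-to-one indexing of blocks by $e$ in the statement.

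First I would verify that $\det \mathcal U = 1$. Ordering the basis so that $e$ is weakly decreasing (hence link states with more arcs come later), the defining recursion \eqref{eq:WJs} expresses $\mathcal U(w)$ as $w$ plus a linear combination of link states obtained by capping off pairs of defects into arcs. These correction terms therefore have strictly fewer defects than $w$, so the matrix of $\mathcal U$ in this ordered basis is upper-triangular with ones on the diagonal.

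Second, I would evaluate $\gramprodk{\mathcal U(w_1)}{\mathcal U(w_2)}$ for link states $w_1, w_2 \in \links_{n,k}^d$ having $(i_1,j_1)$ and $(i_2,j_2)$ bulk/wavy defects. Diagrammatically, after reflecting $w_2$ and sandwiching with $\Ik$, the resulting picture contains three Wenzl-Jones projectors: $P_{e_1}$ and $P_{e_2}$ (from $\mathcal U$) and $P_k$ (from $\Ik$). Using \eqref{eq:WJprop}, any projector annihilates configurations in which two of its strands are joined by an arc, so the only surviving configurations are those where strands pass straight through between the projectors. A simple counting of strands then forces $e_1 = e_2 = e$; otherwise the product vanishes. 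This gives the block-diagonal structure. When $e_1 = e_2 = e$, the $i$ bulk strands pass through between the two $P_e$ projectors and reproduce exactly the standard $\tl_n$ product on $\stan_n^e$, since $P_e$ acts trivially on straight-through configurations, yielding the factor $\gramprod{w_1}{w_2}$ on $\stan_n^e$ entering $\grammat_{n,0}^e$.

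Third, the remaining strands --- $j$ wavy strands from each $P_e$ passing through $P_k$ --- contract into a closed network involving $P_e$, $P_e$ and $P_k$, with $j$ strands connecting the two $P_e$'s via $P_k$ and the remaining $k-j$ strands of $P_k$ looping. This closed diagram is a scalar depending only on $i, j$ and $k$, giving the prefactor $A_k^{i,j}$. The main obstacle is computing $A_k^{i,j}$ in closed form; I would proceed either by induction on $j$, stripping off wavy loops one at a time using the identity $e_j P_e e_j = (U_e/U_{e-1})\, e_j P_{e-1}$ from \eqref{eq:WJprop} together with the boundary rule in \eqref{eq:krule}, or by recognising the closed network as a theta net whose standard evaluation in the Temperley-Lieb diagrammatic calculus yields precisely $U_{k+i,j-1}/\bigl( U_{i+j-1,j-1}\,U_{k-1,j-1}\bigr)$. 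The base case $j=0$ is trivial, since then $A_k^{i,0} = 1$ and there are no wavy strands, matching $U_{k+i,-1}/(U_{i-1,-1} U_{k-1,-1}) = 1$ under the conventions of \eqref{eq:DefAConstants}.
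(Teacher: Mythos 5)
Your construction of $\mathcal U$ (inserting $P_e$ across the defects of each modified link state obtained from the bijection \eqref{eq:bij}), the triangularity argument giving $\det\mathcal U=1$, the annihilation/strand-counting argument for block-diagonality, and the factorisation of each block as $A_k^{i,j}\,\grammat_{n,0}^e$ all coincide with the paper's own proof. (One harmless slip: with your ordering, weakly decreasing $e$, the correction terms produced by $P_e$ have strictly smaller $e$ and so appear later, making the matrix lower- rather than upper-triangular; either way the diagonal consists of ones and the determinant is $1$.)

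The gap is in the evaluation of $A_k^{i,j}$, which is the computational heart of the proposition. Your first route, induction on $j$ by stripping off wavy pairs using $e_m P_m e_m = \tfrac{U_m}{U_{m-1}} e_m P_{m-1}$, does not work as stated: that identity caps a strand of the projector with a genuine Temperley--Lieb arc in the bulk, whereas in the $A_k^{i,j}$ network the wavy strands are not capped adjacent to the projector at all --- each runs separately to the boundary, and their joining is only effected by the last rule of \eqref{eq:krule}, whose weight $U_k/U_{k-j}$ depends on $k$ and on the nesting depth rather than on $\beta$. Moreover, expanding $P_{i+j}$ inevitably creates terms in which a bulk defect is joined to a wavy one (weight $1$) or two like defects are joined (weight $0$), so any recursion must mix the two indices; and your only verified base case, $j=0$, carries no $k$-dependence whatsoever and so cannot anchor an induction whose answer depends on $k$. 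What is needed --- and what the paper does --- is to peel off a \emph{bulk} strand using \eqref{eq:otherWJrelation} and its adjoint together with the rules \eqref{eq:krule}, which yields the two-term recursion
\begin{equation*}
A_k^{i,j} = A_k^{i-1,j} - \frac{(U_{j-1})^2}{U_{i+j-1}U_{i+j-2}}\,A_k^{i-1,j-1},
\end{equation*}
supplemented by the two boundary conditions $A_k^{i,0}=1$ and $A_k^{0,j}=U_k/U_{k-j}$ (it is the second of these that injects the $k$-dependence), after which one checks that \eqref{eq:DefAConstants} solves this system. Your alternative, evaluating the network as a theta net, is legitimate in principle --- the paper itself records $A_k^{i,j}=\theta(i-j+k,i+j,k)/\theta(i-j+k,i,k-j)$ immediately after the proof --- but the paper deliberately leaves that derivation as a suggestion: to use it you would have to re-express the boundary rule \eqref{eq:krule} back in terms of the seam projector $P_k$ so as to identify the network as a theta net, and then verify that the Kauffman--Lins evaluation of the resulting ratio reduces to \eqref{eq:DefAConstants}; simply invoking a ``standard evaluation'' leaves the key computation unproved.
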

\begin{proof}
As in the example above, we use the bijection \eqref{eq:bij} and order the modified $\tl_n$ link states so that the number of links weakly decreases.
The link states with $|k-d|$ defects appear unchanged in the new basis, while those in $\vspn \links _n^e$ with $e>|k-d|$ are replaced by linear combinations.
If the modified link state $w$ has $e$ defects, we define $\mathcal{U} (w)$ to be the linear combination of link states obtained by letting the \WJ{} projector $P_e$ act on the defects from above.
Since $P_e$ is the identity plus terms which will increase the number of links, we conclude that the 
matrix $\mathcal U$ is upper-triangular with ones on the diagonal, so 
its determinant is $1$.

In this new basis, the Gram matrix $\mathcal U^T \grammat_{n,k}^d\, \mathcal U$ is block diagonal. Indeed, if $w$ and $w'$ don't have the same number of defects, 
then in the diagram describing the product of (the equivalence classes of) $\mathcal{U} (w)$ and $\mathcal{U} (w')$, two defects must be linked.  This could lead to a non-zero result because the 
rules \eqref{eq:krule} allow defects of different types to be linked.  However, this link will then touch a projector at two nodes, so the result must vanish. If $w$ and $w'$ have the same number of defects,
then they each have the same number $i$ of bulk defects and the same number $j$ of boundary defects, by \eqref{eq:ConstrainBulkBdryDefects}.  A similar argument as before now shows that we have
\begin{equation}
\gramprodk{\,\mathcal U(w)\,}{\,\mathcal U(w')} = A_k^{i,j} \gramprod{w}{w'}^{_{(0)}}_{\phantom i}\qquad \text{(\(w,w' \in \sideset{}{'}\bigcup_{e=\abs{k-d}}^{\min (k+d,n)}\hspace{-0.2cm} \links_n^e\)),}
\end{equation}
where the indices of $\sideset{}{'}\bigcup$ increase by steps of two and the constants $A_k^{i,j}$ are defined by
\begin{equation}
A_k^{i,j} = 
\psset{unit=0.8}
\, \, 
\begin{pspicture}[shift=-0.8](0,-1.1)(2.8,1.1)
\rput(0.6,1.1){$\overbrace{\ \quad \ }^i$}
\rput(2.0,1.1){$\overbrace{\ \qquad \ }^j$}
\psline[linecolor=blue,linewidth=1.5pt]{-}(0.2,-1)(0.2,0.6)
\rput(0.6,.3){...}\rput(0.6,-.7){...}
\psline[linecolor=blue,linewidth=1.5pt]{-}(1.0,-1)(1.0,0.6)
\rput(1.4,0){\wobbly}
\rput(1.8,0){\wobbly}
\rput(2.2,.3){...}
\rput(2.6,0){\wobbly}
\rput{180}(1.4,-0.4){\wobbly}
\rput{180}(1.8,-0.4){\wobbly}
\rput(2.2,-.7){...}
\rput{180}(2.6,-0.4){\wobbly}
\pspolygon[fillstyle=solid,fillcolor=pink](0,0)(2.8,0)(2.8,-0.4)(0,-0.4)(0,0)\rput(1.4,-0.2){$_{i+j}$}
\end{pspicture}
\ .
\end{equation}
To evaluate the $A_k^{i,j}$, we use \eqref{eq:otherWJrelation} and its adjoint 
(obtained by reversing the order of multiplication), along with the rules 
\eqref{eq:krule}. This leads to the following relations which determine the constants uniquely:
\begin{equation}
A_k^{i,j} = A_k^{i-1,j} - \frac{(U_{j-1})^2}{U_{i+j-1}U_{i+j-2}}A_k^{i-1,j-1}, \qquad A_{k}^{i,0} = 1, \qquad A_{k}^{0,j} = \frac{U_{k}}{U_{k-j}}.
\end{equation}
The solution is \eqref{eq:DefAConstants} which completes the proof.
\end{proof}

Let us note that the coefficients $A_k^{i,j}$ appearing in \cref{prop:BlockDiagGram} may be nicely written in terms of $q$-binomials (recall that $\beta = q+q^{-1}$)
\begin{equation} \left[\begin{matrix}m\\ n\end{matrix}\right]_q = \frac{[m]_q!}{[m-n]_q![n]_q!},\qquad [n]_q! = \prod_{k=1}^n[k]_q, \qquad [n]_q = \frac{q^n - q^{-n}}{q-q^{-1}}, \qquad [0]_q! \equiv 1,
\end{equation}
namely
\begin{equation}
A_k^{i,j} = \frac{\left[\begin{matrix}k+i+1\\ j\end{matrix}\right]_q}{\left[\begin{matrix}i+j\\ j\end{matrix}\right]_q\left[\begin{matrix}k\\ j\end{matrix}\right]_q}.
\end{equation}
They may also be expressed in terms of the theta nets of Kauffman and Lins \cite{KauTem94}:
\begin{equation}
A_k^{i,j} = \frac{\text{Net}(i,j,k-j)}{\text{Net}(i,0,k-j)} = \frac{\theta(i-j+k,i+j,k)}{\theta(i-j+k,i,k-j)}.
\end{equation}
We recall that theta nets are built from \WJ{} projectors, suggesting that this expression might have an elegant derivation in the setting of \TL{} skein theory.

\begin{Proposition}
The determinant of the invariant bilinear form $\gramprodk{\cdot}{\cdot}$, in the basis $\links_{n,k}^d$, is given by
\begin{equation}
\det \grammat_{n,k}^d = \prod_{i=1}^{\lfloor \frac k2 \rfloor} \Big(\frac{U_{i-1}}{U_{k-i}}\Big)^{\dim \stan^d_{n,k-2i}} 
\prod_{j=1}^{\frac{n+k-d}2} \Big(\frac{U_{d+j}}{U_{j-1}}\Big)^{\dim \stan^{d+2j}_{n,k}}. 
\label{eq:Gramfinalform}
\end{equation}
\label{sec:Gramdet}
\end{Proposition}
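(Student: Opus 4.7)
The plan is to reduce the computation to the known Temperley--Lieb case by exploiting the block--diagonalisation of \cref{prop:BlockDiagGram}. Since the change of basis matrix $\mathcal U$ has unit determinant, that proposition immediately yields
\begin{equation}
\det \grammat_{n,k}^d \;=\; \sideset{}{'}\prod_{e=\abs{k-d}}^{\min(k+d,n)} \bigl(A_k^{i_e,j_e}\bigr)^{\dim \stan_n^e}\;\det \grammat_{n,0}^e, \qquad i_e=\tfrac{e+d-k}{2},\ j_e=\tfrac{e-d+k}{2},
\end{equation}
with $e$ increasing in steps of two. Substituting the classical \TL{} Gram determinant formula \eqref{eq:GramdetTL} for each $\det \grammat_{n,0}^e = \det \grammat_n^e$ then expresses the right--hand side as a product of Chebyshev ratios raised to explicit multiplicities. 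The remainder of the proof consists in rearranging this product into the two displayed factors of \eqref{eq:Gramfinalform}.

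The first step in this rearrangement is to simplify the $A_k^{i_e,j_e}$. Using the definition \eqref{eq:DefAConstants}, one finds that $A_k^{i,j}$ is, up to universal factors, a ratio of products of the form $U_{k-\ell}/U_{\ell-1}$ for $1\le\ell\le j$. The key observation is that the exponent $\dim \stan_n^e$ multiplying $A_k^{i_e,j_e}$ equals, upon collecting those $e'$ with $e'\le e$ lying in the admissible range for $\stan_{n,k}^{d'}$ with suitable $d'$, the dimension $\dim \stan_{n,k}^{d'}$ minus a neighbouring standard module dimension. I would then use the additive identity
\begin{equation}
\dim \stan_{n,k}^{d'} \;=\; \sideset{}{'}\sum_{e=\abs{k-d'|}}^{\min(k+d',n)}\dim \stan_n^{e}
\end{equation}
(which follows from the bijection \eqref{eq:bij}) to collapse telescoping sums of $\dim \stan_n^e$ into dimensions $\dim \stan_{n,k-2i}^d$, producing the prefactor $\prod_{i=1}^{\lfloor k/2\rfloor}(U_{i-1}/U_{k-i})^{\dim \stan_{n,k-2i}^d}$.

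The second step is parallel but simpler: the contribution of $\det \grammat_{n,0}^e$ is a product of factors $U_{e+j'}/U_{j'-1}$ raised to $\dim \stan_n^{e+2j'}$. Reindexing by $j=j'+(e-d)/2+k/2$ or, equivalently, grouping terms by the value of $d+2j$ that corresponds to $e+2j'$ under the bijection $\stan_{n,k}^{d+2j}\leftrightarrow\bigcup_{e'}\stan_n^{e'}$, the telescoping identity above again collapses the sum of $\dim \stan_n^{e+2j'}$ over the contributing values of $e$ into $\dim \stan_{n,k}^{d+2j}$, yielding the second product $\prod_{j=1}^{(n+k-d)/2}(U_{d+j}/U_{j-1})^{\dim \stan_{n,k}^{d+2j}}$.

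The main obstacle I expect is the bookkeeping in steps two and three: one must verify that the contributions from the $A_k^{i_e,j_e}$ factors beyond those forming the $U_{i-1}/U_{k-i}$ prefactor combine coherently with the \TL{} Gram contributions to yield exactly the $U_{d+j}/U_{j-1}$ exponents on the right--hand side of \eqref{eq:Gramfinalform}, with no spurious Chebyshev factors left over. A clean way to organise this, which I would pursue, is to rewrite everything in terms of the $q$-factorial / theta--net expression noted after the proof of \cref{prop:BlockDiagGram}; the ratios $U_{k,r}$ telescope naturally and the required cancellations become manifest. Once this combinatorial identity is verified for generic $\beta$, the result for all $\beta\in\mathbb{C}$ follows by continuity (with the renormalisation \eqref{eq:underform}--\eqref{eq:overform} applied at the relevant roots of unity, as indicated in the statement preceding the proposition).
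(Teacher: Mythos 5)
Your starting point coincides exactly with the paper's: both arguments feed the block--diagonalisation of \cref{prop:BlockDiagGram} (with $\det\,\mathcal U=1$) and the known \TL{} determinant \eqref{eq:GramdetTL} into the same intermediate product \eqref{eq:prodofgram}. Where you diverge is in how that product is turned into \eqref{eq:Gramfinalform}. You propose a direct resummation, telescoping the exponents via the decomposition of $\dim\stan_{n,k}^{d}$ into the $\dim\stan_n^e$ furnished by the bijection \eqref{eq:bij}. The paper instead sidesteps the product bookkeeping entirely: using the dimension recursion \eqref{eq:stanwreck}, it converts \eqref{eq:prodofgram} into the recursion $\det \grammat_{n,k}^d = \bigl(U_{d+1}/U_d\bigr)^{\dim \stan_{n-1,k}^{d+1}}\det \grammat_{n-1,k}^{d-1}\, \det \grammat_{n-1,k}^{d+1}$ with boundary conditions $\det\grammat_{n,k}^{n+k}=\det\grammat_{n,k}^{-1}=1$ (this is \eqref{eq:Gramwreck}), and then simply checks that the closed form \eqref{eq:Gramfinalform} satisfies recursion and boundary conditions. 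The recursion route buys a light induction on $n$ at the price of having to know the answer in advance; your route, if completed, would derive the formula rather than verify it.

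The weak point of your plan, as written, is the claimed organisation of the resummation: the $A_k^{i_e,j_e}$ factors do \emph{not} by themselves collapse to the prefactor $\prod_i(U_{i-1}/U_{k-i})^{\dim\stan_{n,k-2i}^d}$, nor do the $\det\grammat_{n,0}^e$ factors alone yield the second product --- cross-cancellations between the two families are essential. Already for $(n,k,d)=(4,2,2)$, with $\dim\stan_4^0=2$, $\dim\stan_4^2=3$, $\dim\stan_4^4=1$, the $A$-contributions multiply to $1^2\cdot(U_3/U_1^2)^3\cdot(U_4/U_2)$, which is not $(U_0/U_1)^3$, while the \TL{} contributions multiply to $U_1^2U_2U_3$, not $U_3^4U_4/U_1$; only their combined product reproduces $\det\grammat_{4,2}^2$ of \eqref{eq:Gmat422}. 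You do hedge on this in your final paragraph, but it means the middle two steps of your plan would have to be reorganised --- for instance by writing every exponent on both sides in the common basis of the $\dim\stan_n^e$ and matching the power of each individual $U_m$, or by packaging everything in the $q$-binomial form of $A_k^{i,j}$ as you suggest --- and that combinatorial identity remains to be proved. If the bookkeeping proves unwieldy, the paper's recursion-and-verify strategy is a clean way to close the argument from the very same intermediate formula you start from.
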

\begin{proof} From \cref{sec:gramdecomp}, the Gram determinants satisfy the relation
\begin{equation}
\det \grammat_{n,k}^d = \sideset{}{'} \prod_{e = \abs{k-d}}^{\min(k+d,n)} \Big( \det \grammat_{n,0}^e \cdot \big(A_k^{\frac{1}{2} (e+d-k), \frac{1}{2} (e-d+k)} \big)^{\dim \stan_{n,0}^e} \Big).
\label{eq:prodofgram}
\end{equation}
Closed expressions for $\det \grammat_{n,0}^e$ are known, see \eqref{eq:GramdetTL}, so \eqref{eq:prodofgram} 
already constitutes a closed form for $\grammat_{n,k}^d$. 
After a lengthy but straightforward computation
using \eqref{eq:dimVnkd} and its consequent recursion relation
\begin{equation}
\dim \stan_{n,k}^d = \dim \stan_{n-1,k}^{d-1} + \dim \stan_{n-1,k}^{d+1}, \qquad  \dim \stan_{n,k}^{-1} = 0,
\label{eq:stanwreck}
\end{equation}
we find that \eqref{eq:prodofgram} translates into a recursion relation for the Gram determinants:
\begin{equation}
\det \grammat_{n,k}^d = \Big(\frac{U_{d+1}}{U_d}\Big)^{\dim \stan_{n-1,k}^{d+1}}\det \grammat_{n-1,k}^{d-1} \det \grammat_{n-1,k}^{d+1}, \qquad \det \grammat_{n,k}^{n+k} = 1,  \qquad \det \grammat_{n,k}^{-1} = 1. 
\label{eq:Gramwreck}
\end{equation}
We note that the conventions $\dim \stan_{n,k}^{-1} = 0$ and $\det \grammat_{n,k}^{-1} = 1$ ensure that the recursion relations for $\dim \stan_{n,k}^d$ and $\det \grammat_{n,k}^d$ also apply to the case $d=0$.
For $k = 0$, the relation \eqref{eq:Gramwreck}
was computed in \cite{RidSta12} to calculate $\det \grammat_{n,0}^d$. The final form \eqref{eq:Gramfinalform} is 
seen to satisfy both the recursion relation \eqref{eq:Gramwreck} and the two boundary conditions. 
\end{proof}

%
\section{Virasoro representation theory} \label{sec:Back}
%

In this appendix, we review the structure of certain classes of modules over the Virasoro algebra. Further details may be found in \cite{KacBom88,IohRep11}.  The Virasoro algebra is the infinite-dimensional complex Lie algebra spanned by the modes $L_n$, where $n \in \ZZ$, and $C$, subject to the commutation relations
\begin{equation} \label{eq:VirComm}
\comm{L_m}{L_n} = \brac{m-n} L_{m+n} + \frac{m^3-m}{12} \delta_{m+n=0} C.
\end{equation}
The element $C$ is central and will be assumed to act on all modules as a given multiple $c$ of the identity operator called the central charge.  Formally, we are therefore considering modules over the quotient of the \uea{} of the Virasoro algebra by the two-sided ideal generated by $C - c \: \wun$.

\subsection{Highest-weight modules}

A \hws{} $\ket{\Delta}$ for the Virasoro algebra satisfies
\begin{equation}
L_n \ket{\Delta} = 0 \quad \text{for \(n>0\);} \qquad L_0 \ket{\Delta} = \Delta \ket{\Delta}.
\end{equation}
Such states are characterised by their conformal dimension ($L_0$-eigenvalue) $\Delta$ and central charge $c$.  
We generally omit explicit reference to the latter in notation, regarding it as fixed for the model under consideration.  
As usual, a \hws{} $\ket{\Delta}$ generates a Verma module $\Ver{\Delta}$ through the free action of the $L_n$ with 
$n<0$ and this module has a unique irreducible quotient that we will denote by $\Irr{\Delta}$.

With the standard parametrisation 
\begin{equation}
c = 13 - 6 \brac{t+t^{-1}}, \qquad 
\Delta_{r,s} = \frac{r^2-1}{4} t^{-1} - \frac{rs-1}{2} + \frac{s^2-1}{4} t \qquad \text{(\(t \in \CC \setminus \set{0}\)),}
\end{equation}
the Verma module $\Ver{r,s} \equiv \Ver{\Delta_{r,s}}$ may be shown to be reducible precisely when $r$ and $s$ are positive integers.  If $t$ is rational, then this parametrisation may be written in the form
\begin{equation} \label{eq:ParByt}
t = \frac{p}{p'}, \qquad 
c = 1 - \frac{6 \brac{p'-p}^2}{pp'}, \qquad 
\Delta_{r,s} = \frac{\brac{p'r-ps}^2 - \brac{p'-p}^2}{4pp'},
\end{equation}
where one customarily takes $\gcd \set{p, p'} = 1$.  The Virasoro minimal models $\MinMod{p}{p'}$ are built, for $p,p' \ge 2$, from the irreducible \hwms{} $\Irr{r,s} \equiv \Irr{\Delta_{r,s}}$ with $1 \le r \le p-1$ and $1 \le s \le p'-1$.  Our focus is on the logarithmic counterparts of these minimal models, so we will restrict ourselves to $t=p/p'$ rational and positive, taking $p' \ge p > 0$ (and hence $c \le 1$).  However, we shall not insist that $p,p' \ge 2$.

It turns out that the submodules of a Virasoro Verma module are always generated by \svs{} and that the maximal dimension of the space of \svs{} of any given conformal dimension is one.  The submodule structure of a Verma module $\Ver{\Delta}$ then reduces to determining its \svs{}.  The possible \sv{} structures for the Verma modules $\Ver{\Delta}$, with $t \in \QQ_+$, 
are illustrated diagrammatically in \cref{fig:VermaStructures}.  As mentioned above, if $\Delta \neq \Delta_{r,s}$ for any integers $r$ and $s$, then $\Ver{\Delta}$ is irreducible.\footnote{In general, this is only true for positive integers $r$ and $s$.  However, $t \in \QQ_+$ implies the symmetry $\Delta_{r,s} = \Delta_{r+p,s+p'}$ which shows that the set of $\Delta_{r,s}$ with $r,s \in \ZZ_+$ coincides with the set of $\Delta_{r,s}$ with $r,s \in \ZZ$.}  This is represented by the point case in the figure.  If $\Delta = \Delta_{r,s}$, where $r$ is a multiple of $p$ or $s$ is a multiple of $p'$, then the \sv{} structure of $\Ver{r,s}$ is represented by the infinite chain of \cref{fig:VermaStructures}.  For all other choices of $r,s \in \ZZ_+$, the structure of $\Ver{r,s}$ is represented by a braided pattern.  In both the chain and braid cases, a \sv{} is always present at grade $rs$, though this need not be the lowest (positive) grade at which a \sv{} appears.\footnote{If $M$ is a Virasoro module for which the conformal dimensions of the states are bounded below, we define the grade of a (generalised) $L_0$-eigenvector in $M$ to be the difference between its conformal dimension and the minimal conformal dimension among states of $M$.}
\begin{figure}
\begin{center}
\begin{tikzpicture}
  [->,node distance=1cm,>=stealth',semithick,scale=0.7,
   sv/.style={circle,draw=black,fill=black,inner sep = 2pt,minimum size=5pt}
  ]
  \node[sv] (1) [] {};
  \node[] (point) [above of =1] {Point};
  \node[sv] (3) [right = 3cm of 1] {};
  \node[] (chain) [above of =3] {Chain};
  \node[sv] (3a) [below of =3] {};
  \node[sv] (3b) [below of =3a] {};
  \node[sv] (3c) [below of =3b] {};
  \node[sv] (3d) [below of =3c] {};
  \node[inner sep = 2pt] (3e) [below of =3d] {$\vdots$};
  \path[] (3) edge node {} (3a)
          (3a) edge node {} (3b)
          (3b) edge node {} (3c)
          (3c) edge node {} (3d)
          (3d) edge node {} (3e);
  \node[sv] (5) [right = 3cm of 3] {};
  \node[] (braid) [above of =5] {Braid};
  \node[sv] (5a) [below left of =5] {};
  \node[sv] (5b) [below of =5a] {};
  \node[sv] (5c) [below of =5b] {};
  \node[sv] (5d) [below of =5c] {};
  \node[inner sep = 2pt] (5e) [below of =5d] {$\vdots$};
  \node[sv] (5j) [below right of =5] {};
  \node[sv] (5k) [below of =5j] {};
  \node[sv] (5l) [below of =5k] {};
  \node[sv] (5m) [below of =5l] {};
  \node[inner sep = 2pt] (5n) [below of =5m] {$\vdots$};
  \path[] (5) edge node {} (5a)
          (5) edge node {} (5j)
          (5a) edge node {} (5b)
          (5a) edge node {} (5k)
          (5b) edge node {} (5c)
          (5b) edge node {} (5l)
          (5c) edge node {} (5d)
          (5c) edge node {} (5m)
          (5d) edge node {} (5e)
          (5d) edge node {} (5n)
          (5j) edge node {} (5b)
          (5j) edge node {} (5k)
          (5k) edge node {} (5c)
          (5k) edge node {} (5l)	
          (5l) edge node {} (5d)
          (5l) edge node {} (5m)
          (5m) edge node {} (5e)
          (5m) edge node {} (5n);
\end{tikzpicture}
\end{center}
\caption{The \sv{} structure, marked by black circles, of Virasoro Verma modules for $t \in \QQ_+$.  
Arrows from one \sv{} to another indicate that the latter may be obtained from the former by acting with Virasoro modes.  The chain and braid structures have infinitely many \svs{}. The conformal dimensions of the \svs{} increase as one moves down.} \label{fig:VermaStructures}
\end{figure}

Given a central charge $c$, the conformal dimensions $\Delta_{r,s}$ signalling reducible Verma modules are conveniently summarised in an extended Kac table.  To make contact with the above structural distinctions and those of the next subsection, we will partition the extended Kac table into three subsets as follows:
\begin{itemize}
\item If $p$ divides $r$ and $p'$ divides $s$, then we say that $(r,s)$ is of \emph{corner} type in the extended Kac table.
\item If $p$ divides $r$ or $p'$ divides $s$, but not both, then $(r,s)$ is said to be of \emph{boundary} type.
\item If $p$ does not divide $r$ and $p'$ does not divide $s$, then $(r,s)$ is said to be of \emph{interior} type.
\end{itemize}
Summarising, corner and boundary type Verma modules have singular vectors arranged in chains whereas interior type Verma modules have a braided pattern of singular vectors.  We remark that if $p=1$ or $p'=1$, then there are no interior entries in the extended Kac table, and if $p=p'=1$, then there will be no boundary entries either.  We illustrate this with the extended Kac tables corresponding to $(p,p') = (1,2)$ and $(2,3)$ in \cref{fig:KacTables}.

{
\renewcommand{\arraystretch}{1.1} 
\begin{figure}
\begin{center}
\begin{tikzpicture} 
\node (Kac2)
{
\setlength{\extrarowheight}{4pt}
\begin{tabular}{|C|C|C|C|C|C|C|C|C|C|C|C|C}
\hline
\BKL 0 & -\frac{1}{8} & \BKL 0 & \frac{3}{8} & \BKL 1 & \frac{15}{8} & \BKL 3 & \frac{35}{8} & \BKL 6 & \frac{63}{8} & \BKL 10 & \frac{99}{8} & \BKL \cdots \\[1mm]
\hline
\BKL 1 & \frac{3}{8} & \BKL 0 & -\frac{1}{8} & \BKL 0 & \frac{3}{8} & \BKL 1 & \frac{15}{8} & \BKL 3 & \frac{35}{8} & \BKL 6 & \frac{63}{8} & \BKL \cdots \\[1mm]
\hline
\BKL 3 & \frac{15}{8} & \BKL 1 & \frac{3}{8} & \BKL 0 & -\frac{1}{8} & \BKL 0 & \frac{3}{8} & \BKL 1 & \frac{15}{8} & \BKL 3 & \frac{35}{8} & \BKL \cdots \\[1mm]
\hline
\BKL 6 & \frac{35}{8} & \BKL 3 & \frac{15}{8} & \BKL 1 & \frac{3}{8} & \BKL 0 & -\frac{1}{8} & \BKL 0 & \frac{3}{8} & \BKL 1 & \frac{15}{8} & \BKL \cdots \\[1mm]
\hline
\BKL 10 & \frac{63}{8} & \BKL 6 & \frac{35}{8} & \BKL 3 & \frac{15}{8} & \BKL 1 & \frac{3}{8} & \BKL 0 & -\frac{1}{8} & \BKL 0 & \frac{3}{8} & \BKL \cdots \\[1mm]
\hline
\BKL 15 & \frac{99}{8} & \BKL 10 & \frac{63}{8} & \BKL 6 & \frac{35}{8} & \BKL 3 & \frac{15}{8} & \BKL 1 & \frac{3}{8} & \BKL 0 & -\frac{1}{8} & \BKL \cdots \\[1mm]
\hline
\BKL \vdots & \vdots & \BKL \vdots & \vdots & \BKL \vdots & \vdots & \BKL \vdots & \vdots & \BKL \vdots & \vdots & \BKL \vdots & \vdots & \BKL \ddots
\end{tabular}
};

\node [below=3mm of Kac2] {$t = \dfrac{1}{2}, \qquad c = -2.$};

\node (Kac3) at (0,0) [below=20mm of Kac2]
{
\setlength{\extrarowheight}{4pt}
\begin{tabular}{|CC|C|CC|C|CC|C|CC|C|C}
\hline
\IKL 0 & \IKL 0 & \BKL \frac{1}{3} & \IKL 1 & \IKL 2 & \BKL \frac{10}{3} & \IKL 5 & \IKL 7 & \BKL \frac{28}{3} & \IKL 12 & \IKL 15 & \BKL \frac{55}{3} & \IKL \cdots \\[1mm]
\hline
\BKL \frac{5}{8} & \BKL \frac{1}{8} & -\frac{1}{24} & \BKL \frac{1}{8} & \BKL \frac{5}{8} & \frac{35}{24} & \BKL \frac{21}{8} & \BKL \frac{33}{8} & \frac{143}{24} & \BKL \frac{65}{8} & \BKL \frac{85}{8} & \frac{323}{24} & \BKL \cdots \\[1mm]
\hline
\IKL 2 & \IKL 1 & \BKL \frac{1}{3} & \IKL 0 & \IKL 0 & \BKL \frac{1}{3} & \IKL 1 & \IKL 2 & \BKL \frac{10}{3} & \IKL 5 & \IKL 7 & \BKL \frac{28}{3} & \IKL \cdots \\[1mm]
\hline
\BKL \frac{33}{8} & \BKL \frac{21}{8} & \frac{35}{24} & \BKL \frac{5}{8} & \BKL \frac{1}{8} & -\frac{1}{24} & \BKL \frac{1}{8} & \BKL \frac{5}{8} & \frac{35}{24} & \BKL \frac{21}{8} & \BKL \frac{33}{8} & \frac{143}{24} & \BKL \cdots \\[1mm]
\hline
\IKL 7 & \IKL 5 & \BKL \frac{10}{3} & \IKL 2 & \IKL 1 & \BKL \frac{1}{3} & \IKL 0 & \IKL 0 & \BKL \frac{1}{3} & \IKL 1 & \IKL 2 & \BKL \frac{10}{3} & \IKL \cdots \\[1mm]
\hline
\BKL \frac{85}{8} & \BKL \frac{65}{8} & \frac{143}{24} & \BKL \frac{33}{8} & \BKL \frac{21}{8} & \frac{35}{24} & \BKL \frac{5}{8} & \BKL \frac{1}{8} & -\frac{1}{24} & \BKL \frac{1}{8} & \BKL \frac{5}{8} & \frac{35}{24} & \BKL \cdots \\[1mm]
\hline
\IKL \vdots & \IKL \vdots & \BKL \vdots & \IKL \vdots & \IKL \vdots & \BKL \vdots & \IKL \vdots & \IKL \vdots & \BKL \vdots & \IKL \vdots & \IKL \vdots & \BKL \vdots & \IKL \ddots
\end{tabular}
};
\node [below=3mm of Kac3] {$t = \dfrac{2}{3}, \qquad c = 0.$};
\end{tikzpicture}
\caption{A part of the extended Kac table for $c=-2$ ($p=1$, $p'=2$) and $c=0$ ($p=2$, $p'=3$). The rows of the table 
are labelled by $r = 1, 2, 3, \ldots$ and the columns by $s = 1, 2, 3, \ldots$\,. Interior entries
are shaded dark blue, boundary entries are shaded light blue, while corner entries are white.} \label{fig:KacTables}
\end{center}
\end{figure}
}

\subsection{Feigin-Fuchs modules}\label{sec:FFm}

An important class of modules that are not \hw{} are the \FFms{} that arise in
the Coulomb gas free field realisation of the Virasoro algebra.  Their structures were first studied in \cite{FeiSke82} and 
we refer to \cite{IohRep11} for details omitted here.  A general, but briefer, summary of \FFm{} structures appears 
in \cite[App.~A]{RidMod14}.

In this realisation, one starts with the infinite-dimensional complex Lie algebra called the Heisenberg algebra.  This algebra is spanned by modes $a_n$, where $n \in \ZZ$, and a central element $\wun$ which is generally assumed to act on all modules as the identity.  We therefore write the commutation relations as
\begin{equation}
\comm{a_m}{a_n} = m \delta_{m+n=0} \wun.
\end{equation}
The (\uea{} of the) Heisenberg algebra carries a one-parameter family of copies of the (\uea{} of the) Virasoro algebra:
\begin{equation}
L_n = \frac{1}{2} \sum_{j \in \ZZ} a_j a_{n-j} - \frac{1}{2} \brac{n+1} Q a_n \quad \text{for \(n \neq 0\);} \qquad 
L_0 = \frac{1}{2} a_0^2 + \sum_{j=1}^{\infty} a_{-j} a_j - \frac{1}{2} Q a_0.
\end{equation}
The parameter $Q$ then determines the central charge $c=1-3Q^2$ of the corresponding Virasoro algebra.  We choose $Q$ so as to reproduce the central charge of \eqref{eq:ParByt}, namely
\begin{equation} \label{eq:DefQ}
Q = \sqrt{\frac{2p'}{p \vphantom{p'}}} - \sqrt{\frac{2p \vphantom{p'}}{p'}} = \sqrt{\frac{2 \vphantom{p'}}{pp'}} \brac{p'-p}.
\end{equation}

Heisenberg \hwss{} $\ket{\lambda}$ are eigenvectors of $a_0$, parametrised by their eigenvalues $\lambda$:
\begin{equation}
a_n \ket{\lambda} = 0 \quad \text{for \(n>0\);} \qquad 
a_0 \ket{\lambda} = \lambda \ket{\lambda}.
\end{equation}
The conformal dimension of $\ket{\lambda}$ is then
\begin{equation}
\Delta_{\lambda} = \frac{1}{2} \lambda \brac{\lambda - Q} = \frac{\brac{\lambda - Q/2}^2 - Q^2/4}{2} = \frac{2pp' \brac{\lambda - Q/2}^2 - \brac{p'-p}^2}{4pp'}.
\end{equation}
Comparing with \eqref{eq:ParByt}, we see that $\Delta_{\lambda}$ will coincide with the $\Delta_{r,s}$ of the extended Kac table when
\begin{equation} \label{eq:DefLambdaRS}
\lambda = \lambda_{r,s} \equiv -\alpha' \brac{r-1} + \alpha \brac{s-1},
\end{equation}
where we introduce
\begin{equation} \label{eq:DefAlphas}
\alpha = \sqrt{\frac{p \vphantom{p'}}{2p'}}, \qquad \alpha' = \sqrt{\frac{p'}{2p}}
\end{equation}
for convenience.  We note the symmetries
\begin{equation} \label{eq:FFSymm}
\lambda_{r+p,s} = \lambda_{r,s} - \sqrt{\frac{pp'}{2}}, \quad 
\lambda_{p,s+p'} = \lambda_{r,s} + \sqrt{\frac{pp'}{2}} \qquad \Ra \qquad
\lambda_{r+p,s+p'} = \lambda_{r,s}.
\end{equation}
Since $\Delta_{\lambda} = \Delta_{Q - \lambda}$, it follows that there are two distinct $a_0$-eigenvalues giving rise to the same conformal dimension (unless $\lambda = \lambda_{0,0} = Q / 2$).  In particular, $\lambda_{r,s}$ and $\lambda_{-r,-s}$ give rise to the same conformal dimension.

The Heisenberg Verma module $\FF{\lambda}$ generated from $\ket{\lambda}$ is often referred to as a Fock space.  When regarding the $\FF{\lambda}$ as Virasoro modules, we will refer to them as \FFms{}.  In contrast to Virasoro Verma modules, Fock spaces over the Heisenberg algebra are always irreducible.  However, their structure as Virasoro modules is more interesting --- see \cref{fig:FeiginFuchsStructures}.  In particular, their Virasoro submodules need not be generated by \svs{}, but rather one needs \ssvs{} as well.  These are vectors belonging to a given module which become singular in an appropriate quotient module.  The space of \ssvs{} at any grade has dimension greater than $1$ in general.  However, dimensions greater than $1$ arise solely because there are many states which are in the kernel of the projection onto the appropriate quotient.  We will often speak of the \ssv{} at a given grade, understanding that it is only specified up to this kernel (and normalisation).

\begin{figure}
\begin{center}
\begin{tikzpicture}
  [->,node distance=1cm,>=stealth',semithick,scale=0.7,
   asoc/.style={circle,draw=black,fill=black,inner sep = 2pt,minimum size=5pt},
   bsoc/.style={circle,draw=black,fill=gray,inner sep = 2pt,minimum size=5pt},
   csoc/.style={circle,draw=black,fill=white,inner sep = 2pt,minimum size=5pt}
  ]
  \node[asoc] (1) [] {};
  \node[] (point) [above of =1] {Point};
  \node[asoc] (2) [right = 2.5cm of 1] {};
  \node[] (island) [above of =2] {Islands};
  \node[asoc] (2a) [below of =2] {};
  \node[asoc] (2b) [below of =2a] {};
  \node[asoc] (2c) [below of =2b] {};
  \node[inner sep = 2pt] (2d) [below of =2c] {$\vdots$};
  \node[bsoc] (3) [right = 2.5cm of 2] {};
  \node[] (chain) [right = 0.75cm of 3,above of =3] {Chain};
  \node[asoc] (3a) [below of =3] {};
  \node[bsoc] (3b) [below of =3a] {};
  \node[asoc] (3c) [below of =3b] {};
  \node[bsoc] (3d) [below of =3c] {};
  \node[inner sep = 2pt] (3e) [below of =3d] {$\vdots$};
  \path[] (3) edge node {} (3a)
          (3b) edge node {} (3a)
          (3b) edge node {} (3c)
          (3d) edge node {} (3c)
          (3d) edge node {} (3e);
  \node[asoc] (4) [right = 1.5cm of 3] {};
  \node[bsoc] (4a) [below of =4] {};
  \node[asoc] (4b) [below of =4a] {};
  \node[bsoc] (4c) [below of =4b] {};
  \node[asoc] (4d) [below of =4c] {};
  \node[inner sep = 2pt] (4e) [below of =4d] {$\vdots$};
  \path[] (4a) edge node {} (4)
          (4a) edge node {} (4b)
          (4c) edge node {} (4b)
          (4c) edge node {} (4d)
          (4e) edge node {} (4d);
  \node[bsoc] (5) [right = 2.5cm of 4] {};
    \node[] (braid) [right = 1cm of 5,above of =5] {Braid};
  \node[csoc] (5a) [below left of =5] {};
  \node[bsoc] (5b) [below of =5a] {};
  \node[csoc] (5c) [below of =5b] {};
  \node[bsoc] (5d) [below of =5c] {};
  \node[inner sep = 2pt] (5e) [below of =5d] {$\vdots$};
  \node[asoc] (5j) [below right of =5] {};
  \node[bsoc] (5k) [below of =5j] {};
  \node[asoc] (5l) [below of =5k] {};
  \node[bsoc] (5m) [below of =5l] {};
  \node[inner sep = 2pt] (5n) [below of =5m] {$\vdots$};
  \path[] (5) edge node {} (5j)
          (5k) edge node {} (5j)
          (5b) edge node {} (5j)
          (5k) edge node {} (5l)
          (5a) edge node {} (5k)
          (5c) edge node {} (5k)
          (5m) edge node {} (5l)
          (5b) edge node {} (5l)
          (5d) edge node {} (5l)
          (5m) edge node {} (5n)
          (5c) edge node {} (5m)
          (5e) edge node {} (5m)
          (5d) edge node {} (5n)
          (5a) edge node {} (5)
          (5a) edge node {} (5b)
          (5c) edge node {} (5b)
          (5c) edge node {} (5d)
          (5e) edge node {} (5d);
  \node[bsoc] (6) [right = 2cm of 5] {};
  \node[asoc] (6a) [below left of =6] {};
  \node[bsoc] (6b) [below of =6a] {};
  \node[asoc] (6c) [below of =6b] {};
  \node[bsoc] (6d) [below of =6c] {};
  \node[inner sep = 2pt] (6e) [below of =6d] {$\vdots$};
  \node[csoc] (6j) [below right of =6] {};
  \node[bsoc] (6k) [below of =6j] {};
  \node[csoc] (6l) [below of =6k] {};
  \node[bsoc] (6m) [below of =6l] {};
  \node[inner sep = 2pt] (6n) [below of =6m] {$\vdots$};
  \path[] (6) edge node {} (6a)
          (6b) edge node {} (6a)
          (6k) edge node {} (6a)
          (6b) edge node {} (6c)
          (6j) edge node {} (6b)
          (6l) edge node {} (6b)
          (6d) edge node {} (6c)
          (6k) edge node {} (6c)
          (6m) edge node {} (6c)
          (6d) edge node {} (6e)
          (6l) edge node {} (6d)
          (6n) edge node {} (6d)
          (6m) edge node {} (6e)
          (6j) edge node {} (6)
          (6j) edge node {} (6k)
          (6l) edge node {} (6k)
          (6l) edge node {} (6m)
          (6n) edge node {} (6m);	
\end{tikzpicture}
\end{center}
\caption{The \ssv{} structure of Virasoro \FFms{} for $t \in \QQ_+$. Black circles represent the socle of the module, grey the second socle layer, and white the third.  Each of the black circles corresponds to a \sv{} and the top circle (of lowest conformal dimension) is always singular, but the rest are only subsingular.  Arrows from one \ssv{} to another indicate that the latter may be obtained from the former via the Virasoro action. Note that the two braid diagrams that we have drawn are structurally identical.  This repetition serves to indicate that the corresponding \FFms{} are not self-contragredient. 
The conformal dimensions of the \ssvs{} are chosen to increase as one moves down and to the right.}
\label{fig:FeiginFuchsStructures}
\end{figure}

The \ssvs{} indicated in \cref{fig:FeiginFuchsStructures} have been sorted into three classes.  Those represented by black circles are actually singular.  They generate the socle of the \FFm{}, the socle being defined as the maximal completely reducible submodule.  Quotienting the \FFm{} by its socle corresponds, diagrammatically, to removing the black circles and any arrows pointing to them.  If the quotient is non-trivial (the chain and braid cases), then the socle of the quotient corresponds to the grey circles.  Finally, if quotienting this quotient by its socle is non-trivial (the braid case), then the result corresponds to the white circles (there will be no arrows left).  The final non-trivial quotient is the maximal semisimple quotient, also called the head of the \FFm{}.\footnote{We remark that the socle of the chain- and braid-type Virasoro Verma modules with $c \le 1$ is the zero submodule. This concept is therefore not particularly useful for \hwms{} of this type.}

As with Virasoro Verma modules, Feigin-Fuchs modules are generically irreducible (the point case).  More precisely, $\FF{\lambda}$ is irreducible as a Virasoro module if and only if $\lambda \neq \lambda_{r,s}$ for any $r,s \in \ZZ$.  Suppose then that $\lambda = \lambda_{r,s}$ for some $r,s \in \ZZ$ (and that $t \in \QQ_+$).  We may summarise the remaining possibilities for $\FF{r,s} \equiv \FF{\lambda_{r,s}}$ as follows:
\begin{itemize}
\item If $(r,s)$ is of corner type, then the structure of $\FF{r,s}$ is represented by the islands diagram in \cref{fig:FeiginFuchsStructures}.
\item If $(r,s)$ is of boundary type, then the structure of $\FF{r,s}$ is represented by one of the chain diagrams.
\item If $(r,s)$ is of interior type, then the structure of $\FF{r,s}$ is represented by one of the braid diagrams.
\end{itemize}
For corner type $(r,s)$, the \FFm{} thus decomposes as a direct sum of irreducible Virasoro modules.  This is the only case in which $\FF{\lambda}$ is decomposable.  We remark that the two different diagrams for the chain and braid cases arise because the \FFm{} is then not isomorphic to its contragredient dual.  In particular, $\FF{r,s}$ and $\FF{-r,-s}$ are contragredient to one another.

For $r,s \in \ZZ_+$, the grades of the \ssvs{} in the Feigin-Fuchs module $\FF{r,s}$ precisely match the grades of the \svs{} in the Virasoro Verma module $\Ver{r,s}$.  In particular, $\FF{r,s}$ has a \ssv{} at grade $rs$.  Inspection of \cref{fig:FeiginFuchsStructures} shows that this \ssv{} must be either associated to the socle or the head of 
$\FF{\lambda_{r,s}}$.  With the parametrisation chosen above, it turns out that it is always associated to the head for 
$r,s \in \ZZ_+$.  The grade $rs$ \ssv{} of $\FF{-r,-s}$ is therefore associated to its socle (and is therefore singular).  
This identification of the structure may be extended to all $r,s \in \ZZ$ using the symmetries \eqref{eq:FFSymm}.

For chain type \FFms{}, identifying the grade $rs$ \ssv{} as being non-singular, when $r,s \in \ZZ_+$, allows one to identify the appropriate structure in \cref{fig:FeiginFuchsStructures} by ordering all the \ssvs{} by their conformal dimensions.  This does not quite suffice for braid type modules because one also needs to know, at every other horizontal level of \cref{fig:FeiginFuchsStructures}, whether the conformal dimension of the \sv{} exceeds that of the \ssv{} or not, or equivalently which of the two conformal dimensions is
represented by a white circle. Perhaps the easiest rule here is that the \emph{sign} of the difference between the \sv{} and \ssv{} dimensions does not depend upon the horizontal level.  Thus, one can check this sign at the level where the (for $r,s \in \ZZ_+$, non-singular) grade $rs$ vector appears.  Alternatively, if one lets $r = r_0 \bmod{p}$ and $s = s_0 \bmod{p'}$, where $0<r_0<p$ and $0<s_0<p'$, then this sign will be positive whenever $p'r_0 + ps_0 < pp'$ and negative otherwise. With the conformal dimensions of the \ssvs{} chosen to increase towards the right and bottom in \cref{fig:FeiginFuchsStructures}, the first and second braid diagram correspond to the sign being positive and negative, respectively, when $r,s \in \ZZ_+$.

%
\section{Tables of structural results for lattice Kac modules}\label{sec:Gramdata}
%

This section presents examples of the conjectured characters and module structures that we expect to describe 
the scaling limit of the lattice Kac modules. In particular, we detail the evidence obtained by applying the Gram matrix to the states associated with each composition factor. 

Below, we tabulate the results for all models with $1 \le p < p' \le 5$, $0\le k\le 3$ and $0\le d\le 4$,
in both $\xi$-regimes $A$ and $B$. Each table entry contains three pieces of information: 
\begin{enumerate}[label=(\roman*),leftmargin=*] 
\item The value of $r$ guessed from the character analysis. In all cases, the value of $r$ is found to be independent of $d$.  On the other hand, $s$ is always found to equal $d+1$, explaining why it is not included in the tables. \label{it:table-r,s}
\item The conjectured structures of the limiting Virasoro modules. The Kac character $\chit_{r,s}$ found in \ref{it:table-r,s} decomposes as a sum of irreducible characters, each of which corresponds to one composition factor of $\Kac{r,s}$. Each factor is depicted by a black circle and labelled by its conformal weight. For regime $A$, the arrows connecting the black circles are drawn according to \cref{TheConjecture}, which is a main result of this paper and is supported by evidence from character, bilinear form and \cft{} analyses. For regime $B$, the structures are guessed according to the character and bilinear form analysis.
\item The evidence obtained from the bilinear form analysis. Each black circle is assigned either {\scshape (r)}, {\scshape (q)}  or {\scshape (u)}, indicating that the corresponding factor is in the radical $\Rad_{n,k}^d$ of $\stan_{n,k}^d$, in the quotient $\stan_{n,k}^d/\Rad_{n,k}^d$, or that its status is unknown because the grades of its states are too large to be investigated by our computer program. Barred or underlined versions of {\scshape (r)}, {\scshape (q)}  or {\scshape (u)} indicate that the bilinear form used was renormalised: {\scshape (\underline{r})},  {\scshape (\underline{q})} or {\scshape (\underline{u})} for \eqref{eq:underform} and {\scshape (\textoverline{r})}, {\scshape (\textoverline{q})} or {\scshape (\textoverline{u})} for \eqref{eq:overform}.
\end{enumerate}

\bigskip

\begin{center}

\captionof{table}{Module structure for the logarithmic tricritical Ising model, $\mathcal{LM}(4,5)$.}
\end{center}

\raggedright

\end{document}